\theoremstyle{theorem}
\newtheorem{theorem}{Theorem}[section]
\newtheorem{corollary}{Corollary}[theorem]
\newtheorem{lemma}{Lemma}[theorem]
\newtheorem{definition}{Definition}
\theoremstyle{remark}
\newtheorem{remark}{Remark}[section]
\newcommand{\red}{\color{red}}
\newcounter{bound}
\newcommand{\bound}{\stepcounter{bound} \bigskip {\bf Bound E\arabic{bound}.}}
\newcommand{\ketbra}[2]{\ket{#1}\!\!\bra{#2}}
\newcommand{\abs}[1]{\left| #1 \right|}
\newcommand{\norm}[1]{\left| \left| #1 \right| \right|_2}
\newcommand{\ltwo}[1]{\left|\!\left|#1\right|\!\right|_2}
\newcommand{\sumt}[1]{\sum_{#1 = -\frac{d-1}{2}}^{\frac{d-1}{2}} #1 \frac{T_0}{d} \ket{\theta_{#1}}\bra{\theta_{#1}}}
\newcommand{\sumS}[2]{\sum_{#1 \in \mathcal{S}_d(#2)}}
\newcommand{\errort}{e^{-\frac{\pi d}{4} (1-\beta)^2}}
\newcommand{\errortd}{e^{-\pi(1-\beta)}}
\newcommand{\sumnsym}{\sum_{n=-\frac{d-1}{2}}^{+\frac{d-1}{2}}}
\newcommand{\Mspace}{\vspace{0.2cm}} %Space between paragraphs
\newcommand{\gClock}{Quasi-Ideal clock}%{Gaussian clock}
\newcommand{\gClocks}{Quasi-Ideal clocks}%{Gaussian clocks}
\newcommand{\supp}{appendix}
\newcommand{\cft}[1]{\mathcal{F}_d \left[ #1 \right]}
\newcommand{\cfti}[1]{\mathcal{F}^{-1}_d \left[ #1 \right]}
\newcommand{\be}{\begin{equation}}
\newcommand{\ee}{\end{equation}}
\newcommand{\ba}{\begin{align}}
\newcommand{\ea}{\end{align}}
\newcommand{\tr}{\mathrm{tr}}%{\mathrm{Tr}}
\newcommand{\nn}{{\mathbbm{N}}}
\newcommand{\rr}{{\mathbbm{R}}}
\newcommand{\cc}{{\mathbbm{C}}}
\newcommand{\hh}{{\mathbbm{H}}}
\newcommand{\zz}{{\mathbbm{Z}}}
\newcommand{\id}{{\mathbbm{1}}}
\newcommand{\me}{\mathrm{e}}
\newcommand{\mi}{\mathrm{i}}
\newcommand{\bo}{\mathcal{O}}
\newcommand{\suppl}{\text{main text}}%{supplemental material}}
\newcommand{\comm}{\text{manuscript}}%{\text{communication}}
\theoremstyle{definition}
\newtheorem{result}{Result}
\begin{document}

\title{Autonomous quantum machines and the finite sized Quasi-Ideal clock}%{Autonomous quantum machines and finite sized clocks}

\begin{abstract}
  \input{abstract.2}
\end{abstract}

\author{Mischa P. Woods}
\affiliation{Institute for Theoretical Physics, ETH Zurich, Switzerland}%\affiliation{University College of London, Department of Physics \& Astronomy, London WC1E 6BT, United Kingdom}
%\affiliation{QuTech, Delft University of Technology, Lorentzweg 1, 2611 CJ Delft, Netherlands}
\author{Ralph Silva}
\affiliation{Group of Applied Physics, University of Geneva, Geneva, Switzerland}%{D\'epartement de Physique Th\'eorique, Universit\'e de Gen\`eve, 1211 Gen\`eve, Switzerland}
\author{Jonathan Oppenheim}
\affiliation{University College of London, Department of Physics \& Astronomy, London, United Kingdom}
\maketitle

\tableofcontents
\section{Introduction}

Many recent advances in quantum theory are due to our ability to manipulate small systems. Witness on the experimental front, the progress in quantum computation, quantum memory, non-locality, quantum thermodynamics, and randomness generation. The quantum machinery involved in each of these typically requires very precise external control | for example, in a quantum cellular automata, or a quantum computation, a unitary (gate) is applied at each time step.\Mspace

This is reasonable for modeling a large machine which is controlled by a classical system, but what if we wish to consider a fully quantum machine? This could be an autonomous quantum device which interacts with its surroundings, or a way of modelling a fully quantum observer. While the latter is mostly of foundational interest, the former is needed to understand and optimise current quantum technologies, or understand important physical processes. Take for example, molecular machines or nanomachines such as molecular motors \cite{howard1997molecular}, which are important in biological processes \cite{molecularBio}, or distant technologies such as nanorobots \cite{nanobots}, where quantum effects on the control mechanism, and the back-reaction they incur, are likely to be significant.\Mspace

Understanding autonomous machines is particularly important in thermodynamics, where one is interested in devices which can be used for tasks such as energy harvesting or erasing a memory \cite{Scovil1959masers,Geusic1967quatum,linden2010small,brask2015autonomous,brandao2013resource,malabarba2015clock,tonner2005autonomous,gelbwaser2014heat,correa2014quantum,tonner2007quantum}. In the thermodynamics literature, there tend to be two sorts of processes, those which are fully autonomous, and those which allow a certain level of external control at no cost to the agent. A canonical example of the former is the brownian ratchet, popularised by Feynman \cite{FeynamnLecs}, which simply sits between two thermal baths and extracts work in situ. There are a number of autonomous quantum thermal machines built on this principle \cite{linden10,brask2015autonomous,MarcusPauli}. However, there are a number of processes, such as quantum Carnot cycles \cite{geusic,gelbwaser2014heat}, and thermal operations, on which a number of resource theories are based, and from which one can derive the quantum version of the second law, that require external control. While an autonomous thermal machine can be implemented via a fixed time-independent interaction Hamiltonian, an externally controlled machine requires a time-dependent Hamiltonian. For example, in order to implement a unitary operation, an interaction Hamiltonian must be switched on and allowed to run for a specific amount of time.\Mspace

%can be described by a unitary operation or a time dependent Hamiltonian. If one wants to keep track of all sources of work and energy, then the unitary can be made energy preserving by including a work system and using the paradigm of ``thermal operations" \cite{Streater_dynamics,janzing2000thermodynamic,HO-limitations}, or equivalently, the work system can be implicit as is typically done in the fluctuation theorem community \cite{jarzynski1997nonequilibrium,crooks1999entropy}

Allowing such external control is highly contentious as the hidden cost of such fine-tuned control may often dwarf the perceived costs of such theories. If one insists (as one should) that the true cost of quantum process can only be observed reliable in the absence of external control, a number of questions present themselves - Is there a fundamental disadvantage to autonomous machines as opposed to those with external control? How sensitive are the conclusions of the various ``resource theories" to the removal of external control?\Mspace

Note that if one has access to a system of infinite dimensions and a Hamiltonian unbounded from below, that we may understand as an ``idealized clock", then it turns out that the paradigm of allowing external control vs the paradigm of autonomous machines are equivalent, for instance, see \cite{brandao2013resource,malabarba2015clock}. However, while illustrative as a conceptual proof of principle that controlled processes can be turned autonomous, these analyses are misleading for two reasons.\Mspace

First of all, the Hamiltonian being unbounded from below implies that the clock has infinite energy, which is unphysical. Furthermore, accounting for changes in energy, which is an important part in the analysis of quantum operations, is rendered meaningless in the presence of a control device of infinite energy.\Mspace

Also, a careful analysis requires one to consider a finite dimensional clock, because it is important to show 
that an infinite dimensional clock cannot be used to {\it embezzle} work from it, an issue covered at length in \cite{second}. Embezzling of work, based on the notion of entanglement embezzling \cite{Hayden-embezzling}, is the process of transferring work from a system while only changing the state of the system by an amount (w.r.t. trace distance) which vanishes in the limit of increasing dimension.\Mspace

%Neglecting an explicit description of finite quantum control thus leaves a number of questions of fundamental importance unanswered.

%The straightforward manner of converting 
%an externally controlled device into an autonomous one is to
%explicitly incorporate a quantum system that acts as a control for quantum operations, a so called ``quantum clock". This issue is thus intrinsically related to the nature and measurement of time in quantum mechanics.

%It was Feynman \cite{feynman1986quantum} who first developed a framework, to show that one could implement an externally controlled quantum computation, using a clock and a fixed interaction Hamiltonian. Margolus, showed how to do the same for quantum cellular automata \cite{margolus1986quantum}. This framework allows one to use an idealized infinite dimensional clock with no ground state, to implement an arbitrary quantum circuit via a fixed Hamiltonian.

Thus, in order to reliably consider the changes in energy and entropy (the two central quantities in thermodynamics) in a controlled quantum process, one requires a physically reasonable control of both finite size and energy, and take into account the back-reaction on it. In fact, there is a wealth of interesting physics that presents itself when we recognize the finiteness of control systems, such as the advantage of coherence in control \cite{HesenbergLim}, the tradeoff between accuracy and power in thermodynamics \cite{MarcusPauli}, as well as fundamental bounds on the synchronization time of clocks \cite{sandraAlternative}.\Mspace

%If we wish to implement a unitary $U$, then this is equivalent to turning on and off an interacting Hamiltonian $H_{int}$ for a time $\Delta t$. Such that $U=e^{iH_{int}\Delta t}$, with the time that the interaction Hamiltonian is turned on being precisely controlled by the clock.
%Another example would be implementing an explicitly time-dependent Hamiltonian of a system, which can be implemented with a fixed interaction Hamiltonian which depends on the coordinates of a physical clock. Here, instead of having a classical parameter time, one explicitly incorporates a quantum system which acts as an idealized infinite dimensional quantum clock. %\mpw{In particular, one can take the momentum clock, with Hamiltonian $H_c=P$ and conjugate time $x$ the position of a free particle in one dimension. However, such clocks have a Hamiltonian which is unbounded from below and are not well defined. Indeed, this was realized a long time ago by the founders of quantum theory who concluded that a time operator cannot exist in quantum mechanics as a direct consequence of this observation. We review the idealized infinite dimensional clock in section \ref{sec:ideal}).}
%However, as realized by the founders of quantum mechanics and discussed in Section \ref{sec:ideal}), such clocks are unphysical, and can only be constructed in the infinite energy limit.% since a unbounded from below Hamiltonian is required.

There are two important limitations of finite clocks. The first is that they can only record the precise time at discrete intervals \cite{Peres}, and can be very inaccurate in between. Furthermore, any attempt to use the clock to measure time or as a control system disturbs the clock \cite{Peres,buzek,allcock1969time}, leading to the performance of the clock degrading on further use.\Mspace

%If we are interested in a finite sized autonomous system, or wish to study the effect of dimension, then we need to consider a finite dimensional clock. However, this presents two well known difficulties. The first, is that any attempt to use the clock will result in the evolution of the clock being disturbed \cite{Peres,buzek,allcock1969time}. As a result, there is a danger that the device can only be used to complete one cycle of its task, and must then be reset externally, or that the unitary that is implemented will be completed with poor fidelity. The second issue, is that a finite sized clock only records the correct time at a discrete and measure-zero number of times \cite{Peres}. A $d$ dimensional clock will record the correct time a $d$ number times, but in between, the time it reads is generally distributed probabilistically over all times.

In this paper we are able to circumvent these two difficulties. We present a finite size quantum clock, based upon the Hamiltonian of the Wigner clock \cite{SaleckerWigner}, but whose initial state is a coherent superposition w.r.t. the basis that Peres used \cite{Peres}. To demonstrate the clock's utility, we describe how to convert two of the most ubiquitous externally controlled operations in quantum theory: the unitary, and the time-dependent interaction Hamiltonian, into an operation performed by an autonomous device.  We compute the back-reaction on the clock, and find analytic bounds on the errors developed in the clock and target system, thus explicitly accounting for the cost of these operations that form the basis of so many theoretical paradigms. Our main result is that we find that the disturbance in the clock can be made exponentially small in the dimension of the clock, a calculation which requires going beyond perturbation theory. We thus see that the back-reaction can be made negligible, and one can make a device autonomous using a control of modest size as well as energy. One significance of this result, is that it allows an autonomous machine, even one of small size, to run for a significant length of time, before becoming too degraded.\Mspace

In fact, we demonstrate that the evolution of the clock mimics that of the idealized clock (up to the exponentially small error). As such, if one wishes to convert any quantum operation with external control into an autonomous process, one may do so by using the idealized momentum clock (as in \cite{malabarba2015clock}), whose description is simple, and keep track of the real error by using the results presented in this manuscript. Thus one can account for the cost of turning a quantum process autonomous \emph{without having to explicitly describe the control}, an immense advantage for the cases wherein to do so would be either analytically intractable, or computationally intensive \cite{maxclock}.\Mspace

%We are able to analytically bound the difference in behaviour of the finite clock from the idealized clock. A consequence of this is that if one wishes to convert a quantum operation with external control into an autonomous process, one may do so by using the idealized momentum clock, whose description is simple, and keep track of the real error by using our result. This amounts to accounting for the automation quantum process \emph{without having to explicitly describe the control}, an immense advantage for the cases wherein to do so would be either analytically intractable, or computationally intensive.

On a foundational note, the behaviour of an idealized quantum clock is equivalent to it obeying the canonical commutation relation. Given a suitable definition of a time operator we demonstrate that our finite clock states also approximate the canonical commutator relation | a property which is absent in the clock proposed by \cite{SaleckerWigner,Peres}. Importantly, this property, together with the quasi-ideal evolution of our clock, are both consequences of the coherent nature of the clock state. This highlights the importance of quantum coherence in quantum clocks and control.\Mspace

%In summary, from an operational perspective, we present a model which provides a universal tool for including autonomous control into quantum processes, while the analytical result on the accuracy of the clock provides a benchmark measure for the accuracy of future quantum control devices. We also conjecture that the exponential decay in the error with clock dimension is  optimal, thus proposing a fundamental limitation on control.
%, while suggesting a fundamental limit to the accuracy of quantum clocks.

{\bf\emph{Organisation of this \comm. - }} Given the high volume of material presented, we summarize the main results and discussion first; including a conclusion, and provide the full theorems and technical proofs thereafter. To begin with, Section \ref{sec:ideal} introduces the infinite dimensional clock highlighting its relevant properties that we wish to mimic. This is followed by an introduction to finite clocks in Section \ref{sec:The finite clock main text}, in particular to the complex Gaussian superposition that we study. The main results of our work are then stated and explained in Section \ref{sec:RESULTS}, including a discussion on the implications for quantum autonomous control in Subsection \ref{sec:Quasi-Canonical commutator}. This is followed by a general discussion and conclusions in Sections \ref{sec:discussion} and \ref{sec:conclusions} respectively. In the \suppl~we prove the results presented in this \comm. The results themselves are presented with more generality in the form of theorems. The organization of the \suppl~is described directly after the conclusions (Section \ref{sec:conclusions}).%{introduction.v4}
\section{The idealised quantum clock and its properties}\label{sec:ideal}

The notion of an ideal clock is closely related to whether there exists a time operator in quantum mechanics (i.e. time is an observable). Wolfgang Pauli \cite{pauli1,pauli2} argued that if there exists an ideal clock with Hamiltonian $\hat H$ and ideal observable of time $\hat{t}$; both self-adjoint on some suitably defined domains, then in the Heisenberg picture the pair must obey
\begin{equation}\label{eq:div hat t}
\frac{d}{dt} \hat t(t)= \id,\quad \forall\, t \in \rr,
\end{equation}
which in turn implies the canonical commutation relation $-\mi [\hat t(t), \hat H]=\id, \quad \forall t\in \rr$ (we use units so that $\hbar =1$) on some suitably defined domain. Pauli further argued that the only pair of such operators (up to unitary equivalence) are $\hat t=\hat x$ and $\hat H=\hat p$, where $\hat x$, $\hat p$ are the canonically conjugate position-momentum operators of a free particle in one dimension. However, all such representations of $\hat p$ for which Eq. \eqref{eq:div hat t} is satisfied have spectra unbounded from below. One can thus conclude that no perfect time operator exists in quantum mechanics since such clock Hamiltonians would require infinite energy to construct due to the lack of a ground state. The question of whether a physically realizable perfect time operator exists in quantum mechanics is still a contentious issue, see Remark \ref{rem:Paulidiscussion}. We will not dwell upon this issue here, but rather summarize the characteristic properties of such a system that allow for the precise timing of events,which we then mimic using a finite sized clock.\Mspace

For the simple case in which $\hat t=\hat x_c$, $\hat H=\hat p_c$ are the position and momentum operators of a free particle in one dimension\footnote{The domain of all of the operators is taken to be $D_0$, the space of infinitely differentiable functions of compact support on $L^2(\rr)$.\label{domainxp}}, the dynamics are easily solvable. We refer to this as the \textit{idealised clock}. It has three critical properties (that we proceed to discuss in detail), namely:
\begin{itemize}
	\item [1)] The clock possesses a distinguishable basis of ``time states",
	\item [2)] it demonstrates ``continuity", and 
	\item [3)] it allows for perfect continuous autonomous control on an external system.
\end{itemize}

To be more precise, given that the Hamiltonian of the clock is $\hat{H} = \hat{p}_c$, the generalised eigenvectors of the position operator $\ket x$ are a \emph{distinguishable basis of time states} by which we mean $\braket{x|x^\prime} = \delta(x-x^\prime)$, and that given any initial generalized eigenvector $\ket{x}$, the natural evolution of the clock will past through all of the positions $x'>x$,
\begin{equation}
	e^{-\mi t\hat{H} } \ket{x} = \ket{x+t},
\end{equation}
i.e. a time translation is equivalent to a spatial translation.\Mspace

For the second property, note that the equivalence between time and space translations hold for any state of the clock, 
\begin{equation}\label{idealregularity_main}
	\braket{x|e^{-\mi t\hat{H} }|	\Psi} = \braket{x-t|\Psi}.%\quad \forall \ket{\Psi}\in L^2(\rr).
\end{equation}

The fact that this statement holds for all $x,t\in\rr$, and in particular, for arbitrarily small $t$ is what we refer to as \emph{continuity}, or by referring to the clock as \emph{continuous}.\Mspace

To demonstrate the third property, that of perfect control, observe that if one adds a position-dependent potential to the clock, it still remains continuous, while its state is only modified by a phase that depends on the potential,
\begin{equation}\label{idealpotential}
	\braket{x | \me^{-\mi t\left(\hat{H} + V(\hat{x}_c)\right)} | \Psi} = \me^{-\mi\int_{x-t}^x V(x^\prime) dx^\prime} \braket{x-t | \Psi},\quad% \forall \,\ket{\Psi}\in L^2(\rr),\,\forall\,
	 x,t\in\rr, \,V\in D_0.\footref{domainxp}
\end{equation}

Notice that the phase integrates over the potential in the region that the state passes through ($[x-t,x]$). The clock can therefore be turned into a control device by simply having the potential be an interaction on an external system, whose strength is a function of the clock's position. Rather than an observer having to switch on and off an interaction on a system, here the clock does so autonomously by passing through the region of the potential.\Mspace

A detailed proof of the above statements may be found in Section \ref{idealizedclock}, while the consequences for autonomous control of external systems is discussed in \ref{sec:Consequences of Quasi-Autonomous control}.

\section{A finite clock to mimic the idealised clock}\label{sec:The finite clock main text}

The clock we propose is based upon a quantum system that has been discussed before in \cite{SaleckerWigner,Peres}. The system has dimension $d$ and equally spaced (normalised) energy eigenstates $\ket{E_n}$, i.e. its Hamiltonian is
\begin{equation}\label{finiteHamiltonian}
	\hat{H}_c = \sum_{n=0}^{d-1} n \omega \ketbra{E_n}{E_n}.
\end{equation}

The frequency $\omega$ determines both the energy spacing as well as the time of recurrence of the clock, $T_0 = 2\pi/\omega$, as $e^{-i \hat{H}_c T_0} = \id_c$. This system possesses a distinguishable basis of time states $\left\{\ket{\theta_k}\right\}_{k=0}^{d-1}$, that is mutually unbiased w.r.t. the energy eigenstates,
\begin{align}\label{finitetimestates_main}
	\ket{\theta_k} &= \frac{1}{\sqrt{d}} \sum_{n=0}^{d-1} e^{-i2\pi n k/d} \ket{E_n}.
\end{align}

It will also be useful later to have the range of $k$ extended to $\zz$.\footnote{Note that $k$ will belong to a set of only $d$ consecutive integers so that $\{\ket{\theta_k}\}$ form a complete orthonormal basis without repetition.} Extending the range of $k$ in Eq.  \eqref{finitetimestates_main} it follows that $\ket{\theta_k}=\ket{\theta_{k \textup{ mod. } d}}$ for $k\in\zz$. The $\ket{\theta_k}$ are referred to as time states because they rotate into each other in regular time intervals of $T_0/d$, i.e. $e^{-i \hat{H}_c T_0/d} \ket{\theta_k} = \ket{\theta_{k+1}}$. Since they also form an orthonormal basis, this property is true for any state $\ket{\Psi}$ of the clock Hilbert space,
\begin{equation}\label{finiteregularity_main}
	\braket{\theta_k | e^{-i \hat{H}_c m\, T_0 /d} |\Psi} = \braket{\theta_{k-m} | \Psi},\quad k,m\in\zz.
\end{equation}

This is reflective of the idealised case Eq. \eqref{idealregularity_main}, but the key difference is that for the finite clock, Eq. \eqref{finiteregularity_main} only holds at \emph{regular} intervals ($m \in \mathbb{Z}$). Thus while every state of the clock is regular with respect to time, a general clock state does not demonstrate continuity. In fact the time-states $\ket{\theta_k}$ themselves are considerably discontinuous. Specifically, as a time-state evolves, it spreads out considerably in the time basis for non-integer intervals \cite{Gross2012,sergedft}. Additionally, the time-states fail to even approximate the canonical commutator relationship between a (suitably defined) time operator and Hamiltonian in any limit \cite{Peres}. See Section \ref{SWPclock} for a more precise discussion of the behaviour of time-states $\ket{\theta_k}$.\Mspace

The clock that we work with henceforth, rather than being a time-state, is instead a coherent complex Gaussian superposition of time-states,
\begin{align}\label{gaussianclock_main}
	\ket{\Psi_\textup{nor}(k_0)} &= \sum_{\mathclap{\substack{k\in \mathcal{S}_d(k_0)}}} \psi_\textup{nor}(k_0;k) \ket{\theta_k}, \\
	\text{where} \quad \psi_\textup{nor}(k_0;x) &= A e^{-\frac{\pi}{\sigma^2}(x-k_0)^2} e^{i 2\pi n_0(x-k_0)/d}.
\end{align}
For reasons which will become clear, we refer to these states as \textit{\gClock}~states.
The precise definition and basic properties of these states are detailed in Section \ref{sec:Definition of Gaussian clock}. Roughly, $k_0\in\rr$ represents the mean position of the clock about which the Gaussian is centred ($\mathcal{S}_d(k_0)$ is a set of $d$ consecutive integers centered about $k_0$, such that the set $\{ \ket{\theta_k}: k\in \mathcal{S}_d(k_0) \}$ forms a complete orthonormal basis for the clock Hilbert space). $\sigma$ denotes the width of the state in the time basis, ranging from $\sigma\approx 0$ (approximately a time-state) to $\sigma \approx  d$ (almost an energy eigenstate). $n_0\in (0,d-1)$ represents the \textit{mean energy number} of the clock, so that the average energy $\omega n_0$ ranges between $0$ and $\omega(d-1)$. $A$ is a normalisation constant.\Mspace

In the next section we quantify how closely a clock state of the above form mimics the behaviour of an idealised clock, and the consequences for using the clock as a control device.

\section{Results (Overview)}\label{sec:RESULTS}

We now briefly state and explain our main results, reserving the full theorems for later. For simplicity, we state here the results for the simplest case of the clock state, where $\sigma = \sqrt{d}$, and $n_0 = (d-1)/2$. This corresponds to a state, that when expressed in the energy eigenbasis, has a width (i.e. the standard deviation w.r.t. the energy spacing $\omega$) that is also approximately $\sqrt{d}$, and whose mean energy is at about the middle of the energy spectrum. We discuss the more general case in Section \ref{sec:discussion}. In the following, we use $poly(d)$ to refer to a polynomial in $d$ and use $\bo$ for Big-O notation.

\subsection{Quasi-Continuity}
Our first result is to recover the \emph{continuity} of a quantum clock for the class of complex Gaussian superpositions of time-states introduced above.

\begin{result}[See Theorem \ref{gaussiancontinuity} for the most general version] If the clock begins in a complex Gaussian superposition centred about $k_0$, i.e. $\braket{\theta_k|\Psi_\textup{nor}(k_0)} = \psi_\textup{nor} (k_0;k)$, then for all $k\in\mathcal{S}_d(k_0+t \,d/T_0)$ and $t\in\rr$,
\begin{align}\label{eq:result 1 main text}
\begin{split}
 \braket{\theta_k|e^{-\mi t \hat H_c}|\Psi_\textup{nor}(k_0)} &= \psi_\textup{nor}(k_0 - d \frac{t}{T_0} ; k) + \braket{\theta_k|\epsilon_c},\\
 \text{where} \quad |\!\braket{\theta_k|\epsilon_c}\!|&\leq\varepsilon_c(t,d) = \bo \left( t\; poly(d)\; e^{-\frac{\pi}{4} d} \right) \,\textup{as } d\rightarrow \infty.
 \end{split}
\end{align}
In other words, the evolution of the clock state, which is composed of the discrete coefficients $\braket{\theta_k|e^{-\mi t \hat H_c}|\Psi_\textup{nor}(k_0)}$, may be approximated by the continuous movement of the background Gaussian function $\psi_\textup{nor}(k_0;x)$ (see Fig. \ref{fig:clocks main text} a). This mimics the equivalence of time and space translations of the idealised clock, Eq. \eqref{idealregularity_main}, and crucially holds for arbitrarily small time intervals $t$, in contrast to a clock that is a single time state $\ket{\theta_k}$. The error in the approximation is bounded to be linear in time, but exponentially small in the dimension.
\end{result}

From another perspective, our set-up can be viewed as a specific example of a continuous time quantum walk. The authors of \cite{Gross2012} studied such walks when the initial state has support on a fixed subset of $\{\ket{\theta_k}\}_{k=0}^{d-1}$, and showed that the support of the state necessary spreads out to occupy the entire space $\{\ket{\theta_k}\}_{k=0}^{d-1}$ in finite time. The above result demonstrates, that these no-go theorems can be circumvented for some states whose support is exponentially suppressed in regions away from a central region.

\subsection{Autonomous Quasi-Control}

For the clock to serve as a quantum control unit, it must be continuous not only under its own evolution, but also in the presence of an appropriate potential, as in the case of the idealised clock \eqref{idealpotential}. Our main result is to prove a similar behaviour for the finite clock.

\begin{result}[See Theorem \ref{movig through finite time} for the most general version]

Let $V_0: \rr\rightarrow \rr$ be an infinitely differentiable, periodic function with period $2\pi$, normalized so that its integral over a period is $\Omega$, with $-\pi\leq \Omega<\pi$. For the finite clock of dimension $d$, construct a potential from $V_0$ as follows,
\begin{align}
	\hat{V}_d &= \frac{d}{T_0} \sum_{k=0}^{d-1} {V}_d(k) \ketbra{\theta_k}{\theta_k}, &% \text{where}
	 \quad V_d(x) &= \frac{2\pi}{d} V_0 \left( \frac{2\pi x}{d} \right).
\end{align}

Then under the Hamiltonian $\hat H_c + \hat V_d$, if the clock begins in a complex Gaussian superposition $\braket{\theta_k|\Psi_\textup{nor}(k_0)} = \psi_\textup{nor} (k_0;k)$, then for all $k\in\mathcal{S}_d(k_0+t \,d/T_0)$ and $t\in\rr$,
\begin{align}\label{eq:result 2 main text}
\begin{split}
 \braket{\theta_k|e^{-\mi t ( \hat H_c + \hat V_d )}|\Psi_\textup{nor}(k_0)} =&\,\me^{-\mi \int_{k-d\,t/T_0}^k V_d(x)dx}\, \psi_\textup{nor}(k_0 - d \frac{t}{T_0};k)
  + \braket{\theta_k|\epsilon_v}, \\\;\;\;\;\; \text{where} \quad |\!\braket{\theta_k|\epsilon_v}\!|&\leq\varepsilon_v(t,d) =\, \bo\left( t\; poly(d)\; e^{-\frac{\pi}{4} \frac{d}{\zeta}} \right)\,\textup{as } d\rightarrow \infty,
 \end{split}
\end{align}
where $\zeta \geq 1$ is a measure of the size of the derivatives of $V_0(x)$,
\begin{align}\label{eq:b zeta main text}
\begin{split}
\zeta &=\left( 1+\frac{0.792\, \pi}{\ln(\pi d)}b \right)^2,\quad\text{for any } \\
b &\geq\; \sup_{k\in\nn^+}\left(2\max_{x\in[0,2\pi]} \left|  V_0^{(k-1)}(x) \right|\,\right)^{1/k},
\end{split}
\end{align}
where $ V_0^{(k)}(x)$ is the $k^\textup{th}$ derivative with respect to $x$ of $ V_0(x)$. Roughly speaking - the larger the derivatives, the larger is $\zeta$. [see Fig. \ref{fig:clocks main text} b].\footnote{We further require the derivatives $V_0$ to be such that the lower bound on $b$ in Eq. \eqref{eq:b zeta main text} is finite, in order for Eq. \eqref{eq:result 2 main text} to be non-trivial.}

Note the construction of the potential is analogous to the case of the idealised clock; there the potential is expressed in the conjugate variable ($\hat x_c$) to the Hamiltonian ($\hat p_c$); in the case of the finite clock the potential is diagonal in a basis (of time states) that is mutually unbiased w.r.t. the energy eigenbasis.\Mspace

As in the case of the idealised clock, this result allows one to implement a unitary on an external system, or any time-dependent interaction that commutes with the Hamiltonian of the system. We discuss this in greater detail in Section \ref{sec:Consequences of Quasi-Autonomous control}, and bound the errors in the system and the clock at the end of such processes (in comparison to the idealised case). In particular, we will show that the state of the clock is disturbed by only an exponentially small amount in the dimension of the clock.\Mspace

\emph{The clock as a switch.} - Note that in the case of the idealised clock, the ``steepness" of the potential (i.e. the magnitude of the derivatives) does not affect the continuity of the clock, i.e. do not enter in Eq. \eqref{idealpotential}. Thus there is no limit to how quickly the idealised clock could switch on and off an interaction on an external system. This is no longer the case for the finite clock, since the steeper the potential, the larger the error in the continuity of the clock. In the discussion of the clock as a control device in Section \ref{sec:Consequences of Quasi-Autonomous control}, we will see how the above result leads to a natural tradeoff between how sharply the clock can switch on and off interactions on the one hand, vs the back-reaction on the clock and accuracy of the process implemented on the external system on the other hand.\Mspace

In further work \cite{RMRenatoetal}, it is demonstrated that the result above (in it's more general form presented in Theorem \ref{movig through finite time}) also encapsulates continuous weak measurements on the state of the clock, that may be used to extract temporal information from the clock (i.e. to measure time).\Mspace

To conclude this section, we will briefly point out how this work is related to the so-called Pegg and Barnet phase operator, $\hat \Phi_\textup{PB}= \sum_{k=0}^{d-1} \left(\vartheta_0+\frac{2\pi}{d}k \right)\ketbra{\theta_k}{\theta_k}$, $\vartheta_0\in\rr$  \cite{PeggBarnet89}. In particular, we see that the potential $\hat V_d$ is a function of it. Specifically $\hat V_d=\omega V_0\left(\hat\Phi_\textup{PB}-\vartheta_0\id_d\right)$. Note that the Phase operator $\hat\Phi_\textup{PB}$ was introduced by A. Peres under the name of a time operator $\hat t_c$, approximately 10 years prior to Pegg and Barnet, as we will see in Section \ref{sec:Quasi-Canonical commutator}.

\begin{figure}
	\begin{center}
		\includegraphics[scale=0.2]{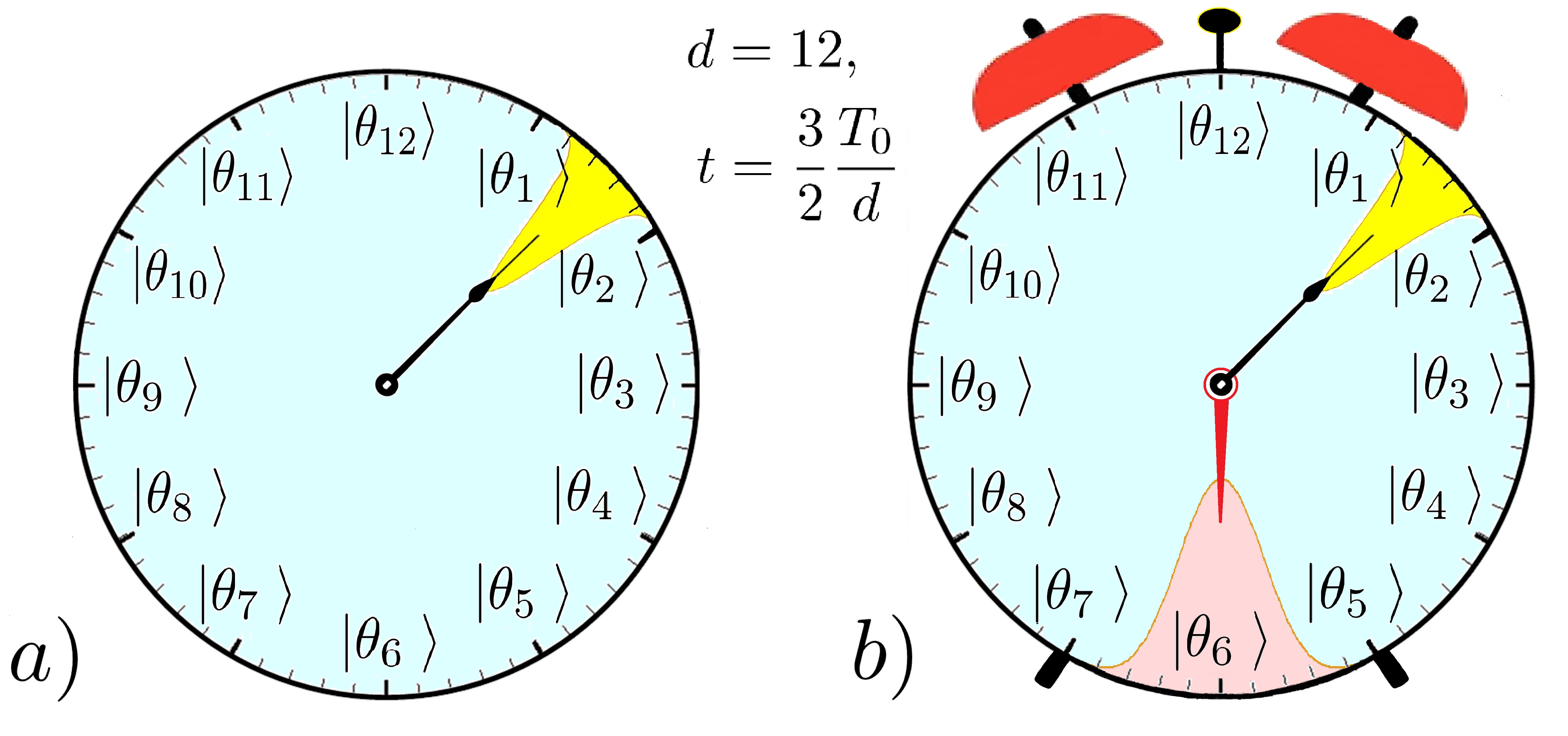}%{2clocks.png}
		\caption{
			\textbf{Resemblance of a quantum clock to a classical clock.} \textbf{a)} Illustration of main result 1. A quantum clock state $\ket{\Psi_\textup{nor}(12)}$ initially centered at ``12 o'clock" time evolved to $t=3 T_0/(2 d)$. The clock state moves continuously around the clock face, maintaining an approximately constant width, in contrast to the behaviour of time states $\ket{\theta_k}$. \textbf{b)} Illustration of main result 2. Now a potential peaked around ``6 o'clock" is introduced. The clock dynamics will be, up to a small error the same as in a) until a time just before ``6 o'clock", at which point the clock will start to acquire a phase due to its passage over the potential. At a time just after ``6 o'clock", the dynamics will start to resemble those of a) again, but now with a global phase added. A direct consequence of main result 2, is that one can use such a potential to induce a time dependent phase that can control an auxiliary quantum system, resembling the working of an alarm clock.
		}\label{fig:clocks main text}
	\end{center}
	\end{figure}
\end{result}

\subsection{Quasi-Canonical Commutation}\label{sec:Quasi-Canonical commutator}

So far we have demonstrated that the finite dimensional \gClock~state, Eq. \eqref{gaussianclock_main}, satisfies the three properties of the idealized clock, up to a small quantifiable error. Recall that for the idealized clock, these desired properties were a direct consequence of the existence of a perfect time operator, Eq. \eqref{eq:div hat t}, which is satisfied if $[\hat t,\hat H]=\mi $ on some appropriately defined domain. We now discuss the issue of the commutator. For the finite clock, since the $\ket{\theta_k}$ rotate into each other in regular time intervals, an intuitive definition of a time operator is in the eigenbasis of $\ket{\theta_k}$. Therefore, in analogy to the energy operator, $\hat H_c$, we define the time operator \cite{Peres}
\begin{equation}\label{eq:t op main text def}
	\hat{t}_c = \sum_{k=0}^{d-1} k \frac{T_0}{d} \ketbra{\theta_k}{\theta_k}.
\end{equation}

However, as noted by A. Peres \cite{Peres}, this operator cannot obey $[\hat t_c, \hat H_c]= \mi$ for any time state $\ket{\theta_k}$. In fact,
\begin{equation}
	\bra{\theta_k} [\hat t_c, \hat H_c] \ket{\theta_k} = 0 \quad\text{for all } k\in\zz \text{ and } d\in\nn^+.
\end{equation}

This is intrinsically related to the time-states not being good clocks themselves. However, while the time and energy operators cannot obey the canonical commutation relation themselves, they do approximate it when applied to the subspace of clock states that we propose. Specifically, for complex Gaussian superpositions of time states $\ket{\Psi_\textup{nor}(k_0)}$,
\begin{result}[See Theorem \ref{Quasi-Canonical commutation} for the most general version]
\begin{equation}\label{eq:main result 3 amin text}
\begin{split}
	[\hat t_c,\hat H_c]\ket{\Psi_\textup{nor}(k_0)}&=\mi \ket{\Psi_\textup{nor}(k_0)} +\ket{\varepsilon_\textup{comm}},\\
	\text{where} \quad \|\ket{\varepsilon_\textup{comm}}\|_2 &=\bo\left(poly(d)\; \me^{-\frac{\pi}{4}d} \right)\,\textup{as } d\rightarrow \infty.
	\end{split}
\end{equation}
The implication of Eq. \eqref{eq:main result 3 amin text} is that our \gClock~states $\ket{\Psi(k_0)}$ can achieve the canonical commutation relation up to a exponentially small error in clock dimension. In addition, since the l.h.s. of Eq. \eqref{eq:main result 3 amin text} is $T_0$ independent, so is the error $\|\ket{\varepsilon_\textup{comm}}\|_2$.%  is independent of $T_0$.  and thus one has two options, $T_0\rightarrow \infty$ or $T_0\rightarrow constant$ in the $d\rightarrow \infty$ limit. \jono{say why, rather than just thus?} The latter option corresponds to the case in which the spectrum of $\hat t_c$ is bounded while the former corresponds to the unbounded case.
\end{result}

See Remark \ref{rem:result3 Serge} that relates the above result to previous work on approximating the canonical commutator relation in finite dimension.

\subsection{Consequences of Autonomous Quasi-control}\label{sec:Consequences of Quasi-Autonomous control}
We now discuss the important consequences that Result 2 has for quantum control. Consider a quantum system of dimension $d_s$, upon which we require a unitary $\tilde{U}$ to be implemented within a time interval $t\in[t_1,t_2]$. More precisely, we would like that for any initial state $\rho_s\in\mathcal{S}(\mathcal{H}_s)$,
\be\label{eq:rho initial final}
\rho_s(t)=
\begin{cases}
\rho_s &\mbox{ if } 0\leq t< t_1\\
\tilde U \rho_s \tilde U^\dag   &\mbox{ if } t> t_2.
\end{cases}
\ee 
Such an operation could represent, for example, the implementation of a quantum gate in a quantum computer. The operation can be implemented by applying an interaction on the system in the given time interval, for instance via the time dependent Hamiltonian $\hat H(t)=\hat H_s^{int} g(t)$ acting on $\mathcal{H}_s,$
%\be\label{eq:system clcok interaction term}
%\hat H=\hat H_s^{int} g(t),
%\ee
where $\tilde U=\me^{\mi \hat H^{int}_s},$
%\be 
%\tilde U=\me^{\mi \hat H^{int}_s},
%\ee
and $g\in L(\rr: \rr_{\geq 0})$ is a normalised pulse within the time interval, i.e. $\int_{t_1}^{t_2} dx\, g(x)=1,$ with support $[t_1,t_2]$.
%\be 
%\int_{t_1}^{t_2} dx\, g(x)=1.
%\ee
Indeed, the state of the system is found to be
\begin{align}\label{eq:t dependent ham}
	\rho_s(t) &= U(t)\rho_s U^\dag(t), \quad\text{where} \quad U(t) = \me^{-\mi \hat H_s^{int} \int_0^t dx g(x)},
\end{align}
thus implementing the desired unitary Eq. \eqref{eq:rho initial final}. However, the interaction Hamiltonian $\hat H(t)$ is time dependent, and the pulse $g(t)$ represents an external observer modulating its strength, which makes the entire operation non-autonomous.\Mspace

One can describe the above autonomously by using the idealized (albeit unphysical) clock, as in \cite{malabarba2015clock}, by having the pulse $g(t)$ modulated by the clock itself, as a position dependent potential $g(x)$. In Section \ref{sec:Automation via the idealised clock}, we demonstrate that for the idealized clock there is no back-reaction due to the potential, and the unitary $\tilde{U}$ is implemented perfectly.\Mspace

We now show that a direct consequence of the autonomous quasi-control result, is that the unitary can be implemented autonomously with the aid of the finite clock, while only incurring a small error due to the finite nature of the clock. This is done analogously to the idealised case, by implementing the pulse $g(t)$ by a potential term $V_0$ that is added to the clock's own Hamiltonian. To be more precise,
\begin{result}[See Section \ref{sec:Implementing Energy preserving unitaries with the finite} for the proof and explicit construction]

There exists a $V_0$ (and an associated $\zeta<\infty$ given by Eq. \eqref{eq:b zeta main text}) such that for all unitaries $\tilde U$,  initial states $\rho_s\in\mathcal{S}(\mathcal{H}_s)$, and time intervals $t\in[0,t_1]\cup [t_2,T_0]$, for all $0<t_1<t_2<T_0$, the evolution of the initial state $\rho_{sc}'(0)=\rho_s\otimes\ketbra{\Psi_\textup{nor}(0)}{\Psi_\textup{nor}(0)}$ under the time \textit{independent} Hamiltonian $\hat H_{sc} = \id_s \otimes \hat{H}_c + \hat H^{int}_s \otimes \hat{V}_d,$
%\begin{equation}\label{eq:finite s c ham main text}
%\hat H_{sc} = \id_s \otimes \hat{H}_c + \hat H^{int}_s \otimes \hat{V}_c,
%\end{equation}
denoted by $\rho_{sc}'(t)=\me^{-\mi t \hat H_{sc}}\, \rho_{sc}'(0) \,\me^{\mi t \hat H_{sc}}$ satisfies
\be\label{eq:q-auto contrl trace bound}
\| \rho_s(t)-\rho_s'(t)\|_1\leq \sqrt{d_s\tr[\rho_s^2]} \,\Big(\varepsilon_s(t,d)+\varepsilon_v(t,d)\big(2+\varepsilon_v(t,d)\big)\Big),
\ee
where $\rho_s'(t)=\tr_c[\rho_{sc}'(t)]$ denotes the partial trace over the clock, $\rho_s(t)$ is given by Eq. \eqref{eq:t dependent ham}, and the error $\varepsilon_s\geq 0$ is independent of both $\rho_s$ and $d_s$. The explicit form of $\varepsilon_s$, specifying how it depends on the potential $V_0$, is provided in Section \ref{sec:Clocks as Quantum control}, see Corollary \ref{lem:new g}.
\end{result}

%\ralph{there is a part commented out here, is it true or not? not good to leave a comment like this in the tex file}\mpw{no, it's not. The bounded in time is true, but it is no longer zero at multiples of the period. This was true, but then made a slight change in its def and now not true. This is not of any real importance. I commented out the "boundedness in t" part since I commented out since I thought it was confusing here, since in Eq. 19 we say that $t\in[0,t_1]\cup [t_2,T_0]$ .i.e. a smaller range} % and bounded in time. At times $t$ equal to multiples of the period $T_0$, the error $\varepsilon_s$ is zero.

Before discussing Eq. \eqref{eq:q-auto contrl trace bound} in more detail, we also introduce a dynamical measure of disturbance for the clock. Since the clock state $\rho_c'(t)=\tr_s[\rho_{sc}'(t)]$ undergoes periodic dynamics with periodicity $T_0$ when no unitary is implemented, i.e. when $V_0(x)=0$ for all $x\in\rr$, the difference in trace distance between the initial and final state after one period is exactly zero. Moreover, when $V_0\neq 0$ any difference between the two states is \textit{solely} due to the back-reaction caused by the potential implementing the unitary on the system.\Mspace

A simple application of Result 2, allows for a direct characterization of this disturbance,
\be\label{eq:clock finite trace diff}
\|\rho_c'(0)-\rho_c'(T_0)\|_1\leq
%\begin{cases}
%\bo &\mbox{if } V_0 (x)=0 \text{ for all } x\in\rr\\
 2\,\varepsilon_v(T_0,d).% &\mbox{otherwise}
%\end{cases}
\ee
%where $\rho_c'(t)=\tr_s[\rho_{sc}'(t)]$.% From Eq. \eqref{eq:clock finite trace diff} it is thus apparent, that the error in trace distance is solely a consequence of implementing the unitary $\tilde U$, since $ V_0 (x)=0 \text{ for all } x\in\rr$ implies that the interaction term in $\hat H_{sc}$ is zero.\\

Eqs. \eqref{eq:q-auto contrl trace bound} and \eqref{eq:clock finite trace diff}, represent a trade-off between the back-reaction on the clock dynamics on the one hand, and how well the clock acts as a switch on the other hand. The back-reaction $\varepsilon_v$ (introduced in Eq. \ref{eq:result 2 main text}) is minimized by having the potential $V_0$ be as ``gradual" as possible (i.e. small $\zeta$). However, the error $\varepsilon_s$ is minimized by having the unitary implemented as much as possible only within the time interval $[t_1,t_2]$, and this requires a narrower (and thus steeper) potential (i.e. large $\zeta$). See Fig. \ref{fig:tradeoff} for a visual discussion of the effect of the potential on the two error terms.\Mspace

We can also investigate how the tradeoff depends on $d$.  Depending on how one parametrizes the potential $V_0$ with $d$, the decay rates of $\varepsilon_v$ and $\varepsilon_s$ as a function of $d$ will be different. We highlight here two extremal cases. The first case is that of minimal clock disturbance. This corresponds to when we fix the tolerance error $\varepsilon_s$ to be independent of $d$, for all $\varepsilon_s>0$. In this case $V_0$ can be chosen to be $d$ independent from which it follows that (refer Eqs. \ref{eq:result 2 main text},\ref{eq:b zeta main text}) $b$ is constant and $\lim_{d\rightarrow \infty}\zeta=1$, meaning that the clock disturbance $\varepsilon_v$ is minimal and has a decay rate asymptotically equal to that given by the quasi-continuity bound $\varepsilon_c$ (See Eq. \eqref{eq:result 1 main text}). More generally, we show that one can even achieve a parametrisation such that $\lim_{d\rightarrow \infty}\varepsilon_s=0$ while still maintaining $\lim_{d\rightarrow \infty} \zeta=1$.

Alternatively, one can choose a potential with the aim of minimizing the r.h.s. of Eq. \eqref{eq:q-auto contrl trace bound}. This leads to both $\|\rho_s(t)-\rho_s'(t)\|_1$ and $\|\rho_c'(0)-\rho_c'(T_0)\|_1$ to be of the same order and to decay faster than any power of $d$,
%\be 
%\bo\left(\left( t P(d) +(d\ln d)^{1/4}\right)\me^{- \frac{\pi}{4} \chi_2^2 \,d^{1/4}\sqrt{\ln d}}\,\right),
%\ee 
%where $\chi_2$ is a $t_1,t_2$ dependent constant detailed in the supplementary material.
specifically, exponential decay in $d^{1/4}\sqrt{\ln d}$.

%These examples can be compared to the case of the idealised clock of infinite dimension. The property of Autonomous control, satisfied by the idealised clock, implies that Eq. \eqref{eq:q-auto contrl trace bound} can be satisfied with the r.h.s. set to zero, when the finite Gaussian clock is replaced with the idealised clock. Our bounds are thus seen to agree asymptotically.% However, as discussed, such a clock can only be considered a desirable idealization due to its requirement of infinite energy.\\

\begin{figure}[!htb]
%\minipage{0.32\textwidth}
  \includegraphics[scale=0.32]%[width=\linewidth]
  {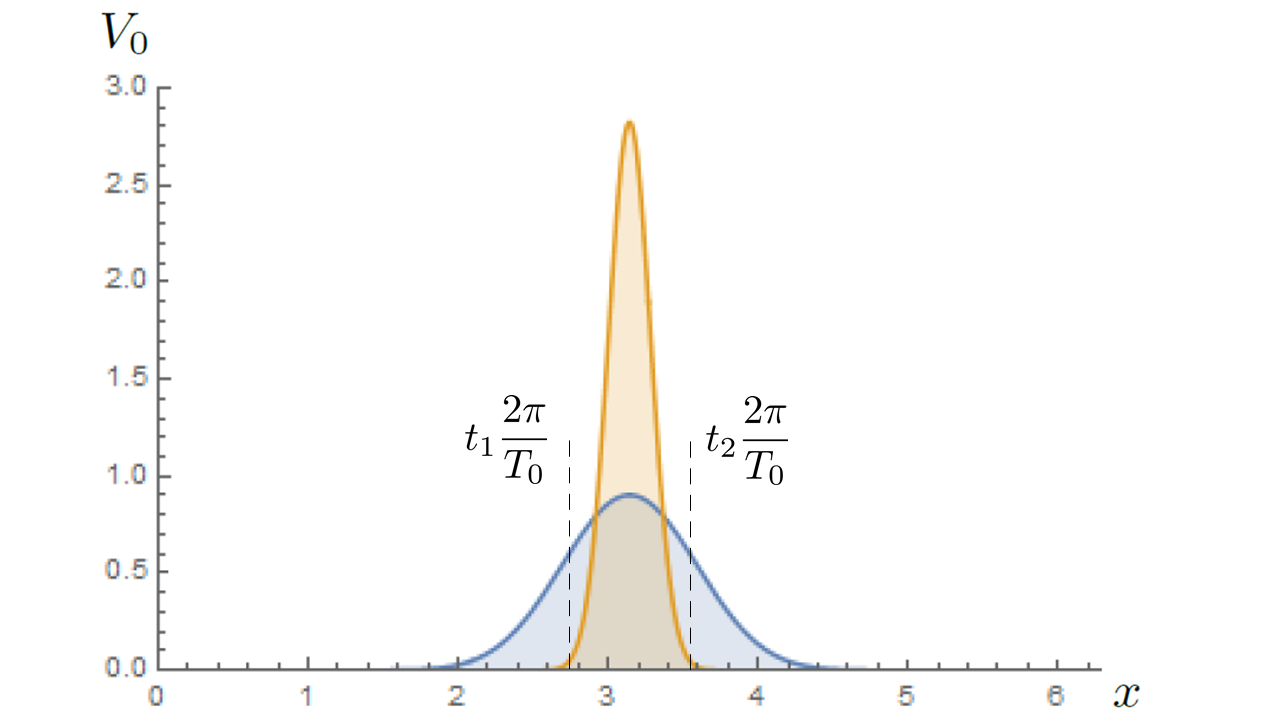}%{}%{Pots_math_no_Scaling.png}%{{2PotPlot.png}}
  \caption{Plot of the potential $V_0(x)=A_c \cos^{2n}\left(\frac{x-\pi}{2} \right)$ over one period $[0,2\pi]$, with $A_c$ a normalisation constant, where $n=10$ (blue), and $n=100$ (orange). The blue potential is ``flatter" than the orange potential, and as such, $b$ in Eq. \eqref{eq:b zeta main text} is smaller than in the case of the orange potential. %For sufficiently large $d$, 
  The disturbance $\varepsilon_v$ in the clock's dynamics caused by the implementation of the unitary (see Eq. \eqref{eq:clock finite trace diff}), will be smaller for the blue potential than for the orange potential. However, the error term $\varepsilon_s$ involved in implementing the system unitary (see Eq. \eqref{eq:q-auto contrl trace bound}), will be larger for the blue potential than for the orange potential. Generally speaking, whenever $b$ is larger, $\varepsilon_s$ can be made smaller at the expense of a larger disturbance to the clock's dynamics (i.e. larger $\varepsilon_v$). This trade off is reminiscent of the information gain disturbance principle \cite{info_disturbace_Paul}, but here rather than gaining information, the unitary is implemented more accurately (i.e. smaller r.h.s. in Eq. \eqref{eq:q-auto contrl trace bound}).%\jono{not sure I understand this}
  }\label{fig:tradeoff}
%\endminipage\hfill
%\minipage{0.32\textwidth}
%  \includegraphics[width=\linewidth]{{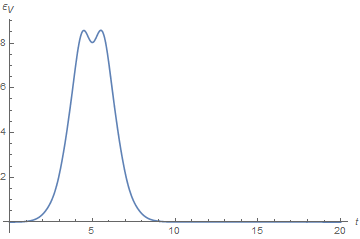}}
  %\caption{A really Awesome Image}\label{fig:awesome_image2}
%\endminipage
% \caption{ \label{fig:epsilon V exmaple plot}}% PLOTTING INFO: plots generated from Mathematica code in Dropbox\Clock paper\Paper_calcs_Mischa.nb
\end{figure}

\subsection{Results for \gClock~states of general width}\label{sec:generalwidth}

So far, the results stated have only been for when the initial complex Gaussian superposition of time states $\ket{\Psi_\textup{nor}(k_0)}$ has a width $\sigma$ equal to $\sqrt{d}$ and mean energy centered in the middle of the spectrum of $\hat H_c$. This particular choice of the width which we call \textit{symmetric}, has a particular physical significance.  The uncertainty in both the energy and time basis, denoted $\Delta E$ and $\Delta t$ respectively, are equal with $\Delta E\Delta t=1/2$.\footnote{up to an additive correction term which decays exponentially in $d$.\label{refnote}} The last equality is true regardless of the width $\sigma$, implying that all our \gClock~states are minimum uncertainty states.\footref{refnote} Whenever $\sigma >\sqrt{d}$, we have $\Delta E < \Delta t$ implying less uncertainty in energy. We call these \textit{energy squeezed} states in analogy with quantum optics terminology. Meanwhile, we call states  for which $\sigma <\sqrt{d}$, \textit{time squeezed} since $\Delta E > \Delta t$ in this case. It is likely that only time squeezed states can %surpass the classical limit \jono{what is the classical limit?} to 
achieve the Heisenberg limit \cite{HesenbergLim}. Yet, from the viewpoint of quantum control, our results suggest that squeezed states (either in the time or energy basis) have larger errors $\varepsilon_c$ and $\varepsilon_v$, thus rendering them more fragile to back-reactions from the potential $V_0$. More precisely, whenever the initial clock state is energy or time squeezed, the error terms $\varepsilon_c$, $\varepsilon_v$ still maintain their linear scaling with time. However, w.r.t. the dimension of the clock, the error no longer decays exponentially in $d$, but rather in $d^\eta$, where $0<\eta<1$, and where $\eta$ decreases as the state is squeezed further in time or energy. (Full details are left to later sections.)

\section{Discussion and Outlook}\label{sec:discussion}

Our results have so far been framed in terms of the dimension of the clock. However, since the clock has a Hamiltonian of equally spaced energy levels, the energy of the clock is linear in its dimension. Thus all of our results are also statements on the efficacy of the clock w.r.t. the energy of the state. In particular, this implies that the back-reaction on the clock is exponentially small in its mean energy.\Mspace

The second question of interest is how does the mean energy of the initial state of the clock effect the errors induced by finite size? From the energy-time uncertainty relation \cite{time_energy_PBusch}, and indeed the idealised clock, one might expect the larger the mean energy of the state, the better it performs. Contrarily, we find that errors maintain the same exponential decay in $d^\eta$ with $\eta=1$ reserved for symmetric states, but now with a smaller prefactor | one has to replace the $\pi/4$ in Eq. \eqref{eq:result 1 main text} with a factor which approaches zero as the mean energy of the initial clock state approaches either end of the spectrum of $\hat H_c$. This suggests that, when the dimension is finite, it is the \textit{dimension} itself, rather than the energy of the state, as suggested by the energy-time uncertainty relation \cite{time_energy_PBusch}, which is the resource for improving the accuracy of a clock.\Mspace

Thus, an interesting follow-up question to consider is the case of infinite-dimensional Hamiltonians, such as a harmonic oscillator, and investigate the accuracy of the clock w.r.t. to its mean energy. In fact, an alternative interpretation of our results is that the clock Hamiltonian is that of a quantum harmonic oscillator with energy spacing $\omega$, and that we work only in a finite dimensional subspace. In this interpretation, in the error terms $\varepsilon_c$, $\varepsilon_v$, $\varepsilon_s$, one could make the substitution $d\mapsto \frac{2}{\omega} \langle \hat H_c \rangle$, where $\langle \hat H_c \rangle$ is the mean energy of the clock's initial state, $\ket{\Psi_\textup{nor}(k_0)}$. %In this case, one can recover the idealised clock result by taking the limit $\langle \hat H_c \rangle \rightarrow \infty$.
In this case, we can interpret the clock as living in an infinite dimensional Hilbert space and with error terms $\varepsilon_c$, $\varepsilon_v$ which are exponentially small in mean clock energy.\Mspace

Based on numerical evidence, we conjecture that the exponential decay of $\varepsilon_c$ in the clock dimension for the \gClock~states with $\sigma=\sqrt{d}$, is the best possible scaling with clock dimension which holds for all times in one time period $[0,T_0]$. Furthermore, it could be that no initial clock state can achieve a better scaling. Such a fundamental limitation, would have direct implications for how much work can be embezzled from the clock when it implements a unitary in the framework of \cite{second}. (See Section \ref{sec:conjectures} for a longer discussion.)\Mspace

In the main text we have assumed that the system has a trivial Hamiltonian. The results derived continue to hold in the case of the system having a non-zero $\hat{H}_s$, as long as it commutes with the unitary operation being applied, i.e. $[ \hat{H_s}, \tilde{U} ] = 0$, what is commonly referred to as an ``energy-preserving unitary" or ``covariant operation". The results also serve as an approximation for the non-commuting case, if the clock operates at a time-scale much quicker than that of the system (here the frequency of the clock $\omega$, which has so far not affected the results, would become important).\Mspace

For the general case of non-commuting unitaries, one would additionally need a source of energy and coherence in addition to the source of timing provided by the clock. The energy source can be modeled in our setup explicitly by including a quantum battery system $\mathcal{H}_b$ into the setup or taken to be the clock itself. In the former case, the clock would perform an energy preserving unitary over the $\mathcal{H}_s\otimes\mathcal{H}_b$ system. The battery Hamiltonian and initial state would be such that locally, on $\mathcal{H}_s$, an arbitrary unitary would have been performed. Such a battery Hamiltonian and initial battery state has been studied in \cite{Aberg}, and will be applied to the setup in this paper in an upcoming paper.

\section{Conclusions}\label{sec:conclusions}
We solve the dynamics of a finite dimensional clock when the initial state is a coherent complex Gaussian superposition of time states | a basis which is mutually unbiased with respect to the energy eigenstates of the finite dimensional Hamiltonian. We show that such superpositions evolve in time in ways which mimic idealised, infinite dimensional and energy clocks up to errors which decay exponentially fast in clock dimension. We demonstrate the consequences our  results have for autonomous quantum control. We show that the clock can implement a timed unitary on the system via a joint clock-system time independent Hamiltonian, with an error which decays faster than any polynomial in the clock dimension; or equivalently, faster than any polynomial in the clock's mean energy. The implementation of the unitary induces a back-reaction onto the clock's dynamics, which we prove is also exponentially small in the clock's dimension and energy. We discuss the trade-off between smaller clock disturbance and better temporal localization of the unitary's implementation, which our bounds address quantitatively. Our results single out states of equal uncertainty in time and energy, with a mean energy at the mid point of the energy spectrum, as being the most robust, and thus incurring the least disturbance due to its implementation of the unitary on the system.\Mspace

On another level, by demonstrating that up to small errors,  the three core paradigms in quantum control | the unitary (Eq. \eqref{eq:rho initial final}), the time dependent Hamiltonian (Eq. \eqref{eq:t dependent ham}), and time independent control (Eq. \eqref{eq:q-auto contrl trace bound}), are all equivalent up to small errors, our results represent a unification of the three paradigms. The implications of this are not only of a foundational nature, but also practical, since it implies that one can numerically simulate a time dependent Hamiltonian on $\mathcal{H}_s$, while being re-assured that in actual fact, the results of the simulation are equivalent to simulating a time independent Hamiltonian on the larger Hilbert space $\mathcal{H}_s\otimes\mathcal{H}_c$, which would be numerically intractable due to the increased dimension.\Mspace

Such results are important because the implementation of unitaries and time-dependent interactions is a ubiquitous operation encountered in almost every research field of theoretical quantum mechanics; perhaps the most prominent example being in the field of quantum computation. Yet very little was known about how well such control, normally accounted for by a classical field, can actually be implemented by a fully quantum autonomous setup. Our results provide analytical and insightful bounds on exactly how well this can be achieved, venturing into the fundamental issue of time in quantum mechanics in the process. We envisage that the techniques for solving this problem may be useful in solving other dynamical problems in many-body physics.

 \newpage
\onecolumngrid
%\appendix
\begin{center}
\textbf{{\large Autonomous quantum machines and finite sized clocks: detailed results and derivations}}
\end{center}

\onecolumngrid

In the remaining sections we provide the full details of our findings. The organization is as follows.

 Section \ref{sec:Definition of Gaussian clock} introduces the definitions of our clock states and discusses some of their basic properties.
Section \ref{sec:Continuity of the Gaussian clock state} is dedicated to the proof of our first main result. The main result is presented in Theorem \ref{gaussiancontinuity}. All other Lemmas in this section are technical lemmas used solely for the proof of Theorem \ref{gaussiancontinuity}. A sketch of the proof can be found at the beginning of the Section. If one wishes to understand the proof of the second main result, this section may be skipped, since the proof of the second result is a generalization of Theorem \ref{gaussiancontinuity}. Section \ref{sec:Proof of quasi-continuous control of Gaussian clock states} is concerned with the proof of our second main result. Readers solely interested in the result can go directly to Theorem \ref{movig through finite time}, followed by Corollary \ref{movig through finite time coro} and the example Section \ref{sec:Examples of Potential functions}. A sketch of the proof can be found at the beginning of the Section. The following section, \ref{sec:Clocks as Quantum control} is concerned with the immediate consequences that main result two has for autonomous quantum control. Unlike the with the previous two sections, a reader who wishes to obtain a deeper understanding of the results (not necessarily the proofs), is advised to read the entire section, omitting the proofs if desired. The highlights of this section are 
Lemma \ref{lem:trace dist bound t-Ham Vs d dim clock}, Corollary \ref{corr:explicit bound on system trace distance}, and the examples in Section \ref{sec: control exmaples}.

Section \ref{appendixcommutator}, is concerned with the proof of main result three. Theorem \ref{Quasi-Canonical commutation} represents the main result of this section.%, while different limiting cases resulting from it are calculated in subsection \ref{The infinite dimensional limit}. These reflect how our finite dimensional clock asymptotically can achieve different versions of the idealised clock mentioned in the main text and section \ref{The ideal time operator and clock}.

Section \ref{sec:conjectures} proposes some conjectures, based on numerical studies about the tightness and generality of our bounds. Some open questions about the properties of the bounds are also discussed.

The remaining sections proved background information and technical results and definitions used throughout the proofs. Section \ref{idealizedclock} is concerned with describing the idealised momentum clock. It serves as a reference to the idealised properties we wish our finite dimension clock to mimic. Section \ref{SWPclock} explains previous results in the literature on finite clocks while pointing out their shortcomings which our clock will overcome. It also introduces some of the definitions which will be used  in the rest of the manuscript.

Sections \ref{mathidentities} and \ref{Error Bounds} are for reference, and do not contain any of the main results or this article. The first of these two Sections, Section \ref{mathidentities}, contains some of the essential mathematical ingredients which have been used repetitively throughout previous Sections. Here, Sections, \ref{Poisson summation formula} and \ref
{unitaryerroraddition} are simple yet crucially important for the main results of this article. The second of these two sections, Section \ref{Error Bounds}, contains error bounds for summations over Gaussian tails. 

%As a precautionary note to the reader, we remark that some of the notation in the supplementary may differ from that used in the main text. This is mainly for brevity of expression in the main text. All notation used within the supplementary is self consistent.
%\vspace{1cm}
%\twocolumngrid

\onecolumngrid

\section{Definition of \gClock~states and properties}\label{sec:Definition of Gaussian clock}
In this section we will introduce the class of \gClock~states, that are complex Gaussian superpositions of the time-states, and review some of their properties. In the following, we call the Hilbert space of the clock, $\mathcal{H}_c$, the Hilbert space formed by the span of the time basis $\{\ket{\theta_k}\}_{k=0}^{d-1}$, or equivalently the energy basis $\{\ket{E_n}\}_{n=0}^{d-1}$.

\begin{definition}\label{def:Gaussian clock states}\emph{(\gClock~states)}. Let $\Lambda_{\sigma,n_0}$ be the following space of states in the Hilbert space of the $d$ dimensional clock,
\be\label{eq:lambda set def} 
\Lambda_{\sigma,n_0}=\bigg\{ \ket{\Psi(k_0)}\in\mathcal{H}_{c},\quad  k_0\in\rr\bigg\},
\ee
where
\begin{align}\label{gaussianclock}
	\ket{\Psi(k_0)} = \sum_{\mathclap{\substack{k\in \mathcal{S}_d(k_0)}}} A e^{-\frac{\pi}{\sigma^2}(k-k_0)^2} e^{i 2\pi n_0(k-k_0)/d} \ket{\theta_k},
\end{align}
with $\sigma \in (0,d)$, $n_0 \in (0,d-1)$, $A \in \rr^+$, and $\mathcal{S}_d(k_0)$ is the set of $d$ integers closest to $k_0$, defined as:
\begin{align}\label{eq: mathcal S def}
	\mathcal{S}_d(k_0) = \left\{ k \; : \; k\in \mathbb{Z} \text{   and  }  -\frac{d}{2} \leq  k_0-k < \frac{d}{2} \right\}.
\end{align}
In the special case that $\ket{\Psi(k_0)}$ is normalized, it will be denoted by
\be\label{eq:Psi ket normalised}
\ket{\Psi_\textup{nor}(k_0)}=\ket{\Psi(k_0)},
\ee
and $A$ will take on the specific value
\be\label{eq:A normalised}
A=A(\sigma;k_0)=\frac{1}{\sqrt{\sum_{k\in \mathcal{S}_d(k_0)} \me^{-\frac{2\pi}{\sigma^2}(k-k_0)^2}}},
\ee
s.t. $\braket{\Psi_\textup{nor}(k_0)|\Psi_\textup{nor}(k_0)}=1$. Bounds for $A$ can be found in Section \ref{Normalizing the clock state}. 
\end{definition}
\begin{remark}[Technicality]
Sometimes we will use big O notation $\bo$, and $poly(x)$ to denote a generic polynomial in $x\in\rr$ of constant degree. When doing so, for simplicity, it will be assumed that $\sigma\in(0,d)$ becomes infinitely far away from the end points of its domain in the large $d$ limit, namely  $\lim_{d\rightarrow\infty}\sigma=+\infty$ and $\lim_{d\rightarrow\infty}\sigma /d =0$.
\end{remark}
\begin{remark}[Time, uncertainty, and energy of clock states]

The parameters $k_0$ and $\sigma$ may be identified with the mean and variance respectively in the basis of time-states. $n_0$ may be identified with the mean energy of the clock, as we shall discuss shortly in Remark \ref{rem:symme of clock} which motivates Def. \ref{def:clock stat classes}.

\end{remark}
\begin{remark}[Keeping the clock state centered]\label{clockcentered}

The choice of the set $\mathcal{S}_d(k_0)$ is to ensure that the Gaussian is always centered in the chosen basis of angle states. The reason it is possible to do this in the first place is because the basis of clock states is invariant under a translation by $d$, i.e. $\ket{\theta_k} = \ket{\theta_{k+d}}$ due to Eq.  \eqref{finitetimestates_main}. Instead of the set $\{0,1,...,d-1\}$ we can choose to express the state w.r.t. to any set of $d$ consecutive integers, and we choose the specific set in which the state above is centered.
If $d$ is even, then $\mathcal{S}_d(k_0)$ changes at integer $k_0$, if $d$ is odd, then it changes at half-integer values of $k_0$. Equivalently, in terms of the floor function $\lfloor \cdot \rfloor$,
\begin{equation}\label{eq:S set def}
	\mathcal{S}_d(k_0) = \begin{cases}
		\mathcal{S}_d(\lfloor k_0 \rfloor ) & \text{if $d$ is even}, \\
		\mathcal{S}_d(\lfloor k_0 + 1/2 \rfloor) & \text{if $d$ is odd}.
			\end{cases}
\end{equation}
\end{remark}
\begin{definition}\label{Distance of the mean energy from the edge}(Distance of the mean energy from the edge of the spectrum)
We define the parameter $\alpha_0\in(0,1]$ as a measure of how close $n_0\in(0,d-1)$ is to the edge of the energy spectrum,
\iffalse
\begin{equation}
	\alpha = \min \left\{ \frac{n_0+1}{d}, 1 - \frac{n_0}{d} \right\} = \min_{n \in \{0,1,...,d-1\}} 1 - \frac{|n-n_0|}{d}.
\end{equation}

The maximum value of $\alpha$ is $(d+1)/2d$ corresponding to $n_0 = (d-1)/2$. The minimum value is $\alpha = 1/d$, when $n_0=0$ or $n_0 = d-1$. c.f. similar measures of Eq. \eqref{eq:alpha_c def eq}, and Eq. \eqref{eq:alpha_0 def} in Def. \ref{def:stand alone def mathcal N}.
\fi
namely
\begin{align}\label{eq:alpha_0 def}
\alpha_0&=\left(\frac{2}{d-1}\right) \min\{n_0,(d-1)-n_0\}\\
&=1-\left|1-n_0\,\left(\frac{2}{d-1}\right)\right|\in(0,1].
\end{align}
The maximum value $\alpha_0=1$ is obtained for $n_0=(d-1)/2$ when the mean energy is at the mid point of the energy spectrum, while $\alpha_0\rightarrow 0$ as $n_0$ approaches the edge values $0$ or $d-1$. 
c.f. similar measure, Eq. \eqref{eq:alpha_c def eq}.
\end{definition}

\begin{remark}[Comparison to time-states]

As a first comparison, we repeat the analysis of the behaviour of the time-states (see Fig. \ref{peresbehaviour}), this time for the \gClock~state, i.e. we plot the expectation value and variance of the time operator, given the \gClock~initial state centered about $\ket{\theta_0}$, see Fig. \ref{gaussbehaviour}.

\begin{figure}[h]
\includegraphics[width=0.4\linewidth]{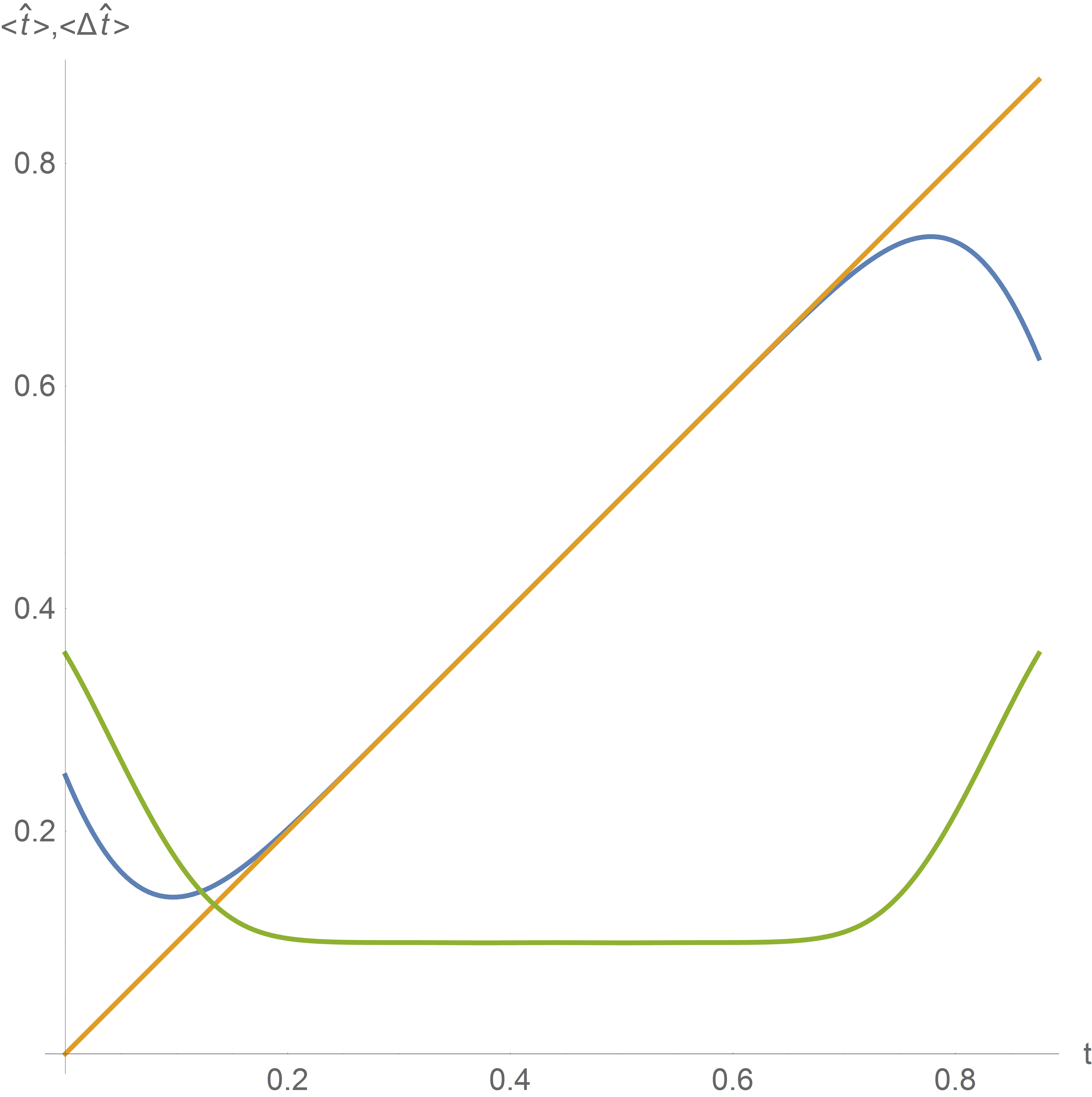}
\caption{The expectation value (blue) and variance (green) of the time operator Eq. \eqref{eq:t op main text def} %{finitetimeoperator}
	 given the initial state $\ket{\Psi(0)} \in \Lambda_{\sqrt{d},(d-1)/2}$, (referred to as symmetric, see Def. \ref{def:clock stat classes} and Remark \ref{rem:symme of clock}) for $d=8$, $T_0 = 1$. The ideal case $\braket{\hat{t}} = t$ is in orange. If one compares this to the analogous plot for time states rather than the state used here (See Fig. \ref{peresbehaviour}, Section \ref{Shortcomings of the time-states} in the appendix), it is apparent that time states produce a mean and standard deviation which is much worse.}\label{gaussbehaviour}
\end{figure}
\end{remark}

\begin{definition}\label{def:analytic ext. of clock states} \emph{(Analytic extension of clock states).} Corresponding to every $\ket{\Psi(k_0)} \in \Lambda_{\sigma,n_0}$, we define $\psi : \rr \rightarrow \mathbb{C}$ to be the analytic Gaussian function
\begin{equation}\label{analyticposition}
	\psi(k_0;x) = A e^{-\frac{\pi}{\sigma^2}(x-k_0)^2} e^{i2\pi n_0(x-k_0)/d}.
\end{equation}

By definition, $\psi(k_0;k) = \braket{\theta_k| \Psi(k_0)}$ for $k \in \mathcal{S}_d(k_0)$, and thus $\psi$ is an analytic extension of the discrete coefficients  $\braket{\theta_k|\Psi(k_0)}$.

In the special case that the corresponding state $\ket{\Psi(k_0)} \in \Lambda_{\sigma,n_0}$ is normalised, $\psi$ will be denoted accordingly, namely
\be\label{eq:psi nor def}
\psi_\textup{nor}(k_0;k):=\psi(k_0;k)\quad \text{iff } A \text{ satisfies Eq. \eqref{eq:A normalised}}
\ee
\end{definition}

\begin{remark}\label{interchangearguments} $\psi(k_0; k+y) = \psi(k_0-y; k)$.
\end{remark}

\begin{definition}\label{Continuous Fourier Transform no pot} \emph{(Continuous Fourier Transform as a function of dimension $d$ of clock state).} Let $\tilde{\psi} : \rr \rightarrow \mathbb{C}$ be defined as the continuous Fourier transform of $\psi$,
\begin{align}\label{cftd}
	\tilde{\psi}(k_0;p) &= \frac{1}{\sqrt{d}} \int_{-\infty}^\infty \psi(k_0;x) \me^{-\mi 2\pi px/d} dx = A \frac{\sigma}{\sqrt{d}} \me^{-\frac{\pi \sigma^2}{d^2} (p-n_0)^2} \me^{-i 2\pi n k_0/d}.
\end{align}
Similarly to above, we denote
\be\label{eq:psi f.t. nor def}
\tilde\psi_\textup{nor}(k_0;p):=\tilde\psi(k_0;p)\quad \text{iff } A \text{ satisfies Eq. \eqref{eq:A normalised}}.
\ee
\end{definition}

\begin{lemma}[The clock state in the energy basis] 
This Lemma states that the continuous Fourier transform $\tilde{\psi}(k_0;p)$ is an exponentially good approximation (w.r.t. dimension $d$) of the clock state in the energy basis. Mathematically this is a statement of the closeness of the discrete Fourier transform (D.F.T.) to the continuous Fourier transform (C.F.T.) for \gClock~states. For simplicity, we state the result for the special case $\sigma=\sqrt{d}$, $n_0=(d-1)/2$.
\begin{equation}
	\abs{ \braket{E_n | \Psi_\textup{nor}(k_0)} - \tilde{\psi}_\textup{nor}(k_0;n) } < \left( \frac{2^{\frac{9}{4}}}{1 - e^{-\pi}} \right) d^{-\frac{1}{4}} e^{-\frac{\pi d}{4}}, \quad n=0,1,2,\ldots, d-1.
\end{equation}
\end{lemma}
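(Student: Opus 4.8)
The plan is to compare the $n$-th energy amplitude $\braket{E_n | \Psi_\textup{nor}(k_0)}$ — which is a \emph{finite} sum over $k \in \mathcal{S}_d(k_0)$ of the Gaussian coefficients $\psi_\textup{nor}(k_0;k)$ against the phases $\frac{1}{\sqrt{d}} e^{-i 2\pi n k/d}$ from \eqref{finitetimestates_main} — with the continuous Fourier transform $\tilde\psi_\textup{nor}(k_0;n)$, which from \eqref{cftd} is an \emph{infinite} Gaussian integral. So there are two discrepancies to control: (i) replacing the integral over $\rr$ by an infinite sum over $\zz$ (discretization), and (ii) truncating that infinite sum to the $d$ terms in $\mathcal{S}_d(k_0)$ (truncation of Gaussian tails). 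The natural tool for (i) is the Poisson summation formula (the paper flags Section \ref{Poisson summation formula} as exactly this ingredient): summing $\psi_\textup{nor}(k_0;k)e^{-i2\pi nk/d}$ over all $k\in\zz$ equals $\sqrt{d}$ times the sum over all integer shifts $\tilde\psi_\textup{nor}(k_0; n + d m)$, $m\in\zz$; the $m=0$ term is precisely $\sqrt{d}\,\tilde\psi_\textup{nor}(k_0;n)$ up to the $1/\sqrt d$ normalization, and the $m\neq 0$ terms are the error.

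Concretely, I would proceed as follows. First, write $\braket{E_n|\Psi_\textup{nor}(k_0)} = \frac{1}{\sqrt d}\sum_{k\in\mathcal{S}_d(k_0)} \psi_\textup{nor}(k_0;k)\, e^{-i2\pi nk/d}$ using \eqref{gaussianclock}–\eqref{finitetimestates_main}. Second, extend the sum from $\mathcal{S}_d(k_0)$ to all of $\zz$; the difference is a sum of Gaussian tail terms $e^{-\frac{2\pi}{\sigma^2}(k-k_0)^2}$ with $|k-k_0|\geq d/2$, which for $\sigma=\sqrt d$ is a sum of terms bounded by $e^{-\pi d /2}$ times a geometric-type series — this is handled by the Gaussian-tail error bounds promised in Section \ref{Error Bounds}, and contributes something like $\frac{A}{\sqrt d}\cdot\text{const}\cdot e^{-\pi d/2}$, which is even smaller than the claimed bound. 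Third, apply Poisson summation to $\sum_{k\in\zz}\psi_\textup{nor}(k_0;k)e^{-i2\pi nk/d}$: since $\psi_\textup{nor}(k_0;\cdot)$ is a Schwartz function whose Fourier transform is the Gaussian $\sqrt d\,\tilde\psi_\textup{nor}(k_0;\cdot)$ of \eqref{cftd}, we get $\sqrt d\sum_{m\in\zz}\tilde\psi_\textup{nor}(k_0; n + dm)$. Fourth, isolate the $m=0$ term to get exactly $\tilde\psi_\textup{nor}(k_0;n)$, and bound the remainder $\sum_{m\neq 0}\tilde\psi_\textup{nor}(k_0;n+dm)$: each such term carries a factor $e^{-\frac{\pi\sigma^2}{d^2}(n+dm-n_0)^2}$, and since $n\in\{0,\dots,d-1\}$ and $n_0=(d-1)/2$, the argument $|n+dm-n_0|\geq d/2$ for $m\neq 0$, giving (with $\sigma^2=d$) a factor $e^{-\frac{\pi}{d}(d/2)^2}=e^{-\pi d/4}$ for the dominant terms $m=\pm1$, with the rest summing geometrically. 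Finally, collect the prefactors: $|A|\frac{\sigma}{\sqrt d} = |A|$ for $\sigma=\sqrt d$, and use the normalization bounds for $A$ from Section \ref{Normalizing the clock state} (which give $A = (2/d)^{1/4}(1+O(e^{-\pi d}))$, matching \gnorm) to land on the stated constant $\frac{2^{9/4}}{1-e^{-\pi}} d^{-1/4} e^{-\pi d/4}$, the $(1-e^{-\pi})^{-1}$ being the geometric series bound on the Poisson tail.

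I expect the main obstacle to be bookkeeping the constants sharply enough to get exactly $2^{9/4}/(1-e^{-\pi})$ rather than a looser constant: one must combine the worst case over $n\in\{0,\dots,d-1\}$ (the bound on $|n+dm-n_0|$ degrades mildly near the spectrum edges $n=0$ or $n=d-1$, where it is $d/2$ rather than larger, which is why $e^{-\pi d/4}$ and not something smaller is the final rate), the geometric resummation of $\sum_{m\geq 1} e^{-\pi d m^2/4}\leq e^{-\pi d/4}/(1-e^{-\pi d/4})$ or a cruder $1/(1-e^{-\pi})$ bound, the two-sided ($m = \pm 1,\pm 2,\dots$) count giving a factor $2$, and the normalization constant $A\leq (2/d)^{1/4}\cdot(\text{something close to }1)$. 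The discretization-truncation step (replacing $\mathcal{S}_d(k_0)$ by $\zz$) should be arranged to be negligible compared to the Poisson tail so it does not affect the leading constant. Strictly, Poisson summation requires mild decay/smoothness hypotheses, but $\psi_\textup{nor}$ is Gaussian hence Schwartz, so this is automatic and I would simply cite the version in Section \ref{Poisson summation formula}.
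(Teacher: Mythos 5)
Your proposal follows essentially the same route as the paper: express $\braket{E_n|\Psi_\textup{nor}(k_0)}$ as the D.F.T. over $\mathcal{S}_d(k_0)$, extend to $k\in\zz$ with a Gaussian-tail bound, apply Poisson summation, and isolate the $m=0$ term while bounding the $m\neq 0$ tail geometrically. One small slip: the truncation error involves the amplitude $Ae^{-\frac{\pi}{\sigma^2}(k-k_0)^2}$ (not $e^{-\frac{2\pi}{\sigma^2}(k-k_0)^2}$, which is the squared modulus appearing in the normalization), so for $\sigma=\sqrt d$ it decays at the \emph{same} rate $e^{-\pi d/4}$ as the Poisson tail, merely with prefactor $d^{-3/4}$ instead of $d^{-1/4}$; consequently the constant $2^{9/4}=2\cdot 2^{5/4}$ in the paper arises from adding these two comparable contributions (each bounded by $\tfrac{2^{5/4}}{1-e^{-\pi}}d^{-1/4}e^{-\pi d/4}$), rather than from the Poisson tail alone as you suggest.
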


\begin{proof}

From the definition of the state, Eq. \eqref{def:Gaussian clock states}, and the relation between the time states and energy states Eq.  \eqref{finitetimestates_main},%{finitetimestates},
\begin{align}
	\braket{E_n | \Psi(k_0)}	&= \frac{1}{\sqrt{d}} \;\;\;  \sum_{k \in \mathcal{S}_d(k_0)} \psi(k_0;k) e^{-i 2\pi nk/d}.
\end{align}

By definition, this is the D.F.T. of the $\psi(k_0;k)$, state in the time basis. To prove that this approximates the C.F.T. of the state $\psi(k_0,k)$, we first convert the finite sum above to the infinite sum $k\in\mathbb{Z}$, and bound the difference using Lemma \ref{G0},
\begin{equation}
	\left| \frac{1}{\sqrt{d}} \;\;\;  \sum_{k \in \mathcal{S}(d,k_0)} \psi_\textup{nor}(k_0;k)\, e^{-i 2\pi nk/d} - \frac{1}{\sqrt{d}} \;\;\;  \sum_{\mathclap{\substack{k \in \mathbb{Z}}}} \psi_\textup{nor}(k_0;k)\, e^{-i 2\pi nk/d} \right| < \left( \frac{2^{\frac{5}{4}}}{1 - e^{-\pi}} \right) d^{-\frac{3}{4}} e^{-\frac{\pi d}{4}},
\end{equation}
where we have used the bound for $A$ for normalized states derived in Section \ref{Normalizing the clock state}.
Applying the Poisson summation formula (Corollary \ref{poissonsummation}) on the infinite sum,
\begin{equation}
	\frac{1}{\sqrt{d}} \;\;\;  \sum_{k \in \mathbb{Z}} \psi_\textup{nor}(k_0;k) e^{-i 2\pi nk/d} =  \sum_{m \in \mathbb{Z}} \tilde{\psi}_\textup{nor}(k_0;n + md).
\end{equation}

Using Lemma \ref{G0}, we can approximate the sum by the $m=0$ term,
\begin{equation}
	\left| \tilde{\psi}_\textup{nor}(k_0;n) - \sum_{m \in \mathbb{Z}} \tilde{\psi}_\textup{nor}(k_0;n + md) \right| < \left( \frac{2^{\frac{5}{4}}}{1 - e^{-\pi}} \right) d^{-\frac{1}{4}} e^{-\frac{\pi d}{4}}.
\end{equation}

Adding the two error terms, we arrive at the Lemma statement.
\end{proof}

\begin{remark}[Symmetry of clock states]\label{rem:symme of clock}
Now that we know the C.F.T. $\psi$ to be a good approximation of the clock state in the energy basis, we can note the two most important features. First, the width of the state in the energy basis is of the same order ($\sqrt{d}$) as that in the time basis, Eq. \eqref{analyticposition}. The states are thus symmetric w.r.t. the uncertainty in either basis. In general when $\sigma\neq \sqrt{d}$, the widths of the states in the time and energy bases are of different order.

Second, the state $\ket{\Psi_\textup{nor}(k_0)}$ is centered about the value $n_0=(d-1)/2$. Considering that the range of values for the energy number is from $0$ to $d-1$, the energy of the clock state is exactly at the middle of the spectrum. These observations motivate the following definitions.
\end{remark}

\begin{definition}(\gClock~state classes)\label{def:clock stat classes}
We give the following names to states $\ket{\Psi(k_0)}\in\Lambda_{\sigma,n_0}$ depending on the relationship between their width $\sigma\in(0,d)$, and the clock dimension $d$:
\begin{itemize}
\item[1)] \textup{symmetric} if $\sigma^2=d$.\\\vspace{-0.4cm}
\item[2)] \textup{time squeezed} if $\sigma^2<d$.\\\vspace{-0.4cm}
\item[3)] \textup{energy squeezed} if $\sigma^2>d$.
\end{itemize}
Furthermore, we add the adverb, \textup{completely} when in addition $n_0=(d-1)/2$, i.e.  $\ket{\Psi(k_0)}\in\Lambda_{\sigma,n_0}$ has mean energy centered at the middle of the spectrum of $\hat H_c$.
\end{definition}
%\newpage
\section{Proof of quasi-continuity of \gClock~states}
\label{sec:Continuity of the Gaussian clock state}

Recall the \emph{continuity} of the idealized clock, Eq. \eqref{idealregularity_main}. The time-translated state of the clock was the same as the space-translated state, for arbitrary translation size. Our first major result is to derive an analogous statement for the finite clock. This is the subject of this section, with the Theorem stated at the end (Theorem \ref{gaussiancontinuity}). 

{\bf Sketch of proof:}

The mean position $k_0$ of the clock state appears twice in the expression for the state, first as the mean of the Gaussian $\psi(k_0;k)$, and second as determining the set of integers $k$ over which the time states $\ket{\theta_k}$ are defined.

Therefore, in order to prove that time translations are (approximately) equivalent to translations in position, we prove that an arbitrary time translation of the clock by a value $\Delta$ shifts both $k_0$ to $k_0+\Delta$ and the set $\mathcal{S}(k_0)$ to $\mathcal{S}(k_0+\Delta)$.

By the properties of $\mathcal{S}(k_0)$, we see that it changes at definite values of $k_0$ (integer values if $d$ is even, and half-integer values if $d$ is odd).

We proceed in a number of steps. First we prove that an infinitesimal time translation by $\delta$ is approximately equivalent to shifting only the mean $k_0$ by $\delta$. We then use the Lie-Product formula to extend the statement to finite time-translations of $\Delta$ that are small enough so that $\mathcal{S}(k_0+\Delta)$ is the same as $\mathcal{S}(k_0)$.

We then bound the error involved in switching the set $\mathcal{S}(k_0)$ to $\mathcal{S}(k_0+1)$ (without any other change in the state).

Together the two results then provide a manner of calculating the state for arbitrary time translations: first move only the mean $k_0$ for one unit, then switch $\mathcal{S}(k_0)$ by one integer, and repeat.

We proceed by deriving all the necessary technical lemmas which are necessary for Theorem \ref{gaussiancontinuity}. We first prove the quasi-continuity of the \gClock~states under infinitesimal time-translations, followed by extending the proof to arbitrary times.

\begin{lemma}\label{infinitesimaltimetranslation}
The action of the clock Hamiltonian on a \gClock~state for infinitesimal time of order $\delta$ may be approximated by an infinitesimal translation of the same order on the analytic extension of the clock state. Precisely speaking, for $\ket{\Psi(k_0)} \in \Lambda_{\sigma,n_0}$,
\begin{align}\label{infshift}
	e^{-i \delta \frac{T_0}{d} \hat{H}_c} \ket{\Psi(k_0)}  = e^{-i \delta \frac{T_0}{d} \hat{H}_c} \sum_{k\in\mathcal{S}_d(k_0)} \psi(k_0; k) \ket{\theta_k}  = \sum_{k\in\mathcal{S}_d(k_0)} \psi (k_0+\delta; k) \ket{\theta_k} + \ket{\epsilon},
\end{align}
where the $l_2$ norm of the error $\ket{\epsilon}$ is bounded by 
\begin{align}
	\ltwo{\ket{\epsilon}} &\leq  \delta \epsilon_{total} + \delta^2 C^\prime, \quad \text{where $C^\prime$ is $\delta$ independent, and } \\
	\frac{\epsilon_{total}}{2\pi Ad} &<
	\begin{cases}\label{continuityerror}\displaystyle
 \left( 2 \sqrt{d} \left( \frac{1}{2} + \frac{1}{2\pi d} + \frac{1}{1 - e^{-\pi}} \right) e^{-\frac{\pi d}{4}} + \frac{2}{1 - e^{-\pi}} + \frac{1}{2} + \frac{1}{2\pi d} \right) e^{-\frac{\pi d}{4}} &\mbox{if } \sigma=\sqrt{d} \\[10pt]\displaystyle
 \left( 2 \sigma \left( \frac{\alpha_0}{2} + \frac{1}{2\pi\sigma^2} + \frac{1}{1 - e^{-\pi\sigma^2\alpha_0}} \right) e^{-\frac{\pi\sigma^2}{4}\alpha_0^2} + \left( \frac{1}{1 - e^{-\frac{\pi d}{\sigma^2}}} + \frac{1}{1 - e^{-\frac{\pi d^2}{\sigma^2}}} + \frac{d}{2\sigma^2} + \frac{1}{2\pi d} \right) e^{-\frac{\pi d^2}{4\sigma^2}} \right) &\mbox{otherwise}
\end{cases}
\end{align}
\end{lemma}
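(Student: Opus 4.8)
The plan is to expand $e^{-\mi\delta\frac{T_0}{d}\hat H_c}$ to first order in $\delta$, identify the leading term with the $\delta$-shift of the analytic Gaussian $\psi$, and carefully bound the difference. First I would write $e^{-\mi\delta\frac{T_0}{d}\hat H_c}\ket{\Psi(k_0)}=\ket{\Psi(k_0)}-\mi\delta\frac{T_0}{d}\hat H_c\ket{\Psi(k_0)}+\delta^2 R(\delta)$, where the remainder $R(\delta)$ has norm bounded by a $\delta$-independent constant $C'$ (this uses $\|\hat H_c\|\le(d-1)\omega$ on the clock space, so the Taylor series of the exponential of a bounded operator converges with controllable second-order remainder). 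The target term $\sum_k\psi(k_0+\delta;k)\ket{\theta_k}$ I would likewise expand: $\psi(k_0+\delta;k)=\psi(k_0;k)+\delta\,\partial_{k_0}\psi(k_0;k)+\delta^2(\cdots)$, absorbing the $\delta^2$ piece into $C'$ as well. So the whole lemma reduces to showing that $\frac{T_0}{d}\hat H_c$ acting on $\ket{\Psi(k_0)}$ reproduces, to within $\epsilon_{\mathrm{total}}$ in $\ell_2$ norm, the state with coefficients $\mi\,\partial_{k_0}\psi(k_0;k)$.

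The central computation is therefore comparing two things expressed in the energy basis versus the time basis. On one hand, $\frac{T_0}{d}\hat H_c=\frac{2\pi}{d}\sum_n n\ketbra{E_n}{E_n}$ is diagonal in energy, so its action on $\ket{\Psi(k_0)}$ is simple once I pass to the energy-basis coefficients $\braket{E_n|\Psi(k_0)}$, which by the preceding Lemma are exponentially close to the C.F.T. $\tilde\psi_\textup{nor}(k_0;n)$. On the other hand, multiplication by $n$ in Fourier space corresponds to differentiation in position space: $\frac{\mi d}{2\pi}\partial_x$ applied to $\psi$ gives exactly the function whose C.F.T. is $p\,\tilde\psi$. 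The key identity to establish is thus that $\frac{T_0}{d}\hat H_c\ket{\Psi(k_0)}$ has time-basis coefficients approximately equal to $\mi\,\partial_{k_0}\psi(k_0;k)$ — equivalently $-\mi\,\partial_x\psi(k_0;x)\big|_{x=k}$ up to the phase/shift bookkeeping from $\psi(k_0;x)=\psi(0;x-k_0)$ times the energy phase $e^{\mi 2\pi n_0(x-k_0)/d}$. Each step where I replace a finite sum over $\mathcal{S}_d(k_0)$ by an infinite sum over $\mathbb{Z}$, or a discrete sum by its Poisson-resummed continuous counterpart, introduces a Gaussian-tail error; these are controlled by the error bounds of Section \ref{Error Bounds} (Lemma \ref{G0} and its relatives) and by the normalization bound on $A$ from Section \ref{Normalizing the clock state}. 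Accumulating all of them, with the distinction between $\sigma=\sqrt d$ (where the time-width and energy-width errors coincide at rate $e^{-\pi d/4}$) and the general case (where the two rates $e^{-\pi\sigma^2\alpha_0^2/4}$ and $e^{-\pi d^2/(4\sigma^2)}$ differ and must be kept separate), yields precisely the two-case bound on $\epsilon_{\mathrm{total}}/(2\pi A d)$ quoted in the statement.

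The main obstacle I anticipate is not the Taylor expansion — that part is routine — but the bookkeeping in converting between the discrete action of $\hat H_c$ in the energy basis and the continuous derivative of $\psi$ in the position (time) basis while keeping every tail error explicit and with the right constants. Concretely, one must: (i) use Poisson summation to relate $\sum_{k\in\mathbb{Z}}\psi(k_0;k)e^{-\mi 2\pi nk/d}$ to $\sum_m\tilde\psi(k_0;n+md)$ and argue the $m\neq0$ terms are negligible; (ii) multiply by $n$ (equivalently by $p$ under the C.F.T.), which is the step that produces the factor $2\pi A d$ and a derivative-of-Gaussian; (iii) invert the transform and re-truncate back to $\mathcal{S}_d(k_0)$, picking up another tail; and (iv) track that the argument $n$ runs over $0,\dots,d-1$ whereas the natural symmetric range is $-\frac{d-1}{2},\dots,\frac{d-1}{2}$, which is where the parameter $\alpha_0$ (distance of $n_0$ from the spectrum edge) enters and governs whether the energy-side Gaussian tail is symmetric or lopsided. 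Getting the constants in \eqref{continuityerror} exactly right is the genuinely laborious part; the conceptual content is just "differentiation $=$ multiplication by frequency, and D.F.T. $\approx$ C.F.T. for Gaussians up to exponentially small tails."
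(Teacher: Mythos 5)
Your proposal follows essentially the same route as the paper's proof: the paper Taylor-expands the scalar coefficients $c_l(\delta)=\braket{\theta_l|e^{-\mi\delta T_0\hat H_c/d}|\Psi(k_0)}$ and $c'_l(\delta)=\psi(k_0+\delta;l)$ about $\delta=0$ (equivalent to your first-order operator expansion), notes the zeroth-order terms coincide, and reduces everything to matching the first derivatives via exactly the chain you describe — finite-to-infinite sum, Poisson summation, multiplication by $n$ as differentiation, and re-truncation to the $m=0$ term — each step contributing one of the Gaussian-tail errors $\epsilon_1,\epsilon_2,\epsilon_3$ whose sum (times the factor $d$ from summing over the $d$ coefficients) gives $\epsilon_{\mathrm{total}}$. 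The approach and the sources of all error terms are correctly identified, so the proposal is sound.
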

\begin{remark}
	Before we begin the proof, note that the R.H.S. of Eq. \eqref{infshift} is not necessarily $\ket{\Psi (k_0+\delta)}$, since the summation is still over the set of integers $\mathcal{S}_d(k_0)$, and not $\mathcal{S}_d(k_0+\delta)$ which may be different, see Eq. \eqref{gaussianclock} and the remark \ref{clockcentered} that follows.
\end{remark}
\begin{proof}
\begin{align}
	e^{-i \delta \frac{T_0}{d} \hat{H}_c} \ket{\Psi(k_0)} &= e^{-i \delta \frac{T_0}{d} \sum_{m=0}^{d-1} \frac{2\pi}{T_0} m \ket{E_m}\bra{E_m}} \sum_{k\in \mathcal{S}_d(k_0)} \psi(k_0;k) \ket{\theta_k}.
\end{align}

Switching to the basis of energy states so as to apply the Hamiltonian, and back to the basis of time states, via Eq. \eqref{finitetimestates_main},
\begin{align}
	e^{-i \delta \frac{T_0}{d} \hat{H}_c} \ket{\Psi(k_0)} &= \sum_{k,l\in\mathcal{S}_d(k_0)} \psi(k_0,k) \left( \frac{1}{d} \sum_{n=0}^{d-1} e^{-i2\pi n(k+\delta-l)/d} \right) \ket{\theta_l}
\end{align}

We label the coefficient in the time basis of the exact state and the approximation as
\begin{align}
	c_l(\delta) &= \braket{\theta_l | e^{-i \delta \frac{T_0}{d} \hat{H}_c} | \Psi(k_0)} = \sum_{k\in\mathcal{S}_d(k_0)} \psi(k_0,k) \left( \frac{1}{d} \sum_{n=0}^{d-1} e^{-i2\pi n(k+ \delta-l)/d} \right), \\
	c^\prime_l(\delta) &= \bra{\theta_l} \sum_{k\in\mathcal{S}_d(k_0)} \psi (k_0+\delta; k) \ket{\theta_k}=  \psi\left( k_0 + \delta; l \right). \label{idealcoeff}
\end{align}

By the properties of the analytic extension $\psi$, both of the above are analytic functions w.r.t. $\delta$, and we can express the difference between the coefficients via the Taylor series expansion about $\delta=0$,
\begin{align}\label{firstTaylor}
	c_l(\delta) - c^\prime_l(\delta) &= c_l(0) - c^\prime_l(0) + \delta \left[ \frac{\partial c_l(\delta)}{\partial \delta} - \frac{\partial c^\prime_l(\delta)}{\partial \delta} \right]\bigg|_{\delta=0} + C \delta^2, \\
	\text{where   } C & = \frac{1}{2} \left(  \text{max}_{|t|\leq \delta} \left|  \frac{\partial^2 c_l(t)}{\partial t^2} - \frac{\partial^2 c^\prime_l(t)}{\partial t^2} \right| \right).
\end{align}

$C$ is upper bounded by a $\delta$ independent constant  because the second derivatives w.r.t. $\delta$ of both $c_l(\delta)$ and $c^\prime_l(\delta)$ can be verified to be finite sums of bounded quantities, and therefore bounded.

We now simplify Eq. \eqref{firstTaylor}. By direct substitution, $c_l(0) = c^\prime_l(0)$. For the first derivatives,
\begin{align}
	\frac{\partial c_l(\delta)}{\partial \delta} \bigg|_{\delta=0} &= \left[ \frac{\partial}{\partial \delta} \sum_{k \in \mathcal{S}_d(k_0)} \psi(k_0;k) \left( \frac{1}{d} \sum_{n=0}^{d-1} e^{-i2\pi n(k+\delta-l)/d} \right) \right]_{\delta=0} \\
	&= \left( \frac{-i 2\pi}{d^2} \right) \sum_{k \in \mathcal{S}_d(k_0)} \psi(k_0;k) \sum_{n=0}^{d-1} n e^{-i2\pi n(k-l)/d}
\end{align}

One can replace the finite sum over $k$ by an infinite sum, and bound the difference via Lemma \ref{G0},
\begin{align}
	\frac{\partial c_l(\delta)}{\partial \delta} \bigg|_{\delta=0} &= \left( \frac{-i 2\pi}{d^2} \right) \sum_{k\in\mathbb{Z}} \psi(k_0;k) e^{-i 2\pi nk/d} \sum_{n=0}^{d-1} n e^{i2\pi nl/d} + \epsilon_1, \\
	\text{where} \;\;\;\;\; \left| \epsilon_1 \right| &< 
	\begin{cases}\displaystyle
  2\pi A \frac{e^{-\frac{\pi d}{4}}}{1 - e^{-\pi}} &\mbox{if } \sigma=\sqrt{d} \\[5pt]\displaystyle
 2\pi A \frac{e^{-\frac{\pi d^2}{4\sigma^2}}}{1 - e^{-\frac{\pi d}{\sigma^2}}}  &\mbox{otherwise}
\end{cases}\label{eq:epsilon1 for continuity lema 1}
\end{align}

Applying the Poisson summation formula \ref{poissonsummation} upon the infinite sum,
\begin{align}
	\frac{\partial c_l(\delta)}{\partial \delta} \bigg|_{\delta=0} &= \left( \frac{-i 2\pi}{d\sqrt{d}} \right) \sum_{n=0}^{d-1} \sum_{m\in\mathbb{Z}} \tilde{\psi}(k_0;n+md) \; n e^{i2\pi nl/d} + \epsilon_1
\end{align}

Since $\sum_{n=0}^{d-1} \sum_{m\in\mathbb{Z}} f(n+md) = \sum_{s\in\mathbb{Z}} f(s)$, after some arithmetic manipulation,
\begin{align}
	\frac{\partial c_l(\delta)}{\partial \delta} \bigg|_{\delta=0} &= \left( \frac{-i 2\pi}{d\sqrt{d}} \right) \left( \sum_{s\in\mathbb{Z}} \tilde{\psi}(k_0;s) \; s e^{i2\pi sl/d} - \sum_{n=0}^{d-1} \sum_{m=-\infty}^{\infty} \tilde{\psi}(k_0;n+md) \;md  e^{i2\pi nl/d} \right) + \epsilon_1
\end{align}

The second summation is a small contribution and may be bound in a similar way to $\epsilon_1$, via Lemmas \ref{G0}-\ref{G1}. Therefore,
\begin{align}
	\frac{\partial c_l(\delta)}{\partial \delta} \bigg|_{\delta=0} &= \left( \frac{-i 2\pi}{d\sqrt{d}} \right) \sum_{s\in\mathbb{Z}} \tilde{\psi}(k_0;s) \; s e^{i2\pi sl/d} + \epsilon_2 + \epsilon_1, \\
	\text{where} \;\;\;\;\;  & \label{eq:epsilon2 for continuity lema 1}\\
	\abs{\epsilon_2} &< 
\begin{cases}\displaystyle
 4\pi A \sqrt{d} \left( \frac{1}{2} + \frac{1}{2\pi d} + \frac{1}{1 - e^{-\pi d}} \right) e^{-\frac{\pi d}{4}} &\mbox{if } \sigma=\sqrt{d} \\[10pt]\displaystyle
 4\pi A \sigma \left( \frac{\alpha_0}{2} + \frac{1}{2\pi\sigma^2} + \frac{1}{1 - e^{-\pi\sigma^2\alpha_0}} \right) e^{-\frac{\pi\sigma^2}{4}\alpha_0^2}  &\mbox{otherwise}
\end{cases}
\end{align}

Apply the Poisson summation on the remaining sum, Corollaries \ref{poissonsummation},
\begin{align}
	\frac{\partial c_l(\delta)}{\partial \delta} \bigg|_{\delta=0} &= \left( \frac{-i 2\pi}{d\sqrt{d}} \right) \sqrt{d} \left( \frac{-id}{2\pi} \right) \sum_{m\in\mathbb{Z}} \frac{\partial}{\partial x} \psi \left( k_0; x + l \right) \bigg|_{x=md} + \epsilon_2 + \epsilon_1
\end{align}

Substituting $y=-x$, and using $\psi(k_0;l-y) = \psi(k_0+y,l)$, \ref{interchangearguments},
\begin{align}
	\frac{\partial c_l(\delta)}{\partial \delta} &= \sum_{m\in\mathbb{Z}} \frac{\partial}{\partial y} \psi (k_0+y; l) \bigg|_{y=md} +\epsilon_2 + \epsilon_1.
\end{align}

The $m=0$ term in the sum above is exactly the derivative $\partial c_l^\prime(\delta)/\partial \delta$ given by Eq. \eqref{idealcoeff}. We may bound the remainder of the sum using Lemmas \ref{G0}-\ref{G1},
\begin{align}
	\frac{\partial c_l(\delta)}{\partial \delta} &= \frac{\partial c_l^\prime(\delta)}{\partial \delta} + \epsilon_3 + \epsilon_2 + \epsilon_1 \\
	\text{where} \;\;\;\;\; \left| \epsilon_3 \right| &< 
\begin{cases}\displaystyle
 4\pi A \left( \frac{1}{2} + \frac{1}{2\pi d} + \frac{1}{1 - e^{-\pi}} \right) e^{-\frac{\pi d}{4}}  &\mbox{if } \sigma=\sqrt{d} \\[10pt]\displaystyle
 4\pi A \left( \frac{d}{2\sigma^2} + \frac{1}{2\pi d} + \frac{1}{1 - e^{-\frac{\pi d^2}{\sigma^2}}} \right) e^{-\frac{\pi d^2}{4\sigma^2}} &\mbox{otherwise}
\end{cases}
	 \label{eq:epsilon3 for continuity lema 1}
\end{align}

The Taylor series expansion Eq. \eqref{firstTaylor} thus ends up as
\begin{align}
	c_l(\delta) - c^\prime_l(\delta) &= \delta ( \epsilon_1 + \epsilon_2 + \epsilon_3 ) + C \delta^2,
\end{align}
where using Eqs. \eqref{eq:epsilon1 for continuity lema 1},\eqref{eq:epsilon2 for continuity lema 1},\eqref{eq:epsilon3 for continuity lema 1}, the sum of the errors is bounded by $\abs{\epsilon_1 + \epsilon_2 + \epsilon_3} \leq \epsilon_{4}$,
\begin{align}
	\epsilon_{4}<
	\begin{cases}\displaystyle
 2\pi A \left( 2 \sqrt{d} \left( \frac{1}{2} + \frac{1}{2\pi d} + \frac{1}{1 - e^{-\pi}} \right) e^{-\frac{\pi d}{4}} + \frac{2}{1 - e^{-\pi}} + \frac{1}{2} + \frac{1}{2\pi d} \right) e^{-\frac{\pi d}{4}} &\mbox{if } \sigma=\sqrt{d} \\[10pt]\displaystyle
 2\pi A \Bigg( 2 \sigma \left( \frac{\alpha_0}{2} + \frac{1}{2\pi\sigma^2} + \frac{1}{1 - e^{-\pi\sigma^2\alpha_0}} \right) e^{-\frac{\pi\sigma^2}{4}\alpha_0^2} \\ \quad\quad\quad\quad\quad\quad\quad\quad\quad\quad\quad\quad\quad+ \left( \frac{1}{1 - e^{-\frac{\pi d}{\sigma^2}}} + \frac{1}{1 - e^{-\frac{\pi d^2}{\sigma^2}}} + \frac{d}{2\sigma^2} + \frac{1}{2\pi d} \right) e^{-\frac{\pi d^2}{4\sigma^2}} \Bigg) &\mbox{otherwise}
\end{cases}
\end{align}

Note that $\epsilon_{total}$ is independent of the index $l$. Reconstructing the ideal and approximate states from the coefficients $c_l$ and $c_l^\prime$, we arrive at the error in the state,
\begin{align}
	\ltwo{\ket{\epsilon}} \leq \sum_{l\in\mathcal{S}_d(k_0)} \abs{c_l(\delta) - c_l^\prime(\delta)} \leq \delta d \epsilon_{4} + d C \delta^2.
\end{align}
\end{proof}

\begin{lemma}[Finite time translations within a single time step]\label{smalltimetranslation}

Given an initial mean position $k_0$ and a translation of $\Delta$, s.t. $\mathcal{S}_d(k_0) = \mathcal{S}_d(k_0+\Delta)$, then
\begin{equation}
	e^{-i \frac{T_0}{d} \Delta \hat{H}_c} \ket{\Psi(k_0)} = \ket{\Psi(k_0+\Delta)} + \ket{\epsilon}, \text{   where   } \ltwo{\ket{\epsilon}} < \Delta \epsilon_{total},
\end{equation}
where $\epsilon_{total}$ is defined in Eq. \eqref{continuityerror}.
\end{lemma}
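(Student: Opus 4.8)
The plan is to build the finite translation $e^{-\mi\frac{T_0}{d}\Delta\hat H_c}$ out of $N$ infinitesimal translations of size $\delta=\Delta/N$, apply Lemma \ref{infinitesimaltimetranslation} at each of the $N$ steps, and accumulate the errors; this is the finite-clock incarnation of the Lie-product argument indicated in the proof sketch at the start of the section. The crucial enabling observation is that the hypothesis $\mathcal{S}_d(k_0)=\mathcal{S}_d(k_0+\Delta)$ forces $\mathcal{S}_d(k_0+s)=\mathcal{S}_d(k_0)$ for \emph{every} $s\in[0,\Delta]$: by Remark \ref{clockcentered} and \eqref{eq:S set def}, the map $\mathcal{S}_d(\cdot)$ is constant on each half-open unit interval between consecutive integers (if $d$ is even) or half-integers (if $d$ is odd), so if its values agree at the two endpoints they agree throughout. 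Consequently, at every step the object $\sum_{k\in\mathcal{S}_d(k_0)}\psi(k_0+\delta;k)\ket{\theta_k}$ produced by Lemma \ref{infinitesimaltimetranslation} is literally the genuine shifted clock state $\ket{\Psi(k_0+\delta)}\in\Lambda_{\sigma,n_0}$ as defined in \eqref{gaussianclock} (summed over the correct index set), so the lemma may be legitimately re-applied to it.

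Concretely, I would write $e^{-\mi\frac{T_0}{d}\Delta\hat H_c}=\bigl(e^{-\mi\frac{T_0}{d}\delta\hat H_c}\bigr)^N$ and iterate. One application of Lemma \ref{infinitesimaltimetranslation} to $\ket{\Psi(k_0)}$ gives $\ket{\Psi(k_0+\delta)}+\ket{\epsilon_1}$ with $\ltwo{\ket{\epsilon_1}}\le\delta\epsilon_{total}+\delta^2C'$; applying the next infinitesimal unitary and splitting its action linearly yields $\ket{\Psi(k_0+2\delta)}+\ket{\epsilon_2}+e^{-\mi\frac{T_0}{d}\delta\hat H_c}\ket{\epsilon_1}$, where $\ket{\epsilon_2}$ is the fresh error from the lemma applied to $\ket{\Psi(k_0+\delta)}$. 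Repeating $N$ times gives
\begin{equation}
e^{-\mi\frac{T_0}{d}\Delta\hat H_c}\ket{\Psi(k_0)}=\ket{\Psi(k_0+\Delta)}+\sum_{j=1}^{N}U_j\ket{\epsilon_j},
\end{equation}
where each $U_j$ is a product of norm-preserving infinitesimal clock unitaries and each $\ket{\epsilon_j}$ is the error of Lemma \ref{infinitesimaltimetranslation} applied at base point $k_0+(j-1)\delta$.

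For the error estimate I would note that $\epsilon_{total}$ (as given by \eqref{continuityerror}) depends only on $\sigma$, $d$, $\alpha_0$ and the normalization constant $A$ — none of which changes along the trajectory $k_0\mapsto k_0+\Delta$, since the width, mean energy and $A$ are fixed — so the per-step bound holds with the same $\epsilon_{total}$ at every step, and the $\delta^2$-constants $C'_j$ are finite sums of bounded Gaussians and phases, hence uniformly bounded by a single $C'<\infty$ independent of $j$ and $N$. Then, by the triangle inequality and unitarity,
\begin{equation}
\ltwo{\ket{\epsilon}}\le\sum_{j=1}^{N}\ltwo{\ket{\epsilon_j}}\le N\delta\epsilon_{total}+N\delta^2C'=\Delta\epsilon_{total}+\frac{\Delta^2C'}{N}.
\end{equation}
Since the left-hand side does not depend on $N$, letting $N\to\infty$ gives $\ltwo{\ket{\epsilon}}\le\Delta\epsilon_{total}$, and the strict bound quoted in the statement is inherited from the fact that $\epsilon_{total}$ is itself defined in \eqref{continuityerror} via a strict inequality.

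The only genuine obstacle is the bookkeeping that makes the iteration legitimate — certifying that each intermediate object is a bona fide element of $\Lambda_{\sigma,n_0}$ and that Lemma \ref{infinitesimaltimetranslation} applies to it with a $k_0$-independent error bound — and this is precisely what the constancy of $\mathcal{S}_d(\cdot)$ on $[k_0,k_0+\Delta]$ and the explicit $k_0$-independent form of $\epsilon_{total}$ provide; everything else is a geometric-series estimate whose $O(1/N)$ remainder vanishes in the continuum limit.
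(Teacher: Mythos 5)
Your proposal is correct and follows essentially the same route as the paper's proof: split $\Delta$ into $N$ equal infinitesimal steps, apply Lemma \ref{infinitesimaltimetranslation} at each step, accumulate the errors linearly (the paper invokes Lemma \ref{unitaryerroraddition}, which is exactly your triangle-inequality-plus-unitarity argument), and let $N\to\infty$ to eliminate the $O(\Delta^2/N)$ remainder. Your explicit justification that $\mathcal{S}_d(\cdot)$ is constant on $[k_0,k_0+\Delta]$ and that $\epsilon_{total}$, $C'$ are uniform along the trajectory is a detail the paper leaves implicit, but it matches the intended argument.
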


\begin{proof}
Split the translation $\Delta$ into $M$ equal steps. Then from Lemma \ref{infinitesimaltimetranslation}, for $n\in\{0,1,...,M-1\}$,
\begin{equation}
	e^{-i \frac{T_0}{d} \frac{\Delta}{M} \hat{H}_c} \ket{\Psi(k_0 + n \Delta/M)} = \ket{\Psi(k_0 + (n+1) \Delta/M)} + \ket{\epsilon}, \text{   where   } \ltwo{\ket{\epsilon}} < \frac{\Delta}{M} \epsilon_{total} + C^\prime \frac{\Delta^2}{M^2}.
\end{equation}

Applying Lemma \ref{unitaryerroraddition}, when we add all the steps, we obtain,
\be
	\ltwo{ \ket{\Psi(k_0+\Delta)} - \left( e^{-i \frac{T_0}{d} \frac{\Delta}{M} \hat{H}_c} \right)^M \ket{\Psi(k_0)} } < M \left( \frac{\Delta}{M} \epsilon_{total} + C^\prime \frac{\Delta^2}{M^2} \right).
\ee
Therefore, 
\be 
\ltwo{ \ket{\Psi(k_0+\Delta)} - e^{-i \frac{T_0}{d} \Delta \hat{H}_c} \ket{\Psi(k_0)} } < \Delta \epsilon_{total} + C^\prime \frac{\Delta^2}{M}.
\ee

We are free to choose any positive integer $M$, so we take the limit $M\rightarrow\infty$, which recovers the statement of the Lemma.
\end{proof}

At this point, we have already proven the continuity of the clock state for time translations that are finite, but small (i.e. small enough that the range $\mathcal{S}_d(k_0)$ remains the same). In order to generalize the statement to arbitrary translations we need to be able to shift the range itself, which is the goal of the following Lemma.

\begin{lemma}[Shifting the range of the clock state]\label{rangeshift}

If $d$ is even, and the mean of the clock state $k_0$ is an integer, or alternatively, if $d$ is $odd$ and $k_0$ is a half integer, then
\begin{align}
	\ltwo{ \sum_{k \in \mathcal{S}_d(k_0-1)} \psi(k_0;k) \ket{\theta_k} - \sum_{k \in \mathcal{S}_d(k_0)} \psi(k_0;k) \ket{\theta_k} } &< \epsilon_{step}, \\
	\text{where} \quad \abs{\epsilon_{step}} &<  
	\begin{cases}\displaystyle
 2 A e^{-\frac{ \pi d}{4}}  &\mbox{if } \sigma=\sqrt{d} \\[10pt]\displaystyle
 2 A e^{-\frac{ \pi d^2}{4\sigma^2}}  &\mbox{otherwise}
\end{cases}
\end{align}
\end{lemma}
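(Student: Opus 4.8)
The plan is to reduce the claim to an exact identity: under the parity / half-integer hypothesis, the index windows $\mathcal{S}_d(k_0)$ and $\mathcal{S}_d(k_0-1)$ overlap in all but one integer at each end, so the difference of the two sums telescopes down to a single surviving vector whose norm I can compute explicitly, with no analytic tail estimate needed.

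First I would pin down the two index sets. From the definition \eqref{eq: mathcal S def}, $\mathcal{S}_d(k_0)=\{k\in\zz : k_0-d/2<k\le k_0+d/2\}$. When $d$ is even and $k_0\in\zz$ (or $d$ odd and $k_0$ a half-integer), both $k_0-d/2$ and $k_0+d/2$ are integers, and a direct check — consistent with \eqref{eq:S set def} — gives $\mathcal{S}_d(k_0)=\{k_0-d/2+1,\ldots,k_0+d/2\}$ and $\mathcal{S}_d(k_0-1)=\{k_0-d/2,\ldots,k_0+d/2-1\}$. Thus both sets contain the $d-1$ integers $k_0-d/2+1,\ldots,k_0+d/2-1$, while $\mathcal{S}_d(k_0)$ alone contains $k_0+d/2$ and $\mathcal{S}_d(k_0-1)$ alone contains $k_0-d/2$. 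Subtracting the two sums, every common term cancels and I am left with
\[
\sum_{k\in\mathcal{S}_d(k_0-1)}\!\!\psi(k_0;k)\ket{\theta_k}-\sum_{k\in\mathcal{S}_d(k_0)}\!\!\psi(k_0;k)\ket{\theta_k}=\psi\!\left(k_0;k_0-\tfrac d2\right)\ket{\theta_{k_0-d/2}}-\psi\!\left(k_0;k_0+\tfrac d2\right)\ket{\theta_{k_0+d/2}} .
\]

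Next I would invoke the periodicity of the time states, $\ket{\theta_k}=\ket{\theta_{k+d}}$; since $k_0\pm d/2$ are integers differing by $d$, this gives $\ket{\theta_{k_0+d/2}}=\ket{\theta_{k_0-d/2}}$, so the right-hand side is the single vector $\bigl(\psi(k_0;k_0-d/2)-\psi(k_0;k_0+d/2)\bigr)\ket{\theta_{k_0-d/2}}$, whose $l_2$ norm is $\abs{\psi(k_0;k_0-d/2)-\psi(k_0;k_0+d/2)}$. Finally I would substitute the explicit Gaussian \eqref{analyticposition}, namely $\psi(k_0;k_0\pm d/2)=A\,\me^{-\pi d^2/(4\sigma^2)}\me^{\pm\mi\pi n_0}$, so that
\[
\abs{\psi\!\left(k_0;k_0-\tfrac d2\right)-\psi\!\left(k_0;k_0+\tfrac d2\right)}=2A\,\me^{-\frac{\pi d^2}{4\sigma^2}}\,\abs{\sin(\pi n_0)}\le 2A\,\me^{-\frac{\pi d^2}{4\sigma^2}},
\]
which for the symmetric choice $\sigma=\sqrt d$ equals $2A\,\me^{-\pi d/4}$, reproducing both branches of the claimed bound on $\epsilon_{step}$.

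The only delicate point is the first step: the off-by-one bookkeeping of the two windows, and the observation that the two surviving boundary indices coincide modulo $d$ precisely because the hypothesis forces them to be integers. For a generic $k_0$ the same argument still goes through, but with the shifted argument $\tfrac d2-\delta$, $\delta=k_0-\lfloor k_0\rfloor$, in place of $\tfrac d2$, yielding the weaker bound $2A\,\me^{-\pi(d/2-\delta)^2/\sigma^2}$; the integer/half-integer condition is exactly what makes the window symmetric and the estimate sharp. In contrast to the preceding lemmas in this section, no Poisson summation or Gaussian-tail lemma is invoked here — the bound falls out of an exact identity.
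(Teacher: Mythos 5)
Your proof is correct and follows essentially the same route as the paper's: identify the two boundary indices $k_0\pm d/2$ as the only non-cancelling terms, use the periodicity $\ket{\theta_{k_0-d/2}}=\ket{\theta_{k_0+d/2}}$ to collapse the difference to a single vector, and substitute the Gaussian to obtain the bound. Your explicit $2A\,\me^{-\pi d^2/(4\sigma^2)}\abs{\sin(\pi n_0)}$ is a slight sharpening of the paper's triangle-inequality step, but the final estimate is identical.
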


\begin{proof}
We prove the statement for even $d$, the proof for odd $d$ is analogous.

By definition (see Eq. \eqref{gaussianclock}), $\mathcal{S}_d(k_0)$ is a set of $d$ consecutive integers. Thus the only difference between $\mathcal{S}_d(k_0)$ and $\mathcal{S}_d(k_0-1)$ is the leftmost integer of $\mathcal{S}_d(k_0-1)$ and the rightmost integer of $\mathcal{S}_d(k_0)$, which differ by precisely $d$. By direct calculation, these correspond to the integers $k_0-d/2$ and $k_0+d/2$. These are the only two terms that do not cancel out in the statement of the Lemma,
\begin{equation}
	\ltwo{ \sum_{k \in \mathcal{S}_d(k_0-1)} \psi(k_0;k_0) \ket{\theta_k} - \sum_{k \in \mathcal{S}_d(k_0)} \psi(k_0;k) \ket{\theta_k} } = \ltwo{ \psi(k_0;k_0-d/2) \ket{\theta_{k_0-d/2}} - \psi(k_0;k_0+d/2) \ket{\theta_{k_0+d/2}} }.
\end{equation}

But $\ket{\theta_{k_0-d/2}} = \ket{\theta_{k_0+d/2}}$,
\begin{equation}
		\ltwo{ \sum_{k \in \mathcal{S}_d(k_0-1)} \psi(k_0;k_0) \ket{\theta_k} - \sum_{k \in \mathcal{S}_d(k_0)} \psi(k_0;k) \ket{\theta_k} } = \abs{ \psi(k_0;k_0-d/2)  - \psi(k_0;k_0+d/2) }.
\end{equation}

By direct substitution of $\psi(k_0;k_0-d/2)$ and $\psi(k_0;k_0+d/2)$ from Eq. \eqref{analyticposition}, we arrive at the Lemma statement.
\end{proof}

\begin{theorem}[Quasi-continuity of \gClock~states]\label{gaussiancontinuity} Let $k_0,t \in \rr$. Then the effect of the Hamiltonian $\hat{H}_c$ for the time $t$ on $\ket{{\Psi}_\textup{nor}(k_0)} \in\Lambda_{\sigma,n_0}$ is approximated by
\begin{equation}\label{eq:main eq theorem Q continuity}
	\me^{-\mi t  \hat{H}_c} \ket{\Psi_\textup{nor}(k_0)} = \ket{\Psi_\textup{nor}(k_0+t \frac{d}{T_0})} + \ket{\epsilon},
\end{equation}
where in the limits $d\rightarrow \infty$, $(0,d)\ni \sigma\rightarrow \infty$,
\begin{align}
\varepsilon_c:=\ltwo{ \ket{\epsilon} } =
\begin{cases}\displaystyle
\frac{|t|}{T_0} \, \bo\left( d^{9/4}\right)\me^{-\frac{\pi}{4}d\alpha_0^2} +\bo\left(d^{-1/4}\right)\me^{-\frac{\pi}{4}d}  &\mbox{if } \sigma=\sqrt{d} \\[10pt]\displaystyle
\frac{|t|}{T_0} \left(  \bo\left( d^2\sigma^{1/2}  \right)\me^{-\frac{\pi}{4}\sigma^2\alpha_0^2}+\bo\left( 1+\frac{d^3}{\sigma^{3/4}}  \right)\me^{-\frac{\pi}{4}\frac{d^2}{\sigma^2}}  \right) +\bo\left(\me^{-\frac{\pi}{4}\frac{d^2}{\sigma^2}}\right)+\bo\left(\me^{-\frac{\pi}{4}\sigma^2}\right) &\mbox{otherwise}
\end{cases}
\end{align}
More precisely,
\begin{align}\label{eq: thorem 1 tot error}
\varepsilon_c=\ltwo{ \ket{\epsilon} } &< |t| \frac{d}{T_0} \epsilon_{total} + \left( |t| \frac{d}{T_0} + 1 \right) \epsilon_{step} +\epsilon_\textup{nor}(t),
\end{align}
where,
\begin{align}
\epsilon_{total} =\!
\begin{cases}\displaystyle
2\pi A d \left( 2 \sqrt{d} \left( \frac{\alpha_0}{2} + \frac{1}{2\pi d} + \frac{1}{1 - e^{-\pi\alpha_0}} \right) e^{-\frac{\pi d}{4}\alpha_0^2} + \left(\frac{2}{1 - e^{-\pi}} + \frac{1}{2} + \frac{1}{2\pi d} \right) e^{-\frac{\pi d}{4}}\right)\! &\mbox{if } \sigma=\sqrt{d} \\[10pt]
 2\pi A d \left( 2 \sigma \left( \frac{\alpha_0}{2} + \frac{1}{2\pi\sigma^2} + \frac{1}{1 - e^{-\pi\sigma^2\alpha_0}} \right) e^{-\frac{\pi\sigma^2}{4}\alpha_0^2} + \left( \frac{1}{1 - e^{-\frac{\pi d}{\sigma^2}}} + \frac{1}{1 - e^{-\frac{\pi d^2}{\sigma^2}}} + \frac{d}{2\sigma^2} + \frac{1}{2\pi d} \right) e^{-\frac{\pi d^2}{4\sigma^2}} \right)\! &\mbox{otherwise},
\end{cases}
\end{align}
with $A=\bo(\sigma^{-1/2})$ and is upper bounded by Eq. \eqref{eq:up low bounds for A normalize}, and
\begin{align}
\epsilon_{step} &<
\begin{cases}
2 A e^{-\frac{ \pi d}{4}} &\mbox{if } \sigma=\sqrt{d}\\
2 A e^{-\frac{ \pi d^2}{4\sigma^2}} &\mbox{otherwise.}
\end{cases}\\
\epsilon_\textup{nor}(t)&=\left|\sqrt{\frac{\sum_{k\in \mathcal{S}_d(k_0+td/T_0)} \me^{-\frac{2\pi}{\sigma^2}(k-k_0-\frac{d}{T_0}t)^2}}{\sum_{l\in \mathcal{S}_d(k_0)} \me^{-\frac{2\pi}{\sigma^2}(l-k_0)^2}}}-1 \right| \leq 
\begin{cases}\displaystyle
\frac{40}{3}\frac{ \me^{-\frac{\pi}{2}d}}{1 - \me^{-\pi}}\quad &\forall\, t\in\rr \;\;\mbox{ if } \sigma=\sqrt{d}\\[10pt]\displaystyle
\frac{40\sqrt{2}}{3\sigma}\left(\frac{ \me^{-\frac{\pi d^2}{2\sigma^2}}}{1 - \me^{-\frac{2\pi d}{\sigma^2}}}+\frac{\sigma}{\sqrt{2}}\frac{ \me^{-\frac{\pi \sigma^2}{2}}}{1 - \me^{-\pi \sigma^2}}\right) \quad &\forall\, t\in\rr \;\;\mbox{ otherwise},
\end{cases}\label{eq:ep nor simple form above}
\end{align}
where on the r.h.s. of the inequality in Eq. \eqref{eq:ep nor simple form above} we have assumed $\sigma\geq1$, $d=2,3,4,\ldots$; (tighter bounds can be found in Section \ref{Re-normalizing the clock state}).
\end{theorem}

\emph{Intuition.} The discrete clock mimics the perfect clock, with an error that grows linearly with time, and scales better than any inverse polynomial w.r.t. the dimension of the clock. The optimal decay is when the state is completely symmetric, (see Def. \ref{def:clock stat classes}). This gives exponentially small error in $d$, the clock dimension, with a decay rate coefficient $\pi/4$. When the initial clock state is symmetric, but not completely symmetric, the error is still exponential decay but now with a modified coefficient which decreases as the clock's initial state's mean energy approaches either end of the spectrum of $\hat H_c$.

\begin{proof} For, $t\geq 0$, directly apply the previous two Lemmas (\ref{smalltimetranslation}, \ref{rangeshift}) in alternation, first to move $k_0$ from one integer to the next, then to switch from ${\lfloor k_0\rfloor}$ to ${\lfloor k_0+1\rfloor}$ if $d$ is even, and ${\lfloor k_0+1/2\rfloor}$ to ${\lfloor k_0+1/2+1\rfloor}$ if $d$ is odd, and finally arriving at 
\begin{align}
\me^{-it \hat{H}_c} \ket{{\Psi}(k_0)} &= \ket{{\Psi}(k_0+ \frac{d}{T_0} t)} + \ket{\epsilon},\quad\quad\ltwo{\ket{\epsilon}} \leq t \frac{d}{T_0} \epsilon_{total}+\left(t \frac{d}{T_0}+1\right)\epsilon_{step}. \end{align}
To conclude Eq. \eqref{eq: thorem 1 tot error} for $t>0$, one now has to normalize the states using bounds from Section \ref{Re-normalizing the clock state} and then use Lemma \ref{unitaryerroraddition} to upper bound the total error. For $t<0$, simply evaluate Eq. \eqref{eq:main eq theorem Q continuity} for a time $|t|$ followed by multiplying both sides of the equation by $\exp(\mi t\hat H_c)$, mapping $k_0\rightarrow k_0-t\;d/T_0$ and noting the unitary invariance of the $l_2$ norm of $\ket{\epsilon}$.
\end{proof}

\section{Autonomous quasi-control of \gClock~states}\label{sec:Proof of quasi-continuous control of Gaussian clock states}
\subsection{Theorem and proof of autonomous quasi-control}
Recall the \textit{control} of the idealized clock, Eq. \eqref{idealpotential} (or Eq. \ref{idealregularity} in Appendix for more details). Namely, the time translated state of the clock under its natural Hamiltonian plus a potential for a time $t$, was the same as the space-translated state multiplied by a phase factor, with the phase given by the integral over the potential up to time $t$. Our second major result is to derive an analogous statement for the finite clock. This is the subject of this Section, with the result in the form of a Theorem stated at the end of the Section,  Theorem (\ref{movig through finite time}). The proof will follow similar lines to the proof of the continuity of the \gClock~states of the previous Section (\ref{sec:Continuity of the Gaussian clock state}). The main extra complications will be two: 
\begin{itemize}
\item[1)]
Unlike in the previous Section (\ref{sec:Continuity of the Gaussian clock state}), the Hamiltonian will now include a potential term, i.e. $\hat H=\hat H_c+\hat V_d$ where $\hat H_c$ and $\hat V_d$ do not commute\footnote{For generality which will be useful in future work, $\hat V_d$ will not necessarily be self adjoint as the definition will make clear. However, with a small abuse of terminology, we will still refer to $\hat H=\hat H_c+\hat V_d$ as the \textit{Hamiltonian} since it will still be the generator of a semi-group parametrised by time parameter $t\geq 0$.}. To get around this difficulty, we will have to employ the Lie product formula to split the time evolution into consecutive infinitesimal time steps with separate contributions from $\hat H_c$ and $\hat V_d$. Bounding the infinitesimal time step contribution from $\hat H_c$ will be the subject of Lemma \ref{lemm:1}, while the contribution from $\hat V_d$ will be bounded in Lemma \ref{lem:2 infinitesimal pot}.  Combining both contributions will be the subject of Lemma \ref{Moving the clock through finite time, within unit angle}.
\item [2)] The Fourier Transform of the clock state (see Def. \ref{Continuous Fourier Transform no pot}), will be replaced by the Fourier Transform of the clock state multiplied by an exponentiated integral over the potential. Unlike in the previous case, one will not be able to perform the Fourier Transform analytically, thus requiring to upper bound how quickly it decays with clock dimension $d$. This turns out not to be so simple, and is the subject of Lemma \ref{lem:the crucial fourierbound with potential} which is the main mathematically challenging difference between this section and Section \ref{sec:Continuity of the Gaussian clock state}. 
\end{itemize}

\begin{definition}\label{def:continous pot} \emph{(Continuous Interaction Potential)}. Let ${V}_0 : \rr \rightarrow \rr\cup\hh^{-}$ (where $\hh^{-}:= \{ a_0+\mi b_0 : a_0\in\rr, b_0 < 0 \}$ denotes the lower-half complex plane) be an infinitely differentiable function of period $2\pi$, normalized so that
\begin{equation}\label{eq:theta normalisation}
	\int_x^{x+2\pi} {V}_0(x^\prime) dx^\prime =: \Omega\in\rr\cup\hh^{-},% \;\;\;\;\; -\pi\leq\Omega<\pi,\quad x\in \rr
\end{equation}
and define ${V}_d : \rr \rightarrow \rr\cup\hh^{-}$ as
\begin{equation}\label{potstretch}
	{V}_d(x) = \frac{2\pi}{d}\,{V}_0 \left( \frac{2\pi x}{d} \right).
\end{equation}
For convenience of notation, we also define the function
\be\label{eq: Theta def}
\Theta(\Delta;x)=\int_{x-\Delta}^x dy  V_d(y).
\ee
\end{definition}
\begin{definition}\label{def:interaction pot def} \emph{(Interaction Potential).} Let $\hat{V}_d$ be an operator on $\mathcal{H}_c$ defined by
\begin{equation}
	\hat{V}_d = \frac{d}{T_0} \sum_k {V}_d(k) \ketbra{\theta_k}{\theta_k}.
\end{equation}
\end{definition}
\begin{remark}
	Thus ${V}_d(x)$ is a continuous extension of the discrete elements $\braket{\theta_k|\hat{V}_d|\theta_k}T_0/d$. Note that because ${V}_d$ has a period of $d$, the summation in the above expression may run over any sequence of $d$ consecutive integers without affecting the operator $\hat{V}_d$. Note that for the special case $V_0: \rr\rightarrow\rr$, $\hat V_d$ is self adjoint. This is the most important case for this \comm, and the more general case will only be useful for future work \cite{RMRenatoetal}.
%It is noteworthy that Theorem \ref{movig through finite time} holds under the generalization of the potential $V_0$ to the new definition $V_0:\rr\rightarrow \cc$ with $V_0$ still infinitely differentiable and with period $2\pi$, while replacing constraint Eq. \eqref{eq:theta normalisation} with the constraint $	\int_x^{x+2\pi} {V}_0(x^\prime) dx^\prime = \Omega+\mi\Omega', \;\;\; \Omega\in\rr,$ \;\;\; $\Omega'\leq 0$, $\quad x\in \rr$, and restricting time to the interval $t\geq 0$. Such a generalization has not been considered in this manuscript for simplicity and since in this case $\hat V_d$ in Def. \ref{def:interaction pot def} is no longer self-adjoint, and thus neither is $\hat H_c+\hat V_d$. The more general setup could be useful for using the potential to perform continuous measurements on the clock however.
\end{remark}
\begin{definition}\label{def:stand alone def mathcal N}\emph{(Decay rate parametres).}
Let $b$ be any real number satisfying
\be\label{eq:b def eq}
b\geq\; \sup_{k\in\nn^+}\left(2\max_{x\in[0,2\pi]} \left|  V_0^{(k-1)}(x) \right|\,\right)^{1/k},
\ee
where $ V_0^{(p)}(x)$ is the $p^\textup{th}$ derivative with respect to $x$ of $ V_0(x)$ and $V_0^{(0)}:=V_0$.
We can use $b$ to define $\mathcal{N}\in\nn^0$ as follows
\be 
\mathcal{N}=
\begin{cases}\displaystyle
\left\lfloor \frac{\pi\alpha_0^2}{2\left(\frac{\pi\kappa\alpha_0}{\ln(\pi\alpha_0 d)}b+1\right)^2} \,d \right\rfloor, &\mbox{if } \sigma=\sqrt{d}\\[20pt]\displaystyle
\left\lfloor \frac{\pi\alpha_0^2}{2\left(\frac{\pi\kappa\alpha_0}{\ln(\pi\alpha_0\sigma^2)}b+\frac{d}{\sigma^2}\right)^2} \left(\frac{d}{\sigma}\right)^2 \right\rfloor, &\mbox{otherwise}
\end{cases}
\ee
where $\kappa=0.792$ and $\alpha_0\in(0,1)$ characterizes how close the mean energy of the initial state of the clock is to either extremum of the energy spectrum, defined in Def. \ref{Distance of the mean energy from the edge}.
\iffalse
namely
\begin{align}%\label{eq:alpha_0 def}
\alpha_0&=\left(\frac{2}{d-1}\right) \min\{n_0,(d-1)-n_0\}\\
&=1-\left|1-n_0\,\left(\frac{2}{d-1}\right)\right|\in(0,1),\quad \text{for } n_0\in(0,d-1).
\end{align}
c.f. similar measures  Def. \eqref{Distance of the mean energy from the edge} , and Eq. \eqref{eq:alpha_c def eq}.
\fi
\end{definition}

We can use the above definitions to define the following parametre.

\begin{definition}\label{def:decay rate params}\emph{(Exponential decay rate parameter)}.
We define the rate parameter as
%\be 
%\upsilon=\frac{b}{b_2},
%\ee
%where
\be 
\bar\upsilon= \frac{\pi\alpha_0\kappa}{\ln\left(\pi\alpha_0\sigma^2  \right)}b
%\begin{cases}
% \frac{\pi\alpha_0\kappa}{\ln\left(\pi\alpha_0\sigma^2  \right)}b \quad &\text{if } \mathcal{N}\geq 8\\
%\frac{\sigma}{d}\frac{\sqrt{6\pi}\kappa}{\ln(3)}b\quad &\text{otherwise,}
%\end{cases}
\ee
where recall $\kappa=0.792$.
\end{definition}
We will often require that $\bar\upsilon\geq 0$. This is equivalent to requiring $\pi\alpha_0 \sigma^2>1$ if $b>0$, and is always satisfied in the special case $b=0$.\\

To proceed further, we will need a generalized definition of Def. \ref{def:analytic ext. of clock states} which incorporates a potential.
%\begin{definition}\label{def:control analytic ext. of clcok states}\emph{(Control analytic extension of clock states).} Corresponding to every $\ket{\Psi_d(k_0)} \in \Lambda(d)$, we define $\psi_d (k_0,\Delta; \cdot): \mathbb{R} \rightarrow \mathbb{C}$ to be the analytic Gaussian function
%\be
%\psi_d(k_0,\Delta ;x) = \gnorm e^{-\frac{\pi}{d}(x-k_0)^2} e^{i\pi (x-k_0)}\me^{-\mi \frac{2\pi}{d}\int_{x-\Delta}^x \tilde V_d(x')dx'},
%\ee
%\end{definition}
%\begin{remark}
%We observe that $\psi_d(k_0,0 ;x)=\psi_d(k_0;x)$ (see Def. \ref{def:analytic ext. of clock states}).
%\end{remark}

\begin{definition}\label{def:analyticmomentum control version} \emph{(Continuous Fourier Transform of the control analytic extension)} Let $\tilde{\psi}(k_0;\cdot) : \rr \rightarrow \mathbb{C}$ be defined as the continuous Fourier transform of $\psi$ multiplied by an exponentiated integral over the potential function,
\begin{align}\label{analyticmomentum control version}
	\tilde{\psi}(k_0,\Delta ;p) &= \frac{1}{\sqrt{d}} \int_{-\infty}^\infty \psi(k_0 ;x)\,\me^{-\mi \int_{x-\Delta}^x  V_d(x')dx'} \me^{-\mi 2\pi px/d} dx ,
\end{align}
where $\psi$ is defined in Def. \ref{def:analytic ext. of clock states} and $ V_d$ in Def. \ref{def:continous pot}.
\end{definition}
\begin{remark}
Note that Eq. \eqref{analyticmomentum control version} is well defined since $|\tilde{\psi}(k_0,\Delta ;p)| \leq  \int_{-\infty}^\infty |\psi(k_0 ;x)|dx/\sqrt{d} <\infty$.% We observe that $\tilde \psi(k_0,0 ;x)=\tilde \psi(k_0;x)$ (see Def. \ref{def:analytic ext. of clock states}).
\end{remark}

\begin{lemma}\label{lem:the crucial fourierbound with potential}\emph{(Bounding a sampled version of the Fourier Transform of the clock state with potential).}
Let $\bar\epsilon_2$ be defined by
\be 
\bar\epsilon_2=\frac{\mi 2\pi}{\sqrt{d}}\sum_{n=0}^{d-1}\sum_{k=-\infty}^\infty k\, \me^{\mi 2\pi n l/d}\, \tilde \psi(k_0,\Delta;n+k d).
\ee
We have the bound  
\be 
|\bar\epsilon_2|<
\begin{cases}
 (2\pi)^{5/4}\sigma^{3/2} A \left(1+\frac{\pi^{2}}{8}\right)\sqrt{\frac{\me}{2}\frac{\alpha_0}{(\bar\upsilon\sigma^2/d+1)}}\exp\left(-\frac{\pi}{4}\frac{\alpha_0^2}{\left(\frac{d}{\sigma^2}+\bar\upsilon\right)^2} \left(\frac{d}{\sigma}\right)^2 \right)\quad &\text{if } \mathcal{N}\geq 8 \text{ and } \bar\upsilon\geq 0\\
\frac{3^{7/4}}{\sqrt{2\pi}\me}\frac{A(8+\pi^{2})}{\alpha_0^3}\left(\frac{\kappa\sqrt{6\pi}}{\ln(3)}b+\frac{d}{\sigma}\right)^3 \left( \frac{\sigma}{d^3} \right)\quad &\text{otherwise} \label{eq:lemma sampled ft 2}
\end{cases}
\ee
where $b$, $\mathcal{N}$, and $\alpha_0$ are defined in Def. \ref{def:stand alone def mathcal N} while $\bar\upsilon$ in Def.  \ref{def:decay rate params}.
\end{lemma}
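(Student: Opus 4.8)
The plan is to treat $\bar\epsilon_2$ as the aliasing remainder of a Poisson resummation — it is the direct analogue of the error $\epsilon_2$ in Lemma \ref{infinitesimaltimetranslation} — and to control it through the decay, away from $p=n_0$, of the Fourier transform $\tilde\psi(k_0,\Delta;\cdot)$ of Definition \ref{def:analyticmomentum control version}; the novelty over Lemma \ref{infinitesimaltimetranslation} is that with the potential present this transform is no longer Gaussian and must be estimated by hand. Since $|\me^{\mi 2\pi n l/d}|=1$, the triangle inequality removes the dependence on $l$, and the factor $k$ kills the $k=0$ slice, so it suffices to bound $\tfrac{2\pi}{\sqrt d}\sum_{n=0}^{d-1}\sum_{k\ne 0}|k|\,|\tilde\psi(k_0,\Delta;n+kd)|$. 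The integrand of $\tilde\psi(k_0,\Delta;p)$ is $\psi(k_0;x)\,\me^{-\mi\Theta(\Delta;x)}$, i.e.\ the width-$\sigma$ Gaussian $A\me^{-\pi(x-k_0)^2/\sigma^2}$ times the modulation $\me^{\mi 2\pi n_0(x-k_0)/d}$ times the unit-modulus analytic phase $\me^{-\mi\Theta(\Delta;x)}$ of Definition \ref{def:continous pot}; absorbing the modulation into the transform variable shows $\tilde\psi(k_0,\Delta;\cdot)$ is $d^{-1/2}$ times the Fourier transform of a Gaussian-times-bounded-phase, centred in the frequency variable at $p=n_0$. Consequently every aliasing frequency $p=n+kd$ with $k\ne 0$, $0\le n\le d-1$ obeys $|p-n_0|\ge \tfrac12\alpha_0(d-1)+1$, with $|p-n_0|$ growing like $|k|d$ as $|k|\to\infty$; this geometry is what brings in $\alpha_0$ of Definition \ref{Distance of the mean energy from the edge}.

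For the decay estimate, set $h(x)=A\me^{-\pi(x-k_0)^2/\sigma^2}\,\me^{-\mi\Theta(\Delta;x)}$, so that $N$ integrations by parts give $|\tilde\psi(k_0,\Delta;p)|\le d^{-1/2}\bigl(\tfrac{d}{2\pi|p-n_0|}\bigr)^{N}\,\|h^{(N)}\|_{L^1}$ for every $N\in\nn^0$, and the task reduces to bounding $\|h^{(N)}\|_{L^1}$. I would apply the Leibniz rule to split the $N$ derivatives between the Gaussian and the phase: the $j$ derivatives hitting $A\me^{-\pi(x-k_0)^2/\sigma^2}$ give Hermite functions whose $L^1$ norm is $\bo\!\bigl(A\sigma\,(2\pi j/(e\sigma^2))^{j/2}\bigr)$ (standard Hermite estimates, e.g.\ \cite{Rodriguez}) — this is the piece that, in the potential-free limit, reconstructs the exact Gaussian $\tilde\psi_\textup{nor}$ of Definition \ref{Continuous Fourier Transform no pot} — while the $m$ derivatives hitting $\me^{-\mi\Theta(\Delta;x)}$ are controlled by Fa\`a di Bruno's formula: from $\Theta^{(m)}(x)=V_d^{(m-1)}(x)-V_d^{(m-1)}(x-\Delta)$ and $V_d^{(p)}(x)=(2\pi/d)^{p+1}V_0^{(p)}(2\pi x/d)$, the definition of $b$ in Definition \ref{def:stand alone def mathcal N} forces $|\Theta^{(m)}(x)|\le(2\pi b/d)^m$ uniformly in $x$ and $\Delta$, whence $|(\me^{-\mi\Theta})^{(m)}|\le(2\pi b/d)^{m}B_m$ with $B_m$ the $m$-th Bell number, which we bound by $B_m\le(\kappa m/\ln(m+1))^{m}$, $\kappa=0.792$ \cite{Berendr,Faa_di_Bruno}. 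Folding the two contributions through the Leibniz sum — using Stirling for the binomial coefficients and Bell numbers, and replacing $\ln(m+1)$ over the relevant range of $m$ by $\ln(\pi\alpha_0\sigma^2)$, which is the substitution that defines the rate $\bar\upsilon$ of Definition \ref{def:decay rate params} — yields a bound for $\|h^{(N)}\|_{L^1}$ in terms of $A$, $\sigma$, $N$ and the combined rate $\tfrac{d}{\sigma^2}+\bar\upsilon$. All of this is uniform in $\Delta$ and $k_0$, since translating $V_d$ leaves the suprema of its derivatives unchanged and the Hermite $L^1$ norms are translation invariant.

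The last step is the optimisation over $N$. Inserting the derivative bound into $\sum_{k\ne0}|k|\,|\tilde\psi(k_0,\Delta;n+kd)|$, bounding $|p-n_0|$ from below as above, and summing the (geometrically convergent) series over $k$ and $n$, one is left with a quantity to be minimised over $N$; the minimiser is precisely $N=\mathcal N$ of Definition \ref{def:stand alone def mathcal N}, chosen so that at $N=\mathcal N$ the purely Gaussian term of the Leibniz sum contributes exactly $\exp\!\bigl(-\tfrac{\pi}{4}\tfrac{\alpha_0^2}{(d/\sigma^2+\bar\upsilon)^2}(d/\sigma)^2\bigr)$ — note that at $\bar\upsilon=0$ this collapses to $\me^{-\pi\alpha_0^2\sigma^2/4}$, matching the potential-free $\epsilon_2$ — while every mixed and purely-phase term is shown (using $\bar\upsilon\ge 0$, equivalently $\ln(\pi\alpha_0\sigma^2)>0$, so that the Bell-number bound applies) to be no larger. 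This gives the first line of Eq.\ \eqref{eq:lemma sampled ft 2}; the requirement $\mathcal N\ge 8$ is exactly what makes the asymptotic/Stirling estimates usable. When $\mathcal N<8$ — the regime in which $b$ is so large that too few integrations by parts are available — one instead fixes $N=3$, uses $B_3=5$, and collects the resulting $\bo(\sigma/d^3)$ bound, which is the second line of Eq.\ \eqref{eq:lemma sampled ft 2}. Finally one tracks the explicit constants through (the $\sqrt{\me/2}$, the $1+\pi^2/8$, the $3^{7/4}$, etc.).

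The main obstacle is this middle block: controlling $\|h^{(N)}\|_{L^1}$ honestly through the Leibniz/Fa\`a di Bruno/Bell-number machinery while keeping all constants explicit, and then carrying out the joint optimisation over $N$ and the aliasing sum carefully enough that the Stirling bookkeeping reproduces exactly the cutoff $\mathcal N$, the rate $\bar\upsilon$, and the $\pi/4$ coefficient — and doing so uniformly in $\Delta$, $k_0$, and in both width regimes $\sigma=\sqrt d$ and $\sigma\ne\sqrt d$. The Poisson/aliasing reduction and the convergence of the tail sums are routine by comparison.
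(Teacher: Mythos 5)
Your plan is correct and follows essentially the same route as the paper's proof: reduce $\bar\epsilon_2$ to the aliased tail of $\tilde\psi$, integrate by parts $N$ times, split the derivatives of (Gaussian)$\times$(potential phase) via Leibniz, control the Gaussian factor through Hermite estimates and the phase factor through Fa\`a di Bruno and the Bell-number bound $\textup{Be}_n<(\kappa n/\ln(n+1))^n$, then optimise over $N$ to obtain the $\mathcal N\ge 8$ regime and fall back to $N=3$ otherwise. The only differences are cosmetic bookkeeping (you centre the dual variable at $n_0$ directly, which is the choice $\beta=\beta_0$ the paper arrives at after introducing a free shift parameter, and you quote $L^1$ Hermite bounds where the paper uses Cauchy--Schwarz plus $L^2$ orthogonality).
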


\begin{proof}
Outline: to bound $|\bar\epsilon_2|$, the main challenge is in bounding the Fourier transform $\tilde \psi(\cdot,\cdot;\cdot)$, and show that it is exponentially decaying in $d$ (for $d$ large enough such that the first if condition in Eq. \eqref{eq:lemma sampled ft 2}  is satisfied). In order to achieve this, we will integrate by parts the Fourier transform $N=1,2,3,\ldots$ times followed by taking absolute values, thus generating a different bound for every $N$. We will then choose to bound the Fourier transform using a different bound (i.e. different $N$), depending of the value of $d$, i.e. $N=N(d)$. We will have to bound the derivatives produced by integrating by parts $N$ times. We will use results from combinatorics for this. The proof will make essential use of: the binomial theorem, the generalized Leibniz rule, Rodriguez formulas for Hermite polynomials, orthogonality conditions of Hermite polynomials, the Cauchy-Schwarz inequality, Fa\a di Bruno's formula, Bell polynomials, Bell numbers, analytic upper bounds for Bell numbers, Sterling's formula, and the Fundamental theorem of calculus. Also note that definitions \ref{def:stand alone def mathcal N} and \ref{def:decay rate params} have been defined more generally in the proof. One could use these slightly more general definitions to tighten the bound in Eq. \eqref{eq:lemma sampled ft 2} for small $d$ if desired.

We have that
\be\label{eq:ep2 def}
|\bar\epsilon_2|\leq \frac{2 \pi}{\sqrt{d}} \sum_{n=0}^{d-1}\left( \sum_{s=-\infty} ^{-1} |s\,\tilde \psi( k_0,\Delta; n+sd)| +\sum_{s=1}^\infty |s\,\tilde \psi( k_0,\Delta; n+sd)|   \right),
\ee
where from definitions \ref{def:analyticmomentum control version},  \ref{def:analytic ext. of clock states} we have
\be\label{int tilde phi}
\tilde \psi( k_0,\Delta;p)=\frac{A}{\sqrt{d}} \int _{-\infty}^\infty dx\, \me^{-\frac{\pi}{\sigma^2}(x-k_0)^2}\me^{\mi 2\pi n_0(x-k_0)/d} \me^{-\mi \theta(x)} \me^{-\mi 2 \pi p x/d}.
\ee
where we denote
\be\label{eq:theta mischa def}
\theta(x)= \frac{2 \pi}{d}\int_{x-\Delta}^x dy  V_0(2\pi y/d),
\ee
where $ V_0(\cdot)$ is a smooth, periodic function with period $2\pi$ defined in Eq. \eqref{eq:theta normalisation}. %normalised such that
%\be\label{eq:theta normalisation}
%\int_0^{2\pi} dy \tilde V(y)=\Omega,\quad 0<\Omega< 2\pi.
%\ee
Performing the change of variable $z=2 \pi(k_0-x)/d$, we find in Eq. \eqref{int tilde phi} 
\be\label{eq:U in int 1st time}
\tilde \psi( k_0,\Delta; p)=\frac{A}{\sqrt{d}}\frac{d}{2\pi} \me^{-\mi p 2\pi k_0/d}\int _{-\infty}^\infty dz\,U(z) \me^{\mi( p+\gamma) z},
\ee
where
\be 
U(z):= \me^{-\left(\frac{d }{2\sqrt{\pi}\sigma}\right)^2 z^2} \me^{-\mi \tilde\theta(z)}\me^{-\mi (n_0+\gamma)z},
\ee
with
\be\label{eq:theta tilde mischa def}
\tilde\theta(z):=\theta\left(-\frac{d z}{2 \pi}+k_0  \right),
\ee
and $\gamma\in \big(-(d-1)/2,(d-1)/2 \big).$ We now integrate by parts $N$ times the integral in Eq. \eqref{eq:U in int 1st time}, differentiating $U(\cdot)$ once in every iteration. This requires that $U(\cdot)$ is differentiable $N$ times and since we will require $N$ to be unbounded from above, hence the requirement that $ V_0(\cdot)$ is smooth. Taking this all into account, we have
\be
\tilde \psi(k_0,\Delta; p)=-\frac{A}{\sqrt{d}} \frac{d}{2\pi}\me^{-\mi p 2\pi k_0/d}\frac{1}{\big(-\mi\left(p+\gamma\right)\big)^{N}}\int _{-\infty}^\infty dz\,U^{(N)}(z) \me^{\mi\big( p+\gamma\big) z},\quad N\in\nn^+,
\ee
where $U^{(N)}(z)$ denotes the $N^\text{th}$ derivative of $U(z)$ w.r.t. $z$. Thus taking absolute values, we achieve
\be\label{eq:div pars ineq}
\tilde \psi( k_0,\Delta; p)\leq\frac{A}{\sqrt{d}}\frac{d}{2\pi} |p+\gamma|^{-N}\int _{-\infty}^\infty dz\,\left|U^{(N)}(z)\right|,\quad N\in\nn^+.
\ee
We now substitute Eq. \eqref{eq:div pars ineq} into  Eq. \eqref{eq:ep2 def} obtaining the upper bound
\begin{align}\label{eq:ep2 1sr up bound}
|\bar\epsilon_2|&\leq \frac{2\pi}{\sqrt{d}}\frac{A}{\sqrt{d}}\frac{d}{2\pi}\left(\int_{-\infty}^\infty dz \left| U^{(N)}(z)\right|\right) \sum_{n=0}^{d-1}\sum_{s=1}^\infty\left( \frac{|s|}{ \left|n+sd+\gamma\right|^{N}} + \frac{|s|}{ \left|n-sd+\gamma\right|^{N}}   \right)\\
&\leq \left( A G d\int_{-\infty}^\infty dz \left| U^{(N)}(z)\right|\right) \left(\frac{d (1-|\beta|)}{2}\right)^{-N} , \quad N=3,4,5,\ldots
\end{align}
where $\beta\in(-1,1)$ and
\begin{align}\label{def: G definition}
G:=  \frac{1}{d}\sum_{n=0}^{d-1}\sum_{s=1}^\infty\frac{|s|}{\left(\frac{2}{d(1-|\beta|)}\right)^N}
\left( \frac{1}{ \big|n+sd+\gamma\big|^{N}} + \frac{1}{  \big|n-sd+\gamma\big|^{N}}   \right).
\end{align}
The rational to defining $G$ in this way, is that we will soon parametrize $\gamma$ in terms of $\beta$ such that $G$ will be upper bounded by a $d$ independent constant. We will now find this constant before proceeding to bound Eq. \eqref{eq:ep2 1sr up bound}. First we will parametrize $\gamma$ is such a way that the symmetry in the summations of $G$ becomes transparent. Let $\gamma=-(d-1)/2-y_0$, $y_0\in \big[0,-(d-1)/2\big)$. Substituting into Eq. \eqref{def: G definition}, we find
\begin{align} 
G=&  \frac{1}{d}\sum_{n=0}^{d-1}\sum_{s=1}^\infty\frac{|s|}{\left(\frac{2}{d(1-|\beta|)}\right)^N}
\left( \frac{1}{ \big|n+sd-(d-1)/2-y_0\big|^{N}} + \frac{1}{  \big|n-sd-(d-1)/2-y_0\big|^{N}}   \right)\\
=&  \frac{1}{d}\sum_{n=0}^{d-1}\sum_{s=1}^\infty\frac{|s|}{\left(\frac{2}{d(1-|\beta|)}\right)^N}
\left( \frac{1}{ \big|n+sd-(d-1)/2-y_0\big|^{N}} + \frac{1}{  \big|-n-sd+(d-1)/2-y_0\big|^{N}}   \right)\\
%=&  \frac{1}{d}\sum_{n=0}^{d-1}\sum_{s=1}^\infty\frac{|s|}{\left(\frac{2}{d(1-|\beta|)}\right)^N}
%\left( \frac{1}{ \big|n+sd-(d-1)/2-y_0\big|^{N}} + \frac{1}{  \big|n+sd-(d-1)/2+y_0\big|^{N}}   \right)\\
=&  \frac{1}{d}\sum_{n=0}^{d-1}\sum_{s=1}^\infty\frac{|s|}{\left(\frac{2}{d(1-|\beta|)}\right)^N}
\left( \frac{1}{ \big|n+sd-(d-1)/2-|y_0|\big|^{N}} + \frac{1}{  \big|n+sd-(d-1)/2+|y_0|\big|^{N}}   \right),\label{eq:G intermediate}
\end{align}
and thus the sum only depends on the modulus of $y_0$. We will now make the change of variable $y_0=(d-1)\beta/2$ with $\beta\in(-1,1)$ leading to
\be\label{eq:gamma in terms of beta} 
\gamma= -\left(\frac{d-1}{2}\right) (1+\beta), \quad \beta\in(-1,1).
\ee

Plugging this into Eq. \eqref{eq:G intermediate} leads to 
\begin{align}
G=&  \frac{1}{d}\sum_{n=0}^{d-1}\sum_{s=1}^\infty
\left( \frac{|s|}{\left|\frac{(2n+1+|\beta|)}{(1-|\beta|)d}-\frac{(1+|\beta|)-2s}{(1-|\beta|)}\right|^{N}}   + \frac{|s|}{\left|\frac{(2n+1-|\beta|)}{(1-|\beta|)d}+\frac{(1-|\beta|)+2s}{(1-|\beta|)}\right|^{N}}   \right)\\
=&\frac{1}{d}\sum_{n=0}^{d-1}
\left(\sum_{s=0}^\infty \frac{s+1}{\left(\frac{(2n+1+|\beta|)}{(1-|\beta|)d}+1+\frac{2s}{1-|\beta|}\right)^{N}}   +\sum_{s=1}^\infty  \frac{s}{\left(\frac{(2n+1-|\beta|)}{(1-|\beta|)d}+1+\frac{2s}{(1-|\beta|)}\right)^{N}}   \right)
\\
\leq&\frac{1}{d}\sum_{n=0}^{d-1}
\left(\sum_{s=0}^\infty \frac{s+1}{\left(1+2s\right)^{N}}   +\sum_{s=1}^\infty  \frac{s}{\left(1+2s\right)^{N}}   \right)\leq 1+\sum_{s=0}^\infty
 \frac{1}{\left(2s+1\right)^{N-1}}\\
 \leq & 1+\sum_{s=0}^\infty \frac{1}{(2s+1)^2}=1+\frac{\pi^{2}}{8}\approx 2.234\quad \text{for } N=3,4,5,\ldots\,.
\end{align}
Our next task will be to bound $\int_{-\infty}^\infty dz \left| U^{(N)}(z)\right|$. We will start by dividing $U(z)$ into a product of unitaries: $U(z)=U_1(z)U_2(z)U_3(z)$ where 
\be 
U_1(z):= \me^{-d z^2/4\pi},\quad U_2(z):= \me^{-\mi \tilde \theta(z)},\quad U_3(z):= \me^{-\mi(n_0+\gamma)z}.
\ee
We will take advantage of the distinct properties of $U_1,U_2,U_3$. We will start by using the \textit{general Leibniz rule} for $n$ times differentiable functions $u$ and $v$: 
\be 
(uv)^{(n)}=\sum_{k=0}^n \binom{n}{k} u^{(k)}v^{(n-k)},
\ee
where $\binom{n}{k}$ is the \textit{binomial coefficient} we are using the standard superscript round bracket notation to indicated derivatives.
We thus find
\begin{align}
\int_{-\infty}^\infty dz\left| U^{(N)}(z)\right|
&=\int_{-\infty}^\infty dz\left| \sum_{k=0}^N \binom{N}{k} \left(U_1(z)U_2(z)\right)^{(k)}  U_3^{(N-k)}(z)\right|\\
&=\int_{-\infty}^\infty dz\left| \sum_{k=0}^N \binom{N}{k}\sum_{q=0}^k \binom{k}{q}U_1^{(q)}(z)U_2^{(k-q)}(z)\left(-\mi(n_0+\gamma)\right)^{N-k}  U_3(z)\right|\\
&\leq \sum_{k=0}^N \binom{N}{k}\sum_{q=0}^k \binom{k}{q}|n_0+\gamma|^{N-k}\int_{-\infty}^\infty dz\left|U_1^{(q)}(z)U_2^{(k-q)}(z)\right|\\\label{eq:U N dev in terms of U1 U2}
\end{align}
We now need to relate $U_1^{(q)}$ to the Hermite polynomials in order to bound the integral. The Rodriguez formula for the $n^\text{th}$ Hermite polynomial $H_n(x)$ is \cite{Rodriguez}
\be 
H_n(x)=(-1)^n \me^{x^2}\frac{d^n}{dx^n}\left(\me^{-x^2}\right), \quad n\in \nn^0.
\ee
By the change in variable $x=\frac{d}{2\sigma\sqrt{\pi}}z$ we can relate $U_1^{(q)}$ to the Hermite polynomials:
\be\label{eq:U in terms of Hermite}
U^{(n)}_1(z)=\left(\frac{d}{2\sigma\sqrt{\pi}}\right)^{n}(-1)^n\, \me^{-\left(\frac{d}{2\sigma\sqrt{\pi}}\right)^2 z^2} H_n\left(\frac{d}{2\sigma\sqrt{\pi}}z\right),\quad n\in \nn^0.
\ee
Due to the periodic nature of $ V_0(\cdot)$, we have that $|U_2^{(k)}(z)|$ is bounded in $z\in\rr$ for $k\in \nn^0$. We can thus use the Cauchy--Schwarz inequality in conjunction with Eqs. \eqref{eq:U N dev in terms of U1 U2}, \eqref{eq:U in terms of Hermite} to obtain
\begin{align}
&\int_{-\infty}^\infty  dz\left| U^{(N)}(z)\right| \\
&\leq \sum_{k=0}^N  \binom{N}{k}\sum_{q=0}^N \binom{N}{q}|n_0+\gamma|^{N-k}\left(\frac{d}{2\sigma\sqrt{\pi}}\right)^{q}\sqrt{ \int_{-\infty}^\infty dy\, \me^{-\left(\frac{d}{2\sigma\sqrt{\pi}}\right)^2 y^2}\left|U_2^{(k-q)}(z) \right|^2  \int_{-\infty}^\infty dz H_q^2\left(\frac{d}{2\sigma\sqrt{\pi}}\,z\right)\me^{-\left(\frac{d}{2\sigma\sqrt{\pi}}\,z\right)^2 z^2} }\\
&\leq \sum_{k=0}^N  \binom{N}{k}\sum_{q=0}^N \binom{k}{q}|n_0+\gamma|^{N-k}\left(\frac{d}{2\sigma\sqrt{\pi}}\right)^{q-1/2} C_{k-q}\,\sqrt{ \int_{-\infty}^\infty dy\, \me^{-\left(\frac{d}{2\sigma\sqrt{\pi}}\right)^2 y^2}  \int_{-\infty}^\infty dx\, H_q^2\left(x\right)\me^{-x^2} }\label{eq:U N dev in terms of H_n}
\end{align}
where we have defined
\be\label{eq:def Ck}
C_k\geq \left|U_2^{(k)}(z)\right|,\quad \forall z\in \rr,\;\;\; k\in \nn^0.
\ee
Using the orthogonality conditions of the Hermite polynomials\cite{Rodriguez}:
\be 
\int_{-\infty}^\infty dy  H_p(y) H_q(y)\me^{- y^2}=\delta_{p,q} \sqrt{\pi}\, 2^q (q!),\quad p,q\in \nn^0
\ee
Eq. \eqref{eq:U N dev in terms of H_n} reduces to 
\begin{align}\label{eq:U N dev in terms of C k}
\int_{-\infty}^\infty dz\left| U^{(N)}(z)\right|
\leq & \frac{2\sigma\pi}{d}\sum_{k=0}^N \binom{N}{k}|n_0+\gamma|^{N-k}\sum_q ^k\binom{k}{q} \left(\frac{d}{\sigma\sqrt{2\pi}}\right)^{q} C_{N-q} \sqrt{(q!)}.
\end{align}
Before we can continue, we will now bound $C_k$ in terms of derivatives of $\tilde \theta(\cdot)$. For this, let us first recall Fa\`a di Bruno's formula written in terms of the Incomplete Bell Polynomials $B_{n,k}(x_1,x_2,\ldots,x_{n-k+1})$ \cite{Faa_di_Bruno}: for $u$,$v$ $n$-times differentiable functions, we have
\be 
\frac{d^n}{dx^n}u\left(v(x)\right)=\sum_{k=0}^n u^{(k)}\left( v(x)\right) B_{n,k}\left(v^{(1)}(x),v^{(2)}(x),\ldots,v^{(n-k+1)}(x)\right).
\ee 
By choosing $U(x)=\me^x$, $v(x)=-\mi \tilde \theta(x)$, it follows 
\be\label{eq:Un bell poly}
U_2^{(n)}(x)=U_2(x) \sum_{k=1}^n B_{n,k}\left(-\mi \tilde\theta^{(1)}(x),\ldots,-\mi \tilde\theta^{(n-k+1)}(x)\right)=:  U_2(x) B_{n}\left(-\mi \tilde\theta^{(1)}(x),\ldots,-\mi \tilde\theta^{(n)}(x)   \right),
\ee
where $B_n$ are the Complete Bell Polynomials. Using the formula\cite{Faa_di_Bruno}
\be 
B_{n,k}(x_1,x_2,\ldots,x_{n-k+1})=\sum_{\{j_k\}_k} \frac{n!}{j_1!j_2!\ldots j_{n-k+1}!} \left(\frac{x_1}{1!}\right)^{j_1}\left(\frac{x_2}{2!}\right)^{j_2}\ldots \left(\frac{x_{n-k+1}}{(n-k+1)!}\right)^{j_{n-k+1}},
\ee
where $j_1+j_2+\ldots+j_{n-k+1}=k$ and $j_1+2j_2+\ldots+({n-k+1})j_{n-k+1}=n$, we see that if $x_i\leq a b^i$, for $a,b>0$ $i=1,2,3,\ldots ,{n-k+1}$, we have
\be 
\left|B_{n,k}(x_1,x_2,\ldots,x_{n-k+1})\right| \leq b^n B_{n,k}(a,a,\ldots,a).
\ee
Let
\be\label{eq: dev tilde theta constraint}
\left| \tilde \theta^{(n)}(x)\right|\leq a\, b^n,
\ee
for some $a,b>0$ for $n\in \nn^+$, Eq. \eqref{eq:Un bell poly} gives us
\be 
\left|U_2^{(n)}(x)\right| \leq b^n B_n(a,a,\ldots,a). 
\ee
Setting $a=1$ and noting that $B_n(1,1,\ldots,1)=\textup{Be}_n$, where $\textup{Be}_n$ is the $n^\textup{th}$ Bell number\cite{Faa_di_Bruno}, we achieve
\be\label{eq: U2 in Bell num}
\left|U_2^{(n)}(x)\right| \leq b^n\, \textup{Be}_n. 
\ee
Using Eqs. \eqref{eq:def Ck}, \eqref{eq:U N dev in terms of C k}, \eqref{eq: U2 in Bell num}, and introducing variables $\upsilon,b_2\geq 0$ via the definition
\be\label{eq:b upsilon and b_2 def}
b=\upsilon b_2,
\ee
we have
\begin{align}
\int_{-\infty}^\infty dz\left| U^{(N)}(z)\right|
\leq & \frac{2\pi\sigma}{d}\sum_{k=0}^N \binom{N}{k}|n_0+\gamma|^{N-k}\sum_{q=0}^k \binom{k}{q}\upsilon^{k-q} \max_{p=0,\ldots,k}\left\{\left(\frac{d}{\sigma\sqrt{2\pi}}\right)^{p} \sqrt{(p!)}\,b_2^{k-p}\, \textup{Be}_{k-p} \right\}\\
\leq & \frac{2\pi\sigma}{d}\sum_{k=0}^N \binom{N}{k}|n_0+\gamma|^{N-k}\sum_{q=0}^k \binom{k}{q}\upsilon^{k-q} \max_{p=0,\ldots,N}\left\{\left(\frac{d}{\sigma\sqrt{2\pi}}\right)^{p} \sqrt{(p!)}\,b_2^{N-p}\,\textup{Be}_{N-p} \right\},\\
\leq & \frac{2\pi\sigma}{d}\left(|n_0+\gamma|+\upsilon+1\right)^N \max_{p=0,\ldots,N}\left\{\left(\frac{d}{\sigma\sqrt{2\pi}}\right)^{p} \sqrt{(p!)}\,b_2^{N-p}\, \textup{Be}_{N-p} \right\},\label{eq:U N 1st max prob}
\end{align}
where we have used twice the identity
\be 
\sum_{k=0}^n\binom{n}{k} g_1^k g_2^{n-k}=(g_1+g_2)^n\quad \forall g_1,g_2,\in \rr,\, n\in \nn^0.
\ee
We will now proceed to upper bound the maximisation problem in Eq. \eqref{eq:U N 1st max prob}. Using Sterling's bound for factorials and the a bound for the Bell numbers \cite{Berendr}, 
\be 
n!\leq \me\, n^{n+1/2} \me^{-n},\quad \textup{Be}_n< \left(\frac{\kappa\, n}{\ln(n+1)}\right)^n, \text{with } \kappa=0.792,\quad n\in\nn^+
\ee
together with $\textup{Be}_0=1$, we can thus write
\begin{align}\label{eq:max f def}
\max_{p=0,\ldots,N}\left\{\left(\frac{d}{\sigma\sqrt{2\pi}}\right)^{p} \sqrt{(p!)}\,b_2^{N-p}\, \textup{Be}_{N-p} \right\}\\
 \leq \sqrt{\me}\, \exp \left(\max_{q=0,\ldots,N}\left\{f(q)\right\}\right),
\end{align}
where we have defined $f: 0\cup [1,N]\rightarrow \rr$ for $N=3,4,5,\ldots$
\be\label{def:f def}
f(x):=
\begin{cases}
N\ln b_2+N\ln\left(\frac{\kappa N}{\ln(N+1)}\right), & \quad \text{if }  x=0
\\ x\ln\left(\frac{d}{\sqrt{2\pi}\sigma}\right)+(x/2+1/4)\ln(x)-\frac{x}{2}+(N-x)\ln\left(\frac{\kappa (N-x)}{\ln(N-x+1)}\right)+(N-x)\ln(b_2), & \quad \text{if }  x\in[1,N)\\
  N\ln\left(\frac{d}{\sqrt{2\pi}\sigma}\right)+(N/2+1/4)\ln(N)-\frac{N}{2}, & \quad \text{if }  x=N.
\end{cases}
\ee
Note that $f(x)$ is continuous on the interval $x\in [1,N]$. By explicit calculation, we have
\be\label{eq:2nd dev of f } 
f^{(2)}(x)=\frac{1}{N-x}+\frac{1}{2x}\left( 1-\frac{1}{2x}\right)+\frac{G(N-x)}{(N-x)(N-x+1)^2\ln^2(N-x+1)} \text{ for}\quad x\in[1,N),\,\, N=3,4,5,\ldots 
\ee
where 
\be
G(x):=(x+1)^2\ln^2(x+1)+x^2-(x+2)x\ln(x+1),\quad x\geq 0.
\ee
Due to the following two properties satisfied by $G$,
%\begin{itemize}
%\item[1)] $G(0)=0$
%\item[2)] 
%\be 
%\frac{d}{dx} G(x)=\frac{x^2+2(1+x)^2\ln^2(1+x)}{1+x}>0,\quad\text{for } x>0.
%\ee
%\end{itemize}
\begin{align}
 G(0)&=0,\\
 \frac{d}{dx} G(x)&=\frac{x^2+2(1+x)^2\ln^2(1+x)}{1+x}>0,\quad\text{for } x>0,
\end{align}
we conclude $G(x)>0$ for $x>0$ and thus from Eq. \eqref{eq:2nd dev of f },
\be
f^{(2)}(x)>0 \text{ for } x \in[1,N), \quad \forall \,N= 3,4,5,\ldots
\ee
hence $f(x)$ is convex on $x\in[1,N)$ and we can write Eq. \eqref{eq:max f def} as
\be\label{eq:max written with f q}
\max_{p=0,\ldots,N}\left\{\left(\frac{d}{\sqrt{2\pi}\sigma}\right)^{p} b_2^{N-p}\, \textup{Be}_{N-p}\, \sqrt{(p!)}\right\}
 \leq \sqrt{\me}\, \exp \left(\max_{p=0,1,N}\left\{f(p)\right\}\right).
\ee
We now want $\max_{p=0,1,N}\left\{f(p)\right\}=f(N)$. This is true if $1\geq f(0)/f(N)$ and $1\geq f(1)/f(N)$. By direct calculation using Eq. \eqref{def:f def}, we can solve these constraints for $b_2$. We find
\begin{align}
b_2&\leq \frac{d}{\sqrt{2\pi \me}\sigma}\frac{1}{\kappa}\frac{\ln(N+1)}{N^{1/2-1/4N}}\label{eq:f0/fN ineq}\\
b_2&\leq \frac{d}{\sqrt{2\pi \me}\sigma}\frac{1}{\kappa}\frac{\ln(N)}{(N+1)N^{-\left(N/2(N-1)+1/4(N-1)\right)}}. \label{eq:f1/fN ineq}
\end{align}
Therefore, we need $b_2$ to satisfy both Eqs. \eqref{eq:f0/fN ineq}, \eqref{eq:f1/fN ineq}, namely
\begin{align}\label{eq:b2 condition}
b_2&\leq \min\{b_L,b_R\},\\
b_L&:=\frac{d}{\sqrt{2\pi \me}\sigma}\frac{1}{\kappa}\frac{\ln(N+1)}{N^{1/2-1/4N}}\\
b_R&:= \frac{d}{\sqrt{2\pi \me}\sigma}\frac{1}{\kappa}\frac{\ln(N)}{(N+1)N^{-\left(N/2(N-1)+1/4(N-1)\right)}}
\end{align}
Thus if Eq. \eqref{eq:b2 condition} is satisfied, from Eqs. \eqref{eq:U N 1st max prob}, \eqref{eq:max f def}, and \eqref{eq:max written with f q}, we achieve 
\begin{align}
\int_{-\infty}^\infty dz\left| U^{(N)}(z)\right|
\leq & \frac{2\pi\sigma}{d} \left(|n_0+\gamma|+\upsilon+1\right)^N \sqrt{\me}\,\me^{f(N)}.
\end{align}
And hence plugging this into Eq. \eqref{eq:ep2 1sr up bound}
\begin{align}
|\bar\epsilon_2|
&\leq  Ad\,G  \frac{2\pi\sigma\sqrt{\me}}{d}\left(\frac{(1-|\beta|)d}{2}\right)^{-N}\left( |n_0+\gamma|+\upsilon+1 \right)^N \me^{f(N)} \\
&\leq  2\pi\sigma A G \sqrt{\me} N^{1/4} \exp{\left(-N\ln\left(\frac{(1-|\beta|)d}{2}\right)+N\ln(|n_0+\gamma|+\upsilon+1)+N\ln\left( \frac{d}{\sqrt{2\pi}\sigma} \right)+N\ln\sqrt{N}-N/2\right)}\\
&\leq  2\pi\sigma A G \sqrt{\me} N^{1/4} \exp{\left(N\ln\left(\frac{|n_0+\gamma|+\upsilon+1}{(1-|\beta|)}\sqrt{\frac{2}{\pi\me}\frac{N}{\sigma^2}}\right)\right)}, \quad N=3,4,5,\ldots,\label{eq:ep in terms of N exp form}
\end{align}
Recall that our objective is to prove that
$|\bar\epsilon_2|$ decays exponentially fast in $d$. For this, we will choose $N$ depending on the value of $d$. Although there is no explicit $d$ dependency in the exponential in Eq. \eqref{eq:ep in terms of N exp form}, recall that $\sigma$ is a function of $d$ and as such we will parametrize $N$ in terms of $\sigma$. For the exponential in Eq. \eqref{eq:ep in terms of N exp form} to be negative, we want 
\be  
0<\frac{|n_0+\gamma|+\upsilon+1}{(1-|\beta|)}\sqrt{\frac{2}{\pi\me}\frac{N}{\sigma^2}}<1
\ee
to hold. Solving for $N$ gives us
\be\label{eq:N up bound}
N< \sigma^2\left(\frac{1-|\beta|}{|n_0+\gamma|+\upsilon+1}  \right)^2 \frac{\pi \me}{2}.
\ee
We thus set
\be
N=N(\sigma)=\left\lfloor \sigma^2\left(\frac{1-|\beta|}{|n_0+\gamma|+\upsilon+1}  \right)^2 \frac{\pi \me}{2\chi} \right\rfloor
\ee
where $\chi$ is a free parameter we will choose such that Eq. \eqref{eq:N up bound} holds while optimizing the bound. Using the bounds $\lfloor x \rfloor \leq   x$ for $x\in \rr$, and noting that Eq. \eqref{eq:ep in terms of N exp form} is monotonically increasing in $N=3,4,5,\ldots,$  we find plugging into Eq. \eqref{eq:ep in terms of N exp form} 
\begin{align}
|\bar\epsilon_2|
< \sigma A G (2\pi)^{5/4}\sqrt{\frac{\sigma\, (1-|\beta|)}{2(|n_0+\gamma|+\upsilon+1)}}\left(\frac{e}{\chi}\right)^{1/4} \,\exp\left(-\frac{\pi\me}{4\chi}\left(\frac{\sigma\, (1-|\beta|)}{|n_0+\gamma|+\upsilon+1} \right)^2 \ln(\chi)\right), \quad N(\sigma)=3,4,5,\ldots.
\end{align}
We now choose $\chi=\me$ to maximise $\ln (\chi)/\chi$ in the exponential and choose the parametrization
\be\label{eq:n0 with beta 0}
n_0=\left(\frac{d-1}{2}\right)(1+\beta_0). \quad \beta_0\in[-1,1]
\ee
Recalling Eq. \eqref{eq:gamma in terms of beta}, we thus achieve the final bound
\be\label{eq: exp decay ep2 bound}
|\bar\epsilon_2|
< \sigma A G (2\pi)^{5/4}\sqrt{\frac{\sigma\, (1-|\beta|)}{(d-1)|\beta_0-\beta|+2(\upsilon+1)}} \,\exp\left(-\frac{\pi}{4}\left(\frac{\sigma\, 2(1-|\beta|)}{(d-1)|\beta_0-\beta|+2(\upsilon+1)} \right)^2\right), \quad N(\sigma)=3,4,5,\ldots,
\ee
with
\be\label{eq:N(d)} 
N=N(\sigma)=\left\lfloor \sigma^2\left(\frac{1-|\beta|}{(d-1)|\beta_0-\beta|/2+\upsilon+1}  \right)^2 \frac{\pi}{2} \right\rfloor,
\ee
where recall that $\beta\in(-1,1)$ is a free parametre which we can choose to optimise the bound. In the case that $N(\sigma)$ given by Eq. \eqref{eq:N(d)} does not satisfy $N(\sigma)=3,4,5,\ldots$, we will bound $|\bar\epsilon_2|$ by setting $N=3$ in Eq. \eqref{eq:ep in terms of N exp form}. Taking into account definitions \eqref{eq:gamma in terms of beta}, \eqref{eq:n0 with beta 0}, this gives
\be\label{eq: exp decay ep2 bound N(d)=0,1,2}
|\bar\epsilon_2|< 2\pi\sqrt{\me}3^{1/4}\sigma A G \exp{\left(3\ln\left(\frac{|\beta_0-\beta|(d-1)/2+\upsilon+1}{1-|\beta|}\frac{1}{\sigma}\right)\right)}, \quad N(\sigma)=0,1,2.
\ee
We will now work out an explicit bound for $b_2$ defined via Eq. \eqref{eq:b2 condition}. We will use $N(\sigma)$ (Eq. \eqref{eq:N(d)}) to achieve a definition of $b_2$ as a function of $\sigma$. We start by lower bounding $\min\{b_L,b_R\}$
\begin{align}
\min\{b_L,b_R\}&=\frac{d}{\sqrt{2\pi}\kappa\sigma}\min\left\{ \frac{\ln(N+1)}{N^{1/2-1/4N}} ,  \frac{\ln(N)}{(N-1)N^{-(N/2(N-1)+1/4(N-1))}}\right\}\\
&\geq \frac{d}{\sqrt{2\pi}\kappa\sigma}
\min\left\{ \inf_{x\geq 3}\frac{1}{x^{-1/4x}} , \inf_{x\geq 3}\frac{1}{x^{-1/4(x-1)}}\right\}
\min\left\{ \frac{\ln(N+1)}{N^{1/2}} ,  \frac{\ln(N)}{(N-1)N^{-N/2(N-1)}}\right\}.\label{eq:intermediate 1 low bound min}
\end{align}
Note that the derivatives of $1/x^{-1/4x}$ and  $1/x^{-1/4(x-1)}$ are both negative for $x\geq 3$ and thus $ \inf_{x\geq 3}\frac{1}{x^{-1/4x}}=\lim_{x\rightarrow \infty}\frac{1}{x^{-1/4x}}=1$ and $\inf_{x\geq 3}\frac{1}{x^{-1/4(x-1)}}=\lim_{x\rightarrow\infty }\frac{1}{x^{-1/4(x-1)}}=1$. Thus from Eq. \eqref{eq:intermediate 1 low bound min}, we find 
\begin{align}
\min\{b_L,b_R\}&\geq \frac{d}{\sqrt{2\pi}\kappa\sigma}
\min\left\{ \frac{\ln(N+1)}{N^{1/2}} ,  \frac{\ln(N)}{(N-1)N^{-(N-1)/2(N-1)}}\right\}\\
&\geq \frac{d}{\sqrt{2\pi}\kappa\sigma}\sqrt{N}\,\frac{\ln(N)}{N}
\min\left\{ \frac{\ln(N+1)}{\ln(N)} ,  \frac{N}{(N-1)}\right\}\\
&\geq \frac{d}{\sqrt{2\pi}\kappa\,\sigma}\frac{\ln(N)}{\sqrt{N}}.\label{eq:lowebound min inter 2}
\end{align} 
We now upper bound $N$. From Eqs. \eqref{eq:N(d)}, \eqref{eq:gamma in terms of beta}, \eqref{eq:n0 with beta 0}, it follows
\be 
N\leq \sigma^2\frac{\pi}{2}\left( \frac{1-|\beta|}{|n_0+\gamma|+\upsilon+1} \right)^2\\
\leq 2\pi \sigma^2 \left( \frac{1-|\beta|}{|\beta_0-\beta|(d-1)+2} \right)^2.\label{eq:up bound N}
\ee
Now noting that
\be 
\frac{d}{dx}\frac{\ln(x)}{\sqrt{x}}=\frac{1}{x\sqrt{x}}\left(1-\frac{\ln(x)}{2}\right)<0 ,
\ee
for $x>\me^2\approx 7.39$, we can use Eq. \eqref{eq:up bound N} to lower bound Eq. \eqref{eq:lowebound min inter 2}. We find
\begin{align}
\min\{b_L,b_R\}\geq \mho,\quad \text{if } N(\sigma)\geq 8 \text{ and } \mho \geq 0
\end{align}
where $N(\sigma)$ is given by Eq. \eqref{eq:N(d)} and we have defined
\be\label{def:mho}
\mho:=\frac{d}{\sigma^2}\frac{1}{2\pi\,\kappa}\frac{|\beta_0-\beta|(d-1)+2}{1-|\beta|}\ln\left(2\pi\sigma^2\left( \frac{1-|\beta|}{|\beta_0-\beta|(d-1)+2} \right)\right).
\ee
The constraint $\mho\geq 0$ is for consistency with the requirement $\upsilon\geq 0$.
Recall Eq. \eqref{eq:b2 condition}, namely that $b_2$ must satisfy $\min\{b_L,b_R\} \geq b_2$, thus taking into account Eq. \eqref{eq:N(d)} and recalling that $\upsilon=b/b_2$ a consistent solution is
\be\label{eq:b_2 for N >=8}
b_2=\mho\quad \text{if } \mathcal{N} \geq 8  \text{ and } \mho \geq 0
\ee
where we have defined
\be\label{def:mathcal N}
\mathcal{N}:=\left\lfloor \sigma^2\left(\frac{1-|\beta|}{(d-1)|\beta_0-\beta|/2+b/\mho+1}  \right)^2 \frac{\pi}{2} \right\rfloor
\ee
When the if condition in Eq. \eqref{eq:b_2 for N >=8} is not satisfied, we can find a bound for $|\bar\epsilon_2|$ by setting  $N=3$ in Eq. \eqref{eq:ep in terms of N exp form}. This gives us Eq. \eqref{eq:lemma sampled ft 2}. For $\upsilon'$, we bound $b_2$ by evaluating $\min\{b_L,b_R\}$ for $N=3$ using the bound Eq. \eqref{eq:lowebound min inter 2}. This gives us
\be\label{eq:b_2 for N<4,5,6}
b_2=\frac{\ln(3)}{\sqrt{6\pi}\kappa}\frac{d}{\sigma},\quad \text{if } \mathcal{N}<8 \text{ and/or } \mho< 0
\ee
We will now workout what the constraint Eq. \eqref{eq: dev tilde theta constraint} with $a=1$, $b=\upsilon b_2$ imposes on potential function $V_0$. From definitions Eq. \eqref{eq:theta mischa def} and \eqref{eq:theta tilde mischa def}, we have
\be 
\tilde \theta (z)=\frac{2\pi}{d}\int_{-\frac{d z}{2\pi}+k_0-\Delta}^{-\frac{d z}{2\pi}+k_0} dy  V_0(2\pi y/d)=\int_{-z+\tilde k_0-\tilde \Delta}^{- z+\tilde k_0} dy  V_0(y)=\left( \mathcal{V}(- z+\tilde k_0)- \mathcal{V}(- z+\tilde k_0-\tilde\Delta)\right),
\ee
where we have defined the re-scaled constants $\tilde k_0:=2\pi k_0/d$, $\tilde \Delta:=2\pi \Delta/d$ and have used the Fundamental Theorem of Calculus, to write the integral in terms of $\mathcal{V}$, where $\mathcal{V}^{(1)}(x)=V_0(x)$.  We can now take the first $n$ derivatives of $\tilde\theta$:
\begin{align} 
\tilde\theta^{(n)}(z)&= \left(-1\right)^n\frac{d^n}{dy^n}\left( \mathcal{V}(y)- \mathcal{V}(y-\tilde\Delta)\right)\\
&= \left(-1\right)^{n-1}\frac{d^{n-1}}{dy^{n-1}}\left( V_0(y-\tilde\Delta)-V_0(y)\right)=\left(-1\right)^{n-1}\frac{d^{n-1}}{dx^{n-1}} V_0(x)\Bigg{]}_y^{y-\tilde\Delta}\\
&=\left(-1\right)^{n-1} V_0^{(n-1)}(x)\Big{]}_y^{y-\tilde\Delta},\quad n\in \nn^+
\end{align}
where $y:=-z+\tilde k_0$. Hence
\begin{align} 
\left|\tilde\theta^{(n)}(z)\right|^{1/n}&=   \left|  V_0^{(n-1)}(x)\Big{]}_{y}^{y-\tilde\Delta} \right|^{1/n}\leq  \left(\max_{x_1,x_2\in[0,2\pi]} \left|  V_0^{(k-1)}(x)\Big{]}_{x_1}^{x_2} \right|\,\right)^{1/k}\\
&\leq  \sup_{k\in \nn^+}\left(2\max_{x\in[0,2\pi]} \left|  V_0^{(k-1)}(x)\right|\,\right)^{1/k} , \quad \forall\; n\in\nn^+ \text{ and } \forall\; z\in \rr.
\end{align}
Thus from Eqs. \eqref{eq: dev tilde theta constraint}, \eqref{eq:b upsilon and b_2 def}  we conclude that $b\geq 0$ is any non negative number satisfying 
\be\label{eq: b_1 explicit}
b\geq\; \sup_{k\in \nn^+}\left(2\max_{x\in[0,2\pi]} \left|  V_0^{(k-1)}(x)\right|\,\right)^{1/k}.
\ee
and
\be\label{eq: upsilon explicit}
\upsilon= \frac{b}{b_2}= \frac{\sigma^2}{d}\frac{2\pi\,\kappa(1-|\beta|)}{|\beta_0-\beta|(d-1)+2}\frac{b}{\ln\left(2\pi\sigma^2\left( \frac{1-|\beta|}{|\beta_0-\beta|(d-1)+2} \right)\right)}, \quad \text{if } \mathcal{N}\geq 8 \text{ and } \mho \geq 0
\ee
where we have used Eq. \eqref{eq:b_2 for N >=8}.
For $\mathcal{N}<8$, we use Eq. \eqref{eq:b_2 for N<4,5,6} to achieve
\be
\upsilon= \frac{b}{b_2}=b\,\frac{\sqrt{6\pi}\kappa}{\ln(3)}\frac{\sigma}{d},\quad \text{if } \mathcal{N}<8 \text{ and/or } \mho< 0.
\ee
We are now ready to state the final bound. From Eqs. \eqref{eq: exp decay ep2 bound}, \eqref{eq:b_2 for N >=8} it follows
\be 
|\bar\epsilon_2|\leq (2\pi)^{5/4}\sigma A G\sqrt{\frac{\me\,(1-|\beta|)\,\sigma}{|\beta-\beta_0|(d-1)+2(\upsilon+1)}}\exp\left(x \right),\quad \text{if } \mathcal{N}\geq 8\text{ and } \mho \geq 0, 
\ee
with 
\be 
x=-\pi\frac{\left(1-|\beta|\right)^2}{\left(\left(|\beta-\beta_0|(d-1)+2\right)\frac{d}{\sigma^2}+\frac{4\pi\kappa(1-|\beta|)b}{\left(|\beta_0-\beta|(d-1)+2\right)\ln\left (\frac{2\pi\sigma^2(1-|\beta|)}{|\beta_0-\beta|(d-1)+2}\right)}\right)^2} \left(\frac{d}{\sigma}\right)^2,
\ee
where $\upsilon$ is given by Eq. \eqref{eq: upsilon explicit}, $\mathcal{N}$ by \eqref{def:mathcal N}, and $b$ by \eqref{eq: b_1 explicit}.
Recall that $\beta\in(-1,1)$ is a free parameter which we may choose to optimize the bound. For $\mathcal{N}< 8$, the bound is achieved from Eqs. \eqref{eq: exp decay ep2 bound N(d)=0,1,2}, \eqref{eq:b_2 for N<4,5,6}
\be
|\bar\epsilon_2|\leq \frac{3^{7/4}8}{\sqrt{2\pi}\me}A\,G \frac{\left( |\beta_0-\beta|(d-1)+2\left(\frac{\kappa\sqrt{6\pi}}{\ln(3)}\frac{\sigma}{d}\,b+1\right) \right)^3}{(1-|\beta|)^3}\frac{1}{\sigma^2},\quad \text{for } \mathcal{N}< 8\text{ and/or } \mho< 0.
\ee
The optimal choice of $\beta$ might depend on $\sigma$ and $d$, however, whenever $\beta_0\neq \pm 1$, i.e. mean energy of the initial clock state is not at one of the extremal points $0,d$, a good choice is $\beta_0=\beta$. Taking this into account and in order to simplify the bound, we will set $\beta_0=\beta$ to achieve
\be 
|\bar\epsilon_2|\leq (2\pi)^{5/4}\sigma^{3/2} A G\sqrt{\frac{\me}{2}\frac{\alpha_0}{(\upsilon+1)}}\exp\left(-\frac{\pi}{4}\left(\frac{\alpha_0}{\frac{d}{\sigma^2}+\frac{\pi\kappa\alpha_0}{\ln\left(\pi\alpha_0\sigma^2\right)}b}\right)^2 \left(\frac{d}{\sigma}\right)^2 \right),\quad \text{if } \mathcal{N}\geq 8 \text{ and } \mho\geq 0,
\ee
and
\be 
|\bar\epsilon_2|\leq \frac{3^{7/4}8}{\sqrt{2\pi}\me}\frac{A\,G}{\alpha_0^3}\left(\frac{\kappa\sqrt{6\pi}}{\ln(3)}b+\frac{d}{\sigma}\right)^3 \left( \frac{\sigma}{d^3} \right),\quad \text{if } \mathcal{N}< 8 \text{ and/or } \mho< 0,
\ee
where we have defined
\be 
\alpha_0:=1-|\beta_0|\in (0,1),
\ee
which can be written in terms of $n_0$ as
\begin{align}
\alpha_0&=\min\{ 1+|\beta_0|,1-|\beta_0|\}=\min\{ 1+\beta_0,1-\beta_0\}\\
&=\left(\frac{2}{d-1}\right) \min\{n_0,(d-1)-n_0\}\\
&=1-\left|1-n_0\,\left(\frac{2}{d-1}\right)\right|,\quad \text{for } n_0\in(0,d-1)
\end{align}
where we have used Eq. \eqref{eq:n0 with beta 0}. Furthermore, $\mho$ and $\mathcal{N}$ can also be simplified when $\beta=\beta_0$. From Eqs. \eqref{def:mho} we find
\be 
\mho=\frac{d}{\sigma^2}\frac{\ln(\pi\alpha_0\sigma^2)}{\pi\alpha_0\kappa},
\ee
while from \eqref{def:mathcal N} it follows
\be 
\mathcal{N}=\left\lfloor \frac{\pi\alpha_0^2}{2\left(\frac{\pi\kappa\alpha_0}{\ln(\pi\alpha_0\sigma^2)}b+\frac{d}{\sigma^2}\right)^2} \left(\frac{d}{\sigma}\right)^2 \right\rfloor.
\ee
\end{proof}

Before we proceed, we now define a generalization of Def. \ref{def:Gaussian clock states} which includes a potential dependent phase.
\begin{definition}\label{def:gauss clock steate with pot defintion}\emph{(\gClock~states with Potential)}. Let $\Lambda_{ V_0,\sigma,n_0}$ be the following space of states in the Hilbert space of the $d$ dimensional clock,
\be\label{eq:lambda set with pot def}
\Lambda_{ V_0,\sigma,n_0}=\bigg\{ \ket{\bar{\Psi}(k_0,\Delta)}\in\mathcal{H}_\textup{c},\quad  k_0,\Delta\in\rr\bigg\},
\ee
where
\begin{align}\label{def:gauss clock steate with pot}
	\ket{\bar{\Psi}(k_0,\Delta)} = \sum_{\mathclap{\substack{k\in \mathcal{S}_d(k_0)}}}\me^{-\mi\Theta(\Delta;k)}\psi(k_0;k)\ket{\theta_k},
\end{align}
where $\Theta$, $\psi$ are defined in Eqs. \eqref{eq: Theta def}, \eqref{analyticposition}, respectively and depend on the parameters $\sigma \in (0,d)$, $n_0 \in (0,d-1)$ and function $ V_0$, which is defined in Def. \ref{def:continous pot}. $\mathcal{S}_d(k_0)$ is given by Eq. \eqref{eq: mathcal S def}.\\

In the special case that $\ket{\bar{\Psi}(k_0,\Delta)}$ is normalised, it will be denoted
\be 
\ket{\bar{\Psi}_\textup{nor}(k_0,\Delta)}=\ket{\bar{\Psi}(k_0,\Delta)},
\ee
and $A$ (see Eq. \eqref{analyticposition}) will satisfy Eq. \eqref{eq:A normalised}, s.t. $\braket{\bar{\Psi}_\textup{nor}(k_0,\Delta)|\bar{\Psi}_\textup{nor}(k_0,\Delta)}=1$.
\end{definition}
\begin{remark}
In general $\Lambda_{\sigma,n_0}\subseteq \Lambda_{ V_0,\sigma,n_0}$ with equality if $\Delta=0$ or $ V_0(x)=0$ for all $x\in\rr$ (compare Eqs. \eqref{eq:lambda set def} and \eqref{eq:lambda set with pot def}). 
\end{remark}

\begin{lemma}\label{lemm:1}\emph{(Infinitesimal evolution under the clock Hamiltonian).}
The action of the unitary operator $e^{-i\frac{T_0}{d} \delta \hat{H}_c}$ on an element of $\Lambda_{ V_0,\sigma,n_0}$ may be approximated by a translation by $\delta\geq 0$ on the continuous extension of the clock state. Precisely speaking,
\begin{align}
	e^{-i\frac{2\pi}{\omega d} \delta \hat{H}_c} \sum_{l\in\mathcal{S}_d(k_0)} \me^{-\mi \Theta(\Delta;l)}\psi(k_0;l) \ket{\theta_l} &= \sum_{l\in\mathcal{S}_d(k_0)} \me^{-\mi \Theta(\Delta;l-\delta)}\psi(k_0;l-\delta) \ket{\theta_l} + \ket{\epsilon},
\end{align}
where the $l_2$ norm of the error $\ket{\epsilon}$ is bounded by
\begin{align}
\ltwo{\ket{\epsilon}} &\leq  \delta \epsilon_T \sqrt{d}+C_1\delta^2\sqrt{d},\\
\epsilon_T &<
\begin{cases}
|\bar\epsilon_2|+2 A\left(\frac{2\pi}{1-\me^{-\pi}}+  \frac{\left(b+\frac{2\pi}{d}\right)}{1-\me^{-\pi d}} + \left( 2\pi+\pi d+ \frac{1}{d}\right)\right)\me^{-\frac{\pi}{4}d} &\mbox{if } \sigma=\sqrt{d}\\
	|\bar\epsilon_2|+2 A\left(\frac{2\pi}{1-\me^{-\frac{\pi d}{\sigma^2}}}+ \frac{ \left(b+\frac{2\pi}{d}\right)}{1-\me^{-\pi \frac{d^2}{\sigma^2}}} + \left( 2\pi\frac{d}{\sigma^2}+\pi\frac{d^2}{\sigma^2}+ \frac{1}{d}\right)\right)\me^{-\frac{\pi}{4}\frac{d^2}{\sigma^2}} &\mbox{otherwise} 
\end{cases}\\
&=\begin{cases}
\left(\bo(b)+\bo\left( \frac{d^{3/2}}{\bar\upsilon+1}\right)^{1/2}+\bo\left(d\right)\right) \exp\left(-\frac{\pi}{4}\frac{\alpha_0^2}{\left(1+\bar\upsilon\right)^2} d \right) &\mbox{if } \sigma=\sqrt{d}\\
\left(\bo(b)+\bo\left( \frac{\sigma^3}{\bar\upsilon \sigma^2/d+1}\right)^{1/2}+\bo\left(\frac{d^2}{\sigma^2}\right)\right) \exp\left(-\frac{\pi}{4}\frac{\alpha_0^2}{\left(\frac{d}{\sigma^2}+\bar\upsilon\right)^2} \left(\frac{d}{\sigma}\right)^2 \right)+\bo\left(\frac{d^2}{\sigma^2}\right)\me^{-\frac{\pi}{4}\frac{d^2}{\sigma^2}} &\mbox{otherwise,}
\end{cases}
\end{align}
with $|\bar\epsilon_2|$ given by Lemma \ref{lem:the crucial fourierbound with potential} and $C_1$ is $\delta$ independent.  $\Theta$, $\psi$ are given by Eqs. \eqref{eq: Theta def}, \eqref{analyticposition} respectively. 
\end{lemma}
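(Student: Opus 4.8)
\emph{Proof proposal.} The plan is to follow the same three-step strategy as in Lemma~\ref{infinitesimaltimetranslation}, carrying the potential-dependent phase $\me^{-\mi\Theta(\Delta;k)}$ through the argument and invoking Lemma~\ref{lem:the crucial fourierbound with potential} precisely where, in the potential-free case, one could evaluate a Gaussian Fourier integral in closed form. First I would expand $e^{-\mi\frac{T_0}{d}\delta\hat H_c}$ in the energy basis and write the exact time-basis coefficient $c_l(\delta)=\braket{\theta_l|e^{-\mi\frac{T_0}{d}\delta\hat H_c}|\bar\Psi(k_0,\Delta)}=\sum_{k\in\mathcal{S}_d(k_0)}\me^{-\mi\Theta(\Delta;k)}\psi(k_0;k)\,\frac1d\sum_{n=0}^{d-1}e^{-\mi2\pi n(k+\delta-l)/d}$ alongside the target coefficient $c'_l(\delta)=\me^{-\mi\Theta(\Delta;l-\delta)}\psi(k_0;l-\delta)$. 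Both are analytic in $\delta$, so I Taylor-expand $c_l(\delta)-c'_l(\delta)$ about $\delta=0$ to second order: the zeroth-order terms cancel by direct substitution, the second-order remainder is bounded by a $\delta$-independent constant $C_1$ (both second derivatives are finite sums / Gaussian integrals of quantities bounded uniformly in $\delta$, using $|\Theta^{(n)}(\Delta;x)|\le(2\pi/d)^n b^n$ from the periodicity of $V_0$ as in Lemma~\ref{lem:the crucial fourierbound with potential}), and all the real content is in the first-order term.

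For the first derivative, $\partial_\delta c_l(\delta)|_0=\frac{-\mi2\pi}{d^2}\sum_{k\in\mathcal{S}_d(k_0)}\me^{-\mi\Theta(\Delta;k)}\psi(k_0;k)\sum_{n=0}^{d-1}n\,e^{-\mi2\pi n(k-l)/d}$. I then (i) replace the finite sum over $\mathcal{S}_d(k_0)$ by the sum over $k\in\zz$, the error controlled by the Gaussian-tail Lemma~\ref{G0} --- crucially $|\me^{-\mi\Theta(\Delta;k)}|=1$ since $V_0$ is real, so these tail constants coincide with the potential-free ones; (ii) apply Poisson summation (Corollary~\ref{poissonsummation}) to the $\zz$-sum, turning it into $\sum_{m\in\zz}\tilde\psi(k_0,\Delta;n+md)$ with $\tilde\psi$ the potential-dressed transform of Definition~\ref{def:analyticmomentum control version}; (iii) re-index $\sum_{n=0}^{d-1}\sum_{m\in\zz}\mapsto\sum_{s\in\zz}$ and split $n=(n+md)-md$. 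The $md$ piece is then \emph{exactly} $\bar\epsilon_2$ of Lemma~\ref{lem:the crucial fourierbound with potential}, whose modulus is already bounded there (this is why $|\bar\epsilon_2|$ appears verbatim in $\epsilon_T$). For the remaining $(n+md)$ piece, $\frac{-\mi2\pi}{d\sqrt d}\sum_{s\in\zz}s\,\tilde\psi(k_0,\Delta;s)e^{\mi2\pi sl/d}$, I use the integration-by-parts identity $s\,\tilde\psi(k_0,\Delta;s)=-\tfrac{\mi d}{2\pi}\widehat{g'}(s)$ with $g(x):=\me^{-\mi\Theta(\Delta;x)}\psi(k_0;x)$, apply Poisson once more, and identify the $m=0$ term as $-g'(l)=\partial_\delta c'_l(\delta)|_0$ (using Remark~\ref{interchangearguments}), leaving the $m\neq0$ terms as a Gaussian-tail remainder of $g'$. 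This last remainder is where the $b$-dependence enters: $g'(x)=\bigl(-\mi\Theta^{(1)}(\Delta;x)\psi(k_0;x)+\psi'(k_0;x)\bigr)\me^{-\mi\Theta(\Delta;x)}$ with $|\Theta^{(1)}(\Delta;x)|=|V_d(x)-V_d(x-\Delta)|\le\tfrac{2\pi}{d}\cdot2\max|V_0|\le\tfrac{2\pi}{d}b$, producing the $\bigl(b+\tfrac{2\pi}{d}\bigr)$ factors in $\epsilon_T$.

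Collecting the three contributions to $\partial_\delta c_l|_0-\partial_\delta c'_l|_0$ --- the finite-to-infinite conversion, the term $\bar\epsilon_2$, and the final Poisson $m\neq0$ remainder carrying the $b$-dependence --- gives a per-coefficient bound $|c_l(\delta)-c'_l(\delta)|\le\delta\,\epsilon_T+C_1\delta^2$ with $\epsilon_T$ \emph{uniform in $l$} (the bound on $|\bar\epsilon_2|$ and all tail bounds are $l$-independent). Finally, since $\{\ket{\theta_l}\}_{l\in\mathcal{S}_d(k_0)}$ is orthonormal, $\ltwo{\ket{\epsilon}}^2=\sum_{l\in\mathcal{S}_d(k_0)}|c_l(\delta)-c'_l(\delta)|^2\le d\,(\delta\,\epsilon_T+C_1\delta^2)^2$, hence $\ltwo{\ket{\epsilon}}\le\delta\,\epsilon_T\sqrt d+C_1\delta^2\sqrt d$; substituting the two cases ($\sigma=\sqrt d$ versus generic $\sigma$) of the Gaussian-tail estimates and of Lemma~\ref{lem:the crucial fourierbound with potential} yields the stated $\epsilon_T$ and its asymptotic form. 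The main obstacle is not internal to this lemma: the genuinely hard estimate --- bounding the non-Gaussian, potential-dressed Fourier transform, i.e. $\bar\epsilon_2$ --- is already discharged in Lemma~\ref{lem:the crucial fourierbound with potential}. Within the present lemma the only care needed is the bookkeeping of the several error terms through the two Poisson summations and the correct tracking of the $b$-dependent pieces arising from differentiating the phase $\Theta$.
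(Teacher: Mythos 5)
Your proposal matches the paper's proof essentially step for step: the same Taylor expansion of the exact versus translated coefficients, the same finite-to-infinite sum conversion via Lemma \ref{G0}, the two Poisson summations with the splitting that isolates $\bar\epsilon_2$ for Lemma \ref{lem:the crucial fourierbound with potential}, the $b$-dependent tail from differentiating the phase, and the final $\sqrt{d}$ from summing the $l$-uniform coefficient bound in the orthonormal time basis. The approach is correct and is the paper's own.
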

\emph{Intuition}. This is simply the statement that for the class of \gClock~states that we have chosen, the effect of the clock Hamiltonian for an infinitesimal time is approximately the shift operator w.r.t. the angle space. The proof will follow along similar lines to that of Lemma \ref{infinitesimaltimetranslation} with the main difference being that we will now have to resort to Lemma \ref{lem:the crucial fourierbound with potential} in order to bound $\epsilon_2$ where as before bounding $\epsilon_2$ was straightforward and accomplished directly in the proof.

\begin{proof}
\begin{align}
	\me^{-\frac{\mi T_0}{ d} \delta \hat{H}_c} \ket{\bar{\Psi}(k_0,\Delta)} &= e^{-\frac{\mi 2\pi}{d} \delta \sum_{m=0}^{d-1} m\ket{E_m}\bra{E_m}} \sum_{k\in\mathcal{S}_d(k_0)} \!\!\! \me^{-\mi \Theta(\Delta;k)}\psi(k_0;k)  \ket{\theta_k}.
\end{align}

Switching the state to the basis of energy states, applying the Hamiltonian, and switching back via Eq. \eqref{finitetimestates_main},
\begin{align}
	\me^{-\frac{\mi T_0}{d} \delta \hat{H}_c} \ket{\bar{\Psi}(k_0,\Delta)} &=   \sum_{k,l\in\mathcal{S}_d(k_0)} \!\!\! \me^{-\mi \Theta(\Delta;k)}\psi(k_0;k)  \left( \frac{1}{d} \sum_{n=0}^{d-1} e^{-i2\pi n(k+\delta-l)/d} \right) \ket{\theta_l}
\end{align}

We label the above state as $\ket{\bar{\Psi}^{exact}_\delta}$. On the other hand we label as $\ket{\bar{\Psi}^{approx}_\delta}$ the following expression,
\begin{align}
	\ket{\bar{\Psi}^{approx}_\delta} &= \sum_{l\in\mathcal{S}_d(k_0)} \me^{-\mi \Theta(\Delta;l-\delta)}\psi(k_0;l-\delta) \ket{\theta_l},% = \sum_{l\in\mathcal{S}_d(k_0)} \left(\frac{2}{d}\right)^{\frac{1}{4}} e^{-\frac{\pi}{d}(l-\delta-k_0)^2} e^{i\pi(l-\delta-k_0)} e^{-i \frac{2\pi}{d} \int_{l-\delta-\Delta}^{l-\delta} {V}_d(x^\prime) dx^\prime} \ket{\theta_l},
\end{align}
which is simply a translation by $+\delta$ of the continuous extension of the clock state. Both the coefficients $\braket{\theta_l|\bar{\Psi}^{exact}_\delta}$ and $\braket{\theta_l|\bar{\Psi}^{approx}_\delta}$ are twice differentiable with respect to $\delta$. For $\braket{\theta_l|\bar{\Psi}^{exact}_\delta}$ this is clear. In the case of $\braket{\theta_l|\bar{\Psi}^{approx}_\delta}$ we note that it is a function of the derivative of 
\be 
\Theta(\Delta;l-\delta)=%\frac{2\pi}{d}
\int_{l-\delta-\Delta}^{l-\delta}  V_d(x')dx',
\ee
with respect to $\delta$, and thus due to the fundamental theorem of calculus and the fact the $ V_d$ is a smooth function (and periodic), it follows that $\braket{\theta_l|\bar{\Psi}^{approx}_\delta}$ is differentiable with respect to $\delta$.

By Taylor's remainder theorem, the difference can be expressed as
\begin{equation}\label{Taylorshift}
	\braket{\theta_l|\bar{\Psi}^{approx}_\delta} - \braket{\theta_l|\bar{\Psi}^{exact}_\delta} = \braket{\theta_l|\bar{\Psi}^{approx}_0} - \braket{\theta_l|\bar{\Psi}^{exact}_0} + \delta \frac{d \left( \braket{\theta_l|\bar{\Psi}^{approx}_\delta} - \braket{\theta_l|\bar{\Psi}^{exact}_\delta} \right)}{d\delta} \bigg|_{\delta=0} + R(\delta), 
\end{equation}
where
\begin{equation}\label{Taylorshift C bound}
	\left| R(\delta) \right| \leq \frac{\delta^2}{2} \left(  \text{max}_{|t|\leq|\delta|} \left|  \frac{d^2 \left( \braket{\theta_l|\bar{\Psi}^{approx}_t} - \braket{\theta_l|\bar{\Psi}^{exact}_t} \right)}{dt^2} \right| \right) \leq C_1 \delta^2,
\end{equation}
and where $C_1$ is independent of $\delta$  because the second derivatives of both $\braket{\theta_l|\bar{\Psi}^{exact}_\delta}$ and $\braket{\theta_l|\bar{\Psi}^{approx}_\delta}$ w.r.t. $\delta$ are bounded for $\delta\in \rr$. The zeroth-order term vanishes, i.e. $\braket{\theta_l|\bar{\Psi}^{approx}_0} - \braket{\theta_l|\bar{\Psi}^{exact}_0}$ since $\ket{\bar{\Psi}^{exact}_0} = \ket{\bar{\Psi}^{approx}_0}$. For the first order term,
\begin{align}
	\frac{d\braket{\theta_l|\bar{\Psi}^{approx}_\delta}}{d\delta}\bigg|_{\delta=0} &= \frac{d\;}{d\delta} \me^{-\mi \Theta(\Delta;l-\delta)}\psi(k_0;l-\delta)\bigg|_{\delta=0} \\
	\frac{d \braket{\theta_l|\bar{\Psi}^{exact}_\delta}}{d\delta} \bigg|_{\delta=0} &= \left[ \frac{d}{d\delta} \sum_{k\in\mathcal{S}_d(k_0)}\me^{-\mi \Theta(\Delta;k)}\psi(k_0;k) \left( \frac{1}{d} \sum_{n=0}^{d-1} e^{-i2\pi n(k+\delta-l)/d} \right) \right]_{\delta=0} \\
	&= \left( \frac{-i 2\pi}{d} \right) \sum_{k\in\mathcal{S}_d(k_0)}\me^{-\mi \Theta(\Delta;k)}\psi(k_0;k) \left( \frac{1}{d} \sum_{n=0}^{d-1} n e^{-i2\pi n(k-l)/d} \right)
\end{align}

One can replace the finite sum over $k$ as an infinite sum, and bound the difference using Lemma \ref{G0},%Bound E1,eeee
\begin{align}
	\frac{d \braket{\theta_l|\bar{\Psi}^{exact}_\delta}}{d\delta} \bigg|_{\delta=0} &= \left( \frac{-i 2\pi}{d} \right) \sum_{k=-\infty}^\infty \me^{-\mi \Theta(\Delta;k)}\psi(k_0;k) \left( \frac{1}{d} \sum_{n=0}^{d-1} n e^{-i2\pi n(k-l)/d} \right) + \epsilon_1,\label{eq:div exact before poisson}
\end{align}
where,
\begin{align}
\left| \epsilon_1 \right| &\leq \frac{2\pi}{d} \left( \sum_{k\in \zz - \mathcal{S}_d(k_0)} \left|\psi(k_0;k)\right| \right) \frac{1}{d} \left( \sum_{n=0}^{d-1} n \right) \\
	&< 4\pi A\left(\frac{\me^{-\frac{\pi d^2}{4\sigma^2}}}{1-\me^{-\frac{\pi d}{\sigma^2}}}\right) & \text{by Lemma \ref{G0}}.
\end{align}

Applying the Poisson summation formula (Corollary \ref{poissonsummation}) to the sum in Eq. \eqref{eq:div exact before poisson} we achieve
\be 
\sum_{k=-\infty}^\infty \me^{-\mi \Theta(\Delta;k)}\psi(k_0;k) e^{-i2\pi nk/d}=\sum_{s=-\infty}^\infty \tilde{\psi}(k_0,\Delta;n+sd),
\ee
where $\tilde{\psi}$ is given by Def. \ref{def:analyticmomentum control version}. Thus we have
\begin{align}
	\frac{d \braket{\theta_l|\bar{\Psi}^{exact}_\delta}}{d\delta} \bigg|_{\delta=0} &= \left( \frac{-i 2\pi}{d} \right) \frac{1}{\sqrt{d}} \sum_{n=0}^{d-1} \sum_{s=-\infty}^{\infty} \tilde{\psi}(k_0,\Delta;n+sd) \; n e^{i2\pi nl/d} + \epsilon_1.
\end{align}

Since $\sum_{n=0}^{d-1} \sum_{s=-\infty}^{\infty} f(n+sd) = \sum_{n=-\infty}^\infty f(n)$, one can manipulate the expression accordingly,
\begin{align}
	\frac{d \braket{\theta_l|\bar{\Psi}^{exact}_\delta}}{d\delta} \bigg|_{\delta=0} &= \left( \frac{-i 2\pi}{d} \right) \frac{1}{\sqrt{d}} \left( \sum_{n=-\infty}^\infty \tilde{\psi}(k_0,\Delta;n) \; n e^{i2\pi nl/d} - \sum_{n=0}^{d-1} \sum_{s=-\infty}^{\infty} \tilde{\psi}(k_0,\Delta;n+sd) \;sd  e^{i2\pi nl/d} \right) + \epsilon_1
\end{align}

The second summation is a small contribution and has been bound in Lemma \ref{lem:the crucial fourierbound with potential},
\begin{align}
	\frac{d \braket{\theta_l|\bar{\Psi}^{exact}_\delta}}{d\delta} \bigg|_{\delta=0} &= \left( \frac{-i 2\pi}{d} \right) \frac{1}{\sqrt{d}} \sum_{n=-\infty}^\infty \tilde{\psi}(k_0,\Delta;n) \; n e^{i2\pi nl/d} + \epsilon_2 + \epsilon_1,
\end{align}
where
\begin{align}
\epsilon_2 &=\frac{\mi 2\pi}{\sqrt{d}}\sum_{n=0}^{d-1}\sum_{k=-\infty}^\infty k\, \me^{\mi 2\pi n l/d}\, \tilde \psi(k_0,\Delta;n+k d).
\end{align}
On the remaining sum, apply Eq. \eqref{eq: inv Fourier = dev} to write $\tilde \psi(k_0,\Delta;n) n$ in terms of the derivative of the Fourier transform $\tilde \psi(k_0,\Delta;n)$, followed by applying the Poisson summation formula,
\begin{align}
	\frac{d \braket{\theta_l|\bar{\Psi}^{exact}_\delta}}{d\delta} \bigg|_{\delta=0} &= \sum_{m=-\infty}^\infty \frac{d\;}{d\delta}\left(\me^{-\mi\Theta(\Delta;l-\delta+md)}\psi(k_0;l-\delta+md)\right) \bigg|_{\delta=0} + \epsilon_1 + \epsilon_2.
\end{align}

Replacing the sum by the $m=0$ term, and bounding the difference,% using Bounds E1-E4 eeee,
\begin{align}
	\frac{d \braket{\theta_l|\bar{\Psi}^{exact}_\delta}}{d\delta} &= \frac{d\;}{d\delta} \left(\me^{-\mi\Theta(\Delta;l-\delta)}\psi(k_0;l-\delta)\right)\bigg|_{\delta=0} + \epsilon_1 + \epsilon_2 + \epsilon_3, \label{deltacoeff}
\end{align}
where
\begin{align} \left| \epsilon_3 \right| &\leq  \sum_{m\in\zz-\{0\}} \abs{\frac{d\;}{d\delta} \left(\me^{-\mi\Theta(\Delta;l-\delta+md)}\psi(k_0;l-\delta+md)\right)}_{\delta=0} \\
	&\leq  \sum_{m\in\zz-\{0\}} \left( \abs{  V_d(l+md)- V_d(l+md-\Delta)   }\abs{\psi(k_0;l+md)}+\abs{\frac{d\,}{d\delta} \psi(k_0;l-\delta+md) }_{\delta=0} \right)\\
	&\leq  \sum_{m\in\zz-\{0\}} \left( b\abs{ \psi(k_0;l+md)}+\abs{ \frac{d\,}{d\delta}\psi(k_0;l-\delta+md)}_{\delta=0} \right)\\
&<
\begin{cases}
2 A\left(  \frac{\left(b+\frac{2\pi}{d}\right)}{1-\me^{-\pi d}} + \left( 2\pi+\pi d+ \frac{1}{d}\right)\right)\me^{-\frac{\pi}{4}d} &\mbox{if } \sigma=\sqrt{d}\\
	2 A\left( \frac{ \left(b+\frac{2\pi}{d}\right)}{1-\me^{-\pi \frac{d^2}{\sigma^2}}} + \left( 2\pi\frac{d}{\sigma^2}+\pi\frac{d^2}{\sigma^2}+ \frac{1}{d}\right)\right)\me^{-\frac{\pi}{4}\frac{d^2}{\sigma^2}} &\mbox{otherwise,} 
\end{cases}
\end{align}
for all $l\in\mathcal{S}_d(k_0)$ and where $b$ is defined in \eqref{eq:b def eq}. To achieve the last two lines, we have used Lemmas \ref{G0}, \ref{G1}.
Thus the first order term in Eq. (\ref{Taylorshift}) is composed of the terms $\epsilon_1,\epsilon_2,\epsilon_3$, and are bounded by
\begin{align}\label{eq:total epsilon}
	\frac{d \braket{\theta_l|\bar{\Psi}^{exact}_\delta}}{d\delta} - \frac{d \braket{\theta_l|\bar{\Psi}^{approx}_\delta}}{d\delta} &= \epsilon_1 + \epsilon_2 + \epsilon_3 =: \epsilon_T.
\end{align}
Note that $\epsilon_T$ is independent of the index $l$.

If we now define the error in the state as $\ket{\epsilon} =  \ket{\bar{\Psi}^{approx}_\delta} - \ket{\bar{\Psi}^{exact}_\delta}$, then from the properties of the norm, and Eqs. \eqref{Taylorshift}, \eqref{Taylorshift C bound}, \eqref{eq:total epsilon},
\begin{align}
	\ltwo{\ket{\epsilon}}^2 &\leq  \sum_{l\in\mathcal{S}_d(k_0)} \left( \bra{\bar{\Psi}_\delta^\text{approx}}- \bra{\bar{\Psi}_\delta^\text{exact}}\right)\ketbra{\theta_l}{\theta_l}\left( \ket{\bar{\Psi}_\delta^\text{approx}}- \ket{\bar{\Psi}_\delta^\text{exact}}\right)\\
	&\leq \sum_{l\in\mathcal{S}_d(k_0)}  \left| \braket{\theta_l|\bar{\Psi}_\delta^\text{approx}}- \braket{\theta_l|\bar{\Psi}_\delta^\text{exact}}\right|^2\\
	&\leq \left| \delta \epsilon_T+C_1\delta^2\right|^2 d.
\end{align}
%and thus
%\begin{align}
%	\ltwo{\ket{\epsilon}} \leq  \delta \epsilon_T \sqrt{d}+C\delta^2\sqrt{d}.
%	\end{align}
\end{proof}

\begin{lemma}\label{lem:2 infinitesimal pot} \emph{(Infinitesimal evolution under the interaction potential).} The action of the operator $e^{-i\frac{T_0}{d} \delta \hat{V}_d}$, $\delta\geq 0$, on any state $\ket{\Phi} \in \mathcal{H}_c$ may be approximated by the following transformation,
\begin{align}\label{eq:lemma eq for pot infini}
	\me^{-\mi\frac{T_0}{ d} \delta \hat{V}_d} \ket{\Phi} &= \sum_{l\in\mathcal{S}_d(k_0)} \braket{\theta_l|\Phi} \me^{-\mi \Theta(\delta;l)} \ket{\theta_l} + \ket{\epsilon},
\end{align}
where the $l_2$ norm of the error $\ket{\epsilon}$ is bounded by
\begin{align}
\ltwo{\ket{\epsilon}} \leq C_2 \delta^2,
\end{align}
with $C_2$ being $\delta$ independent. $\hat V_d$ is defined in Def. \ref{def:interaction pot def}.
\end{lemma}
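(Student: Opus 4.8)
The plan is to exploit the fact that $\hat{V}_d$ acts diagonally in the time basis, so that the exponential factorizes immediately and the entire content of the lemma reduces to a first-order Taylor estimate of a smooth phase. First I would observe that since $\hat{V}_d=\frac{d}{T_0}\sum_k  V_d(k)\ketbra{\theta_k}{\theta_k}$ and $ V_d$ has period $d$ (because $ V_0$ has period $2\pi$), the operator $\hat{V}_d$, and hence $\me^{-\mi\frac{T_0}{d}\delta\hat{V}_d}$, is independent of which window of $d$ consecutive integers labels the time states; we may therefore expand the arbitrary state as $\ket{\Phi}=\sum_{l\in\mathcal{S}_d(k_0)}\braket{\theta_l|\Phi}\ket{\theta_l}$ in the orthonormal basis $\{\ket{\theta_l}:l\in\mathcal{S}_d(k_0)\}$, and use $\me^{-\mi\frac{T_0}{d}\delta\hat{V}_d}\ket{\theta_l}=\me^{-\mi\delta  V_d(l)}\ket{\theta_l}$. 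Subtracting the claimed approximation in Eq. \eqref{eq:lemma eq for pot infini}, the error vector is $\ket{\epsilon}=\sum_{l\in\mathcal{S}_d(k_0)}\braket{\theta_l|\Phi}\bigl(\me^{-\mi\delta  V_d(l)}-\me^{-\mi\Theta(\delta;l)}\bigr)\ket{\theta_l}$.

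Next I would bound each scalar coefficient. Using $|\me^{\mi a}-\me^{\mi b}|\leq|a-b|$, it suffices to control $|\delta  V_d(l)-\Theta(\delta;l)|$. Writing $\Theta(\delta;l)=\int_{l-\delta}^l  V_d(y)\,dy=\int_0^\delta  V_d(l-s)\,ds$ from Eq. \eqref{eq: Theta def}, one gets $\delta  V_d(l)-\Theta(\delta;l)=\int_0^\delta\bigl( V_d(l)- V_d(l-s)\bigr)\,ds$, hence by the mean value theorem $|\delta  V_d(l)-\Theta(\delta;l)|\leq\tfrac{\delta^2}{2}\sup_{x\in\rr}| V_d'(x)|$. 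Since $ V_d(x)=\frac{2\pi}{d} V_0(2\pi x/d)$ with $ V_0$ infinitely differentiable and $2\pi$-periodic, $\sup_{x\in\rr}| V_d'(x)|=\frac{4\pi^2}{d^2}\sup_{x\in[0,2\pi]}| V_0'(x)|<\infty$, which is manifestly independent of $\delta$.

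Finally, by orthonormality of the $\ket{\theta_l}$ and the previous estimate, $\ltwo{\ket{\epsilon}}^2=\sum_{l\in\mathcal{S}_d(k_0)}|\braket{\theta_l|\Phi}|^2\,\bigl|\me^{-\mi\delta  V_d(l)}-\me^{-\mi\Theta(\delta;l)}\bigr|^2\leq\bigl(\tfrac{\delta^2}{2}\sup_{x\in\rr}| V_d'(x)|\bigr)^2\ltwo{\ket{\Phi}}^2$, so that $\ltwo{\ket{\epsilon}}\leq C_2\delta^2$ with $C_2=\tfrac12\sup_{x\in\rr}| V_d'(x)|\cdot\ltwo{\ket{\Phi}}=\frac{2\pi^2}{d^2}\sup_{x\in[0,2\pi]}| V_0'(x)|$ for a normalized $\ket{\Phi}$, a $\delta$-independent constant as required.

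Unlike the preceding lemmas there is essentially no analytic obstacle here: the potential acts diagonally, so no Poisson summation, Fourier-transform bound, or Lie-product splitting is needed. The only point requiring care is checking that $C_2$ genuinely does not depend on $\delta$; this is exactly where the smoothness and periodicity of $ V_0$ enter, guaranteeing a finite uniform bound on $ V_d'$ over the whole real line.
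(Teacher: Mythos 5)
Your proof is correct and follows essentially the same route as the paper's: both exploit the diagonality of $\hat V_d$ in the time basis and then show the difference between the phase $\delta V_d(l)$ and the integral $\Theta(\delta;l)$ is second order in $\delta$. Your version is slightly sharper in that the bound $|\me^{-\mi a}-\me^{-\mi b}|\leq |a-b|$ combined with the mean value theorem yields an explicit, $l$-independent constant $C_2=\frac{2\pi^2}{d^2}\sup_{x\in[0,2\pi]}|V_0'(x)|$, whereas the paper invokes Taylor's remainder theorem and leaves $C_2$ implicit.
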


\begin{proof} Since the operator $\hat{V}_d$ is diagonal in the $\{\ket{\theta_k}\}$ basis,
\begin{align}\label{eq:intermidiate theta basis pot infini}
	\me^{-\frac{\mi T_0}{d} \delta \hat{V}_d} \ket{\Phi} = \me^{-\mi \delta \sum_k {V}_d(k) \ket{\theta_k}\bra{\theta_k}} \sum_{l\in\mathcal{S}_d(k_0)} \ket{\theta_l}\braket{\theta_l|\Phi} = \sum_{l\in\mathcal{S}_d(k_0)} \ket{\theta_l}\braket{\theta_l|\Phi} e^{-\mi\delta {V}_d(l)}.
\end{align}

We wish to replace $e^{- \mi \delta {V}_d(l)}$ by $e^{-\mi  \int_{l-\delta}^l V_d(x^\prime) dx^\prime}$. The difference may be bound using Taylor's theorem (to second order),
\begin{align}\label{eq: taylors pot infini}
	\me^{- \mi \delta {V}_d(l)} - \me^{-\mi  \int_{l-\delta}^l V_d(x^\prime) dx^\prime} = A + B\delta + R_l(\delta),
\end{align}
where
\begin{align}\label{Taylorshift C bound pot lemma}
	\left| R_l(\delta) \right|& \leq \frac{\delta^2}{2} \left(  \text{max}_{|t|\leq|\delta|} \left|  \frac{d^2 \left(\me^{- \mi t {V}_d(l)} - \me^{-\mi  \int_{l-t}^l V_d(x^\prime) dx^\prime} \right)}{dt^2} \right| \right)\\
	&\leq \frac{\delta^2}{2} \left(  \text{max}_{t\in\rr} \left|  \frac{d^2 \left(\me^{- \mi t {V}_d(l)} - \me^{-\mi  \int_{l-t}^l V_d(x^\prime) dx^\prime} \right)}{dt^2} \right| \right)\\
	&\leq C_2 \delta^2, \quad \forall\, l\in\mathcal{S}_d(k_0)
\end{align}
and $C_2$ is $\delta$ and $l$ independent. Such a $C_2$ exists, since $ V_d$ is periodic and smooth. Furthermore, by direct calculations $A=B=0$. Thus using the properties of the $l_2$ norm and Eqs. \eqref{eq:lemma eq for pot infini}, \eqref{eq:intermidiate theta basis pot infini}, \eqref{eq: taylors pot infini}, we find
\begin{align}
	\ltwo{\ket{\epsilon}}^2 &\leq \left(   \sum_{l\in\mathcal{S}_d(k_0)} |\! \braket{\theta_l|{\Phi}}\!|^* R_l(\delta)^*    \bra{\theta_l}\right)\left(   \sum_{k\in\mathcal{S}_d(k_0)} |\!\braket{\theta_k|\Phi}\!| R_k(\delta)   \ket{\theta_k}\right)\\
	&\leq \sum_{l\in\mathcal{S}_d(k_0)}|\!\braket{\theta_l|\Phi}\!| |R_l(\delta)|^2\\
	&\leq C_2^2 \delta^4.
\end{align}
\end{proof}

\begin{lemma}\label{Moving the clock through finite time, within unit angle} \emph{(Moving the clock through finite time, within unit angle).} Let $k_0 \in \rr$, and let $a \geq 0$ be s.t. $\mathcal{S}_d(k_0)=\mathcal{S}_d(k_0+a)$, where $\mathcal{S}_d$ is defined by Eq. \eqref{eq: mathcal S def}. Then the effect of the joint Hamiltonian $\hat{H}_c + \hat{V}_d$ for the time $\frac{T_0}{d}a$ on $\ket{\bar{\Psi}(k_0;\Delta)} \in \Lambda_{ V_0,\sigma,n_0}$ is approximated by
\begin{align}
	\me^{-\mi \frac{T_0}{d}a (\hat{H}_c + \hat{V}_d)} \ket{\bar{\Psi}(k_0,\Delta)} &= \ket{\bar{\Psi}(k_0+a,\Delta+a)} + \ket{\epsilon},
\end{align}
where the $l_2$ norm of the error $\ket{\epsilon}$ is bounded by
\begin{align}
\ltwo{\ket{\epsilon}} \leq a\, \epsilon_T \sqrt{d},
\end{align}
and $\epsilon_T$ is given by Lemma  \ref{lemm:1}.
\end{lemma}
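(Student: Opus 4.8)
The plan is to reduce the statement to the two infinitesimal-evolution lemmas \ref{lemm:1} and \ref{lem:2 infinitesimal pot} by Trotterising the joint evolution, exactly in the spirit of the proof of Lemma \ref{smalltimetranslation}. Writing $t=\frac{T_0}{d}a$, the Lie product formula gives
\[
\me^{-\mi t(\hat H_c+\hat V_d)}=\lim_{M\to\infty}\left(\me^{-\mi\frac{t}{M}\hat V_d}\,\me^{-\mi\frac{t}{M}\hat H_c}\right)^{M},
\]
and for finite $M$ the difference between the two sides has operator norm $\bo(1/M)$ on the (bounded) clock Hilbert space. The core observation is that a single factor $\me^{-\mi\frac{t}{M}\hat V_d}\me^{-\mi\frac{t}{M}\hat H_c}$ maps $\ket{\bar\Psi(k_0+\tfrac{n}{M}a,\Delta+\tfrac{n}{M}a)}$ to $\ket{\bar\Psi(k_0+\tfrac{n+1}{M}a,\Delta+\tfrac{n+1}{M}a)}$ up to a controlled error: applying Lemma \ref{lemm:1} with $\delta=a/M$ to $\me^{-\mi\frac{t}{M}\hat H_c}$ produces $\sum_{l}\me^{-\mi\Theta(\Delta;l-\delta)}\psi(k_0;l-\delta)\ket{\theta_l}$, and then Lemma \ref{lem:2 infinitesimal pot} multiplies the coefficient of $\ket{\theta_l}$ by $\me^{-\mi\Theta(\delta;l)}$. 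By the additivity of the integral in \eqref{eq: Theta def}, $\Theta(\delta;l)+\Theta(\Delta;l-\delta)=\int_{l-\delta-\Delta}^{l}V_d(x)\,dx=\Theta(\Delta+\delta;l)$, while Remark \ref{interchangearguments} gives $\psi(k_0;l-\delta)=\psi(k_0+\delta;l)$, so the surviving term is exactly $\ket{\bar\Psi(k_0+\delta,\Delta+\delta)}$ in the sense of Definition \ref{def:gauss clock steate with pot defintion}.

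Here the hypothesis $\mathcal{S}_d(k_0)=\mathcal{S}_d(k_0+a)$ enters. Because $\mathcal{S}_d(\cdot)$ is the step function recalled in Remark \ref{clockcentered}, which only ever shifts \emph{upward} as its argument increases, equality of the integer sets at the two endpoints forces $\mathcal{S}_d(k_0+x)=\mathcal{S}_d(k_0)$ for all $x\in[0,a]$. Hence at no intermediate step does the summation range need to change, and every intermediate state $\ket{\bar\Psi(k_0+\tfrac{n}{M}a,\Delta+\tfrac{n}{M}a)}$ is genuinely of the form \eqref{def:gauss clock steate with pot} over the fixed set $\mathcal{S}_d(k_0)$, which is precisely what allows the two lemmas to be chained. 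Adding the two infinitesimal errors by the triangle inequality, each single factor incurs an error of $l_2$-norm at most $\frac{a}{M}\epsilon_T\sqrt{d}+(C_1\sqrt{d}+C_2)\frac{a^2}{M^2}$, with $\epsilon_T$ as in Lemma \ref{lemm:1} and $C_1,C_2$ the $\delta$-independent constants of Lemmas \ref{lemm:1}, \ref{lem:2 infinitesimal pot}.

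Finally I would invoke Lemma \ref{unitaryerroraddition} to concatenate the $M$ factors — unitarity lets the accumulated error propagate without amplification — obtaining
\[
\ltwo{\,\me^{-\mi t(\hat H_c+\hat V_d)}\ket{\bar\Psi(k_0,\Delta)}-\ket{\bar\Psi(k_0+a,\Delta+a)}\,}\;\le\;a\,\epsilon_T\sqrt{d}+(C_1\sqrt{d}+C_2)\frac{a^2}{M}+\bo\!\left(\tfrac{1}{M}\right),
\]
valid for every positive integer $M$; letting $M\to\infty$ removes the last two terms and yields the claimed bound $\ltwo{\ket{\epsilon}}\le a\,\epsilon_T\sqrt{d}$. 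The only genuinely delicate points are bookkeeping: checking that the two phase contributions telescope cleanly to $\Theta(\Delta+\delta;l)$ rather than leaving a residue, and that the range $\mathcal{S}_d$ is frozen throughout the Trotterisation — since all the hard, non-perturbative analysis is already packaged inside $\epsilon_T$ via Lemma \ref{lem:the crucial fourierbound with potential}. I do not anticipate any obstacle beyond this assembly step.
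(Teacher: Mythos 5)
Your proposal is correct and follows essentially the same route as the paper's proof: Trotterise the joint evolution via the Lie product formula, apply Lemmas \ref{lemm:1} and \ref{lem:2 infinitesimal pot} to each infinitesimal factor, concatenate the per-step errors with Lemma \ref{unitaryerroraddition}, and let the number of Trotter steps tend to infinity so that the $\bo(1/M)$ and $\bo(a^2/M)$ remainders vanish. Your explicit verification that the phases telescope, $\Theta(\delta;l)+\Theta(\Delta;l-\delta)=\Theta(\Delta+\delta;l)$, and that the hypothesis $\mathcal{S}_d(k_0)=\mathcal{S}_d(k_0+a)$ freezes the summation range at every intermediate step, are details the paper leaves implicit but which you have checked correctly.
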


\emph{Intuition.} Here is the core of the proof. By applying the Lie product formula, we show that the effect of the combined clock and interaction Hamiltonians is to simply shift the continuous extension of the clock state, while simultaneously adding a phase function that is the integral of the potential that the clock passes through. The discrete clock is thus seen to mimic the idealised momentum clock $\hat H=\hat p$. Here it is assumed the mean of the state does not pass through an integer (the full result follows later).

\begin{proof}
Consider the sequential application of $\me^{-\mi \frac{T_0}{d} \delta \hat{H}_c}$ followed by $\me^{-\mi \frac{T_0}{d} \delta \hat{V}_d}$ on $\ket{{\bar{\Psi}}(k_0,\Delta)}$, $\delta>0$. From the previous two Lemmas \ref{lemm:1}, \ref{lem:2 infinitesimal pot},
\begin{align}
	\me^{-\mi \frac{T_0}{d} \delta \hat{V}_d} \me^{-\mi \frac{T_0}{d} \delta \hat{H}_c} \ket{\bar{\Psi}(k_0,\Delta)} = \ket{\bar{\Psi}(k_0+\delta,\Delta+\delta)} + \ket{\epsilon_\delta}, 
\end{align}
where
\begin{align}
\ltwo{\ket{\epsilon}} \leq \delta \epsilon_T \sqrt{d}+\delta^2\left(C_1\sqrt{d} + C_2 \right).
\end{align}
where we combined the errors using the lemma on compiling norm non-increasing errors, Lemma \ref{unitaryerroraddition}.

Consider the above transformation repeated $m$ times on the state. Combining the errors using Lemma \ref{unitaryerroraddition} as before,
\begin{align}
	\left( \me^{-\mi \frac{T_0}{d} \delta \hat{V}_d} \me^{-\mi \frac{T_0}{d} \delta \hat{H}_c} \right)^m \ket{\bar{\Psi}(k_0,\Delta)} &=\ket{ \bar{\Psi} \left(k_0+ m\delta, \Delta+m\delta \right)} + \ket{\epsilon^{(m)}},\end{align}
where
\begin{align}
\ltwo{\ket{\epsilon^{(m)}}} \leq m\delta \epsilon_T \sqrt{d}+m\delta^2\left(C_1\sqrt{d} + C_2 \right).
\end{align}
This holds as long as the center of the state $k_0+m\delta$ has not crossed an integer value, i.e. if $d$ is even, $\lfloor k_0\rfloor= \lfloor k_0+m\delta\rfloor$, or if $d$ is odd $\lfloor k_0+1/2\rfloor= \lfloor k_0+1/2+m\delta\rfloor$; else $\mathcal{S}_d(k_0)\neq\mathcal{S}_d(k_0+m\delta)$ and the previous Lemmas \ref{lemm:1}, \ref{lem:2 infinitesimal pot} in this section will not hold.

To arrive at the lemma, set $\delta = a/m$, so that
\begin{align}\label{eq:to power m with pot}
	\left( e^{-i \frac{T_0}{d} \frac{a}{m} \hat{V}_d} e^{-i \frac{T_0}{d} \frac{a}{m} \hat{H}_c} \right)^m \ket{\bar{\Psi}(k_0,\Delta)} &= \ket{\bar{\Psi} \left(k_0+ a, \Delta+a \right)} + \ket{\epsilon^{(m)}},
\end{align}
where
\begin{align}
\ltwo{\ket{\epsilon^{(m)}}} \leq a \epsilon_T \sqrt{d}+m\left(\frac{a}{m}\right)^2\left(C_1\sqrt{d} + C_2 \right).
\end{align}

The Lie-Product formlula \cite{Lie}, states that for all $n \times n$ complex matrices $A$ and $B$,
\begin{equation}\label{eq:lie}
	\me^{A+B} = \lim_{N \rightarrow \infty} \left( \me^{\frac{A}{N}} \me^{\frac{B}{N}} \right)^N.
\end{equation}
We now use Eq. \eqref{eq:lie} to finalise the lemma. Consider the limit $m\rightarrow\infty$. On the l.h.s. of Eq. \eqref{eq:to power m with pot}, by the Lie product formula, we have that
\begin{equation}
	\lim_{m\rightarrow\infty} \left( e^{-i \frac{T_0}{d} \frac{a}{m} \hat{V}_d} e^{-i \frac{T_0}{d} \frac{a}{m} \hat{H}_c} \right)^m \ket{\bar{\Psi}(k_0,\Delta)}= e^{-i \frac{T_0}{d}a (\hat H_c+\hat V_d)}\ket{\bar{\Psi}(k_0,\Delta)}.
\end{equation}
On the r.h.s. of Eq. \eqref{eq:to power m with pot}, we have
\be 
\lim_{m\rightarrow\infty} \left(  \ket{\bar{\Psi} \left(k_0+ a, \Delta+a \right)} + \ket{\epsilon^{(m)}} \right)= \ket{\bar{\Psi} \left(k_0+ a, \Delta+a \right)} + \ket{\epsilon^{(\infty)}},
\ee
where $\ket{\epsilon^{(\infty)}}:=\lim_{m\rightarrow\infty} \ket{\epsilon^{(m)}}$, and
\be 
\ltwo{\ket{\epsilon^{(\infty)}}} \leq a \epsilon_T \sqrt{d}.
\ee
\end{proof}

At this point, we have already proven the continuity of the clock state for time translations that are finite, but small (i.e. small enough that the range $\mathcal{S}_d(k_0)$ remains the same). In order to generalize the statement to arbitrary translations we need to be able to shift the range itself, which is the goal of the following Lemma.

\begin{lemma}[Shifting the range of the clock state with potential]\label{lem:Moving the clock through one angle step}
If $d$ is even, and the mean of the clock state $k_0$ is an integer, or alternatively, if $d$ is $odd$ and $k_0$ is a half integer, then
\begin{align}
	  \sum_{k \in \mathcal{S}_d(k_0)} \me^{-\mi\Theta(\Delta;k)}\psi(k_0;k) \ket{\theta_k} &=\sum_{k \in \mathcal{S}_d(k_0-1)} \me^{-\mi\Theta(\Delta;k)}\psi(k_0;k) \ket{\theta_k} +\ket{\bar\epsilon_{step}}, \\
	\text{where} \quad \|\ket{\bar\epsilon_{step}}\|_2:=\bar\epsilon_{step} &<
	\begin{cases}
	  2 A e^{-\frac{ \pi d}{4}} &\mbox{if } \sigma=\sqrt{d}\\
	    2 A e^{-\frac{ \pi d^2}{4\sigma^2}} &\mbox{otherwise.}
	  \end{cases}
\end{align}
\end{lemma}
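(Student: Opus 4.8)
The plan is to mirror the proof of Lemma \ref{rangeshift} (the potential-free version of this statement) essentially verbatim, the only new ingredient being the extra phase factor $\me^{-\mi\Theta(\Delta;k)}$ which must be handled on the two boundary terms. I would prove the case of even $d$ with integer $k_0$; the odd-$d$, half-integer case is identical up to relabelling.

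First I would note that $\mathcal{S}_d(k_0)$ and $\mathcal{S}_d(k_0-1)$ are each sets of $d$ consecutive integers differing by one shift, so they overlap in $d-1$ integers; every term indexed by these common integers appears identically on both sides and cancels in the difference. By the explicit description in Eq. \eqref{eq:S set def} and Eq. \eqref{eq: mathcal S def}, the one integer in $\mathcal{S}_d(k_0)$ but not in $\mathcal{S}_d(k_0-1)$ is $k_0+d/2$, and the one integer in $\mathcal{S}_d(k_0-1)$ but not in $\mathcal{S}_d(k_0)$ is $k_0-d/2$. Hence
\begin{align}
\ket{\bar\epsilon_{step}} = \me^{-\mi\Theta(\Delta;k_0+d/2)}\psi(k_0;k_0+d/2)\ket{\theta_{k_0+d/2}} - \me^{-\mi\Theta(\Delta;k_0-d/2)}\psi(k_0;k_0-d/2)\ket{\theta_{k_0-d/2}}.
\end{align}

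Next I would invoke the periodicity of the time-states, $\ket{\theta_{k_0+d/2}} = \ket{\theta_{k_0-d/2}}$ (from $\ket{\theta_k}=\ket{\theta_{k+d}}$, Eq. \eqref{finitetimestates_main}), so that the two boundary terms live on the same one-dimensional subspace and
\begin{align}
\|\ket{\bar\epsilon_{step}}\|_2 = \left| \me^{-\mi\Theta(\Delta;k_0+d/2)}\psi(k_0;k_0+d/2) - \me^{-\mi\Theta(\Delta;k_0-d/2)}\psi(k_0;k_0-d/2) \right|.
\end{align}
Using the triangle inequality and $|\me^{-\mi\Theta(\Delta;\cdot)}|=1$ (since $V_0$ is real, $\Theta$ is real, so the phase has unit modulus), this is bounded by $|\psi(k_0;k_0+d/2)| + |\psi(k_0;k_0-d/2)|$. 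Substituting the explicit Gaussian form Eq. \eqref{analyticposition}, $|\psi(k_0;k_0\pm d/2)| = A\,\me^{-\frac{\pi}{\sigma^2}(d/2)^2} = A\,\me^{-\frac{\pi d^2}{4\sigma^2}}$, so the sum is $2A\,\me^{-\frac{\pi d^2}{4\sigma^2}}$, which collapses to $2A\,\me^{-\frac{\pi d}{4}}$ in the symmetric case $\sigma=\sqrt d$. This yields the stated bound.

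The only subtlety — and it is minor — is bookkeeping: one must check that in the difference of the two sums it really is only these two endpoint terms that survive, i.e. that the phase factors $\me^{-\mi\Theta(\Delta;k)}$ on the overlapping indices are literally the same function of $k$ on both sides (they are, since $\Theta(\Delta;\cdot)$ and $\psi(k_0;\cdot)$ do not depend on which index set we are summing over). There is no genuine obstacle here; the lemma is a direct adaptation of Lemma \ref{rangeshift}, with the extra observation that a real potential contributes only a unimodular phase that drops out under the modulus. If one instead allowed the complex-potential generalization mentioned in the remark after Def. \ref{def:interaction pot def}, the phase would have modulus $\le 1$ and the same bound would still hold a fortiori.
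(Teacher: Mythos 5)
Your proposal is correct and follows essentially the same route as the paper: cancel the $d-1$ overlapping terms, identify the two boundary indices $k_0\pm d/2$, use $\ket{\theta_{k_0-d/2}}=\ket{\theta_{k_0+d/2}}$, and bound the surviving difference by the triangle inequality together with the unimodularity of $\me^{-\mi\Theta(\Delta;\cdot)}$ and the explicit Gaussian value $A\,\me^{-\pi d^2/(4\sigma^2)}$. The paper's proof is a near-verbatim adaptation of Lemma \ref{rangeshift}, exactly as you describe.
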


\begin{proof}
We prove the statement for even $d$, the proof for odd $d$ is analogous.

By definition \eqref{gaussianclock}, $\mathcal{S}_d(k_0)$ is a set of $d$ consecutive integers. Thus the only difference between $\mathcal{S}_d(k_0)$ and $\mathcal{S}_d(k_0-1)$ is the leftmost integer of $\mathcal{S}_d(k_0-1)$ and the rightmost integer of $\mathcal{S}_d(k_0)$, which differ by precisely $d$. By direct calculation, these correspond to the integers $k_0-d/2$ and $k_0+d/2$. These are the only two terms that do not cancel out in the statement of the Lemma,
\begin{align}
\|\ket{\bar\epsilon_{step}}\|_2 &\leq	\ltwo{ \sum_{k \in \mathcal{S}_d(k_0-1)} \me^{-\mi\Theta(\Delta;k)}\psi(k_0;k_0) \ket{\theta_k} - \sum_{k \in \mathcal{S}_d(k_0)} \me^{-\mi\Theta(\Delta;k)}\psi(k_0;k) \ket{\theta_k} } \\
&= \ltwo{ \me^{-\mi\Theta(\Delta;k_0-d/2)}\psi(k_0;k_0-d/2) \ket{\theta_{k_0-d/2}} - \me^{-\mi\Theta(\Delta;k_0+d/2)}\psi(k_0;k_0+d/2) \ket{\theta_{k_0+d/2}}}
\end{align}

But $\ket{\theta_{k_0-d/2}} = \ket{\theta_{k_0-d/2+d}}= \ket{\theta_{k_0+d/2}}$, giving
\begin{align}
		\|\ket{\bar\epsilon_{step}}\|_2 &\leq \abs{ \me^{-\mi\Theta(\Delta;k_0-d/2)}\psi(k_0;k_0-d/2)  - \me^{-\mi\Theta(\Delta;k_0+d/2)}\psi(k_0;k_0+d/2) }\\
		&\leq \abs{ \psi(k_0;k_0-d/2)}+\abs{\psi(k_0;k_0+d/2) }\\
		&=2 A \me^{-\frac{\pi}{4}\frac{d^2}{\sigma^2}}.
\end{align}

By direct substitution of $\psi(k_0;k_0-d/2)$ and $\psi(k_0;k_0+d/2)$ from \eqref{analyticposition}, we arrive at the Lemma statement. The lemma is analogous to that of Lemma, \ref{rangeshift}, but now with a  trivial extension due to the inclusion of the potential in the dynamics.
\end{proof}

\begin{theorem}[Moving the clock through finite time with a potential]\label{movig through finite time}
 Let $k_0,\Delta\in \rr$, and $t\in\rr$ if $V_0: \rr\rightarrow \rr$ while $t\geq 0$ otherwise. Then the effect of the Hamiltonian $\hat{H}_c + \hat{V}_d$ for time $t$ on $\ket{\bar{\Psi}_\textup{nor}(k_0,\Delta)} \in\Lambda_{ V_0,\sigma,n_0}$ is approximated by
\begin{align}\label{eq: main eq in control theorem}
	\me^{-\mi t (\hat{H}_c + \hat{V}_d)} \ket{\bar{\Psi}_\textup{nor}(k_0,\Delta)} &= \ket{\bar{\Psi}_\textup{nor}(k_0+ \frac{d}{T_0} t, \Delta+ \frac{d}{T_0} t)} + \ket{\epsilon},\quad\quad\ltwo{\ket{\epsilon}} \leq \varepsilon_v(t,d), \end{align}
where in the limits $d\rightarrow \infty$, $(0,d)\ni \sigma\rightarrow \infty$,
\be
\varepsilon_v(t,d)=
\begin{cases}
|t| \frac{d}{T_0}\left(\bo\left( \frac{d^{3/2}}{\bar\upsilon+1}\right)^{1/2}+\bo\left(d\right)\right) \exp\left(-\frac{\pi}{4}\frac{\alpha_0^2}{\left(1+\bar\upsilon\right)^2} d \right)+\bo\left( \me^{-\frac{\pi}{2}d} \right) &\mbox{if } \sigma=\sqrt{d}\\[10pt]
|t| \frac{d}{T_0}\!\left(\bo\left( \frac{\sigma^3}{\bar\upsilon \sigma^2/d+1}\right)^{1/2}\!\!+\bo\left(\frac{d^2}{\sigma^2}\right)\right) \exp\left(-\frac{\pi}{4}\frac{\alpha_0^2}{\left(\frac{d}{\sigma^2}+\bar\upsilon\right)^2} \left(\frac{d}{\sigma}\right)^2 \right)+\bo\left(|t|\frac{d^2}{\sigma^2}+1\right)\me^{-\frac{\pi}{4}\frac{d^2}{\sigma^2}}+\bo\left( \me^{-\frac{\pi}{2}\sigma^2} \right) &\mbox{otherwise.}
\end{cases}
\ee
and $V_0$ is defined in Def. \ref{def:continous pot},  $\ket{\bar{\Psi}_\textup{nor}(\cdot,\cdot)}$ in Def. \eqref{def:gauss clock steate with pot defintion}, $\alpha_0$ in Def. \ref{def:stand alone def mathcal N} while $\bar\upsilon$ is defined in Def. \ref{def:decay rate params}.\\

 More precisely, we have
\be\label{eq:ltwo epsilon nor control theorem}
\varepsilon_v(t,d)=|t| \frac{d}{T_0} \epsilon_T+\left(|t| \frac{d}{T_0}+1\right)\bar\epsilon_{step}+\epsilon_\textup{nor}(t),
\ee
where
\begin{align}
\epsilon_T &<
\begin{cases}
|\bar\epsilon_2|+2 A\left(\frac{2\pi}{1-\me^{-\pi}}+  \frac{\left(b+\frac{2\pi}{d}\right)}{1-\me^{-\pi d}} + \left( 2\pi+\pi d+ \frac{1}{d}\right)\right)\me^{-\frac{\pi}{4}d} &\mbox{if } \sigma=\sqrt{d}\\[10pt]
	|\bar\epsilon_2|+2 A\left(\frac{2\pi}{1-\me^{-\frac{\pi d}{\sigma^2}}}+ \frac{ \left(b+\frac{2\pi}{d}\right)}{1-\me^{-\pi \frac{d^2}{\sigma^2}}} + \left( 2\pi\frac{d}{\sigma^2}+\pi\frac{d^2}{\sigma^2}+ \frac{1}{d}\right)\right)\me^{-\frac{\pi}{4}\frac{d^2}{\sigma^2}} &\mbox{otherwise,}
\end{cases}
\end{align}
with $A=\bo(\sigma^{-1/2})$ and is upper bounded by Eq. \eqref{eq:up low bounds for A normalize} and, if $\sigma=\sqrt{d}$
\be 
|\bar\epsilon_2|<
\begin{cases}
 (2\pi)^{5/4}d^{3/4} A \left(1+\frac{\pi^{2}}{8}\right)\sqrt{\frac{\me}{2}\frac{\alpha_0}{(\bar\upsilon+1)}}\exp\left(-\frac{\pi}{4}\frac{\alpha_0^2}{\left(1+\bar\upsilon\right)^2} d \right)\quad &\text{if } \mathcal{N}\geq 8\text{ and } \bar\upsilon\geq 0\\
\frac{3^{7/4}}{\sqrt{2\pi}\me}\frac{A(8+\pi^{2})}{\alpha_0^3}\left(\frac{\kappa\sqrt{6\pi}}{\ln(3)}b+\sqrt{d}\right)^3 d^{-5/2}\quad &\text{otherwise,} %\label{eq:lemma sampled ft 2}
\end{cases}
\ee
while in general
\be 
|\bar\epsilon_2|<
\begin{cases}
 (2\pi)^{5/4}\sigma^{3/2} A \left(1+\frac{\pi^{2}}{8}\right)\sqrt{\frac{\me}{2}\frac{\alpha_0}{(\bar\upsilon\sigma^2/d+1)}}\exp\left(-\frac{\pi}{4}\frac{\alpha_0^2}{\left(\frac{d}{\sigma^2}+\bar\upsilon\right)^2} \left(\frac{d}{\sigma}\right)^2 \right)\quad &\text{if } \mathcal{N}\geq 8\text{ and } \bar\upsilon\geq 0\\
\frac{3^{7/4}}{\sqrt{2\pi}\me}\frac{A(8+\pi^{2})}{\alpha_0^3}\left(\frac{\kappa\sqrt{6\pi}}{\ln(3)}b+\frac{d}{\sigma}\right)^3 \left( \frac{\sigma}{d^3} \right)\quad &\text{otherwise} %\label{eq:lemma sampled ft 2}
\end{cases}
\ee
where $b$, $\mathcal{N}$, and $\alpha_0$ are defined in Def. \ref{def:stand alone def mathcal N} while $\bar\upsilon$ is defined in Def. \ref{def:decay rate params}.

%\begin{align}
%&\epsilon(t,d)=\\

%\end{align}

\begin{align}
\bar\epsilon_{step} &<
\begin{cases}
2 A e^{-\frac{ \pi d}{4}} &\mbox{if } \sigma=\sqrt{d}\\
2 A e^{-\frac{ \pi d^2}{4\sigma^2}} &\mbox{otherwise.}
\end{cases}\\
\epsilon_\textup{nor}(t)&=\left|\sqrt{\frac{\sum_{k\in \mathcal{S}_d(k_0+td/T_0)} \me^{-\frac{2\pi}{\sigma^2}(k-k_0-\frac{d}{T_0}t)^2}}{\sum_{l\in \mathcal{S}_d(k_0)} \me^{-\frac{2\pi}{\sigma^2}(l-k_0)^2}}}-1 \right| \leq 
\begin{cases}\displaystyle
\frac{40}{3}\frac{ \me^{-\frac{\pi}{2}d}}{1 - \me^{-\pi}}\quad &\forall\, t\in\rr \;\;\mbox{ if } \sigma=\sqrt{d}\\[10pt]\displaystyle
\frac{40\sqrt{2}}{3\sigma}\left(\frac{ \me^{-\frac{\pi d^2}{2\sigma^2}}}{1 - \me^{-\frac{2\pi d}{\sigma^2}}}+\frac{\sigma}{\sqrt{2}}\frac{ \me^{-\frac{\pi \sigma^2}{2}}}{1 - \me^{-\pi \sigma^2}}\right) \quad &\forall\, t\in\rr \;\;\mbox{ otherwise},
\end{cases}\label{eq:q:ep nor above}
\end{align}
where on the r.h.s. of the inequality Eq. \eqref{eq:q:ep nor above}, we have assumed $\sigma\geq1$, $d=2,3,4,\ldots$; (tighter bounds can be found in Section \ref{Re-normalizing the clock state}).
\end{theorem}
\emph{Intuition.} In the special case $V_0: \rr \rightarrow \rr$, $\hat V_d$ is self adjoint and the discrete clock mimics the idealised clock, with an error that grows linearly with time, and scales better than any inverse polynomial w.r.t. the dimension of the clock. The physical significance of this theorem for the more general case in which the image of $V_0$ is the half-complex plane will be dealt with in future work \cite{RMRenatoetal}. The optimal decay of the error terms $\varepsilon_v$ is when the state is symmetric, i.e. when $\sigma=\sqrt{d}$. This gives exponentially small error in $d$, the clock dimension. The rate of decay i.e. the coefficient in the exponential, is determined by two competing factors, $\bar\upsilon$ and $\alpha_0$. The latter being a measure of how the initial clock's state's mean energy is to either the maximum or minimum value, while the former is a measure of how steep the potential function $\hat V_0$ is. The largest decay parameter, $\pi/4$ is only achieved asymptotically when both the mean energy of the initial clock state tends to the middle of the spectrum in the large $d$ limit and $\bar\upsilon$ tends to zero in the said limit. The latter limit, is true for arbitrarily steep potentials. 
\begin{remark}
Observe by Eq. \eqref{eq:ltwo epsilon nor control theorem} that $\lim_{t\rightarrow 0}\varepsilon_v(0,d)\neq 0$, yet from Eq. \eqref{eq: main eq in control theorem} clearly $\|\ket{\epsilon}\|_2=0$ in this limit. One can trivially modify the proof to find upper bounds for $\varepsilon_v(t,d)$ which are of order $t$ in the $t\rightarrow 0$ limit if required.
\end{remark} 
\begin{proof} For, $t\geq 0$, directly apply the previous two Lemmas (\ref{lem:Moving the clock through one angle step}, \ref{Moving the clock through finite time, within unit angle}) in alternation, first to move $k_0$ from one integer to the next, then to switch from ${\lfloor k_0\rfloor}$ to ${\lfloor k_0+1\rfloor}$ if $d$ is even, and ${\lfloor k_0+1/2\rfloor}$ to ${\lfloor k_0+1/2+1\rfloor}$ if $d$ is odd, and finally arriving at 
\begin{align}
\me^{-it (\hat{H}_c + \hat{V}_d)} \ket{\bar{\Psi}(k_0,\Delta)} &= \ket{\bar{\Psi}(k_0+ \frac{d}{T_0} t, \Delta+ \frac{d}{T_0} t)} + \ket{\epsilon},\quad\quad\ltwo{\ket{\epsilon}} \leq t \frac{d}{T_0} \epsilon_T+\left(t \frac{d}{T_0}+1\right)\bar\epsilon_{step}. \end{align}
To conclude Eq. \eqref{eq: main eq in control theorem} for $t>0$, one now has to normalize the states using bounds from Section \ref{Re-normalizing the clock state} and then use Lemma \ref{unitaryerroraddition} to upper bound the total error. For $t<0$ and $V_0: \rr\rightarrow \rr$, simply evaluate Eq. \eqref{eq: main eq in control theorem} for a time $|t|$ followed by multiplying both sides of the equation by the unitary operator  $\exp(\mi t(\hat H_c+\hat V_d))$, mapping $k_0\rightarrow k_0-t\;d/T_0$, $\Delta\rightarrow \Delta-t\;d/T_0$ and noting the unitary invariance of the $l_2$ norm of $\ket{\epsilon}$.
\end{proof}
\begin{corollary}
\label{movig through finite time coro}

\emph{(Evolution in the time basis with potential).} Let $k_0 \in \rr$ and $t\in\rr$ if $V_0: \rr\rightarrow \rr$ while $t\geq 0$ otherwise. Define a \textup{discreate wave function} in analogy with continuous wave functions, 
\be 
\bar{\psi}(x,t):=\bra{\theta_x}\me^{-\mi t (\hat{H}_c + \hat{V}_d)} \ket{\bar{\Psi}_\textup{nor}(k_0)},
\ee
for $x\in\mathcal{S}_d(k_0+t \,d/T_0)$% \zz$
, $t\in\rr$. %=0,1,2,\ldots ,d-1.$
Then the time evolution of $\bar{\psi}$ is approximated by
\begin{align}\label{eq:bar psi evolution}
\bar{\psi}(x,t) &= \me^{-\mi \int_{x-t\,d/T_0}^x dy  V_d(y)}\,\bar{\psi}(x-\frac{d}{T_0} t,0) + \bar\varepsilon_v(t,d),\quad \quad |\bar\varepsilon_v(t,d)|\leq \varepsilon_v(t,d),
\end{align}
where $\bar\varepsilon_v\in\cc$ and $\varepsilon_v(t,d)$ is given by Eq. \eqref{eq:ltwo epsilon nor control theorem}.
\end{corollary}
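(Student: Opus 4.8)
The plan is to read this off Theorem~\ref{movig through finite time} by projecting the evolved state onto a single time-state. First I would specialise Theorem~\ref{movig through finite time} to $\Delta=0$: since $\Theta(0;k)=\int_{k}^{k}V_d(y)\,dy=0$ by Eq.~\eqref{eq: Theta def}, the state $\ket{\bar{\Psi}_\textup{nor}(k_0,0)}$ is exactly $\ket{\bar{\Psi}_\textup{nor}(k_0)}$, so Eq.~\eqref{eq: main eq in control theorem} becomes
\begin{equation}
\me^{-\mi t(\hat H_c+\hat V_d)}\ket{\bar{\Psi}_\textup{nor}(k_0)}=\ket{\bar{\Psi}_\textup{nor}\!\left(k_0+\tfrac{d}{T_0}t,\ \tfrac{d}{T_0}t\right)}+\ket{\epsilon},\qquad \ltwo{\ket{\epsilon}}\le\varepsilon_v(t,d).
\end{equation}

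Next I would take the overlap of both sides with $\bra{\theta_x}$, for $x\in\mathcal{S}_d(k_0+t\,d/T_0)$. The left-hand side is by definition $\bar{\psi}(x,t)$. For the leading term on the right I would unfold Def.~\ref{def:gauss clock steate with pot defintion}: because $x$ lies in $\mathcal{S}_d(k_0+\tfrac{d}{T_0}t)$, the time-states in that expansion are orthonormal and
\begin{equation}
\braket{\theta_x|\bar{\Psi}_\textup{nor}\!\left(k_0+\tfrac{d}{T_0}t,\ \tfrac{d}{T_0}t\right)}=\me^{-\mi\Theta\!\left(\frac{d}{T_0}t;\,x\right)}\,\psi_\textup{nor}\!\left(k_0+\tfrac{d}{T_0}t;\,x\right).
\end{equation}
Two rewritings complete the identification. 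By Eq.~\eqref{eq: Theta def}, $\Theta(\tfrac{d}{T_0}t;x)=\int_{x-\frac{d}{T_0}t}^{x}V_d(y)\,dy$, the phase appearing in Eq.~\eqref{eq:bar psi evolution}; and since the analytic Gaussian \eqref{analyticposition} depends on $k_0$ and $x$ only through the difference $x-k_0$ (Remark~\ref{interchangearguments}), $\psi_\textup{nor}(k_0+\tfrac{d}{T_0}t;x)=\psi_\textup{nor}(k_0;x-\tfrac{d}{T_0}t)$, which is $\bar{\psi}(x-\tfrac{d}{T_0}t,0)$ with the argument understood through the analytic continuation of $\bar\psi(\cdot,0)$ (the ``analogy with continuous wave functions'' in the statement). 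Hence the leading term equals $\me^{-\mi\int_{x-t\,d/T_0}^{x}V_d(y)\,dy}\,\bar{\psi}(x-\tfrac{d}{T_0}t,0)$, exactly as in Eq.~\eqref{eq:bar psi evolution}.

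Finally, setting $\bar\varepsilon_v(t,d):=\braket{\theta_x|\epsilon}$, since $\ket{\theta_x}$ is a unit vector the Cauchy--Schwarz inequality gives $|\bar\varepsilon_v(t,d)|\le\ltwo{\ket{\epsilon}}\le\varepsilon_v(t,d)$, which is the claimed bound; the case $t<0$ follows either from the $t<0$ part of Theorem~\ref{movig through finite time} or by multiplying through by $\me^{\mi t(\hat H_c+\hat V_d)}$ and using unitary invariance of $\ltwo{\cdot}$. I do not expect a genuine obstacle: all the analysis lives in Theorem~\ref{movig through finite time}. The only place that needs a careful word is normalisation bookkeeping --- the constant in $\ket{\bar{\Psi}_\textup{nor}(k_0+\frac{d}{T_0}t,\frac{d}{T_0}t)}$ is $A(\sigma;k_0+\tfrac{d}{T_0}t)$ rather than $A(\sigma;k_0)$, so if $\bar\psi(\cdot,0)$ is normalised at $k_0$ there is a small mismatch between $\psi_\textup{nor}(k_0+\tfrac{d}{T_0}t;x)$ and $\psi_\textup{nor}(k_0;x-\tfrac{d}{T_0}t)$; this mismatch is precisely controlled by $\epsilon_\textup{nor}(t)$, which is already one of the summands of $\varepsilon_v(t,d)$ in Eq.~\eqref{eq:ltwo epsilon nor control theorem}, so it is absorbed into $\bar\varepsilon_v$ without weakening the stated bound.
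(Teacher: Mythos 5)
Your proposal is correct and follows essentially the same route as the paper's proof: take the inner product of Eq.~\eqref{eq: main eq in control theorem} with a time-state, use $\psi(k_0;k+y)=\psi(k_0-y;k)$ to identify the shifted Gaussian with $\bar\psi(x-\tfrac{d}{T_0}t,0)$, and bound the error overlap by $\ltwo{\ket{\epsilon}}\leq\varepsilon_v(t,d)$. Your extra remark on the normalisation mismatch being absorbed into $\epsilon_\textup{nor}(t)$ is a correct and slightly more careful piece of bookkeeping than the paper makes explicit.
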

\emph{(Intuition.)} This is the analogous statement to that of Eq. \eqref{eq:psi idealised control}, i.e. that up to an error $\varepsilon_v(t,d)$, the time evolution overlap of the \gClock~state in the time basis, by the clock Hamiltonian plus a potential term, is simply the time translated overlap multiplied by an exponentiated phase which integrates over the potential. 
\begin{proof}
This is a direct consequence of Theorem \ref{movig through finite time}. Note that by definition, 
\be 
\bar{\psi}(x,0)=\psi_\textup{nor}(k_0;x).
\ee
Thus by taking the inner product of Eq. \eqref{eq: main eq in control theorem} with $\ket{\theta_k}$ followed by noting $\psi(k_0;k+y)=\psi(k_0-y;k)$ and
 $|\!\braket{\theta_k|\epsilon}\!|\leq\varepsilon_v(t,d)$, we achieve Eq. \eqref{eq:bar psi evolution}.
\end{proof}
\begin{remark}
Note that when $t$ corresponds to one period, namely $t=T_0=2\pi/\omega$, it follows from the definitions that the state $\ket{\bar\Psi(k_0+\frac{d}{T_0} t,\Delta+\frac{d}{T_0} t)}$  in Theorem \ref{movig through finite time} has returned to its initial state, up to a global phase, namely
\be
	\me^{-\mi T_0 (\hat{H}_c + \hat{V}_d)} \ket{\bar{\Psi}(k_0,\Delta)} = \me^{-\mi\Omega}\ket{ \bar{\Psi} \left( k_0,\Delta \right)} + \ket{\epsilon},
\ee
where the $l$-two norm of $\ket{\epsilon}$ is as before. Similarly, for the discrete wafe function of Eq. \eqref{eq:bar psi evolution}, 
\begin{align}
\bar{\psi}(x,T_0) &= \me^{-\mi \Omega}\,\bar{\psi}(x,0) + \bar\varepsilon_v(T_0,d).
\end{align}
\end{remark}

\subsection{Examples of Potential functions: the cosine potential}\label{sec:Examples of Potential functions}%\label{sec:The cosine potential}
In this section, we calculate abound for $b$ explicitly for the cosine potential. Let
\be\label{def: cosine pot}
 V_0(x)=A_c \cos^{2n}\left( \frac{x}{2} \right),
\ee
where $n\in\nn^+$ is a parameter which determines how steep the potential is and $A_c$ is a to be determined normalisation constant. In order to proceed, we use the identity
\be 
\cos^{2n}(x/2)=\frac{1}{2^{2n}} (\me^{ix/2}+\me^{-ix/2})^{2n}= \frac{1}{2^{2n}} \sum_{k=0}^{2n} \binom{2n}{k} \me^{ix k/2}\me^{-ix(2n-k)/2}=\frac{1}{2^{2n}} \sum_{k=0}^{2n} \binom{2n}{k} \me^{ix (k-n)}.
\ee
Therefore, from Eq. \eqref{eq:theta normalisation}, it follows
\be 
\Omega=A_c \int_0^{2\pi}dy \cos^{2n}\left( \frac{y}{2} \right)=\frac{A_c}{2^{2n}} \sum_{k=0}^{2n} \binom{2n}{k}\int_0^{2\pi} \me^{ix (k-n)}=\frac{A_c}{2^{2n}} \sum_{k=0}^{2n} \binom{2n}{k}2\pi \delta_{k,n}= \frac{A_c}{2^{2n}} \binom{2n}{n} 2\pi,
\ee
giving 
\be\label{def: A c eq def for cos pot} 
A_c=\Omega \frac{2^{2 n}}{2\pi \binom{2n}{n}}=\Omega\frac{2 ^{2 n}}{2\pi}\frac{(n!)^2}{(2n)!}.
\ee
Thus
\begin{align} 
\left|  V_0^{(q)}(x)\right|&=\left| \frac{\Omega}{2\pi}\frac{(n!)^2}{(2n)!}\sum_{k=0}^{2n}\binom{2n}{k}\left(\mi(k-n)\right)^q\me^{i x(k-n)}  \right|\leq \frac{|\Omega|}{2\pi}\frac{(n!)^2}{(2n)!}\sum_{k=0}^{2n}\binom{2n}{k}|k-n|^q\leq  \frac{|\Omega|}{2\pi}\frac{(n!)^2}{(2n)!}\sum_{k=0}^{2n}\binom{2n}{k}n^q\label{eq:V devs up bound, inter line} \\
&=\frac{|\Omega|}{2\pi}\frac{(n!)^2}{(2n)!} 4^n n^q\leq \frac{\me^2}{(2\pi)^{3/2}}\frac{|\Omega|}{\sqrt{2}} n^{q+3/2},\quad q\in \nn^0,
\end{align}
Where we have used Stirling's approximation to  bound the factorials. Note that one can improve upon this bound by explicity calculating in closed form the penultimate equality in line \eqref{eq:V devs up bound, inter line}. Thus we can upper bound the r.h.s. of Eq. \eqref{eq: b_1 explicit}, to achieve
\be
\sup_{k\in \nn^+} \left(2\max_{x\in[0,2\pi]} \left|  V_0^{(k-1)}(x)\right|\,\right)^{1/k}\leq  \sup_{k\in \nn^+} \left(\frac{2\,\me^2}{(2\pi)^{3/2}}\frac{|\Omega|}{\sqrt{2}} n^{k+1/2} \right)^{1/k}.
\ee 
Thus using the formula
\be 
%\frac{d}{dk} a^\frac{1-k}{k}c^{1/k}=\frac{a^\frac{1-k}{k}c^{1/k}}{k^2}\ln(a/c),
\frac{d}{dk} c^{1/k}=\frac{c^{1/k}}{k^2}\ln(1/c)
\ee
and taking into account Eq. \eqref{eq: b_1 explicit}, we can conclude a value for $b$, namely
\be\label{eq:b final for cos example}
b= 
\begin{cases}
 n \left(\frac{2\,\me^2}{(2\pi)^{3/2}}\frac{|\Omega|}{\sqrt{2}} n^{1/2} \right)^{1/k} \bigg{|}_{k=1} =\frac{2\,\me^2}{(2\pi)^{3/2}}\frac{|\Omega|}{\sqrt{2}}\,n\sqrt{n}  &\mbox{if } |\Omega| \sqrt{n} \geq \frac{(2\pi)^{3/2}}{\sqrt{2}\me^2}\approx 1.507\\
n \lim_{k\rightarrow\infty} \left(\frac{2\,\me^2}{(2\pi)^{3/2}}\frac{|\Omega|}{\sqrt{2}} n^{1/2} \right)^{1/k}=n &\mbox{if } |\Omega| \sqrt{n} \leq \frac{(2\pi)^{3/2}}{\sqrt{2}\me^2}.
\end{cases}
\ee 

\section{Clocks as Quantum control}\label{sec:Clocks as Quantum control}
This section will be concerned with the implementation of energy preserving unitaries on a finite dimensional quantum system with Hilbert space $\mathcal{H}_s$ via a control system. It will be in this section that the importance of Theorem \ref{movig through finite time} for controlling a quantum system will become apparent. For this purpose it will suffice that $V_0:\rr\rightarrow\rr$, so that $\hat V_d$ is self adjoint. This special case will be assumed throughout this section. The first section will consider the case that the unitary's implementation is via a time dependent Hamiltonian (Section\ref{sec:Implementing Energy preserving unitaries via a time dependent Hamiltonian}) followed by showing that the idealised clock can, via a time independent Hamiltonian, perfectly implement controlled energy preserving unitaries (Section \ref{sec:Automation via the idealised clock}). Finally, these previous two sections will serve as an introduction and provide context for the first important section about clocks as quantum control, namely how well (as a function of the clock dimension) can energy preserving unitaries be implemented via finite dimensional \gClocks~(Section \ref{sec:Implementing Energy preserving unitaries with the finite}). To finalize this discussion, we will then outline how one could perform non energy preserving unitaries using the finite dimensional clock in Section \ref{sec:non energy preserving unitaries}. 
To conclude, in Section \ref{Clock Fidelity}, we will bound how the implementation of the unitary on the system via the finite dimensional clock, disturbs the dynamics of the clock.

\subsection{Implementing Energy preserving unitaries via a time dependent Hamiltonian}\label{sec:Implementing Energy preserving unitaries via a time dependent Hamiltonian}

Consider a system of dimension $d_s$, that begins in a state $\rho_s\in\mathcal{S}(\mathcal{H}_s)$, and upon which one wishes to perform the unitary $U_s$ over a time interval $[t_i,t_f]$. Energy-preserving means that $[U_s,\hat H_s]=0$, where $H_s$ is the time-independent finite dimensional Hamiltonian of the system.

The first step in automation is to convert the unitary into a time-dependent interaction, via
\begin{equation}\label{unitaryasinteraction}
	U_s = e^{-\mi \hat H^{int}_s},
\end{equation}
where $\hat H^{int}_s$ also commutes with $\hat H_s$. The unitary can thus be implemented by the addition of $\hat H^{int}_s$ as a time-dependent interaction
\begin{equation}\label{eq:system Ham t depen}
	\hat H = \hat H_s + \hat H^{int}_s \cdot g(t).
\end{equation}
If $g\in L(\rr: \rr_{\geq 0})$ is a normalized pulse, i.e.
\begin{align}\label{eq:g(t) pulse normalisation}
	\int_{t_i}^{t_f}dt g(t) &= 1,
\end{align}
with support interval $[t_i,t_f]$, then it is easily verified that the unitary $U$ will be implemented between $t_i$ and $t_f$, i.e. given the initial system state $\rho_s\in\mathcal{S}(\mathcal{H}_s)$, the state at the time $t$ is
\begin{equation}\label{eq:rho t scho sol commute t depen}
\rho_s(t) = \me^{-\mi t\hat H_s} \me^{-\mi\hat H^{int}_s\int_{t_i}^t dx\, g(x)} \rho_s\; \me^{+\mi \hat H^{int}_s \int_{t_i}^t dx\, g(x)}  \me^{+\mi t\hat H_s}\\
= \me^{-\mi t\hat H_s-\mi\hat H^{int}_s\int_{t_i}^t dx\, g(x)} \rho_s\; \me^{+\mi t \hat H_s+\mi \hat H^{int}_s \int_{t_i}^t dx\, g(x)}.
\end{equation}
To gain a deeper understanding, we first note that since $\hat H_s$ and $\hat H_s^{int}$ are Hermitian and commute, there exists a mutually orthonormal basis, denoted by $\{\ket{\phi_j}\}_{j=1}^{d_s}$, such that
\begin{align}
	\hat H_s &= \sum_{j=1}^{d_s} E_j \ketbra{\phi_j}{\phi_j}_s, \\
	\hat H^{int}_s &= \sum_{j=1}^{d_s} \Omega_j \ketbra{\phi_j}{\phi_j}_s,
\end{align}
where, due to Eq. \eqref{unitaryasinteraction}, without loss of generality, we can confine $\Omega_j\in[-\pi,\pi)$, for $j=1,2,3,\ldots,d_s$. By writing the evolution of the free Hamiltonian in the $\{\ket{\phi_j}\}_{j=1}^{d_s}$ basis
\be\label{eq:rho s in mutual alphog basis}
\rho(t)=\me^{-\mi t \hat H_s}\rho_s \,\me^{\mi t \hat H_s}=\sum_{m,n=1}^{d_s}\rho_{m,n}(t) \ketbra{\phi_m}{\phi_n},
\ee
Eq. \eqref{eq:rho t scho sol commute t depen} can be written in the form
\be \label{eq:rho as a function of t}
\rho_s(t)=\sum_{m,n=1}^{d_s}\rho_{m,n}(t)\,\me^{-\mi (\Omega_m-\Omega_n) \int_{t_i}^t dt\, g(t)} \ketbra{\phi_m}{\phi_n}.
\ee

\subsection{Implementing Energy preserving unitaries with the idealised clock}\label{sec:Automation via the idealised clock}
To remove the explicit time-dependence of Eq. \eqref{eq:system Ham t depen}, we insert a clock, and replace the background time parameter $t$ by the time degree of the clock. For the idealised clock, this corresponds to a Hamiltonian on $\mathcal{H}=\mathcal{H}_s\otimes\mathcal{H}_c$, given by
\begin{equation}\label{eq:idealised s c ham}
	\hat H_{total}^{id} = \hat H_s \otimes \id_c + \id_s \otimes \hat{p}_c + \hat H^{int}_s \otimes g(\hat{x}_c),
\end{equation}
where $\hat x_c$ and $\hat p_c$ are the canonically conjugate position and momentum operators of a free particle in one dimension detailed in Section \ref{idealizedclock}.
To verify that this Hamiltonian can indeed implement the unitary, we let the initial state of the system and clock be in a product form,
\be 
\rho_{sc}^{id}=\rho_s\otimes\ketbra{\Psi}{\Psi}_c.
\ee
Writing the state $\rho_{sc}^{id}$ at a later time in the $\{\ket{\phi_j}\}_{j=1}^{d_s}$ basis, we find
\be\label{eq:rho ideal s c}
\rho_{sc}^{id}(t)=\me^{-\mi t \hat H_{total}^{id}}\rho_{sc}^{id}\, \me^{\mi t \hat H_{total}^{id}}= \sum_{m,n=1}^{d_s} \rho_{m,n}(t)\otimes\ketbra{\Phi_m(t)}{\Phi_n(t)}_c,
\ee
with
\be 
\ket{\Phi_n(t)}_c:=\me^{-\mi t( \hat{p}_c + \Omega_n g(\hat{x}_c))}\ket{\Phi}_c.
\ee
\begin{lemma}\label{lem:doing unitary id}
Let $\psi(x,0)\in D_0$ be the normalised wave-function with support on the interval $x\in[x_{\psi l},x_{\psi r}]$ associated with the state $\ket{\Phi}_c$. Let $g(x)$ have support on $x\in[x_{g l},x_{g r}]$ and denote $\rho_s^{id}(t)=\tr_c[\rho_{sc}^{id}(t)]$ as the partial trace over  $\mathcal{H}_c$, with $\rho_{sc}^{id}(t)$ given by Eq. \eqref{eq:rho ideal s c}. Furthermore, assume that initally the wave packet is on the left of the potential function, namely $x_{\psi r}\leq x_{g l}$.
Then the unitary is implemented perfectly in the time interval $x_{g l}-x_{\psi r} \leq t\leq x_{g r}-x_{\psi l}$, more precisely
\be 
\rho_s^{id}(t)=
\begin{cases}
\sum_{m,n=1}^{d_s}\rho_{m,n}(t) \ketbra{\phi_m}{\phi_n} &\mbox{if }\; t\leq x_{g l}-x_{\psi r}\\ 
\sum_{m,n=1}^{d_s}\rho_{m,n}(t) \ketbra{\phi_m}{\phi_n} \,\me^{-\mi (\Omega_m-\Omega_n)} &\mbox{if } \, t\geq x_{g r}-x_{\psi l}
\end{cases}
\ee 
\end{lemma}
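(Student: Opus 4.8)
The plan is to establish Lemma \ref{lem:doing unitary id} by working directly with the wave-function of the idealised clock and exploiting the explicit solubility of the momentum-clock dynamics established in Section \ref{idealizedclock}. The starting point is Eq. \eqref{eq:rho ideal s c}, which already diagonalises the joint state in the mutually orthonormal basis $\{\ket{\phi_j}\}_{j=1}^{d_s}$ of $\hat H_s$ and $\hat H_s^{int}$, so that all the $m,n$ matrix elements evolve independently and the only object to understand is the single-particle clock evolution $\ket{\Phi_n(t)}_c=\me^{-\mi t(\hat p_c+\Omega_n g(\hat x_c))}\ket{\Phi}_c$ under a shifted momentum plus potential. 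By Eq. \eqref{idealpotential} of Section \ref{sec:ideal}, this evolution acts on the wave-function as $\braket{x|\Phi_n(t)}=\me^{-\mi\Omega_n\int_{x-t}^x g(x')dx'}\,\psi(x-t,0)$, i.e. a rigid translation by $t$ together with a position-dependent phase given by the integral of $\Omega_n g$ over the interval swept.

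First I would take the partial trace over $\mathcal{H}_c$ in Eq. \eqref{eq:rho ideal s c}, which gives $\rho_s^{id}(t)=\sum_{m,n}\rho_{m,n}(t)\braket{\Phi_n(t)|\Phi_m(t)}_c\,\ketbra{\phi_m}{\phi_n}$, so everything reduces to computing the overlap $\braket{\Phi_n(t)|\Phi_m(t)}_c=\int dx\,\me^{-\mi(\Omega_m-\Omega_n)\int_{x-t}^x g(x')dx'}\,|\psi(x-t,0)|^2$, using that the translation and the modulus of the wave-function are the same for both indices. The key observation is then purely about supports: the phase factor $\int_{x-t}^x g(x')dx'$ equals $0$ whenever the interval $[x-t,x]$ lies entirely to the left of $\mathrm{supp}(g)=[x_{gl},x_{gr}]$, and equals $\int g = 1$ (by the normalisation Eq. \eqref{eq:g(t) pulse normalisation}) whenever $[x-t,x]$ contains $[x_{gl},x_{gr}]$. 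Because after translating by $t$ the relevant $x$ range is $x\in[x_{\psi l}+t,\,x_{\psi r}+t]$ (where $|\psi(x-t,0)|^2$ is supported), I would check the two regimes: (i) if $t\le x_{gl}-x_{\psi r}$ then $x\le x_{\psi r}+t\le x_{gl}$, so the whole interval $[x-t,x]\subseteq(-\infty,x_{gl}]$ and the phase vanishes, giving $\braket{\Phi_n(t)|\Phi_m(t)}_c=\int|\psi|^2=1$; (ii) if $t\ge x_{gr}-x_{\psi l}$ then $x-t\le x_{\psi r}\le x_{gl}$ wait — more carefully, $x-t\le x_{\psi r}$ and one uses $x_{\psi r}\le x_{gl}$ together with $x\ge x_{\psi l}+t\ge x_{gr}$, so $[x_{gl},x_{gr}]\subseteq[x-t,x]$ and the phase integral is exactly $1$, giving $\braket{\Phi_n(t)|\Phi_m(t)}_c=\me^{-\mi(\Omega_m-\Omega_n)}$. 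Substituting these two values back into the traced-out state yields precisely the two cases in the Lemma statement.

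The main obstacle — really the only subtlety — is bookkeeping the support intervals correctly and verifying that in regime (ii) the translated wave-packet support $[x_{\psi l}+t,x_{\psi r}+t]$ combined with the hypothesis $x_{\psi r}\le x_{gl}$ genuinely forces $[x_{gl},x_{gr}]\subseteq[x-t,x]$ for every $x$ in that support; this needs the inequality $t\ge x_{gr}-x_{\psi l}$ on one end and $t\ge x_{gl}-x_{\psi r}$ (which is implied, since $x_{gr}\ge x_{gl}$ and $x_{\psi l}\le x_{\psi r}$) on the other. I would also note in passing that for intermediate times $x_{gl}-x_{\psi r}<t<x_{gr}-x_{\psi l}$ the overlap is some partial integral and the unitary is only partially applied, which is why the Lemma restricts to the two asymptotic regimes. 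Finally I would invoke Section \ref{idealizedclock} for the rigorous domain statements (that $\psi(\cdot,0)\in D_0$ is preserved under the evolution and Eq. \eqref{idealpotential} applies), so that all manipulations with the wave-function are justified, and conclude the proof.
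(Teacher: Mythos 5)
Your proposal is correct and follows essentially the same route as the paper's own proof: take the partial trace to reduce the problem to the clock overlaps $\braket{\Phi_n(t)|\Phi_m(t)}_c$, evaluate them via the idealised control property Eq.~\eqref{idealpotential}, and then show by support bookkeeping that the phase integral is identically $0$ (resp.\ $1$) over the whole support of the translated wave-packet in the two time regimes. The only cosmetic difference is that the paper changes variables so the overlap integral runs over the original support $[x_{\psi l},x_{\psi r}]$ with phase $\int_x^{x+t}g$, whereas you integrate over the translated support with phase $\int_{x-t}^{x}g$; the inequalities you verify are equivalent to the paper's conditions.
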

\emph{Intuition.} If one sets $t_i=x_{g l}-x_{\psi r}$, $t_f=x_{g r}-x_{\psi l}$, then the clock can control perfectly the system and mimics the time dependent Hamiltonian of  Eq. \eqref{eq:system Ham t depen}. The Lemma is a consequence of the property noted in Eq. \eqref{eq:psi idealised control}, and hence the name given to dynamics of this form.
\begin{proof}
Taking the partial trace over the idealized clock in Eq. \eqref{eq:rho ideal s c}, we achieve
\be\label{eq:rho s id 1} 
\rho_{s}(t)= \sum_{m,n=1}^{d_s} \rho_{m,n}(t)\braket{\Phi_n(t)|\Psi_m(t)}_c,
\ee
where using Eq. \eqref{eq:psi idealised control}, we find
\begin{align}
\braket{\Phi_n(t)|\Psi_m(t)}_c &=\int_{-\infty}^\infty dx\braket{\Phi_n(t)|x}_{\!\!c}\!\! \braket{x|\Psi_m(t)}
=\int_{x_{\psi l}}^{x_{\psi r}} dx |\psi(x,0)|^2 \,\me^{-\mi (\Omega_m-\Omega_n)\int_x^{x+t}g(x')dx'}.\label{eq:inner product n m}
\end{align}
Note that 
\be
\int_x^{x+t} g(x')dx'= 
\begin{cases}\label{eq:cases conditioons in idealisec control lemma}
0 &\mbox{if } x+t\leq x_{gl}\\
1 &\mbox{if } x\leq x_{vl} \text{ and } x+t\geq x_{vr}
\end{cases}
\ee
Thus in order for $\exp({-\mi (\Omega_m-\Omega_n)\int_x^{x+t}g(x')dx'})$ to be a constant factor in the integral of Eq. \eqref{eq:inner product n m}, we need   the conditions on $x$ in Eq. \eqref{eq:cases conditioons in idealisec control lemma} to hold for all $x\in[x_{\psi l},x_{\psi r}]$. Noting the normalization of the wave function, we conclude the Lemma.
\end{proof}

One might also wonder whether or not the state of the clock is degraded in any way due to the action of performing the unitary.  The wave-function only changes by a global phase, and thus the answer to this question is no. This means that one can implement any number of energy preserving unitaries $\{U_n\}_n$ on the system in any sequence of time intervals. This of course, is under the physically questionable assumption of the validity of the idealized clock.  

\subsection{Implementing Energy preserving unitaries with the finite dimensional \gClock}\label{sec:Implementing Energy preserving unitaries with the finite}
In this section we will see how well the finite dimensional \gClock~can mimic the behavior of the idealized clock studied  in Section \ref{sec:Automation via the idealised clock}. We will find that Theorem \ref{movig through finite time} implies that the clock can perform on the system any sequence of unitaries with an additive error which grows linearly in time and decays faster than any polynomial in clock dimension (the exact error will depend on a number of things, for example which initial \gClock~state is used).\\

In analogy with the idealized clock-system Hamiltonian, Eq. \eqref{eq:idealised s c ham}, we define
\begin{equation}\label{eq:finite s c ham}
	\hat H_{sc} = \hat H_s \otimes \id_c + \id_s \otimes \hat{H}_c + \hat H^{int}_s \otimes \hat{V}_d,
\end{equation}
with $\Omega=1$, where $\hat H_c$ is defined in Eq. \eqref{finiteHamiltonian} and $\Omega$, $\hat{V}_d$ are defined in Eq. \eqref{eq: Theta def}. Furthermore, we denote the system and clock evolution under $\hat H_{sc}$ as
\be\label{eq:rho finite s c}
\rho_{sc}'(t)=\me^{-\mi t \hat H_{sc}}\rho_s\otimes\ketbra{\Psi_\textup{nor}(k_0)}{\Psi_\textup{nor}(k_0)}_c\, \me^{\mi t \hat H_{sc}}.
\ee
Recall that the clock has period $T_0$ and aims to implement the potential $V_0$ while the time dependent Hamiltonian aims to implement the potential $g$ over an interval $t\in[t_i,t_f]$. In order to make a fair comparison we will set
\be\label{eq:g in terms of V_0}
g(x)=\frac{2\pi}{T_0} V_0(x\; 2\pi/T_0),\quad [t_i,t_f]=[0,T_0],
\ee
 which taking into account Eq. \eqref{eq:theta normalisation}, implies that Eq. \eqref{eq:g(t) pulse normalisation} is satisfied. Note that while Eq. \eqref{eq:g in terms of V_0} is a natural definition of $g$, it differs from the definition used in Section \ref{sec:Consequences of Quasi-Autonomous control}. We will remedy this in Corollary \ref{lem:new g}.
\begin{lemma}[Implicit form]\label{lem:trace dist bound t-Ham Vs d dim clock}
Let $\rho_s'(t):=\tr_c[\rho_{sc}'(t)]$, then for all $\rho_s(0)\in\mathcal{S}(\mathcal{H}_s)$, and $t\in[0,T_0]$, the trace distance between $\rho_s'(t)$ and the idealised case of controlling the system with a time dependent Hamiltonian, $\rho_s(t)$ (see Eq. \eqref{eq:rho t scho sol commute t depen}; $g$ given by Eq. \eqref{eq:g in terms of V_0}), is bounded by
\be\label{eq:trace dist rho rho finite and time}
\|\rho_s(t)-\rho_s'(t)\|_1\leq \sqrt{d_s \tr[\rho_s^2(0)]}\left(2\varepsilon_v(t,d)+\varepsilon_v^2(t,d)+\epsilon_V(t,d) \right),
\ee
where $\varepsilon_v(t,d)$ is evaluated for $\Omega=\pi$ and defined by Eq. \eqref{eq:ltwo epsilon nor control theorem} while\footnote{In the following summation, and for now on, we use the convention $\sum_{k=a}^b f(k)= f(a)+f(a+1)+\ldots+f(b)$, where $b-a\in\nn^0$}
\be\label{eq:def epsilon k}
\epsilon_V(t,d):=\max_{\bar\kappa\in[0,1]} \sum_{k=k_0-d/2+\bar\kappa}^{k_0+d/2+\bar\kappa}\left( \epsilon_k^2+\epsilon_k\right)|\psi_\textup{nor}(k_0;k)|^2,\quad\quad \epsilon_k=2\pi\left| \int_{0}^{t\, 2\pi/T_0} dx\big( V_0(x)-V_0(x+2\pi k/d)\big)\right|.
\ee
%with $\Omega=1$, and $V_d$ defined in Eq. \eqref{}.
\end{lemma}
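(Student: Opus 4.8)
## Proof plan for Lemma \ref{lem:trace dist bound t-Ham Vs d dim clock}

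The plan is to compare, term by term in the $\{\ket{\phi_j}\}$ eigenbasis of $\hat H_s$ and $\hat H_s^{int}$, the reduced system state $\rho_s'(t)$ obtained from the joint clock-system evolution with the ideal target $\rho_s(t)$ of Eq. \eqref{eq:rho as a function of t}. First I would write $\rho_{sc}'(t)$ explicitly in the mutually orthogonal basis, exactly as in Eq. \eqref{eq:rho ideal s c} but with the idealized clock propagator $\me^{-\mi t(\hat p_c + \Omega_n g(\hat x_c))}$ replaced by the finite-clock propagator $\me^{-\mi t(\hat H_c + \Omega_n \hat V_d)}$ acting on $\ket{\Psi_\textup{nor}(k_0)}$. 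Taking the partial trace over $\mathcal{H}_c$ gives
\begin{align}
\rho_s'(t) = \sum_{m,n=1}^{d_s} \rho_{m,n}(t)\,\braket{\Phi_n'(t)|\Phi_m'(t)}_c\,\ketbra{\phi_m}{\phi_n},
\end{align}
where $\ket{\Phi_n'(t)}_c = \me^{-\mi t(\hat H_c + \Omega_n \hat V_d)}\ket{\Psi_\textup{nor}(k_0)}$, and analogously $\rho_s(t) = \sum_{m,n}\rho_{m,n}(t)\,\me^{-\mi(\Omega_m-\Omega_n)\int_{t_i}^t g}\,\ketbra{\phi_m}{\phi_n}$. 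So the whole problem reduces to showing that the scalar $\braket{\Phi_n'(t)|\Phi_m'(t)}_c$ is close to the ideal phase factor $\me^{-\mi(\Omega_m-\Omega_n)\int_0^t g(x)dx}$, uniformly in $m,n$, and then converting that scalar bound into a trace-norm bound via the standard inequality $\|X\|_1 \le \sqrt{\operatorname{rank} X}\,\|X\|_2 \le \sqrt{d_s}\,\|X\|_2$ applied to the difference matrix with entries $\rho_{m,n}(t)(\braket{\Phi_n'(t)|\Phi_m'(t)} - \text{ideal})$; the $\sqrt{\operatorname{tr}\rho_s^2}$ enters because $\sum_{m,n}|\rho_{m,n}(t)|^2 = \operatorname{tr}\rho_s^2(0)$ by unitary invariance of the Hilbert-Schmidt norm.

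The core estimate is the bound on $\braket{\Phi_n'(t)|\Phi_m'(t)}_c$. Here I would invoke Theorem \ref{movig through finite time} (equivalently Corollary \ref{movig through finite time coro}) twice, once for the potential $\Omega_m V_0$ and once for $\Omega_n V_0$. Since $|\Omega_j| \le \pi$, the relevant error $\varepsilon_v(t,d)$ is the one evaluated at $\Omega = \pi$ (monotonicity in $|\Omega|$ of the decay-rate parameters is what lets us take this worst case), so by the theorem
\begin{align}
\me^{-\mi t(\hat H_c + \Omega_j \hat V_d)}\ket{\Psi_\textup{nor}(k_0)} = \ket{\bar\Psi_\textup{nor}(k_0 + \tfrac{d}{T_0}t,\, \tfrac{d}{T_0}t)}_{\Omega_j V_0} + \ket{\epsilon_j},\quad \|\ket{\epsilon_j}\|_2 \le \varepsilon_v(t,d).
\end{align}
Writing $\braket{\Phi_n'|\Phi_m'}$ as the inner product of the two leading terms plus cross terms with the $\ket{\epsilon_j}$'s (which contribute at most $2\varepsilon_v + \varepsilon_v^2$ by Cauchy--Schwarz and normalization), it remains to compute the inner product of the two leading Gaussian-with-phase states. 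By Def. \ref{def:gauss clock steate with pot defintion} these states differ only through their phase functions $\Theta(\tfrac{d}{T_0}t;k)$ built from $\Omega_m V_d$ versus $\Omega_n V_d$, so their overlap is $\sum_{k} \me^{-\mi(\Omega_m - \Omega_n)\Theta(\frac{d}{T_0}t;k)}|\psi_\textup{nor}(k_0+\frac{d}{T_0}t;k)|^2$. Changing variables back via Remark \ref{interchangearguments} and Eq. \eqref{eq:g in terms of V_0}, each summand's phase is $\me^{-\mi(\Omega_m-\Omega_n)[\,2\pi\int_0^{2\pi t/T_0} V_0(x+2\pi k/d)dx\,]}$, whereas the ideal phase uses $2\pi\int_0^{2\pi t/T_0} V_0(x)dx$; comparing $\me^{\mi\theta_1} - \me^{\mi\theta_2}$ with $|\theta_1 - \theta_2|$ and using $|\Omega_m - \Omega_n|\le 2\pi$ produces exactly the per-point discrepancy $\epsilon_k$ of Eq. \eqref{eq:def epsilon k}, and the $\bar\kappa \in [0,1]$ maximum appears because the set $\mathcal{S}_d$ of summation integers depends on $k_0 + \frac{d}{T_0}t$ and ranges over a window we must take worst-case over. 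Summing the $|\me^{\mi\theta_1}-\me^{\mi\theta_2}|$ errors weighted by $|\psi_\textup{nor}|^2$, and treating both the first-order and a crude second-order remainder, yields the $\epsilon_V(t,d) = \max_{\bar\kappa}\sum_k(\epsilon_k^2 + \epsilon_k)|\psi_\textup{nor}(k_0;k)|^2$ term.

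Assembling: $\|\rho_s(t) - \rho_s'(t)\|_1 \le \sqrt{d_s}\,\big(\sum_{m,n}|\rho_{m,n}(t)|^2\big)^{1/2}\cdot \max_{m,n}|\braket{\Phi_n'|\Phi_m'} - \text{ideal}| \le \sqrt{d_s\operatorname{tr}\rho_s^2(0)}\,\big(2\varepsilon_v(t,d) + \varepsilon_v^2(t,d) + \epsilon_V(t,d)\big)$, which is the claim. The main obstacle I anticipate is bookkeeping the normalization constants and the $\mathcal{S}_d$-window dependence carefully enough that the worst case over $\bar\kappa$ is genuinely an upper bound — in particular making sure that $|\psi_\textup{nor}(k_0 + \frac{d}{T_0}t; k)|^2$ for $k$ in the shifted window can be re-indexed to $|\psi_\textup{nor}(k_0;k)|^2$ over a window captured by the $\bar\kappa$-maximization, since the Gaussian is re-centered but the orthonormal basis set is not simply translated. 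A secondary subtlety is justifying that the $\Omega = \pi$ evaluation of $\varepsilon_v$ dominates uniformly over all $\Omega_j \in [-\pi,\pi)$; this should follow from inspecting that $b$, hence $\bar\upsilon$ and $\mathcal{N}$ in Def. \ref{def:stand alone def mathcal N}, scale monotonically with $\|V_0\|$ and its derivatives, so rescaling $V_0 \mapsto \Omega_j V_0/\pi$ with $|\Omega_j/\pi| \le 1$ only decreases the error.
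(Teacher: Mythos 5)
Your proposal follows essentially the same route as the paper's proof: decompose in the mutual eigenbasis of $\hat H_s$ and $\hat H_s^{int}$, reduce to bounding $|\me^{-\mi(\Omega_m-\Omega_n)\int g}-\braket{\bar\Phi_n(t)|\bar\Phi_m(t)}|$ uniformly in $m,n$ via the Frobenius-norm inequality (which is where $\sqrt{d_s\tr[\rho_s^2(0)]}$ enters), invoke Theorem \ref{movig through finite time} to peel off the $2\varepsilon_v+\varepsilon_v^2$ contribution, and then compare the residual phase factors pointwise to obtain $\epsilon_k$ and the $\bar\kappa$-maximized window sum. The paper carries out the phase comparison with the explicit elementary inequalities $\sin a\leq|a|$ and $-\cos a\leq a^2-1$ to produce the $\epsilon_k^2+\epsilon_k$ structure, but this is only a more concrete version of the first-plus-second-order remainder argument you sketch, so the two arguments coincide in substance.
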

\emph{Intuition}. The point of this Lemma is to make it manifest that the error invoked by using the finite dimensional \gClock~to control the quantum system on $\mathcal{H}_s$ (specifically, in this case, to perform an energy conserving unitary during a specific moment in time) is small. The two error terms on the r.h.s. of Eq. \eqref{eq:trace dist rho rho finite and time} involving $\varepsilon_v(t,d)$ decay fast with $d$ (see Eq. \eqref{eq:ltwo epsilon nor control theorem}) and only grow linearly in time. Note that $\epsilon_V(j\, T_0,d)=0$ for all $d\in \nn^+$, $j\in\nn^0$. More generally, the error term $\epsilon_V$, will be small for a clock initially centered at $k_0$, at time $t$ when $V_0(2\pi(t/T_0+k_0/d))\approx V_0(2\pi t/T_0)$, i.e. is relatively flat. Roughly speaking, this will be true for times $t$ before and after the unitary has been implemented as the example in Fig. \ref{fig:epsilon V exmaple plot} and the following lemma will make explicit. With a small modification, Lemma \ref{lem:trace dist bound t-Ham Vs d dim clock} still holds if we do not  assume the special choice of $g$ in Eq. \eqref{eq:g in terms of V_0}. For general $g$, we simply replace $\epsilon_k$ in Eq. \eqref{eq:def epsilon k} with Eq. \eqref{eq:epsilon k lemma}. The system dependency in Eq. \eqref{eq:trace dist rho rho finite and time} enters via the factor $\sqrt{d_s \tr[\rho_s^2(0)]}$ which varies between $\sqrt{d_s}$ for pure initial system states, and unity in the case of maximally mixed initial system states. The latter is what one should expect, since clearly an initially maximally mixed state is invariant under the unitary dynamics and as such the error associated with implementing a unitary should not depend on its dimension.
\begin{proof}
Proceeding similarly to the derivation of Eq. \eqref{eq:rho s id 1} in the proof of Lemma \ref{lem:doing unitary id}, we achieve
\be 
\rho_{s}(t)= \sum_{m,n=1}^{d_s} \rho_{m,n}(t)\braket{\bar{\Phi}_n(t)|\bar{\Phi}_m(t)},
\ee
with
\be\label{eq:Psi bar def eq}  
\ket{\bar{\Phi}_l(t)}:=\me^{-\mi t( \hat{H}_c + \Omega_l \hat{V}_d)}\ket{\Psi_\textup{nor}(k_0)}, \quad l=n,m.
\ee
Thus noting that the Frobenius norm, $\|\cdot\|_F$, is an upper bound to the trace distance $\|\cdot\|_1/\sqrt{d_s}$, for a $d_s$ dimensional system, followed by using Eq. \eqref{eq:rho as a function of t},
\begin{align}
\|\rho_s(t)-\rho_s'(t)\|_1&
\leq \sqrt{d_s} \abs{\abs{\rho(t)-\rho'(t)}}_F=\sqrt{d_s}\;\sqrt{\sum_{n,m}|\rho_{nm}(t)|^2\left| e^{-\mi(\Omega_m-\Omega_n)\int_{t_i}^{t_f}g(x)dx}-\braket{\bar{\Phi}_n(t)|\bar{\Phi}_m(t)}\right|^2}\\	
	& \leq \sqrt{d_s}\;\sqrt{\sum_{n,m}|\rho_{nm}(t)|^2 \max_{q,r}\left\{\left| e^{-\mi(\Omega_q-\Omega_r)\int_{t_i}^{t_f}g(x)dx)}-\braket{\bar{\Phi}_r(t)|\bar{\Phi}_q(t)} 
	\right|^2\right\}} \\
	& \leq \sqrt{d_s\tr\left(\rho^2_s(0)\right)} \max_{q,r}\left\{\left| e^{-\mi(\Omega_q-\Omega_r)\int_{t_i}^{t_f}g(x)dx}-\braket{\bar{\Phi}_r(t)|\bar{\Phi}_q(t)} 
	\right|\right\}.\label{eq: rho - rho 1}
\end{align}
Applying Theorem \ref{movig through finite time},
\be 
\braket{\bar{\Phi}_n(t)|\bar{\Phi}_m(t)}=\sum_{k\in\mathcal{S}(k_0+dt/T_0)} \me^{-\mi (\Omega_n-\Omega_m)\int_{k-td/T_0}^k dy V_d(y)} \left| \psi_\textup{nor}(k_0;k-td/T_0)\right|^2+\tilde\epsilon_{n,m},
\ee
where
\begin{align}\label{eq:tilde epsilon up bound}
\tilde\epsilon_{n,m}&=\braket{\bar\Psi_\textup{nor}(k_0+td/T_0)|\epsilon}+\braket{\epsilon|\bar\Psi_\textup{nor}(k_0+td/T_0)}+ \braket{\epsilon|\epsilon},\\
|\tilde\epsilon_{n,m}|&\leq 2\varepsilon_v(t,d)+\varepsilon_v^2(t,d),\quad \forall n,m\,\label{eq:up bund n m independ ep tilde}
\end{align}
with $\varepsilon_v(t,d)$ is given by Eq. \eqref{eq:ltwo epsilon nor control theorem}. Note that the r.h.s of Eq. \eqref{eq:tilde epsilon up bound}, does depend on indices  $n$ and $m$. The r.h.s. of Eq. \eqref{eq:up bund n m independ ep tilde} on the other hand, is independent of $n$ and $m$.\\
Thus applying Eq. \eqref{eq:up bund n m independ ep tilde} to Eq. \eqref{eq: rho - rho 1}, we achieve
\begin{align}
\frac{\|\rho_s(t)-\rho_s'(t)\|_1}{ \sqrt{d_s\tr[\rho^2_s(0)]}}
 \leq & \max_{q,r}\left\{\left| 1-\sum_{k\in\mathcal{S}(k_0+dt/T_0)} \me^{\mi (\Omega_q-\Omega_r)\left(\int_{t_i}^{t_f}g(x)dx-\int_{k-td/T_0}^k dy V_d(y)\right)} \left| \psi_\textup{nor}(k_0;k-td/T_0)\right|^2
	\right|\right\}\\
	&+2\varepsilon_v(t,d)+\varepsilon_v^2(t,d)\\
	 \leq & \max_{q\geq r}\left\{\left| 1-\sum_{k\in\mathcal{S}(k_0+dt/T_0)} \me^{\pm\mi |\Omega_q-\Omega_r|\left(\int_{t_i}^{t_f}g(x)dx-\int_{k-td/T_0}^k dy V_d(y)\right)} \left| \psi_\textup{nor}(k_0;k-td/T_0)\right|^2
	\right|\right\}\\
	&+2\ \varepsilon_v(t,d)+\varepsilon_v^2(t,d)\\
	=&\max_{q,r}\left\{ \rule{0cm}{1cm}\right.\sqrt{\left(1-\sum_{k\in\mathcal{S}(k_0+dt/T_0)} \cos(\pm g_k^{q,r})\left| \psi_\textup{nor}(k_0;k-td/T_0)\right|^2 \right)^2+}\\
	&\,\,\quad\quad\,\,\,\,\,\,\,\,\overline{\left(\sum_{k\in\mathcal{S}(k_0+dt/T_0)}\sin(\pm g_k^{q,r})\left| \psi_\textup{nor}(k_0;k-td/T_0)\right|^2\right)^2}\left. \rule{0cm}{1cm}\right\}+2\varepsilon_v(t,d)+\varepsilon_v^2(t,d), 
\end{align}
where
\be\label{eq:gk def in lemm}
g_k^{q,r}:= |\Omega_q-\Omega_r|\left(\int_{t_i}^{t_f}g(x)dx- \int_{k-td/T_0}^k V_d(x)dx\right)\leq 2\pi\left|\int_{t_i}^{t_f}g(x)dx- \int_{k-td/T_0}^k V_d(x)dx\right|=:g_k \quad\forall\, q,r.
\ee
Thus using the identities $(|a|+|b|)^2\geq |a|^2+|b|^2$, $\sin(a)\leq |a|$, $-\cos(a)\leq a^2-1$, for $a,b\in\rr$, to achieve
\begin{align}
\frac{\|\rho_s(t)-\rho_s'(t)\|_1}{ \sqrt{d_s\tr[\rho^2_s(0)]}}
 \leq &\,2\varepsilon_v(t,d)+\varepsilon_v^2(t,d)+\sum_{k\in\mathcal{S}(k_0+dt/T_0)} (g_k^2+g_k)|\psi(k_0;k-td/T_0)|^2.
\end{align}
Performing the change of variable $y=k-td/T_0$, we find
\begin{align}
\frac{\|\rho_s(t)-\rho_s'(t)\|_1}{ \sqrt{d_s\tr[\rho^2_s(0)]}}
 \leq &\,2\varepsilon_v(t,d)+\varepsilon_v^2(t,d)+\sum_{y=\min\{\mathcal{S}(k_0+dt/T_0)\}-td/T_0}^{\max\{\mathcal{S}(k_0+dt/T_0)\}-td/T_0} (\epsilon_y^2+\epsilon_y)|\psi(k_0;y)|^2\\
 \leq &\, 2\varepsilon_v(t,d)+\varepsilon_v^2(t,d)+\max_{\bar\kappa\in[0,1]}\sum_{y=k_0-d/2+\bar\kappa}^{k_0+d/2+\bar\kappa} (\epsilon_y^2+\epsilon_y)|\psi(k_0;y)|^2,
\end{align}
where 
\be\label{eq:epsilon k lemma}
\epsilon_k=2\pi\left| \int_{t_i}^t g(x)dx-\int_{k}^{k+td/T_0}V_d(x)dx\right|.
\ee
and to achieve the last line, we have used the inequalities
\begin{align}
k_0-d/2+\bar\kappa &\leq \min\{\mathcal{S}(k_0+dt/T_0)\}-td/T_0,\\
 k_0+d/2+\bar\kappa &\geq \max\{\mathcal{S}(k_0+dt/T_0)\}-td/T_0,\\
  k_0+d/2+\bar\kappa &-(k_0-d/2+\bar\kappa)=d.
\end{align}
Finally, taking into account Eq. \eqref{eq:g in terms of V_0} followed by a change of variable, Eq. \eqref{eq:epsilon k lemma} can be written as
\be 
\epsilon_k=2\pi\left| \int_{0}^{t 2\pi/T_0}  V_0(x)dx-\int_{k}^{k+td/T_0}V_d(x)dx\right|=2\pi\left| \int_{0}^{t\, 2\pi/T_0} dx\big( V_0(x)-V_0(x+2\pi k/d)\big)\right|.
\ee
\end{proof}

\begin{figure}[!htb]
\minipage{0.32\textwidth}
  \includegraphics[width=\linewidth]{{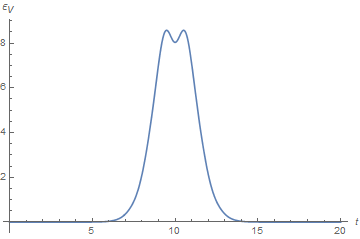}}%{{epsilon_Vplot.png}}
  %\caption{A really Awesome Image}\label{fig:awesome_image1}
\endminipage\hfill
\minipage{0.32\textwidth}
  \includegraphics[width=\linewidth]{{epsilon_VplotMiddle.png}}
  %\caption{A really Awesome Image}\label{fig:awesome_image2}
\endminipage\hfill
\minipage{0.32\textwidth}%
  \includegraphics[width=\linewidth]{{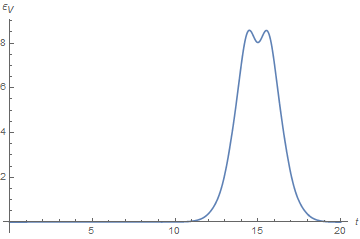}}
  %\caption{A really Awesome Image}\label{fig:awesome_image3}
\endminipage
 \caption{{\bf Left}: Plot of $\epsilon_V$ as a function of time over one clock period $T_0$ with cosine potential $V_0(x)=A \cos^{2 n}((x-x_0)/2)$ from Section \ref{sec:Examples of Potential functions} with parameters $k_0=0$, $\sigma=\sqrt{d}$, $T_0=20$, $d=20$, $n=60$, $x_0=\pi$. The unitary is to be implemented at a time $t\approx x_0 \,T_0/(2\pi)=T_0/2$ and the complex Gaussian superposition has its peak initially on the L.H.S. of the potential peak. Before the unitary is applied, to a time $t\approx 5$ the error is practically zero, during a period $t\approx 5$ to $t\approx 15$ the unitary is being applied and the error is large, but after $t\approx 15$, the unitary has been applied and the error is practically zero again. As one increases $n$ and $d$, the errors become arbitrarily small and the interval $t\in[5,15]$ in which there is a large error in the above plot becomes arbitrarily small. {\bf Middle and Right}: Same as Left but changing $x_0$ to $\pi/2$ and $3 \pi /2$ respectiverly, to illustrate how the same qualitative behaviour holds when the unitary is implemented at earlier and later times. \label{fig:epsilon V exmaple plot}}% PLOTTING INFO: plots generated from Mathematica code in Dropbox\Clock paper\Paper_calcs_Mischa.nb
\end{figure}

\begin{corollary}[Explicit form]\label{corr:explicit bound on system trace distance}
Let $V_0(x)\geq 0$ and have a unique global maximum in the interval $x\in[0,2\pi]$ at $x=x_0$ and $g$ be given by Eq. \eqref{eq:g in terms of V_0}. Let $\tilde\epsilon_V$ be such that
\be\label{eq:tilde ep V def}
1-\tilde\epsilon_V=\int_{x_{vl}}^{x_{vr}} dx\, V_0(x+x_0)
\ee
for some $-\pi\leq x_{vl}< 0< x_{vr}\leq \pi$, where, for simplicity, we assume $x_{vl}=- x_{vr}$. Let the initial \gClock~state be centered at zero; $k_0=0$, and introduce the parameter $0< \gamma_\psi\leq 1$. Furthermore, let $x_0$ be such that $x_{vr}+\pi \gamma_\psi  \leq x_0\leq 2\pi-x_{vr}-\pi \gamma_\psi$. Then, for all times $t$ satisfying 
\be\label{eq:t before uni applied}
\begin{split}
%0 &\leq x_0\leq 2\pi-x_{vr}-\pi \gamma_\psi\\
0&\leq t \,2 \pi/T_0\leq x_0-x_{vr}-\pi \gamma_\psi
\end{split}
\ee
or
\be\label{eq:t after uni applied}
\begin{split}
%x_{vr}+\pi \gamma_\psi  &\leq x_0\leq 2\pi-x_{vr}-\pi \gamma_\psi\\
x_0+x_{vr}+\pi \gamma_\psi &\leq t \,2 \pi/T_0\leq 2\pi+x_0 - x_{vr}-\pi\gamma_\psi,
\end{split}
\ee
%are satisfied, 
$\epsilon_V$ defined in Lemma \ref{lem:trace dist bound t-Ham Vs d dim clock}, is bounded by
\be\label{eq:epsilon V upper bound explicit version} 
\epsilon_V\leq 4\pi \left( (1+2\pi)A^2\frac{\me^{-2\pi\tilde\kappa\frac{d^2}{\sigma^2}}}{1-\me^{-4\pi\sqrt{\tilde\kappa}\,d/\sigma^2}}+2\tilde\epsilon_V (1+8\pi\tilde\epsilon_V)\right),\quad \tilde\kappa=
\begin{cases}
0 &\mbox{if } d\gamma_\psi/2\leq 1\\
(\gamma_\psi/2-1/d)^2 &\mbox{otherwise}
\end{cases} 
\ee
where $A$ is given by Eq. \eqref{eq:A normalised} and is a decreasing function of $\sigma$ (see  Section \ref{Normalizing the clock state} for bounds).
\end{corollary}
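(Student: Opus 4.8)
The plan is to start from the definition of $\epsilon_V$ in Lemma \ref{lem:trace dist bound t-Ham Vs d dim clock}, namely $\epsilon_V = \max_{\bar\kappa\in[0,1]}\sum_{k=k_0-d/2+\bar\kappa}^{k_0+d/2+\bar\kappa}(\epsilon_k^2+\epsilon_k)|\psi_\textup{nor}(k_0;k)|^2$ with $\epsilon_k = 2\pi|\int_0^{t\,2\pi/T_0}dx\,(V_0(x)-V_0(x+2\pi k/d))|$, and to split the sum over $k$ into two regimes: the ``central'' indices $k$ for which $2\pi k/d$ is small enough that the clock wavepacket sits away from the potential bump, and the ``tail'' indices for which $|\psi_\textup{nor}(k_0;k)|^2$ is exponentially suppressed. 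With $k_0=0$, write $k\,2\pi/d = \phi$; the geometric hypotheses on $x_0$, $x_{vr}$ and the spread parameter $\gamma_\psi$ are precisely chosen so that when $|\phi|\le \pi\gamma_\psi$ (the central region), both $\int_0^{t\,2\pi/T_0}V_0(x)dx$ and $\int_0^{t\,2\pi/T_0}V_0(x+\phi)dx$ are, for $t$ in the ranges \eqref{eq:t before uni applied} or \eqref{eq:t after uni applied}, either both essentially $0$ (before the bump) or both essentially $1$ (after the bump), up to the ``leakage'' $\tilde\epsilon_V$ defined by \eqref{eq:tilde ep V def}. Hence for central $k$ one gets $\epsilon_k \le 2\pi\cdot 2\tilde\epsilon_V = 4\pi\tilde\epsilon_V$ (the factor $2$ from the two integrals each leaking by $\tilde\epsilon_V$), so that $\epsilon_k^2+\epsilon_k \le 4\pi\tilde\epsilon_V(1+4\pi\tilde\epsilon_V)\cdot 2$ roughly, and summing against $\sum_k|\psi_\textup{nor}(k_0;k)|^2\le 1$ contributes the $2\tilde\epsilon_V(1+8\pi\tilde\epsilon_V)$ piece inside \eqref{eq:epsilon V upper bound explicit version}.

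For the tail indices, i.e. $|k|\gtrsim d\gamma_\psi/2$, I would bound $\epsilon_k$ crudely: since $V_0\ge 0$ and $\int_0^{2\pi}V_0 = \Omega\in[-\pi,\pi)$ with here $\Omega=1$ (by the normalization $g=2\pi V_0(\cdot)/T_0$, $\int_{t_i}^{t_f}g = 1$), each of the two integrals defining $\epsilon_k$ is at most $1$, so $\epsilon_k\le 2\pi\cdot 2 = 4\pi$ and $\epsilon_k^2+\epsilon_k\le 4\pi(1+4\pi)\le$ a constant of order $4\pi(1+2\pi)\cdot 2$; more carefully one writes $\epsilon_k^2+\epsilon_k$ and pulls out the worst constant so the prefactor matches $(1+2\pi)$ as in the statement. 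Then the sum over tail $k$ of $|\psi_\textup{nor}(0;k)|^2 = A^2 e^{-2\pi k^2/\sigma^2}$ (using $\sigma^2$ in place of $\sigma^2$ appropriately and $n_0$ irrelevant to the modulus) is a Gaussian-tail sum starting at $|k|\ge d\gamma_\psi/2$, which by the error-bound lemmas of Section \ref{Error Bounds} is at most $2A^2 e^{-2\pi\tilde\kappa d^2/\sigma^2}/(1-e^{-4\pi\sqrt{\tilde\kappa}d/\sigma^2})$ with $\tilde\kappa=(\gamma_\psi/2-1/d)^2$ — the $-1/d$ accounting for the shift of the summation index set $\mathcal{S}_d$ by up to $\bar\kappa\in[0,1]$, and $\tilde\kappa=0$ in the degenerate case $d\gamma_\psi/2\le 1$ where there simply is no tail region and the whole sum is treated as ``central''. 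Combining the central and tail contributions and absorbing the numeric constants gives exactly \eqref{eq:epsilon V upper bound explicit version}.

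The main obstacle I anticipate is the bookkeeping around the index set $\mathcal{S}_d(k_0+dt/T_0)$ and the shift variable $\bar\kappa$: Lemma \ref{lem:trace dist bound t-Ham Vs d dim clock} already reduced $\epsilon_V$ to a maximum over $\bar\kappa\in[0,1]$ of a sum over $d$ consecutive (shifted) integers centered near $0$, and one must check that for every such $\bar\kappa$ the split into $|k|\le \pi\gamma_\psi d/(2\pi) = d\gamma_\psi/2$ versus $|k|> d\gamma_\psi/2$ still respects the geometric conditions \eqref{eq:t before uni applied}–\eqref{eq:t after uni applied} uniformly, which is why the effective tail threshold becomes $\gamma_\psi/2-1/d$ rather than $\gamma_\psi/2$. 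The second delicate point is verifying that, for $t$ in the stated windows, the support of $x\mapsto V_0(x+\phi)$ over the integration range $[0,t\,2\pi/T_0]$ is either disjoint from $[x_0-x_{vr},x_0+x_{vr}]$ or contains it entirely, for all $|\phi|\le\pi\gamma_\psi$; this is a direct interval-arithmetic check using $x_{vr}+\pi\gamma_\psi\le x_0\le 2\pi-x_{vr}-\pi\gamma_\psi$ and periodicity of $V_0$, but it is the crux that makes $\epsilon_k$ small in the central region. Everything else is routine: bounding $V_0$'s integral by $1$, invoking the Gaussian-tail lemmas, and collecting constants.
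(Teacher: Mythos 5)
Your plan is correct and follows essentially the same route as the paper's proof: split the sum into the two Gaussian tails plus the central region, bound $\epsilon_k$ trivially on the tails (the paper uses $\epsilon_k\le 2\pi$ directly, so $\epsilon_k^2+\epsilon_k\le 2\pi(1+2\pi)$, which is where the $(1+2\pi)$ prefactor comes from) and apply the Gaussian-tail lemma with $\tilde\kappa=\min_{\bar\kappa}(\gamma_\psi/2-\bar\kappa/d)^2$, while in the central region the interval-arithmetic check you identify as the crux yields $\epsilon_k\le 4\pi\tilde\epsilon_V$ before the bump and $\epsilon_k\le 8\pi\tilde\epsilon_V$ after it (four boundary pieces each bounded by $\tilde\epsilon_V$ via the identity $\tilde\epsilon_V=\int_{x_{vr}-2\pi}^{x_{vl}}V_0(x+x_0)\,dx$), giving the $2\tilde\epsilon_V(1+8\pi\tilde\epsilon_V)$ term. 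The only loose ends are these constant-bookkeeping details, which you flag yourself and which do not affect the final bound.
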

\emph{Intuition}. If the potential is very peaked around $x_0$ and close to zero away from $x_0$, then one can choose $x_{vl}\approx x_{vr}\approx 0$ with $0 \leq \tilde \epsilon_V\ll 1$. Thus by choosing $\gamma_\psi\approx 0$, the intervals permitted for $x_0$ and $t$ by Eqs. \eqref{eq:t before uni applied} and Eqs. \eqref{eq:t after uni applied}, are almost the full range $[0,2\pi]$ and $[0,T_0]$ respectively, and only for times $t\approx x_0 T_0/(2\pi)$ will the error be large.  Eqs. \eqref{eq:t before uni applied} correspond to the times \textit{before} the unitary is applied, while  Eqs. \eqref{eq:t after uni applied} correspond to the times \textit{after} the unitary is applied.
\begin{remark}
The parameters $\gamma_\psi, x_{vr}$ can be chosen to depend on $t$ and/or $d$ in order to improve bounds if desired.
\end{remark}
\begin{proof}
The proof consists in writing the sum in Eq. \eqref{eq:def epsilon k} in terms of three contributions; the left Gaussian tail of $|\psi_\textup{nor}(0;k)|$, the right Gaussian tail of $|\psi_\textup{nor}(0;k)|$, and the contribution from the center of the Gaussian $|\psi_\textup{nor}(0;k)|$. These terms are then all individually bounded.\\
Noting that $0\leq\epsilon_k\leq 2\pi $ and that $|\psi_\textup{nor}(0;k)|$ is normalized, we have from Eq. \eqref{eq:def epsilon k}
\begin{align}\label{eq:ep V up bound coll proof}
\epsilon_V(t,d)\leq \max_{\bar\kappa\in[0,1]}\Bigg(&  \sum_{k=-d/2+\bar\kappa}^{-d\gamma_\psi/2+\bar\kappa}\left( \epsilon_k^2+\epsilon_k\right)|\psi_\textup{nor}(k_0;k)|^2+\sum_{k=d\gamma_\psi /2+\bar\kappa}^{d/2+\bar\kappa}\left( \epsilon_k^2+\epsilon_k\right)|\psi_\textup{nor}(k_0;k)|^2\\
&+\sum_{k=-d\gamma_\psi /2+1+\bar\kappa}^{d\gamma_\psi/2-1+\bar\kappa}\left( \epsilon_k^2+\epsilon_k\right)|\psi_\textup{nor}(k_0;k)|^2 \Bigg)\\
\leq  2\pi(1+2&\pi)\max_{\bar\kappa\in[0,1]}  \bigg(\sum_{k=-d/2+\bar\kappa}^{-d\gamma_\psi/2+\bar\kappa}|\psi_\textup{nor}(k_0;k)|^2+\sum_{k=d\gamma_\psi /2+\bar\kappa}^{d/2+\bar\kappa}|\psi_\textup{nor}(k_0;k)|^2\bigg)\label{eq:first line epsilon bound proof}\\
+ \max_{\bar\kappa\in[0,1]}&\max_{k\in\mathcal{I}_{\gamma_\psi}}\left( \epsilon_k^2+\epsilon_k\right),\label{eq:last line epsilon bound proof} 
\end{align}
where $\mathcal{I}_{\gamma_\psi}:=\{-d\gamma_\psi /2+1+\bar\kappa,\,-d\gamma_\psi /2+2+\bar\kappa,\ldots,d\gamma_\psi/2-1+\bar\kappa  \}$. \\
First we will bound the terms in line \ref{eq:first line epsilon bound proof}. By a change of variable, followed by using Eq. \eqref{eq:psi nor def}, we find
\begin{align}
&  2\pi(1+2\pi)\max_{\bar\kappa\in[0,1]}\bigg(\sum_{k=-d/2+\bar\kappa}^{-d\gamma_\psi/2+\bar\kappa}|\psi_\textup{nor}(k_0;k)|^2+\sum_{k=d\gamma_\psi /2+\bar\kappa}^{d/2+\bar\kappa}|\psi_\textup{nor}(k_0;k)|^2\bigg)\\ 
&= 2\pi(1+2\pi)A^2\max_{\bar\kappa\in[0,1]}\bigg(\sum_{x=0}^{d(1+\gamma_\psi)/2}\me^{-\frac{2\pi}{\sigma^2}(x-d\gamma_\psi/2+\bar\kappa)^2}+\sum_{x=0}^{d(1-\gamma_\psi)/2}\me^{-\frac{2\pi}{\sigma^2}(x+d\gamma_\psi/2+\bar\kappa)^2}\bigg)\\ 
&\leq 2\pi(1+2\pi)A^2\max_{\bar\kappa\in[0,1]}\bigg(\frac{\me^{-\frac{2\pi}{\sigma^2}(d\gamma_\psi/2-\bar\kappa)^2}}{1-\me^{-4\pi|d\gamma_\psi/2-\bar\kappa|/\sigma^2}}+\frac{\me^{-\frac{2\pi}{\sigma^2}(d\gamma_\psi/2+\bar\kappa)^2}}{1-\me^{-4\pi|d\gamma_\psi/2+\bar\kappa|/\sigma^2}}\bigg)\label{eq:used gaus tail bound here}\\
&\leq 4\pi(1+2\pi)A^2\frac{\me^{-2\pi\tilde\kappa\frac{d^2}{\sigma^2}}}{1-\me^{-4\pi\sqrt{\tilde\kappa}\,d/\sigma^2}},
\end{align}
with $\tilde \kappa:=\min_{\bar\kappa\in[0,1]} \{ (\gamma_\psi/2-\bar\kappa/d)^2, (\gamma_\psi/2+\bar\kappa/d)^2\}=\min_{\bar\kappa\in[0,1]}(\gamma_\psi/2-\bar\kappa/d)^2$ and where in line \ref{eq:used gaus tail bound here}, the bound from Lemma \ref{G0}  has been used.\\
Now we will bound the term in line \ref{eq:last line epsilon bound proof}. By the change of variables $y=x-x_0$, $y'=x+2\pi k/d-x_0$, Eq. \eqref{eq:def epsilon k} can be written in the form
\be\label{eq:epsilon k intermediate}
\epsilon_k=2\pi\left| \int_{-x_0}^{t\, 2\pi/T_0-x_0} dx\, V_0(x-x_0)-\int_{\frac{2\pi}{d}k-x_0}^{t\, 2\pi/T_0+\frac{2\pi}{d}k-x_0} dx \,V_0(x-x_0)\right|.
\ee
Taking into account that
\be 
1=\int_{-\pi}^\pi dx V_0(x+x_0),
\ee
together with Eq. \eqref{eq:tilde ep V def} and the periodicity of the potential, it follows that
\be\label{eq:tilde ep equal}
\tilde \epsilon_V=\int_{x_{vr}-2\pi}^{x_{vl}} dx V_0(x+x_0).
\ee
Thus from Eq. \eqref{eq:epsilon k intermediate}, we have that
$\epsilon_k\leq 4\pi \tilde \epsilon_V$ if
\begin{align}\label{eq:intermediate befoe uni proof}
\begin{split}
x_{vr}-2\pi\leq -x_0&\leq t 2\pi/T_0-x_0\leq x_{vl}\\
x_{vr}-2\pi\leq -x_0\leq 2\pi \frac{k}{d}-x_0 &\leq t 2\pi/T_0+2\pi\frac{k}{d}-x_0\leq x_{vl} \quad \forall\, k\in \mathcal{I}_{\gamma_\psi}.
\end{split}
\end{align}
Taking into account $t\geq 0$ and that $\min  \mathcal{I}_{\gamma_\psi}= \bar\kappa-d\gamma_\psi/2+1$,  $\max  \mathcal{I}_{\gamma_\psi}= \bar\kappa+d\gamma_\psi/2-1$,
Eqs. \eqref{eq:intermediate befoe uni proof} are implied by Eqs. \eqref{eq:t before uni applied}.
To derive Eqs. \eqref{eq:t after uni applied}, we first note that
\begin{align}\label{eq:ep k after uni int}
\epsilon_k  =&2\pi\Bigg| \int_{-x_0}^{x_{vl}} dx\, V_0(x+x_0)+\int_{x_{vl}}^{x_{vr}} dx\, V_0(x+x_0)+\int_{x_{vr}}^{t\, 2\pi/T_0-x_0} dx\, V_0(x+x_0)\\
&-\int_{\frac{2\pi}{d}k-x_0}^{x_{vl}} dx\, V_0(x+x_0)-\int_{x_{vl}}^{x_{vr}} dx\, V_0(x+x_0)-\int_{x_{vr}}^{t\, 2\pi/T_0+\frac{2\pi}{d}k-x_0} dx\, V_0(x+x_0)\Bigg|\\
\leq &2\pi\Bigg| \int_{-x_0}^{x_{vl}} dx\, V_0(x+x_0)\Bigg|+2\pi\Bigg|\int_{x_{vr}}^{t\, 2\pi/T_0-x_0} dx\, V_0(x+x_0)\Bigg|\\
&+
2\pi\Bigg|\int_{\frac{2\pi}{d}k-x_0}^{x_{vl}} dx\, V_0(x+x_0)\Bigg|+2\pi\Bigg|\int_{x_{vr}}^{t\, 2\pi/T_0+\frac{2\pi}{d}k-x_0} dx\, V_0(x+x_0)\Bigg|.
\end{align}
In order for $\epsilon_k$, $k\in \mathcal{I}_{\gamma_\psi}$ to be bounded by $8\pi \tilde \epsilon_V$, the following conditions derived from Eq. \eqref{eq:ep k after uni int} are sufficient:
\be\label{eq: constrinats 3 coll}
\begin{split}
x_{vr}-2\pi &\leq -x_0\leq x_{vl}\\
x_{vr}-2\pi &\leq \frac{2\pi}{d}k-x_0\leq x_{vl}\quad \forall \,k\in \mathcal{I}_{\gamma_\psi}
\end{split}
\ee
and
\be\label{eq: constrinats 4 coll}
\begin{split}
x_{vr} &\leq t2\pi/T_0-x_0\leq x_{vl}+2\pi\\
x_{vr} &\leq t2\pi/T_0+\frac{2\pi}{d}k-x_0\leq x_{vl}+2\pi\quad \forall \,k\in \mathcal{I}_{\gamma_\psi}.
\end{split}
\ee
Eqs. \eqref{eq: constrinats 3 coll}, \eqref{eq: constrinats 4 coll} are implied by Eq. \eqref{eq:t after uni applied}. Thus whenever Eqs. \eqref{eq:t before uni applied} or Eqs. \eqref{eq:t after uni applied} are satisfied, the last term in Eq.  \eqref{eq:ep V up bound coll proof} is bounded by
\be 
 \max_{\bar\kappa\in[0,1]}\max_{k\in\mathcal{I}_{\gamma_\psi}}\left( \epsilon_k^2+\epsilon_k\right)\leq 8\pi \tilde \epsilon_V( 8\pi \tilde \epsilon_V+1),
\ee
\end{proof}
Finally, via the following corollary, we will now show how up to a small correction, the bounds derived in Lemma \ref{lem:trace dist bound t-Ham Vs d dim clock} are the same for $g$ as defined in Section \ref{sec:Consequences of Quasi-Autonomous control}. The asymptotic bounds in the large $d$ limit for the fidelity of the unitary implementation of the system reported in Section \ref{sec:Consequences of Quasi-Autonomous control}, will then be derived in the next Section, \ref{sec: control exmaples}.
\begin{corollary}[Section \ref{sec:Consequences of Quasi-Autonomous control} form]\label{lem:new g}
	Let $V_0$ and $\tilde \epsilon_V$ satisfy the same conditions as in Corollary \ref{corr:explicit bound on system trace distance}, and let $g$ be the normalised pulse from Section \ref{sec:Implementing Energy preserving unitaries via a time dependent Hamiltonian} with support interval $[t_i,t_f]=[t_1,t_2]$ %with $t_1=T_0(x_0-x_{vr}-\pi\gamma_\psi)/(2\pi)$,  $t_2=T_0(x_0+x_{vr}+\pi\gamma_\psi)/(2\pi)$
	, then for all $t\in[0,t_1]\cup[t_2,T_0]$ and $0<t_1<t_2<T_0$,
	\be \label{eq: state dist in appendix in main text form}
	\| \rho_s(t)-\rho'_s(t)\|_1 \leq \sqrt{d_s \tr[\rho_s^2(0)]} \left( 2\varepsilon_v(t,d)+\varepsilon_v^2(t,d)+\varepsilon_s(t,d) \right),
	\ee 
	where 
	\begin{align}
	\varepsilon_s(t,d) :=&\epsilon_V(d)+2\pi T_0 \tilde \epsilon_V(2\pi T_0\tilde{\epsilon}_V+1),\\
	x_0 =& \frac{t_2+t_1}{2}\frac{2\pi}{T_0},
	\end{align}
	and $0<\gamma_\psi \leq 1\,$, $0<x_{vr} \leq \pi$ are any combination satisfying 
	\be 
	x_{vr}+\pi\gamma_\psi =\frac{t_2-t_1}{2}\frac{2\pi}{T_0}.
	\ee
	Furthermore, $\epsilon_V(d)$ is bounded by Eq. \eqref{eq:epsilon V upper bound explicit version}, and $\tilde{\epsilon}_V$ satisfies Eq. \eqref{eq:tilde ep V def}. $\rho_s(t)$ is given by Eq. \eqref{eq:rho t scho sol commute t depen} and $\rho'_s(t)=\tr_c[\rho'_{sc}(t)]$ by Eq. \eqref{eq:rho finite s c}.
\end{corollary}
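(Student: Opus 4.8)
The plan is to reduce this statement to Lemma \ref{lem:trace dist bound t-Ham Vs d dim clock} by comparing the two different choices of the pulse $g$. In Lemma \ref{lem:trace dist bound t-Ham Vs d dim clock} the pulse was taken to be the ``stretched potential'' $g(x) = (2\pi/T_0)V_0(2\pi x/T_0)$ with support $[0,T_0]$, whereas here $g$ is a genuine pulse supported on the narrower interval $[t_1,t_2]$ with $t_1 = T_0(x_0 - x_{vr} - \pi\gamma_\psi)/(2\pi)$ and $t_2 = T_0(x_0 + x_{vr} + \pi\gamma_\psi)/(2\pi)$. The key observation is that, by the hypotheses on $V_0$ inherited from Corollary \ref{corr:explicit bound on system trace distance} (non-negative, unique global maximum at $x_0$, and $\int_{x_{vl}}^{x_{vr}} V_0(x+x_0)\,dx = 1 - \tilde\epsilon_V$ with $x_{vl} = -x_{vr}$), the stretched potential already has all but a fraction $\tilde\epsilon_V$ of its mass concentrated inside $[t_1,t_2]$. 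Consequently, for $t \in [0,t_1]$ we have $\int_{t_i}^t g(x)\,dx \approx 0$ up to an error bounded by $\tilde\epsilon_V$, and for $t \in [t_2,T_0]$ we have $\int_{t_i}^t g(x)\,dx \approx 1$ up to the same error. The two pulses therefore produce system states differing by a controlled amount on exactly the time windows named in the Corollary.

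Concretely, I would run the proof of Lemma \ref{lem:trace dist bound t-Ham Vs d dim clock} verbatim up to the point where the error term $\epsilon_k$ appears in the general form of Eq. \eqref{eq:epsilon k lemma}, namely
\be
\epsilon_k = 2\pi\left| \int_{t_i}^t g(x)\,dx - \int_{k}^{k+td/T_0} V_d(x)\,dx \right|,
\ee
which holds for any admissible $g$, not just the stretched-potential choice. The quantity $\int_k^{k+td/T_0} V_d(x)\,dx$ is handled exactly as in Corollary \ref{corr:explicit bound on system trace distance}: splitting the sum over $k$ into the two Gaussian tails plus the central block $\mathcal{I}_{\gamma_\psi}$, the tails contribute the term $\epsilon_V(d)$ bounded by Eq. \eqref{eq:epsilon V upper bound explicit version}, while on the central block the integral of $V_d$ is within $8\pi\tilde\epsilon_V$ of the corresponding ``unshifted'' integral whenever $t$ lies in the windows \eqref{eq:t before uni applied} or \eqref{eq:t after uni applied} — and these are precisely $[0,t_1]$ and $[t_2,T_0]$ under the present choice of $t_1,t_2$. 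The only genuinely new piece is bounding $\left|\int_{t_i}^t g(x)\,dx - \{0 \text{ or } 1\}\right|$ for the new $g$; since $g$ is supported on $[t_1,t_2]$, this difference is exactly $0$ for $t \le t_1$, and for $t \ge t_2$ it is $\int_{t_2}^t g = 0$ as well because $g$ is normalized on $[t_1,t_2]$. Thus the new pulse contributes no additional error to $\epsilon_k$ on these windows beyond what was already accounted for.

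Assembling the pieces, one obtains $\epsilon_k \le \epsilon_V(d)\text{-type tail} + 8\pi\tilde\epsilon_V + (\text{window mismatch})$, and the extra $2\pi T_0\,\tilde\epsilon_V(2\pi T_0\tilde\epsilon_V + 1)$ absorbs the difference between the $V_d$-integral over the full period and the restricted pulse after one carries the $2\pi$ and $T_0$ factors through the change of variables; feeding $\sum_k(\epsilon_k^2 + \epsilon_k)|\psi_\textup{nor}(k_0;k)|^2 \le \varepsilon_s(t,d)$ into the bound \eqref{eq:trace dist rho rho finite and time} of Lemma \ref{lem:trace dist bound t-Ham Vs d dim clock} gives \eqref{eq: state dist in appendix in main text form}. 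The main obstacle I anticipate is purely bookkeeping: tracking the $2\pi/T_0$ rescaling between the $x$-variable (period $2\pi$) and the $t$-variable (period $T_0$) consistently through the definitions of $t_1,t_2$, the pulse normalization, and the two integrals in $\epsilon_k$, so that the constants in $\varepsilon_s$ come out as stated rather than off by factors of $T_0$ or $2\pi$. There is no new analytic difficulty — the exponential-in-$d$ decay is entirely inherited from Corollary \ref{corr:explicit bound on system trace distance} and Theorem \ref{movig through finite time}.
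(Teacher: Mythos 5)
Your proposal is correct, but it takes a genuinely different route from the paper's proof. The paper does \emph{not} re-open the proof of Lemma \ref{lem:trace dist bound t-Ham Vs d dim clock} with the general-$g$ form of $\epsilon_k$. Instead it keeps that Lemma (with the stretched-potential pulse $g_2$ of Eq. \eqref{eq:g in terms of V_0}) and Corollary \ref{corr:explicit bound on system trace distance} as black boxes, and adds a triangle inequality: writing $\epsilon_g(t)=\int_0^t(g_2-g_1)$ for the two pulses, it shows $\int_{t_1}^{t_2}g_2\geq 1-\tilde\epsilon_V$, hence $|\epsilon_g(t)|\leq\tilde\epsilon_V$ on $[0,t_1]\cup[t_2,T_0]$, and then bounds the trace distance between the two \emph{ideal} evolutions directly via the Frobenius norm of the phase mismatch $1-\me^{-\mi(\Omega_m-\Omega_n)\epsilon_g(t)}$, using $\sin a\leq|a|$, $-\cos a\leq a^2-1$; this is exactly where the extra term $2\pi T_0\tilde\epsilon_V(2\pi T_0\tilde\epsilon_V+1)$ in $\varepsilon_s$ originates. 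Your route — invoking the remark that Lemma \ref{lem:trace dist bound t-Ham Vs d dim clock} holds for general $g$ with $\epsilon_k$ given by Eq. \eqref{eq:epsilon k lemma}, and then redoing the tail-plus-central-block split of Corollary \ref{corr:explicit bound on system trace distance} for the new pulse — is legitimate (the Gaussian-tail term survives unchanged since $0\leq\epsilon_k\leq 2\pi$ for any normalised non-negative $g$, and on the central block both $\int_{t_1}^t g_1\in\{0,1\}$ and the shifted $V_d$-integral are within $O(\tilde\epsilon_V)$ of each other on the stated windows). What each approach buys: the paper's triangle-inequality argument is shorter and cleanly separates the finite-clock error from the pulse-substitution error, at the cost of a slightly looser constant ($T_0$ factors that are not needed); your one-pass argument would give a bound of the same structure with marginally different constants, so to land on the \emph{stated} $\varepsilon_s(t,d)$ you would still need to verify that your constants are dominated by $\epsilon_V(d)+2\pi T_0\tilde\epsilon_V(2\pi T_0\tilde\epsilon_V+1)$ — which is the bookkeeping you already flagged, and it does go through.
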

\begin{proof}
	Let $g_1$ denoted $g$ as defined in this Corollary and $g_2$ denote $g$ as defined via Eq. \eqref{eq:g in terms of V_0}. With these functions, define $\epsilon_g(t)=\int_0^t dx(g_2(x)-g_1(x))$, $\rho_{S,q}(t)=U_q(t)\rho_s(0) U_q^\dag(t),$ $U_q(t)=\me^{-\mi \hat H_s^{int}\int_0^t g_q(t)}$, $q=1,2$. First note
	\be 
	\int_{t_1}^{t_2}dx g_2(x)=\frac{2\pi}{T_0}\int_{t_1}^{t_2}dx V_0(x 2\pi/T_0) =\int_{-x_{vr}-\pi\gamma_\psi}^{x_{vr}+\pi\gamma_\psi}dx V_0(x+x_0)\geq \int_{-x_{vr}}^{x_{vr}}dx V_0(x+x_0)=1-\tilde\epsilon_V.
	\ee 
	Thus 
	\be 
	1=\int_0^{T_0} dx\, g_2(x) \geq \int_0^{t_1} dx\, g_2(x)+1-\tilde{\epsilon}_V +\int_{t_2}^{T_0} dx\, g_2(x)\implies 
	\tilde{\epsilon}_V\geq \begin{cases}
		\int_0^{t_1} dx\, g_2(x)\\
		\vspace{-0.3cm}\\
	\int_{t_2}^{T_0} dx \,g_2(x)
	\end{cases}
	\ee \label{eq:tilde epsilon V}
	Using Eq. \eqref{eq:tilde epsilon V}, $\|\cdot\|_1/\sqrt{d_s}\leq \|\cdot\|_F$, where $\|\cdot\|_F$ is the Frobenius norm for a $d_s$ dimensional system, and the bounds $\sin(a)\leq |a|$, $-\cos(a)\leq a^2-1$, $a\in\rr$, we find
	\begin{align}\label{eq:trigabgle term boud}
	\| \rho_{s,1}(t)-\rho_{s,2}(t)\|_1 &= \| \rho_{s}(0)-U_1^\dag U_2(t)\rho_{s}(0) U_2^\dag U_1(t)\|_1\leq \sqrt{d_s} \| \rho_{s}(0)-U_1^\dag U_2(t)\rho_{s}(0) U_2^\dag U_1(t)\|_F\\
	&= \sqrt{d_s} \sqrt{\sum_{m,n} \big{|}\rho_{m,n}(t)-\me^{-\mi (\Omega_m-\Omega_n)\epsilon_g(t)} \rho_{m,n}(t)\big{|}^2}\leq \sqrt{d_s\tr[\rho_s^2(0)]}\max_{m,n}|1-\me^{-\mi (\Omega_m-\Omega_n)\epsilon_g(t)}|\\
&	\leq 2\pi(2\pi|\epsilon_g(t)|+1)|\epsilon_g(t)|.
	\end{align}
	Using Eq. \eqref{eq:tilde epsilon V}, we thus conclude $|\epsilon_g(t)|\leq \tilde{\epsilon}_V T_0$, for $t\in[0,t_1]\cup[t_2,T_0]$. Finally, note that $\| \rho_s(t)-\rho_{s,1}(t)\|_1 \leq \| \rho_s(t)-\rho_{s,2}(t)\|_1 +\| \rho_{s,1}(t)-\rho_{s,2}(t)\|_1$ via the triangle inequality. $\| \rho_s(t)-\rho_{s,2}(t)\|_1$ was bounded in Eq. \eqref{eq:trace dist rho rho finite and time}, with $\epsilon_V(t,d)$ bounded in Corollary \ref{corr:explicit bound on system trace distance} for $t\in[0,t_1]\cup[t_2,T_0]$. Using these two  bounds and Eq. \eqref{eq:trigabgle term boud}, Eq, \eqref{eq: state dist in appendix in main text form} follows. To finalize the proof, note that $x_0=\frac{t_2+t_1}{2}\frac{2\pi}{T_0}$, $x_{vr}+\pi\gamma_\psi=\frac{t_2-t_1}{2}\frac{2\pi}{T_0}$. As such, the condition $x_{vr}+\pi \gamma_\psi  \leq x_0\leq 2\pi-x_{vr}-\pi \gamma_\psi$ from Lemma \ref{corr:explicit bound on system trace distance} is satisfied for all $0<t_1<t_2<T_0$  and Eq. \eqref{eq:epsilon V upper bound explicit version} is valid.
\end{proof}
\subsubsection{Examples}\label{sec: control exmaples}
We can now bound $\tilde{\epsilon}_V$ for the cosine potential defined in Section \ref{sec:Examples of Potential functions} with the aim of understanding the significance of Corollaries \ref{corr:explicit bound on system trace distance}, \ref{lem:new g}. We find the bound
\be\label{eq: up bound tilde epsilon V}
\tilde \epsilon_V\leq \frac{(\pi-x_{vr})\me^2}{4\pi\sqrt{\pi}}\sqrt{n}\cos^{2n}(x_{vr}/2), \quad\text{if } \cos(x_{vr})\leq 1-\frac{1}{n}.
\ee
A proof of Eq. \eqref{eq: up bound tilde epsilon V} can be found in \supp~\ref{sec:ep V ex pot bound}. By choosing $n=n(d)$ in different ways, one can achieve different decay rates. This choice will also effect the decay rates of $\varepsilon_v(t,d)$ defined in Eq. \ref{eq:ltwo epsilon nor control theorem}, the significance of which will be clarified in Section \ref{Clock Fidelity}. We give three examples of $n=n(d)$:
\begin{itemize}
\item[1)]\emph{Power law decay.} Let 
\be n=\gamma_1\frac{\ln d}{-2\ln \cos(x_{vr}/2)},\quad \gamma_1>0.
\ee
Then
\be 
\tilde \epsilon_V\leq \frac{(\pi-x_{vr})\me^2}{4\pi\sqrt{\pi}} \sqrt{\gamma_1\frac{\ln d}{-2\ln \cos(x_{vr}/2)}}\,d^{-\gamma_1},
\ee
if
\be\label{eq:gamma 1 exmaple condition}
\gamma_1\geq \frac{-2\ln \cos(x_{vr}/2)}{\left(1-\cos(x_{vr})\right)}\frac{1}{\ln d}.
\ee
Note that $\lim_{d\rightarrow \infty} \tilde\epsilon_V d^m=0$ for all $\gamma_1>m>0$. Using the definition of the clock rate parameter $\bar\upsilon$ (Def. \ref{def:decay rate params}), we find
\be 
\bar\upsilon =\frac{\ln d}{\ln (\pi \alpha_0 \sigma^2)} \frac{2 \me^2}{(2\pi)^{3/2} \sqrt{2}}\frac{\pi \kappa \alpha_0}{(-2\ln\cos (x_{vr}/2))^{3/2}}\gamma_1^{3/2} \sqrt{\ln d}. 
\ee
Thus from Eq. \eqref{eq:ltwo epsilon nor control theorem}, for $\sigma=\sqrt{d}$,%, and for sufficiently large $d$ (i.e. when $\mathcal{N}\geq 8$), 
\be\label{eq:epsilon decay rate exmaple 1} 
\varepsilon_v(t,d)=\bo\left(t\, poly(d)\,\exp\left(-\frac{\pi}{4}\frac{\alpha_0^2 \chi_1^2 \gamma_1^3}{\left( 1+ \chi_1 \gamma_1^{3/2}/\sqrt{\ln d} \right)^2}  \frac{d}{\ln d}\right)  \right),\quad \text{as }\,\, d\rightarrow\infty
\ee
with 
\be 
\chi_1:= \frac{(2\pi)^{3/2}\sqrt{2}}{2\,\me^2} \left(1+ \frac{\ln(\pi\alpha_0)}{\ln d}\right) \frac{\left(-2\ln(\cos x_{vr}/2)\right)^{3/2}}{\pi \kappa \alpha_0}.% \, \gamma_1^{3/2}.
\ee
Furthermore, using Eq. \eqref{eq:epsilon V upper bound explicit version}, for Eqs. \eqref{eq:trace dist rho rho finite and time}, \eqref{eq: state dist in appendix in main text form} one has that
\be 
\frac{\|\rho_s(t)-\rho'_s(t)\|_1}{\sqrt{d_s\tr[\rho_s^2(0)]}}=\bo\left(\varepsilon_v(t,d)\right)+\bo\left(\me^{-2\pi\tilde k (d/\sigma)^2}\right)+\bo\left(\tilde\epsilon_V\right),\quad \text{as }\,\, d\rightarrow\infty.
\ee
Thus since $\varepsilon_v(t,d)$ decays faster than any power of $d$, for $\sigma=\sqrt{d}$ and any $\gamma_\psi\in(0,1]$, we have that 
\be 
\lim_{d\rightarrow \infty} \|\rho_s(t)-\rho'_s(t)\|_1 \,d^m=0,
\ee
for all constants $\gamma_1>m>0$.

\item[2)] \emph{System error faster than power-law decay.}  By parametrizing $\gamma_1$ in example 1) to increase sufficiently slowly with $d$ rather than being constant, we can achieve that the trace distance error between $\rho_s$ and $\rho_s'$ (Eqs. \eqref{eq:trace dist rho rho finite and time} and \eqref{eq: state dist in appendix in main text form}) decays with $d$ faster than any power law in $d^m$ for the optimal clock decay $\sigma=\sqrt{d}$. The penalty of this improvement, will be a worse decay rate for $\varepsilon_v(t,d)$ than that of Eq. \eqref{eq:epsilon decay rate exmaple 1} . Let 
\be\label{eq:gamma 1 equlity eg}
\gamma_1=\frac{\pi}{4}\alpha_0^2 \chi_2^2\frac{d^{1/4}}{\sqrt{\ln d}},\quad \chi_2:=\chi_1\left(\frac{\pi}{4}\alpha_0^2\chi_1^2\right)^{-3/8}.
\ee
such that Eq. \eqref{eq:gamma 1 exmaple condition} is satisfied, then for $\sigma=\sqrt{d}$, 
\begin{align}\label{eq:epsilon decay rate exmaple 1,2} 
\varepsilon_v(t,d)&=\bo\left(t\, poly(d)\,\,\me^{- \frac{\pi}{4}\alpha_0^2 \chi_2^2 \,d^{1/4}\sqrt{\ln d}}\,\right),\quad \text{as }\,\, d\rightarrow\infty,\\
%\exp\left(-\frac{\pi}{4}\frac{\alpha_0^2 \chi_2^2}%{\left( 1+ \chi_2 \,d^{-3/8}(\ln d)^{-1/4} %\right)^2}\,  d^{1/4}\sqrt{\ln d}\right)  \right),\\
\tilde \epsilon_V&\leq \frac{(\pi-x_{vr})\me^2 \alpha_0\chi_2}{8\pi} \frac{(d\ln d)^{1/4}}{\sqrt{-2\ln \cos(x_{vr}/2)}}\,\me^{- \frac{\pi}{4}\alpha_0^2 \chi_2^2 \,d^{1/4}\sqrt{\ln d}},
\end{align}
and thus up to polynomial factors, both $\varepsilon_v(t,d)$ and $\tilde\epsilon_V$ have exponential decay in $d^{1/4}\sqrt{\ln d}$ and
\be 
\frac{\|\rho_s(t)-\rho'_s(t)\|_1}{\sqrt{d_s\tr[\rho_s^2(0)]}}=\bo\left(\left( t \,poly(d) +(d\ln d)^{1/4}\right)\me^{- \frac{\pi}{4}\alpha_0^2 \chi_2^2 \,d^{1/4}\sqrt{\ln d}}\,\right),
\ee
from which we conclude
\be\label{eq: faster than power decay trace dist} 
\lim_{d\rightarrow \infty} \|\rho_s(t)-\rho_s'(t)\|_1\, d^m=\lim_{d\rightarrow \infty} \varepsilon_v(t,d)\, d^m=0,
\ee
for all constant $m>0$ and for all $t$ such that Eq. \eqref{eq:t before uni applied} and Eq. \eqref{eq:t after uni applied} hold. Furthermore, the constraint of Eq. \eqref{eq:gamma 1 exmaple condition} can be re-written in terms of the new parametrizations to achieve, using Eqs. \eqref{eq:gamma 1 exmaple condition} and \eqref{eq:gamma 1 equlity eg} the constraint
\be 
\frac{4}{\pi\alpha_0^2\chi_2^2}\frac{-2\ln\cos(x_{vr}/2)}{1-\cos (x_{vr})}\leq d^{1/4}\sqrt{\ln d}.
\ee
This constraint is satisfied for all constant $x_{vr}\in(0,2\pi)$ for sufficiently large $d$ (recall that by definition, $\chi_2$ depends on $x_{vr}$). Similarly, $\gamma_\psi$ can be any constant in $(0,1)$.
\item[3)]\emph{Smallest clock error.} In this case, we will bound how quickly the error $\tilde \epsilon_V$ can decay with $d$  while maintaining the smallest disturbance imposed on the clock due to performing the unitary (see section \ref{Clock Fidelity} for details), that is to say, while maintaining a constant rate parameter $\bar\upsilon$, for the optimal clock decay $\sigma=\sqrt{d}$.\\
Let
\be 
n=\frac{\gamma_3}{-2\ln\cos(x_{vr})} (\ln d)^{2/3},\quad \gamma_3>0.
\ee
Then $\tilde\epsilon_V$ decays with increasing $d$ at a rate 
\be 
\tilde\epsilon_V\leq \frac{(\pi-x_{vr})\me^2}{4\pi\sqrt{\pi}} \sqrt{\frac{\gamma_3}{-2\ln\cos(x_{vr})}} (\ln d)^{2/3}  d^{-\gamma_3(\ln d)^{-1/3}}, \quad\text{if } \gamma_3\geq \frac{-2\ln \cos(x_{vr})}{\left(1-\cos(x_{vr})\right)}\frac{1}{(\ln d)^{2/3}},
\ee
and the clock rate parameter (Def. \ref{def:decay rate params}) is
\be 
\bar\upsilon =\frac{\ln d}{\ln (\pi \alpha_0 \sigma^2)} \frac{2 \me^2}{(2\pi)^{3/2} \sqrt{2}}\frac{\pi \kappa \alpha_0}{(-2\ln\cos (x_{vr}))^{3/2}}\gamma_3^{3/2}, 
\ee
which is bounded in the $d\rightarrow \infty$ limit. Thus for this parametrisation, $\varepsilon_v(t,d)$ has exponential decay in $d$ and consequently the clock's disturbance decays exponentially when $\sigma=\sqrt{d}$, (recall Theorem \ref{movig through finite time}).
\end{itemize}

\subsection{How to perform any timed unitary}
\label{sec:non energy preserving unitaries}
So far we have concerned ourselves with performing energy preserving unitaries only. Often, one may want to perform any unitary on the system $\mathcal{H}_s$. The clock provides a source of timing, but not energy.  To perform non-energy preserving unitaries, one will need an energy source. To achieve this, one can introduce a battery system with Hilbert space $\mathcal{H}_b$. Then, the clock can perform energy preserving unitaries on a state $\rho_{sb}$ on $\mathcal{H}_s\otimes\mathcal{H}_b$ according the procedure described in Section \ref{sec:Implementing Energy preserving unitaries with the finite}. The initial state of the battery is chosen such that it has enough energy to implement the unitary and will be such that an energy preserving unitary over $U_{sb}$ on a state $\rho_s\otimes\rho_b$ will induce any desired unitary $U'_{s}$, on $\rho_s$, namely $\tr_b[U_{sb} \rho_s\otimes\rho_b U_{sb}^\dag]= U'_{s} \rho_s U^{\prime \dag}_{s}$, where $[U_{sb},\hat H_s\otimes\id_b+\id_s\otimes \hat H_s]=0$, $[U'_{s},\hat H_s
]\neq 0$. The procedure which allows one to find a state $\rho_b\in\mathcal{S}(\mathcal{H}_b)$, a Hamiltonian $\hat H_b$ on $\mathcal{H}_b$ and an energy preserving unitary $U_{sb}$ on $\mathcal{H}_{sb}$, such that any desired unitary $U'_s$ on $\mathcal{H}_s$ can be performed, is the topic of \cite{Aberg}. Therefore, using the finite dimension clock together with a battery as discussed in this Section, one can achieve any timed non-energy preserving unitary on states on $\mathcal{H}_s$ using the results of \cite{Aberg}. This will be developed with explicit error bounds in an upcoming paper.

\subsection{Clock Fidelity}\label{Clock Fidelity}
If the potential $V_0(x)$ is zero for all $x\in\rr$, then the system on $\mathcal{H}_s$ and the clock on $\mathcal{H}_c$ will evolve in time as non-interacting systems. After a time $T_0$, as we have seen, the clock state will return to its initial state. However, whenever $V_0(x)\neq 0$ for some $x\in\rr$, the system-clock interaction will disturb the dynamics of the clock such that after a time $T_0$, the clock state will not have returned to its initial state. Lemma \ref{Lem:clock fidelity} in this Section bounds quantitatively how large this disturbance is, and how quickly it decays with clock dimension.% Before stating the Lemma, we provide some technical definitions\\
%\begin{definition}\label{def:fidelity}
%Given two density matrices $\rho$, $\sigma$, the \textit{Fidelity} $F$ is defined as
%\be 
%F(\rho,\sigma):=\tr\left[\sqrt{\sqrt{\rho}\sigma\sqrt{\rho}}\right].
%\ee
%\end{definition}
%\begin{remark}
%The fidelity $F(\cdot,\cdot)$ and trace distance $\|\cdot\|_1$ satisfy the inequalities,
%\be\label{eq:fidelity-dtrace distance relations}
%1-F(\rho,\sigma)\leq \frac{1}{2} \|\rho-\sigma\|_1\leq \sqrt{1-F^2(\rho,\sigma)}
%\ee
%and thus a bound on one of these quantities suffices to bound the other.
%\end{remark}
\begin{lemma}[Clock disturbance]\label{Lem:clock fidelity}
Let $\rho_c'(t):=\tr_s[\rho_{sc}'(t)]$, ($\rho_{sc}'$ defined in Eq. \eqref{eq:rho finite s c}), be the state of the clock at time $t$. We have the bound,
\be\label{eq:trace distance bound}
\frac{1}{2}\| \rho_c'(0)-\rho_c'(T_0)\|_1\leq \varepsilon_v(T_0,d),
\ee
where $\varepsilon_v(\cdot,\cdot)$ is defined in Eq. \eqref{eq:ltwo epsilon nor control theorem}.
\end{lemma}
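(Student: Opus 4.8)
The plan is to show that, restricted to the subspace $\mathcal{H}_s\otimes\ket{\Psi_\textup{nor}(k_0)}$, the full evolution $\me^{-\mi T_0\hat H_{sc}}$ is $\varepsilon_v(T_0,d)$-close to a unitary of the \emph{product} form $W_s\otimes\id_c$ whose clock factor is trivial, and then to propagate this closeness through the partial trace over the system. Because $\tr_s$ is trace-norm contractive, this immediately controls $\|\rho_c'(0)-\rho_c'(T_0)\|_1$.

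First I would simultaneously diagonalise the two commuting system operators, $\hat H_s=\sum_m E_m\ketbra{\phi_m}{\phi_m}$ and $\hat H^{int}_s=\sum_m\Omega_m\ketbra{\phi_m}{\phi_m}$ with $\Omega_m\in[-\pi,\pi)$ as in Section~\ref{sec:Implementing Energy preserving unitaries via a time dependent Hamiltonian}. Since $\hat H_{sc}=\hat H_s\otimes\id_c+\id_s\otimes\hat H_c+\hat H^{int}_s\otimes\hat V_d$ is block diagonal in the $\{\ket{\phi_m}\}$ basis, one has $\me^{-\mi T_0\hat H_{sc}}\bigl(\ket{\phi_m}\otimes\ket{\Psi_\textup{nor}(k_0)}\bigr)=\me^{-\mi T_0 E_m}\ket{\phi_m}\otimes\me^{-\mi T_0(\hat H_c+\Omega_m\hat V_d)}\ket{\Psi_\textup{nor}(k_0)}$, and in block $m$ the clock factor evolves under $\hat H_c$ plus the interaction potential built from $\Omega_mV_0$, whose period integral is $\Omega_m\Omega=\Omega_m\in[-\pi,\pi)$ (recall $\Omega=1$ in \eqref{eq:finite s c ham}). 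Applying Theorem~\ref{movig through finite time} with $t=T_0$, $\Delta=0$ and potential $\Omega_mV_0$ (equivalently the remark following that theorem) gives $\me^{-\mi T_0(\hat H_c+\Omega_m\hat V_d)}\ket{\Psi_\textup{nor}(k_0)}=\me^{-\mi\Omega_m}\ket{\Psi_\textup{nor}(k_0)}+\ket{\epsilon_m}$ with $\ltwo{\ket{\epsilon_m}}\le\varepsilon_v(T_0,d)$; this bound is uniform in $m$ once $\varepsilon_v$ is taken in the worst-case $|\Omega_m|=\pi$ convention of Lemma~\ref{lem:trace dist bound t-Ham Vs d dim clock}, using that the potential enters $\varepsilon_v$ only through the steepness parameter $b$, which grows with $|\Omega_m|\le\pi$, and that $\varepsilon_v$ is monotone in $b$. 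Setting $W_s:=\sum_m\me^{-\mi T_0 E_m}\me^{-\mi\Omega_m}\ketbra{\phi_m}{\phi_m}$ and expanding any normalised $\ket{\chi}=\sum_m c_m\ket{\phi_m}$, orthonormality of the $\ket{\phi_m}$ yields $\ltwo{\bigl(\me^{-\mi T_0\hat H_{sc}}-W_s\otimes\id_c\bigr)\bigl(\ket{\chi}\otimes\ket{\Psi_\textup{nor}(k_0)}\bigr)}^2=\sum_m|c_m|^2\ltwo{\ket{\epsilon_m}}^2\le\varepsilon_v^2(T_0,d)$.

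Next I would compare the joint states. Put $\sigma_{sc}:=(W_s\otimes\id_c)\,\rho_{sc}'(0)\,(W_s^\dagger\otimes\id_c)=\bigl(W_s\rho_s W_s^\dagger\bigr)\otimes\ketbra{\Psi_\textup{nor}(k_0)}{\Psi_\textup{nor}(k_0)}$, so $\tr_s[\sigma_{sc}]=\ketbra{\Psi_\textup{nor}(k_0)}{\Psi_\textup{nor}(k_0)}=\rho_c'(0)$. Writing the spectral decomposition $\rho_{sc}'(0)=\sum_k p_k\ketbra{\xi_k}{\xi_k}$, whose eigenvectors have the product form $\ket{\xi_k}=\ket{\chi_k}\otimes\ket{\Psi_\textup{nor}(k_0)}$, and using the rank-one estimate $\|\ketbra{a}{a}-\ketbra{b}{b}\|_1\le\ltwo{a}\,\ltwo{a-b}+\ltwo{a-b}\,\ltwo{b}$ together with unitarity of both $\me^{-\mi T_0\hat H_{sc}}$ and $W_s\otimes\id_c$ (so $\ltwo{a}=\ltwo{b}=1$ and $\ltwo{a-b}\le\varepsilon_v(T_0,d)$ from the previous step), one gets $\|\rho_{sc}'(T_0)-\sigma_{sc}\|_1\le\sum_k p_k\,2\varepsilon_v(T_0,d)=2\varepsilon_v(T_0,d)$. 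Contractivity of the partial trace in trace norm then gives $\|\rho_c'(T_0)-\rho_c'(0)\|_1=\|\tr_s[\rho_{sc}'(T_0)]-\tr_s[\sigma_{sc}]\|_1\le 2\varepsilon_v(T_0,d)$, i.e. the claimed $\frac12\|\rho_c'(0)-\rho_c'(T_0)\|_1\le\varepsilon_v(T_0,d)$. The only genuinely delicate point is the bookkeeping in the first step: each system-energy sector $m$ feels the \emph{rescaled} potential $\Omega_mV_0$ rather than $V_0$, and one must check the per-sector errors are all dominated by the single quantity $\varepsilon_v(T_0,d)$. Everything else is the standard ``replace the exact unitary by a product unitary and push through the partial trace'' argument, and the factor $2$ it produces is exactly the $\frac12$ in the statement.
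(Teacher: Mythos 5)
Your proof is correct, but it takes a different route from the paper's. The paper exploits the purity of the initial clock state: it writes $\rho_c'(T_0)=\sum_n\rho_{n,n}(0)\,\hat\Gamma_n(T_0)\ketbra{\Psi_\textup{nor}(k_0)}{\Psi_\textup{nor}(k_0)}\hat\Gamma_n^\dagger(T_0)$ with $\hat\Gamma_n(t)=\me^{-\mi t(\hat H_c+\Omega_n\hat V_d)}$, computes the quantum fidelity $F(\rho_c'(0),\rho_c'(T_0))=\bigl(\sum_n\rho_{n,n}(0)\,|\braket{\Psi_\textup{nor}(k_0)|\hat\Gamma_n(T_0)|\Psi_\textup{nor}(k_0)}|^2\bigr)^{1/2}$, and uses the unitarity identity $\braket{\Psi_\textup{nor}(k_0)|\epsilon_n}\me^{\mi\Omega_n}+\text{c.c.}=-\|\ket{\epsilon_n}\|_2^2$ to show that the overlap deviates from $1$ only at \emph{second} order in $\varepsilon_v$; the Fuchs--van de Graaf inequality $\tfrac12\|\rho-\sigma\|_1\le\sqrt{1-F^2}$ then converts $F\ge\sqrt{1-\varepsilon_v^2}$ into the stated bound. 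You instead approximate the global evolution on the relevant subspace by the product unitary $W_s\otimes\id_c$, use the rank-one estimate $\|\ketbra{a}{a}-\ketbra{b}{b}\|_1\le2\|a-b\|_2$ on each eigenvector of $\rho_{sc}'(0)$, and push the resulting $2\varepsilon_v$ through the contractive partial trace. Both arguments rest on the same sector-wise application of Theorem \ref{movig through finite time} (and both must note, as you do explicitly and the paper does via the ``evaluated for $\Omega=\pi$'' convention of Lemma \ref{lem:trace dist bound t-Ham Vs d dim clock}, that the per-sector potentials $\Omega_m V_0$ are uniformly dominated by the worst case). Your version is arguably more transparent and yields the stronger intermediate statement $\|\rho_{sc}'(T_0)-\sigma_{sc}\|_1\le2\varepsilon_v$ on the \emph{joint} state, which the fidelity argument does not provide; the paper's version keeps everything at the level of scalar overlaps and avoids introducing the auxiliary unitary $W_s$. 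The constants come out identical.
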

\emph{Intuition}
The bound on the trace distance given by Eq. \eqref{eq:trace distance bound}, decays quickly with $d$ while increasing linearly in $T_0$ as is evident from the definition of $\varepsilon_v(\cdot,\cdot)$. It bounds the disturbance which is \textit{only} originating from the implementation of the unitary on the system on $\mathcal{H}_s$, since $\| \rho_c'(0)-\rho_c'(T_0)\|_1=0$ if $V_0 (x)=0$ for all $x\in \rr$.
\begin{proof}
The result follows from Theorem \ref{eq: main eq in control theorem} and simple identities.\\
By taking the partial trace over the system Hilbert space $\mathcal{H}_s$, from Eq. \eqref{eq:rho finite s c} we achieve
%From Eq. \eqref{eq:rho finite s c}, it follows
\be\label{}
\rho_{sc}^{\prime}(t)= \sum_{m,n=1}^{d_s} \rho_{m,n}(t)\ketbra{\phi_m}{\phi_n}\otimes\ketbra{\bar\Phi_m(t)}{\bar\Phi_n(t)}_c,
\ee
where we have used definitions Eqs. \eqref{eq:rho s in mutual alphog basis}, \eqref{eq:Psi bar def eq}. Taking the partial trace over the system Hilbert space $\mathcal{H}_s$, we achieve
\be
\rho_{c}^{\prime}(t)= \sum_{n=1}^{d_s} \rho_{n,n}(0)\ketbra{\bar\Phi_n(t)}{\bar\Phi_n(t)}_c.
\ee
Thus the \textit{quantum Fidelity} $F$ %($F(\rho,\sigma):=\tr\left[\sqrt{\sqrt{\rho}\sigma\sqrt{\rho}}\right]$ for two quantum states $\rho$, $\sigma$)
 is
\begin{align}\label{eq: clock inital vs final fidelity in proof 1}
F( \rho_c'(0),\rho_c'(t))&=\tr\left[\sqrt{\sqrt{\rho_{c}^{\prime}(0)}\rho_{c}^{\prime}(t)\sqrt{\rho_{c}^{\prime}(t)}}\right]\\
&=\tr\left[\sqrt{\ketbra{\Psi_\textup{nor}(k_0)}{\Psi_\textup{nor}(k_0)}\left(\sum_{n=1}^{d_s} \rho_{n,n}(0)\hat\Gamma_n(t) \ketbra{\Psi_\textup{nor}(k_0)}{\Psi_\textup{nor}(k_0)}\hat\Gamma_n^\dag(t)\right)\ketbra{\Psi_\textup{nor}(k_0)}{\Psi_\textup{nor}(k_0)}}\right]\\
&=\tr\left[\sqrt{\ketbra{\Psi_\textup{nor}(k_0)}{\Psi_\textup{nor}(k_0)}\left(\sum_{n=1}^{d_s} \rho_{n,n}(0)\left|\bra{\Psi_\textup{nor}(k_0)}\hat\Gamma_n(t)\ket{\Psi_\textup{nor}(k_0)}\right|^2\right)}\right]\\
&=\sqrt{\sum_{n=1}^{d_s} \rho_{n,n}(0)\left|\bra{\Psi_\textup{nor}(k_0)}\hat\Gamma_n(t)\ket{\Psi_\textup{nor}(k_0)}\right|^2},
\end{align}
where we have defined
\be 
\hat\Gamma_n(t):= \me^{-\mi t( \hat{H}_c + \Omega_n \hat{V}_d)}.
\ee
Applying Theorem \ref{eq: main eq in control theorem}, we find
\begin{align}\label{eq:modulus sq of exp Gamma t}
\left|\bra{\Psi_\textup{nor}(k_0)}\hat\Gamma_n(T_0)\ket{\Psi_\textup{nor}(k_0)}\right|^2 &=\left(\me^{-\mi\Omega_n}+\braket{\Psi_\textup{nor}(k_0)|\epsilon_n}\right) \left(\me^{-\mi\Omega_n}+\braket{\Psi_\textup{nor}(k_0)|\epsilon_n}\right)^*\\
&=1+\braket{\Psi_\textup{nor}(k_0)|\epsilon_n}\me^{+\mi \Omega_n}+\braket{\epsilon_n|\Psi_\textup{nor}(k_0)}\me^{-\mi \Omega_n}+|\braket{\epsilon_n|\Psi_\textup{nor}(k_0)}|^2\\
&\geq 1+\braket{\Psi_\textup{nor}(k_0)|\epsilon_n}\me^{+\mi \Omega_n}+\braket{\epsilon_n|\Psi_\textup{nor}(k_0)}\me^{-\mi \Omega_n}.
\end{align}
However, 
\begin{align}
	&1=\left| \bra{\Psi_\textup{nor}(k_0)} \Gamma_n^\dagger(T_0) \Gamma_n(T_0) \ket{\Psi_\textup{nor}(k_0)} \right|^2 =
	\left( \bra{\Psi_\textup{nor}(k_0)} \me^{\mi\Omega_n} + \bra{\epsilon_n} \right) \left( \me^{-\mi\Omega_n} \ket{\Psi_\textup{nor}(k_0)} + \ket{\epsilon_n} \right) \\
	 \implies& \braket{\Psi_\textup{nor}(k_0)|\epsilon_n}\me^{+\mi \Omega_n}+\braket{\epsilon_n|\Psi_\textup{nor}(k_0)}\me^{-\mi \Omega_n} = -\braket{\epsilon_n|\epsilon_n} = -||\ket{\epsilon_n}||^2_2,
\end{align}
Thus from Eq. \eqref{eq:modulus sq of exp Gamma t}, we achieve
\be 
\left|\bra{\Psi_\textup{nor}(k_0)}\hat\Gamma_n(T_0)\ket{\Psi_\textup{nor}(k_0)}\right|^2 \geq 1-||\ket{\epsilon_n}||^2_2.
\ee
Plunging into Eq. \eqref{eq: clock inital vs final fidelity in proof 1},
\be 
F( \rho_c'(0),\rho_c'(T_0))\geq \sqrt{\sum_{n=1}^{d_s} \rho_{n,n}(0)\left(1-||\ket{\epsilon_n}||^2_2\right)}\geq \sqrt{\min_{n\in{1,\ldots,d_s}}\left(1-||\ket{\epsilon_n}||^2_2\right)}=\sqrt{1- \varepsilon_v^2(T_0,d)},
\ee
where $\varepsilon_v(\cdot,\cdot)$ is defined in Eq. \eqref{eq:ltwo epsilon nor control theorem}. Thus we conclude the Lemma using the well known relationship between Fidelity and trace distance, namey $\frac{1}{2} \|\rho-\sigma\|_1\leq \sqrt{1-F^2(\rho,\sigma)}$, for two quantum states $\rho$, $\sigma$.%Eq. \eqref{eq:fidelity-dtrace distance relations}.
\end{proof}

\onecolumngrid
\section{Commutator relations for the \gClock~states}\label{appendixcommutator}
%\subsection{Proof of Quasi-Canonical commutator}
For simplicity, in this section we will assume the dimension $d$ to be odd, and shift spectrum of $\hat H_c$ and $\hat t_c$ to be centered at zero, giving us
\begin{align}
	\hat{H}_c &= \sumnsym n \frac{2\pi}{T_0} \ketbra{E_n}{E_n}, & \hat{t}_c &= \sum_{k = -\frac{d-1}{2}}^{+\frac{d-1}{2}} k \frac{T_0}{d} \ketbra{\theta_k}{\theta_k}.
\end{align}
The shifting of the spectrum clearly has no physical effect on the dynamics nor the commutator of these operators since the ground state and initial time can be sifted back to zero by simply adding a term proportional to the identity operator to the r.h.s. of $\hat H_c$ and $\hat t_c$. The mean energy of the state $n_0$ before defined on $n_0\in(0,d-1)$ in this section is now shifted to $n_0\in\left(-\frac{d}{2},\frac{d}{2}\right)$. We will also assume $k_0\in\left(-\frac{d}{2},\frac{d}{2}\right)$ in this section for simplicity and define.
%The state itself will be of Gaussian form, with $k_0, n_0 \in \left( -\frac{d}{2}, \frac{d}{2} \right)$. It will also be convenient to introduce the parameters,
\begin{align}\label{eq:alpha_c def eq}
	\bar\alpha &= \abs{\frac{2n_0}{d}}, & \bar\beta &= \abs{\frac{2 k_0}{d}}.
\end{align}

Thus both $\bar\alpha, \bar\beta \in [0,1)$. They are both measures of how close to the edge of the spectra of $\hat H_c$ and $\hat t_c$ the state is; c.f. similar measure Def. \eqref{Distance of the mean energy from the edge}.

\begin{theorem}[Quasi-Canonical commutation]\label{Quasi-Canonical commutation}
For all states $\ket{\Psi_\textup{nor}(k_0)}\in \Lambda_{\sigma,n_0}$, %with $k_0,n_0\in \left(-\frac{d}{2},\frac{d}{2}\right)$,
 the time operator $\hat t_c$ and Hamiltonian $\hat{H_c}$ satisfy the commutation relation
\begin{align}
	\left[ \hat{t}_c, \hat{H}_c \right] \ket{\Psi_\textup{nor}(k_0)} &= \mi \ket{\Psi_\textup{nor}(k_0)} + \ket{\epsilon_\textup{comm}},
\end{align}
where 
\begin{align}\label{eq: commutation finite dim theorem}
\|\ket{\epsilon_\textup{comm}}\|_2 \leq &\,  \epsilon_8^{co}+\frac{1}{2}\left( \epsilon_7^{co}+\epsilon_6^{co}+\pi d\, \epsilon_5^{co} \right)+\epsilon_4^{co}+\epsilon_3^{co}+\pi d \left( \epsilon_2^{co}+\epsilon_1^{co} \right)\\
=&\bo\left(d\sigma^{5/2}\right) \me^{-\frac{\pi}{4}\sigma^2(1-\bar\alpha)^2}+ \left( \bo\left(\frac{d^2}{\sigma^{5/2}}\right)+\bo\left(d\sigma^{1/2}\right) \right)\me^{-\frac{\pi}{4}\frac{d^2}{\sigma^2}(1-\bar\beta)^2}\\
&+\left( \bo\left(\frac{d^3}{\sigma^{5/2}}\right)+\bo\left(d^2\sigma^{3/2}\right) \right)\me^{-\frac{\pi}{4}\frac{d^2}{\sigma^2}},
\end{align}
in the limits $d\rightarrow \infty$, $(0,d)\ni\sigma\rightarrow\infty$. Furthermore, for the completely symmetric state (Def. \ref{def:clock stat classes}) centred at the origin, $\bar\alpha=\bar\beta=k_0=0$, we have the simplification $\|\ket{\epsilon_\textup{comm}}\|_2=\bo  \left( d^{9/4}\right) \me^{-\frac{\pi d}{4}} $. Full details for finite $d,\sigma$ can be found in $\epsilon_1^{co}$ given by Eq. \eqref{eq:ep1 comm bound}, $\epsilon_2^{co}$ by Eq. \eqref{eq:ep2 comm bound}, $\epsilon_3^{co}$ by Eq. \eqref{eq:ep3 comm bound}, $\epsilon_4^{co}$ by Eq. \eqref{eq:ep4 comm bound}, $\epsilon_5^{co}$ by Eq. \eqref{eq:ep5 comm bound}, $\epsilon_6^{co}$ by Eq. \eqref{eq:ep6 comm bound}, $\epsilon_7^{co}$ by Eq. \eqref{eq:ep7 comm bound}, and $\epsilon_8^{co}$ by Eq. \eqref{eq:ep8 comm bound}.
\end{theorem}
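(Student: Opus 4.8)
\textbf{Proof proposal for Theorem \ref{Quasi-Canonical commutation}.}

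The plan is to compute $[\hat t_c,\hat H_c]\ket{\Psi_\textup{nor}(k_0)}$ directly in the time basis and show it equals $\mi\ket{\Psi_\textup{nor}(k_0)}$ up to exponentially small corrections, by essentially the same machinery used for the quasi-continuity result (Theorem \ref{gaussiancontinuity}): convert finite sums over $\mathcal{S}_d(k_0)$ to infinite sums over $\zz$ (controlled by Lemma \ref{G0}), pass between the time and energy representations with the Poisson summation formula (Corollary \ref{poissonsummation}), and recognize the leading term as a derivative of the analytic Gaussian $\psi(k_0;\cdot)$. Concretely, since $\hat t_c$ is diagonal in $\{\ket{\theta_k}\}$ and $\hat H_c$ is diagonal in $\{\ket{E_n}\}$, I would write $\hat H_c\ket{\Psi_\textup{nor}(k_0)}$ in the time basis: $\braket{\theta_l|\hat H_c|\Psi_\textup{nor}(k_0)} = \frac{2\pi}{T_0}\frac{1}{d}\sum_{k\in\mathcal{S}_d(k_0)}\psi_\textup{nor}(k_0;k)\sum_{n}n\,\me^{-\mi 2\pi n(k-l)/d}$ (with the symmetric range for $n$). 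Then multiply by $\frac{T_0}{d}l$ for $\hat t_c\hat H_c$, and separately compute $\hat H_c\hat t_c$, and subtract. The key analytic identity is that $\sum_{n} n\,\me^{-\mi 2\pi n(k-l)/d}$, after extending the $k$-sum to $\zz$ and applying Poisson, turns into (a multiple of) $\partial_x\psi(k_0;x)|_{x=l}$ plus tail terms — exactly the computation carried out inside the proof of Lemma \ref{infinitesimaltimetranslation}. So the $m=0$ Poisson term of $[\hat t_c,\hat H_c]\ket{\Psi_\textup{nor}(k_0)}$ should reduce to $-\frac{T_0}{d}\cdot\frac{-\mi 2\pi}{T_0}\big(x\,\partial_x - \partial_x(x\,\cdot)\big)\psi$ acting appropriately, i.e. to $\mi\,\psi_\textup{nor}(k_0;l)$, giving the claimed leading term $\mi\ket{\Psi_\textup{nor}(k_0)}$.

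The error accounting then proceeds in stages, producing the eight error terms $\epsilon_1^{co},\dots,\epsilon_8^{co}$ cited in the statement. I expect these to split into three families, matching the three exponential rates in the theorem: (i) tails from truncating the energy sum / switching $\mathcal{S}_d$ near the edge of the \emph{energy} spectrum, giving $\me^{-\frac{\pi}{4}d^2(1-\bar\alpha)^2}$ (here $\bar\alpha = |2n_0/d|$ measures closeness of the mean energy to an edge; note the symmetric-range convention of this section replaces $\alpha_0$ by $1-\bar\alpha$); (ii) tails from truncating the time-basis sum near the edge of the \emph{time} spectrum, giving $\me^{-\frac{\pi}{4}\frac{d^2}{\sigma^2}(1-\bar\beta)^2}$ with $\bar\beta=|2k_0/d|$; and (iii) the ``wrap-around'' Poisson terms $m\neq 0$ plus the step-shift of $\mathcal{S}_d$, giving $\me^{-\frac{\pi}{4}\frac{d^2}{\sigma^2}}$. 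Each individual bound comes from the Gaussian-tail estimates of Section \ref{Error Bounds} (Lemmas \ref{G0}, \ref{G1}) applied to sums like $\sum |k|\,\me^{-\pi(k-k_0)^2/\sigma^2}$ and $\sum |n|\,\me^{-\pi\sigma^2(n-n_0)^2/d^2}$; the polynomial prefactors ($d\sigma^{1/2}$, $d\sigma^{5/2}$, $d^3/\sigma^{5/2}$, etc.) arise from the explicit powers of $d$, $2\pi/T_0$, and $l$ (bounded by $\sim d$) that multiply these sums, together with the normalization constant $A=\bo(\sigma^{-1/2})$ bounded in Section \ref{Normalizing the clock state}. Finally, since the whole computation only ever involves $\hat H_c$ paired with the dimensionful factor $T_0/d$ from $\hat t_c$, the $T_0$ dependence cancels and both the leading commutator and the error bound are $T_0$-independent, as remarked after the statement.

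The main obstacle, as in Lemma \ref{lem:the crucial fourierbound with potential}'s simpler cousin inside Lemma \ref{infinitesimaltimetranslation}, is bookkeeping rather than conceptual: one must carefully track which terms come from $\hat t_c\hat H_c$ versus $\hat H_c\hat t_c$ so that the exact $l\cdot(\text{sum})$ and $(\text{sum})\cdot l$ pieces combine into a clean single derivative, and ensure that the approximation ``finite sum $\to$ infinite sum $\to$ Poisson dual $\to$ derivative of $\psi$'' is applied in an order where each discarded piece is genuinely a Gaussian tail. A secondary subtlety is the $\mathcal{S}_d$-shift step (analogous to Lemma \ref{rangeshift}): evaluating the commutator requires the time-basis coefficients over a fixed range of $d$ consecutive integers, and one must check that replacing this range by the one naturally centered at $k_0$ costs only $\bo(A\,\me^{-\pi d^2/(4\sigma^2)})$, contributing to the $\epsilon_j^{co}$ of family (iii). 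I would also double-check the edge cases $n_0$ or $k_0$ near $\pm d/2$, where $1-\bar\alpha$ or $1-\bar\beta$ is small and the corresponding exponential decay degrades — this is exactly the phenomenon flagged in the Discussion, and the bound must remain non-trivial (decaying) as long as the state is not literally at the spectral edge. Once the leading term is pinned down and the eight tail bounds are assembled via the triangle inequality (the explicit linear combination with the $\tfrac12$ and $\pi d$ factors displayed in Eq. \eqref{eq: commutation finite dim theorem} reflecting how derivatives and the $l$-factors enter), the theorem follows.
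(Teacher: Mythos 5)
Your proposal follows essentially the same route as the paper's proof: compute $\hat t_c\hat H_c\ket{\Psi_\textup{nor}(k_0)}$ and $\hat H_c\hat t_c\ket{\Psi_\textup{nor}(k_0)}$ by alternating between the time and energy bases, converting finite sums to infinite ones via the Gaussian-tail lemmas, applying Poisson summation so that the $m=0$ term yields derivatives of $\psi(k_0;\cdot)$ (whence the $l\,\partial_x\psi$ pieces cancel and the $-\psi$ from $\partial_x(x\psi)$ survives as $\mi\ket{\Psi_\textup{nor}(k_0)}$), and collecting the eight tail errors — including the wrap-around of the $\hat t_c$ eigenvalues and the operator-norm factors $T_0/2$ and $\pi d/T_0$ in the final triangle-inequality assembly. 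This matches the paper's argument in both strategy and error accounting.
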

\emph{Intuition} It is well known that operators $\hat a$, $\hat b$ satisfying $[\hat a, \hat b]=\mi$ can only exist in infinite dimensions. The above results shows that, although this is the case, such operators can exist up to a small error | exponentially small in $d$ when the state $\ket{\Psi_\textup{nor}(k_0)}$ is symmetric (i.e., when $\sigma=\sqrt{d}$). Also see the main text and Sections \ref{idealcommutator}, \ref{Shortcomings of the time-states} for more discussions on the topic, physical intuition and how this result further relates our \gClock~states to the idealized clock case. We also observe that by definition, the l.h.s. of Eq. \eqref{eq: commutation finite dim theorem} is $T_0$ independent, and thus so is the error term $\ket{\epsilon_\textup{comm}}$.% This lines up with the intuition that the energy scale of the system only serves to set the resolution of the basis of time states, and does not truly affect the fundamental accuracy of the clock, which is rather solely upon the its dimension, as discussed in the discussion section of the main text. {\red refer Sandra}
\begin{remark}\label{rem:result3 Serge}
In a different context, discrete approximations to Weyl and canonically conjugate operators have been defined and approximations found for their standard deviation \cite{sergedft}. In \cite{sergedft}, it is conjectured that the approximate canonical commutator on states which are of a similar form to Eq. \eqref{gaussianclock_main}, approaches $\mi\hbar$ as $d\rightarrow \infty$. Theorem \ref{Quasi-Canonical commutation} not only verifies this statement, but shows that the error can be exponentially small.
\end{remark}
\begin{proof}
The proof will consist in going back and forth between representations of the state in the energy basis and the time basis, bounding summations over Gaussian tails while making crucial use of the Poisson summation formula to bound such approximations in the process. The identities used during the proof for the Poisson summation formula can be found in Section \ref{Poisson summation formula}.
For simplicity, we calculate the commutator for $k_0$ and $n_0$ both greater than zero, the general proof is analogous.

Applying the time operator on the state,
\begin{align}
	\hat{t}_c \ket{\Psi(k_0)} &= \sumt{l} \sumS{k}{k_0} \psi(k_0;k) \ket{\theta_k}.
\end{align}

Since $k_0 \in [ 0, d/2 ]$, for $k \in \mathcal{S}_d(k_0)$,
\begin{align}
	\hat{t}_c \ket{\theta_k} = \begin{cases}
					k \frac{T_0}{d} \ket{\theta_k} & \text{if $k < \frac{d}{2} $,} \\
					(k-d) \frac{T_0}{d} \ket{\theta_k} & \text{if $k > \frac{d}{2} $.}
							 			\end{cases}
\end{align}

Accordingly,
\begin{align}
	\hat{t}_c \ket{\Psi(k_0)} &= \sumS{k}{k_0} \frac{T_0}{d} k \psi(k_0;k) \ket{\theta_k} + \ket{\epsilon_1}, \\
	\text{where} \; \ket{\epsilon_1} &= - \sum_{k \in \mathcal{S}_d(k_0), k > \frac{d}{2}} T_0 \psi(k_0;k) \ket{\theta_k},
\end{align}
\be\label{eq:ep1 comm bound}
\epsilon_1^{co}:= \ltwo{\ket{\epsilon_1}}/T_0 %&< T_0 \sum_{k=\frac{d+1}{2}}^\infty A \abs{\psi(k_0;k)} \\
	< 
	\begin{cases}   2  A \frac{\errort}{1 - \errortd}  \mbox{ if } \sigma=\sqrt{d}\\
 2  A \frac{e^{-\frac{\pi d^2}{4 \sigma^2}(1-\bar\beta))^2}}{1 - e^{-\frac{\pi d}{\sigma^2}(1-\bar\beta))}} 	 \mbox{ otherwise}
\end{cases}
\ee

In anticipation of applying the Hamiltonian next, we move to the basis of energy eigenstates,
\begin{align}
	\hat{t}_c \ket{\Psi(k_0)} &= \sumnsym \sum_{k \in \mathcal{S}_d(k_0)} \frac{T_0}{d} k \psi(k_0;k) \frac{e^{-i 2\pi n k/d}}{\sqrt{d}} \ket{E_n} + \ket{\epsilon_1}.
\end{align}

We approximate the sum w.r.t. $k$ by the infinite sum and bound the difference,
\begin{align}
	\hat{t}_c \ket{\Psi(k_0)} &= \sumnsym \sum_{k \in \mathbb{Z}} \frac{T_0}{d} k \psi(k_0;k) \frac{e^{-i 2\pi n k/d}}{\sqrt{d}} \ket{E_n} + \ket{\epsilon_2} + \ket{\epsilon_1}, \\
	\text{where} \; \ket{\epsilon_2} &= \sumnsym \sum_{k \in \mathbb{Z}/\mathcal{S}_d(k_0)} \frac{T_0}{d} k \psi(k_0;k) \frac{e^{-i 2\pi n k/d}}{\sqrt{d}} \ket{E_n},% \\
%	\ltwo{\ket{\epsilon_2}} &< {\violet \left[ T_0 A %\sqrt{d} \left( 1 + \frac{\sigma^2}{\pi d} + \frac{\bar\beta}{1 %- e^{-\frac{\pi d}{\sigma^2}}} \right) e^{-\frac{\pi d^2}%{4\sigma^2}} \right] } \quad {\blue \left[ T_0 A %\sqrt{d} \left( 1 + \frac{1}{\pi} + \frac{\bar\beta}{1 - e^{-%\pi}} \right) e^{-\frac{\pi d}{4}} \right] }
\end{align}

\be\label{eq:ep2 comm bound}
\epsilon_2^{co}:= \ltwo{\ket{\epsilon_2}}/T_0 %&< T_0 \sum_{k=\frac{d+1}{2}}^\infty A \abs{\psi(k_0;k)} \\
	< 
	\begin{cases}   A \sqrt{d} \left( 1 + \frac{1}{\pi} + \frac{\bar\beta}{1 - e^{-\pi}} \right) e^{-\frac{\pi d}{4}} \mbox{ if } \sigma=\sqrt{d}\\
 A \sqrt{d} \left( 1 + \frac{\sigma^2}{\pi d} + \frac{\bar\beta}{1 - e^{-\frac{\pi d}{\sigma^2}}} \right) e^{-\frac{\pi d^2}{4\sigma^2}}	 \mbox{ otherwise}
\end{cases}
\ee

Applying the Poissonian summation formula on the infinite sum,
\begin{align}
	\hat{t}_c \ket{\Psi(k_0)} &= %\sumnsym \frac{T_0}{d\sqrt{d}} \sqrt{d} \sum_{m\in\mathbb{Z}} \frac{id}{2\pi} \frac{d}{dp} \tilde{\psi}(k_0;p) \bigg|_{p=n+md} \ket{E_n}.
		\frac{iT_0}{2\pi} \sumnsym \sum_{m\in\mathbb{Z}} \frac{d}{dp} \tilde{\psi}(k_0;p) \bigg|_{p=n+md} \ket{E_n} + \ket{\epsilon_2} + \ket{\epsilon_1}.
\end{align}

Applying the Hamiltonian,
\begin{align}
	\hat{H}_c \hat{t}_c \ket{\Psi(k_0)} &= %\sum_{j=0}^{d-1} j \frac{2\pi}{T_0} \ket{E_j}\!\bra{E_j} \frac{iT_0}{2\pi} \sumnsym \sum_{m\in\mathbb{Z}} \frac{d}{dp} \tilde{\psi}(k_0;p) \bigg|_{p=n+md} \ket{E_n}
		i \sumnsym \sum_{m\in\mathbb{Z}} n \frac{d}{dp} \tilde{\psi}(k_0;p) \bigg|_{p=n+md} \ket{E_n} + \hat{H}_c \left( \ket{\epsilon_2} + \ket{\epsilon_1} \right).
\end{align}

Finally, we shift back to the basis of time-states,
\begin{align}
	\hat{H}_c \hat{t}_c \ket{\Psi(k_0)} &= i \sum_{l \in \mathcal{S}_d(k_0)} \sumnsym \sum_{m\in\mathbb{Z}} n \frac{d}{dp} \tilde{\psi}(k_0;p) \bigg|_{p=n+md} \frac{e^{+i 2\pi nl/d}}{\sqrt{d}} \ket{\theta_l} + \hat{H}_c \left( \ket{\epsilon_2} + \ket{\epsilon_1} \right).
\end{align}

Combining the summations over the indices $n$ and $m$, and using $\sumnsym \sum_{m\in\mathbb{Z}} f(n+md) = \sum_{s\in\mathbb{Z}} f(s)$, and noting that
\begin{align}
	\sum_{m \in \mathbb{Z}} \abs{ m \frac{d}{dp} \tilde{\psi}(k_0;p) \bigg|_{p=n+md} } = 2 \pi A \frac{\sigma}{\sqrt{d}} \sum_{m \in \mathbb{Z}} e^{-\frac{\pi \sigma^2}{d^2}(n-n_0+md)^2} \abs{m} \abs{ \frac{\sigma^2}{d^2} (n-n_0+md) + i \frac{k_0}{d} },
\end{align}
we have
\begin{align}
	\hat{H}_c \hat{t}_c \ket{\Psi(k_0)} &= i \sum_{l \in \mathcal{S}_d(k_0)} \sum_{s\in\mathbb{Z}} s \frac{d}{dp} \tilde{\psi}(k_0;p) \bigg|_{p=s} \frac{e^{+i 2\pi sl/d}}{\sqrt{d}} \ket{\theta_l} + \ket{\epsilon_3} + \hat{H}_c \left( \ket{\epsilon_2} + \ket{\epsilon_1} \right), \\
	\text{where} \; \ket{\epsilon_3} &= -id \sum_{l \in \mathcal{S}_d(k_0)} \sumnsym \sum_{m\in\mathbb{Z}} m \frac{d}{dp} \tilde{\psi}(k_0;p) \bigg|_{p=n+md} \frac{e^{+i 2\pi nl/d}}{\sqrt{d}} \ket{\theta_l},
\end{align}
\begin{align}\label{eq:ep3 comm bound}
\epsilon_3^{co} &:= \ltwo{\ket{\epsilon_3}} %&< T_0 \sum_{k=\frac{d+1}{2}}^\infty A \abs{\psi(k_0;k)} \\
	\\&< 
	\begin{cases}  2\pi A d^2 \sqrt{d} \left( (1-\bar\alpha) + \frac{1}{\pi d} \left( 2 + \frac{1}{1 - e^{-\pi d (1-\bar\alpha)}} \right) + \frac{\bar\beta}{2} \left( 1 - \bar\alpha + \frac{1}{\pi d} + \frac{1+\bar\alpha}{1 - e^{-\pi d(1-\bar\alpha)}} \right) \right) e^{-\frac{\pi d}{4}(1-\bar\alpha)^2} \mbox{ if } \sigma=\sqrt{d}\\
 2\pi A \sigma d^2 \left( \frac{\sigma^2}{d} (1-\bar\alpha) + \frac{1}{\pi d} \left( 2 + \frac{1}{1 - e^{-\pi\sigma^2(1-\bar\alpha)}} \right) + \frac{\bar\beta}{2} \left( 1 - \bar\alpha + \frac{1}{\pi\sigma^2} + \frac{1+\bar\alpha}{1 - e^{-\pi\sigma^2(1-\bar\alpha)}} \right) \right) e^{-\frac{\pi\sigma^2}{4}(1-\bar\alpha)^2}  	 \mbox{ otherwise}
\end{cases}
\end{align}

Applying the Poissonian summation formula on the infinite sum w.r.t. $s$,
\begin{align}
	\hat{H}_c \hat{t}_c \ket{\Psi(k_0)} &= -i \sum_{l \in \mathcal{S}_d(k_0)} \sum_{m\in\mathbb{Z}}  \frac{d}{dx} \left( x \psi(k_0;x) \right)\bigg|_{x=l+md} \ket{\theta_l} + \ket{\epsilon_3} + \hat{H}_c \left( \ket{\epsilon_2} + \ket{\epsilon_1} \right).
\end{align}

Approximating the sum over $m$ by the $m=0$ term,
\begin{align}
	\hat{H}_c \hat{t}_c \ket{\Psi(k_0)} &= -i \sum_{l \in \mathcal{S}_d(k_0)} \psi(k_0;l) \ket{\theta_l} + i \sum_{l \in \mathcal{S}_d(k_0)} l \frac{d}{dx} \psi(k_0;x)\bigg|_{x=l} \ket{\theta_l} + \ket{\epsilon_4} + \ket{\epsilon_3} + \hat{H}_c \left( \ket{\epsilon_2} + \ket{\epsilon_1} \right), \\
	\text{where} \; \ket{\epsilon_4} &= -i \sum_{l \in \mathcal{S}_d(k_0)} \sum_{m\in\mathbb{Z}/\{0\}} \frac{d}{dx} \left( x\psi(k_0;x) \right)\bigg|_{x=l+md} \ket{\theta_l},
\end{align}
\be\label{eq:ep4 comm bound}
\epsilon_4^{co}:= \ltwo{\ket{\epsilon_4}} %&< T_0 \sum_{k=\frac{d+1}{2}}^\infty A \abs{\psi(k_0;k)} \\
	< 
	\begin{cases} dA \left( d (\pi+1) (1+\bar\beta) + 2 + \frac{2}{1 - e^{-\pi}} + \bar\alpha \left( d (\pi+1) + \frac{\pi\bar\beta}{1 - e^{-\pi}} \right) \right) e^{-\frac{\pi d}{4}} \mbox{ if } \sigma=\sqrt{d}\\
 dA \left( \left( \frac{\pi d^2}{\sigma^2} + d \right)(1+\bar\beta) + 2 + \frac{2}{1 - e^{-\frac{\pi d}{\sigma^2}}} + \bar\alpha \left( \pi d + \sigma^2 + \frac{\pi\bar\beta}{1 - e^{-\frac{\pi d}{\sigma^2}}} \right) \right) e^{-\frac{\pi d^2}{4\sigma^2}}	 \mbox{ otherwise}
\end{cases}
\ee

Consider now the alternate product $\hat{t}_c \hat{H}_c \ket{\Psi(k_0)}$. To begin with, we convert the state into the energy basis,
\begin{align}
	\ket{\Psi(k_0)} &= \sumnsym \sum_{k \in \mathcal{S}_d(k_0)} \psi(k_0;k) \frac{e^{-i2\pi nk/d}}{\sqrt{d}} \ket{E_n}.
\end{align}

Converting the summation over $k$ into the infinite sum and bounding the difference,
\begin{align}
	\ket{\Psi(k_0)} &= \sumnsym \sum_{k \in \mathbb{Z}} \psi(k_0;k) \frac{e^{-i2\pi nk/d}}{\sqrt{d}} \ket{E_n} + \ket{\epsilon_5}, \\
	\text{where} \; \ket{\epsilon_5} &= \sumnsym \sum_{k \in \mathbb{Z}/\mathcal{S}_d(k_0)} \psi(k_0;k) \frac{e^{-i2\pi nk/d}}{\sqrt{d}} \ket{E_n},
\end{align}
\be\label{eq:ep5 comm bound}
\epsilon_5^{co}:= \ltwo{\ket{\epsilon_5}} %&< T_0 \sum_{k=\frac{d+1}{2}}^\infty A \abs{\psi(k_0;k)} \\
	< 
	\begin{cases}   2\sqrt{d} A \frac{ e^{-\frac{\pi d}{4}}}{1 - e^{-\pi}} \mbox{ if } \sigma=\sqrt{d}\\
 2\sqrt{d} A \frac{ e^{-\frac{\pi d^2}{4\sigma^2}}}{1 - e^{-\frac{\pi d}{\sigma^2}}} 	 \mbox{ otherwise}
\end{cases}
\ee

Applying the Poissonian summation formula on the infinite sum,
\begin{align}
	\ket{\Psi(k_0)} &= \sumnsym \sum_{m \in \mathbb{Z}} \tilde{\psi}(k_0;n+md) \ket{E_n} + \ket{\epsilon_5}.
\end{align}

We now apply the Hamiltonian,
\begin{align}
	\hat{H}_c \ket{\Psi(k_0)} &= \sumnsym \sum_{m \in \mathbb{Z}} \frac{2\pi}{T_0} n \tilde{\psi}(k_0;n+md) \ket{E_n} + \hat{H}_c \ket{\epsilon_5}.
\end{align}

In anticipation of applying the time operator, we shift back to the basis of time-states,
\begin{align}
	\hat{H}_c \ket{\Psi(k_0)} &= \sum_{l \in \mathcal{S}_d(k_0)} \sumnsym \sum_{m \in \mathbb{Z}} \frac{2\pi}{T_0} n \tilde{\psi}(k_0;n+md) \frac{e^{+i2\pi nl/d}}{\sqrt{d}} \ket{\theta_l} + \hat{H}_c \ket{\epsilon_5}.
\end{align}

Combining the sums over $n$ and $m$,
\begin{align}
	\hat{H}_c \ket{\Psi(k_0)} &= \sum_{l \in \mathcal{S}_d(k_0)} \sum_{s \in \mathbb{Z}} \frac{2\pi}{T_0} s \tilde{\psi}(k_0;s) \frac{e^{+i2\pi sl/d}}{\sqrt{d}} \ket{\theta_l} + \ket{\epsilon_6} + \hat{H}_c \ket{\epsilon_5}, \\
	\text{where} \; \ket{\epsilon_6} &= -d \sum_{l \in \mathcal{S}_d(k_0)} \sumnsym \sum_{m \in \mathbb{Z}} \frac{2\pi}{T_0} m \tilde{\psi}(k_0;n+md) \frac{e^{+i2\pi nl/d}}{\sqrt{d}} \ket{\theta_l},
\end{align}
\be\label{eq:ep6 comm bound}
\epsilon_6^{co}:= \ltwo{\ket{\epsilon_6}} T_0 %&< T_0 \sum_{k=\frac{d+1}{2}}^\infty A \abs{\psi(k_0;k)} \\
	< 
	\begin{cases}  2\pi A d^2 \sqrt{d} \left( \frac{1-\bar\alpha}{2} + \frac{1}{2\pi d} + \left( \frac{1+\bar\alpha}{2} \right) \frac{1}{1 - e^{-\pi d(1-\bar\alpha)}} \right) e^{-\frac{\pi d}{4}(1-\bar\alpha)^2} \mbox{ if } \sigma=\sqrt{d}\\
 2\pi A d^2 \sigma \left( \frac{1-\bar\alpha}{2} + \frac{1}{2\pi\sigma^2} + \left( \frac{1+\bar\alpha}{2} \right) \frac{1}{1 - e^{-\pi\sigma^2(1-\bar\alpha)}} \right) e^{-\frac{\pi\sigma^2}{4}(1-\bar\alpha)^2}	 \mbox{ otherwise}
\end{cases}
\ee

Applying the Poissonian summation formula,
\begin{align}
	\hat{H}_c \ket{\Psi(k_0)} &= -\frac{id}{2\pi}  \frac{2\pi}{T_0} \sum_{l \in \mathcal{S}_d(k_0)} \sum_{m \in \mathbb{Z}} \frac{d}{dx} \psi(k_0;x) \bigg|_{x=l+md} \ket{\theta_l} + \ket{\epsilon_6} + \hat{H}_c \ket{\epsilon_5}.
\end{align}

Approximating the sum over $m$ by the $m=0$ term,
\begin{align}
	\hat{H}_c \ket{\Psi(k_0)} &= -\frac{id}{T_0} \sum_{l \in \mathcal{S}_d(k_0)} \frac{d}{dx} \psi(k_0;x) \bigg|_{x=l} \ket{\theta_l} + \ket{\epsilon_7} + \ket{\epsilon_6} + \hat{H}_c \ket{\epsilon_5}, \\
	\text{where} \; \ket{\epsilon_7} &= -\frac{id}{T_0} \sum_{l \in \mathcal{S}_d(k_0)} \sum_{m \in \mathbb{Z}/\{0\}} \frac{d}{dx} \psi(k_0;x) \bigg|_{x=l+md} \ket{\theta_l},
\end{align}
\be\label{eq:ep7 comm bound}
\epsilon_7^{co}:= \ltwo{\ket{\epsilon_7}}T_0 %&< T_0 \sum_{k=\frac{d+1}{2}}^\infty A \abs{\psi(k_0;k)} \\
	< 
	\begin{cases}   2\pi d A \left( 1 + \frac{1}{\pi} + \frac{\bar\alpha}{1 - e^{-\pi}} \right) e^{-\frac{\pi d}{4}} \mbox{ if } \sigma=\sqrt{d}\\
2\pi d A \left( \frac{d}{\sigma^2} + \frac{1}{\pi} + \frac{\bar\alpha}{1 - e^{-\frac{\pi d}{\sigma^2}}} \right) e^{-\frac{\pi d^2}{4\sigma^2}} 	 \mbox{ otherwise}
\end{cases}
\ee

We now apply the time operator,
\begin{align}
	\hat{t}_c \hat{H}_c \ket{\Psi(k_0)} &= -i \sumt{l} \sumS{k}{k_0} \frac{d}{dx} \psi(k_0;x) \bigg|_{x=k} \ket{\theta_l} + \hat{t}_c \left( \ket{\epsilon_7} + \ket{\epsilon_6} + \hat{H}_c \ket{\epsilon_5} \right).
\end{align}

As before, we split the sum case wise, we find
\begin{align}
	\hat{t}_c \hat{H}_c \ket{\Psi(k_0)} &= -i \sum_{l \in \mathcal{S}_d(k_0)} l \frac{d}{dx} \psi(k_0;x) \bigg|_{x=l} \ket{\theta_l} + \ket{\epsilon_8} + \hat{t}_c \left( \ket{\epsilon_7} + \ket{\epsilon_6} + \hat{H}_c \ket{\epsilon_5} \right), \\
	\text{where} \; \ket{\epsilon_8} &= id \sum_{l \in \mathcal{S}_d(k_0),l > d/2} \frac{d}{dx} \psi(k_0;x) \bigg|_{x=l} \ket{\theta_l},
\end{align}
\be\label{eq:ep8 comm bound}
\epsilon_8^{co}:= \ltwo{\ket{\epsilon_8}} %&< T_0 \sum_{k=\frac{d+1}{2}}^\infty A \abs{\psi(k_0;k)} \\
	< 
	\begin{cases}  \pi d A \left( 1 -\bar\beta + \frac{1}{\pi} + \frac{\bar\alpha}{1 - e^{-\pi(1-\bar\beta)}} \right) e^{-\frac{\pi d}{4}(1-\bar\beta)^2} \mbox{ if } \sigma=\sqrt{d}\\
 \pi d A \left( \frac{d}{\sigma^2}(1-\bar\beta) + \frac{1}{\pi} + \frac{\bar\alpha}{1 - e^{-\frac{\pi d}{\sigma^2}(1-\bar\beta)}} \right) e^{-\frac{\pi d^2}{4\sigma^2}(1-\bar\beta)^2}	 \mbox{ otherwise}
\end{cases}
\ee

We may thus calculate the commutator on the state,
\begin{align}
	\left[ \hat{t}_c, \hat{H}_c \right] \ket{\Psi(k_0)} &= i \ket{\Psi(k_0)} + \ket{\epsilon_\textup{comm}} , \\
	\text{where} \quad \ket{\epsilon_\textup{comm}} &= \ket{\epsilon_8} + \hat{t}_c \left( \ket{\epsilon_7} + \ket{\epsilon_6} + \hat{H}_c \ket{\epsilon_5} \right) - \left( \ket{\epsilon_4} + \ket{\epsilon_3} + \hat{H}_c \left( \ket{\epsilon_2} + \ket{\epsilon_1} \right) \right).
\end{align}

The operator norms of $\hat{t}_c$ and $\hat{H}_c$ are bounded by $T_0/2$ and $\pi d/T_0$ respectively, and we can thus bound the norm of the total error, 
\be 
\|\ket{\epsilon_\textup{comm}}\|_2 \leq \,  \epsilon_8^{co}+\frac{1}{2}\left( \epsilon_7^{co}+\epsilon_6^{co}+\pi d\, \epsilon_5^{co} \right)+\epsilon_4^{co}+\epsilon_3^{co}+\pi d \left( \epsilon_2^{co}+\epsilon_1^{co} \right).
\ee 
\end{proof}

\section{Conjectures and open questions}\label{sec:conjectures}
In this section we state a conjecture about the tightness of our results. This is based on numerical studies and some intuition. We will also discuss some open questions about the properties of the bounds.\\

\textit{Conjecture 1: tightness in exponential decay of clock quasi-continuity}. Based on numerical studies, we conjecture  that the norm of $\ket{\epsilon}$ in Theorem \ref{gaussiancontinuity}, does not decay super-exponentially for all $t\in[0,T_0]$. In other words, there does \textit{not} exist a $\ket{\Psi_\textup{nor}(k_0)}\in\Lambda_{\sigma,n_0}$ and an $f: \rr \rightarrow \rr^+$, with $\lim_{d\rightarrow \infty} d/f(d)=0$ such that
\be 
\lim_{d\rightarrow \infty} \|\ket{\epsilon}\|_2 \,\me^{f(d)}= 0 \quad \forall \,\,t\in[0,T_0],
\ee
where $T_0>0$ and $\ket{\epsilon}$ is defined in Theorem \ref{gaussiancontinuity}.\\

If the conjecture is indeed correct, it would prove the tightness of the exponential decay bound for symmetric states, the next related interesting questions would be whether this bound is only achievable for symmetric states and whether the factor of $\pi/4$ is optimal for completely symmetric states. Furthermore, since the family of \gClock~states $\ket{\Psi_\textup{nor}(k_0)}\in\Lambda_{\sigma,n_0}$ are minimum uncertainty states\footnote{Up to an exponentially decaying correction in clock dimension.}, we propose that out of all normalized states $\ket{\psi}$ in $\mathcal{H}_c$ satisfying $[\hat t_c,\hat H_c]\ket{\psi}=\mi\ket{\psi}+\ket{\epsilon}$;  their corresponding error $\|\ket{\epsilon}\|_2$ will decay, at most, exponentially fast in $d$. As such, we expect that the exponential decay in clock dimension is a fundamental limitation of finite dimensional clocks.\\

There are also related interesting questions concerning the continuous  quasi-control too. For example, we know that after one period, $t=T_0$, when the potential is zero, $V_0(x)=0,\;\;\forall \,x\in\rr$, that the clock is returned to its initial state $\ket{\bar\Psi_\textup{nor}(k_0,\Delta)}$. We know that whenever $V_0(x)\neq 0$ for some $x\in\rr$ that this is not true. While Theorem \ref{movig through finite time} upper bounds how small this error is, what is not so clear is how quickly the error $\|\ket{\epsilon}\|_2$ in Theorem \ref{movig through finite time} goes to zero in the limit that $\| V_0\|_2\rightarrow 0$. Such scaling is numerically challenging to estimate, and we thus do not propose an answer. e.g. it  would be interesting to know whether it is power law decay or exponential decay in $1/\|V_0\|_2$. 

\acknowledgments
\vspace{-0.5cm}
The authors would like to thank Sandu Popescu, Michael Berry, Micha\l~Horodecki, William D. Matthews, for helpful discussions. MW and JO acknowledge support from the EPSRC, Royal Society and the COST network (Action MP1209). RS acknowledges support from the Swiss National Science Foundation (grant
PP00P2 138917 and QSIT).

%\bibliography{Bibliography,common/refthermo,common/work,common/refjono,common/refjono2}
%\bibliographystyle{unsrt}

\newpage
% \appendix  %this is the ACTUAL appendix comand.
\begin{appendices} 
	
\section{The idealized momentum clock}\label{idealizedclock}
In this section we will elaborate further on the idealised clock discussed in Section \ref{sec:ideal}.

\bigskip
\begin{remark}\label{rem:Paulidiscussion}

Some authors disagree with Pauli's conclusion that no perfect physical time operator exists in quantum mechanics by pointing out that he only considered generators of Weyl pairs in his analysis \cite{wong}. Alternatively, one could demand the weaker condition that the time operator measures time within a finite time interval only, thus replacing $\forall t \in \rr$ with $\forall\, t \in [0,t_\textup{max}]$ in Eq. \eqref{eq:div hat t}. Under such conditions, $\hat t$ and $\hat H$ can form Heisenberg pairs and a time operator $\hat t$ with spectrum $[0,t_\textup{max}]$ with a bounded from below Hamiltonian $\hat H$ exists, however the restriction of the spectrum of $\hat t$ to a finite interval requires \textit{confinement}, that is, the probability of finding the wave-function at the boundaries $0$, $t_\textup{max}$ must vanish \cite{wong}. While mathematically such a construction exists, it is conceivable that such confinement can only physically be achieved when one has an infinite potential at the boundaries of the interval $[0,t_\textup{max}]$ in order to prevent quantum mechanical tunnelling, which again, would require infinite energy and is therefore unphysical (e.g. particle in a box \cite{reed1975methods}). Thus the statement still holds true that there cannot exist a perfect time operator without requiring either states of infinite energy or infinite potentials.

\end{remark}
\subsection{The ideal time operator and idealised clock}\label{The ideal time operator and clock}
Consider the idealised clock, described by the time operator $\hat t=\hat x_c$, and Hamiltonian $\hat H=\hat p_c$, where $\hat x_c$ and $\hat p_c$ are the canonically conjugate position and momentum operators of a free particle in one dimension with domain $D_0$ of infinitely differentiable functions of compact support on $L^2(\rr)$.\footnote{We have used this domain for simplicity, one can equip $\hat x_c$ and $\hat p_c$ with larger domains if desired.} All operators acting on the clock in this section and the following will be assumed to be on $D_0$. Due to the commutator,
\begin{equation}\label{idealcommutator}
-\mi [\hat{t},\hat{H}] = \id,
\end{equation}
it follows that in the Heisenberg picture, $\hat t$ satisfies%\footnote{Also see discussion and references in main text.}
\begin{equation}\label{eq:t op appendix}
\frac{d}{dt} \hat{t}(t) = \id, \quad\forall t\in \rr.
\end{equation}
%Since this is the only way that a measurement of $\hat{t}$ is a perfect description of the background time $t$.
%But via the Heisenberg equation of motion for observables, this is equivalent to
%\begin{equation}\label{idealcommutator}
%	-\mi [\hat{t},\hat{H}] = \id,
%\end{equation}
%where $\hat{H}$ is the Hamiltonian of the system. This is the canonical commutation relation shared by $\hat{x}$ and $\hat{p}$, and as pointed out by M. Stone and von Neumann (the Stone-von Neumann theorem, \cite{Stone,vonNeumann}), any operators that obey the canonical commutation relation are unitarily equivalent to $\hat{x}$ and $\hat{p}$.%\footnote{If one only requires Eq. \eqref{eq:t op appendix} to be satisfied, for $t\in[0,t_1]$, then there are other possibilities, as discussed in the main text. Since here we are interested in the idealised properties of the momentum clock, we will not discuss these here.}

%It thus suffices to take the idealised clock to have a Hamiltonian $\hat{H} = \hat{p}$, on $\mathcal{H}=\mathcal{L}^2(\rr)$ for which the time observable is simply position, $\hat{t} = \hat{x}$.

For the idealised clock the equation of motion \eqref{idealcommutator} may equivalently be expressed as a property of the state of the clock. In the Sch\"{o}dinger picture,
\begin{align}\label{idealcontinuity}
\braket{x|e^{-i\hat{H}t}|	\Psi} &= \braket{x-t|\Psi},% \\
%\psi(x,t) &= \psi(x-t,0).
\end{align}
where $\ket{x}$, $x\in\rr$ is a generalised eigenvector of $\hat x_c$.
This is true for any $t\in\rr$, and we label this property \emph{continuity}.% {\red refer to Sandra, argue that continuity is essential} 
%While the property does not exist for finite dimensional clocks. %{\red say in the beginning that finite means ``finite-dimensional"}

\subsection{The idealised quantum control}
%
%Quantum control requires more than accurate measurements of time, in addition the control must be able to influence another system without the intervention of an external observer, and at well-defined times.
%
%The idealised clock achieves this via the addition of an interaction potential between itself and an external system,
%\begin{equation}\label{idealcontrol}
%	\hat H_{total} = \hat{p} \otimes \id^{(s)} + \hat V^{(s)}(\hat{x}) + \id \otimes \hat{H}_0^{(s)}.
%\end{equation}
%
%The clock retains its continuity even with the interaction potential, as the potential depends only upon $\hat{x}$. Thus %{\red Tr operator symbol}
%\begin{equation}
%	\frac{d}{dt} \hat{x}\otimes\id^{(s)} =  -i [\hat{x}\otimes\id^{(s)},\hat H_{total}] = \id\otimes\id^{(s)}.
%\end{equation}
%
%The clock still moves at a constant speed, and implements the Hamiltonian $\hat V(x)$ on the system as it passes the point $\hat{x} = x$.
%
%In the Schr\"{o}dinger
Quantum control requires more than accurate measurements of time, in addition the control must be able to influence another system without the intervention of an external observer, and at well-defined times. In this Section we review a well-known result regarding the idealised clock which will be used in Section \ref{sec:Automation via the idealised clock} to prove the idealised clock's ability to control. This result is 
%In this section we review the well-known result
that the idealised clock remains continuous under the action of a potential, i.e.
\begin{equation}\label{idealregularity}
\braket{x|e^{-i(\hat{p}_c + V(\hat{x}_c))t}|	\Psi} = e^{-i \int_{x-t}^x V(x^\prime) dx^\prime} \braket{x-t|\Psi},
\end{equation}
$V\in D_0$ %Note the finite support function in $L^2(\rr)$ are also in $L(\rr)$ withour further assumptions
,\, $x,t\in\rr$.
Labelling the wavefunction $\braket{x|e^{-i(\hat{p}_c + V(\hat{x}_c))t}|\Psi}$ as $\psi(x,t)$, then \eqref{idealregularity} is equivalent to
\begin{equation}\label{eq:psi idealised control}
\psi(x,t) = \psi(x-t,0) e^{-i \int_{x-t}^x V(x^\prime) dx^\prime}.
\end{equation}

We proceed to verify that it is the solution to Schrodinger's equation for the clock,
\begin{equation}\label{schrodingerequation}
i \frac{\partial}{\partial t} \psi(x,t) = -i \frac{\partial}{\partial x} \psi(x,t) + V(x) \psi(x,t).
\end{equation}

Direct calculation gives
\begin{align}
\frac{\partial}{\partial t} \psi(x,t) = &- \left( \frac{\partial}{\partial x} \psi(x-t,0) \right) e^{-i \int_{x-t}^x V(x^\prime) dx^\prime} \nonumber\\
&- i V(x-t) \psi(x,t), \\
\frac{\partial}{\partial x} \psi(x,t) = &+ \left( \frac{\partial}{\partial x} \psi(x-t,0) \right) e^{-i \int_{x-t}^x V(x^\prime) dx^\prime} \nonumber\\
&- i \left( V(x) - V(x-t) \right) \psi(x,t),
\end{align}
from which it is easily verified that $\psi(x,t)$ does indeed satisfy \eqref{schrodingerequation}.

\section{The Salecker-Wigner-Peres finite clock - Introduction and shortcomings}\label{SWPclock}

\subsection{Hamiltonian, time-states, and the time operator}

In this section we review the finite clock introduced by H. Salecker and E.P. Wigner \cite{SaleckerWigner}, and studied by A. Peres \cite{Peres}. We review the features that make this the model of choice for finite-dimensional clocks, as well as some of its apparent difficulties in copying the behaviour of the idealised clock.

The Hamiltonian of the Salecker-Wigner clock is equally spaced among $d$ energy levels,
\begin{equation}\label{finiteHamiltonian_main}
\hat{H}_c = \sum_{n=0}^{d-1} n \omega \ketbra{E_n}{E_n}.
\end{equation}

The frequency $\omega$ determines both the energy spacing as well as the time period of the clock, $T_0 = 2\pi/\omega$. It is easily verifiable that $e^{-i \hat{H}_c T_0} = \id_c$, and thus any state of the clock repeats itself after time $T_0$. (We will use $T_0$ rather than $\omega$ wherever possible).

Given the Salecker-Wigner Hamiltonian, one can construct a basis of \textit{time-states}, $\left\{\ket{\theta_k}\right\}_{k=0}^{d-1}$, mutually unbiased w.r.t. the energy states,
\begin{align}\label{finitetimestates}
\ket{\theta_k} &= \frac{1}{\sqrt{d}} \sum_{n=0}^{d-1} e^{-i2\pi n k/d} \ket{E_n}.
\end{align}

This is precisely the \emph{discrete Fourier transform}, (D.F.T.). It will also be useful to extend the range of $k$ to $\zz$. Extending the range of $k$ in Eq. \eqref{finitetimestates_main} it follows $\ket{\theta_k}=\ket{\theta_{k \textup{ mod. } d}}$ for $k\in\zz$. One may switch back to the energy basis via the inverse D.F.T.,
\begin{equation}\label{finiteenergystates}
\ket{E_n} = \frac{1}{\sqrt{d}} \sum_{k=0}^{d-1} e^{+i 2\pi nk/d} \ket{\theta_k}.
\end{equation}

The basis of time-states is orthonormal, $\braket{\theta_{k^\prime}|\theta_k}=\delta_{kk^\prime}$. The label \emph{time states} is assigned to them because they rotate into each other in regular time intervals of $T_0/d$, i.e. 
\begin{equation}\label{timestaterotation}
e^{-i \hat{H}_c T_0/d} \ket{\theta_k} = \ket{\theta_{k+1}},\quad k\in\zz.
\end{equation}
The rotation is cyclic, meaning that $\ket{\theta_{d-1}}$ is rotated into $\ket{\theta_0}$.

Since any state of the clock may be expressed in the basis of time-states, the rotation property is true for every state,
\begin{equation}\label{finiteregularity}
\braket{\theta_k | e^{-i \hat{H}_c \,m\!\, T_0 /d} |\Psi} = \braket{\theta_{k-m} | \Psi}, \quad k,m\in\zz. 
\end{equation}

This motivates the definition of a \textit{Time operator} analagous to the Energy operator \eqref{finiteHamiltonian_main},
\begin{equation}\label{finitetimeoperator}
\hat{t}_c = \sum_{k=0}^{d-1} k \frac{T_0}{d} \ketbra{\theta_k}{\theta_k}.
\end{equation}

Each time state $\ket{\theta_k}$ is an eigenstate of the time operator, with the eigenvalue equal to the time taken to rotate from $\ket{\theta_0}$ to $\ket{\theta_k}$. Given the connection between canonical position and momentum operators, and the cannonical time and Hamiltonian operators discussed here, one may wander whether such a connection exists between the finite dimensional time operator $\hat t_c$ and Hamiltonian operator $\hat H_c$. There are finite dimensional analogues to the position and momentum operators. These are discussed by Serge Massar and Philippe Spindel in \cite{sergedft}. They derive uncertainty relations for the D.F.T., as well as commutator relations. In fact, when Theorem \ref{Quasi-Canonical commutation} is written in terms of the discreate position and momentum operators of \cite{sergedft}, it proves an open conjecture in \cite{sergedft}.

\bigskip
\subsection{Shortcomings of the time-states}\label{Shortcomings of the time-states}
The first problem with the time-states is that their behaviour is not \emph{continuous} in any sense, i.e. the regular rotation of one angle state into another \eqref{finiteregularity} is only true for particular time intervals, unlike the continuity of the idealised clock \eqref{idealcontinuity}.

For intermediate times, as calculated by A. Peres \cite{Peres}, a time-state spreads out to a superposition of a number of time states,
\begin{equation}
e^{-i \hat{H}_c x \,T_c/d } \ket{\theta_k} = \sum_{l=0}^{d-1} \frac{1}{d} \left( \frac{1 - e^{-i 2\pi (k+x-l)}}{1 - e^{-i 2\pi (k+x-l)/d}} \right) \ket{\theta_l},\quad x\in[0,d],\,\,\, k\in\zz,
\end{equation}
In Fig. \ref{peresbehaviour}, we compare the behaviour of the time-state against the idealised case, for $d=8$.

\begin{figure}[h]
	\includegraphics[scale=0.6]%[width=0.4\linewidth]
	{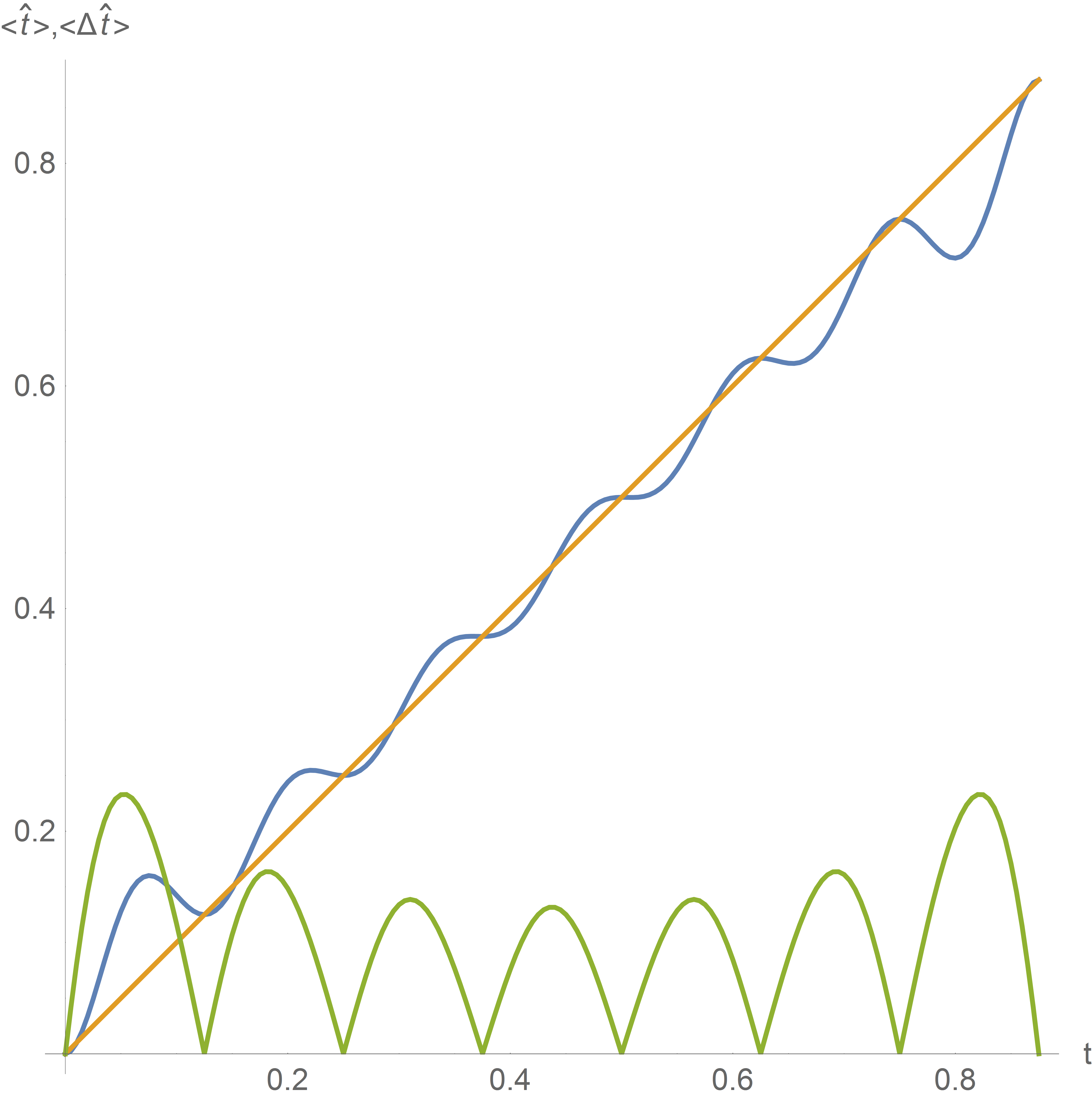}
	\caption{The expectation value (blue) and variance (green) of the time operator \eqref{finitetimeoperator} given the initial state $\ket{\theta_0}$, for $d=8$, $T_0 = 1$. The idealised case $\braket{\hat{t}} = t$ is in orange. \label{peresbehaviour}}
\end{figure}
%
%As an example, in Fig. \ref{peresdiscontinuity} we plot the distribution of $\left| \braket{\theta_l| e^{-i \hat{H}_c T_c/d *x} | \theta_0} \right|^2$ for $x = 1/2$.
%
%\begin{figure}[h!]
%\includegraphics[width=\linewidth]{peresdiscontinuity}
%\caption{The overlap of the state with the original time state, for $d=3$ (solid) and $d=11$ (dashed).\label{peresdiscontinuity}}
%\end{figure}
%
%If one calculates the average (over time), of the variance of the state in the time basis, it approaches $\sqrt{d}$, see Fig. \ref{avgvariance}
%
%{\red Figure}

A more fundamental shortcoming is that the time operator \eqref{finitetimeoperator} can never approach the ideal commutation relation \eqref{idealcommutator} with the Hamiltonian. As noted by Peres \cite{Peres},
\begin{equation}\label{eq:time sates commutator inner product}
\braket{\theta_{k} | [\hat{t}_c,\hat{H}_c] | \theta_k} = 0, \quad \forall \,\,k\in\zz, \quad \forall\,\, d\in\nn^+.
\end{equation}

This is consistent with the observation by H. Weyl \cite{Weyl} that the canonical commutation relation cannot be obeyed by finite dimensional operators.

%[[However, while the ideal commutator is impossible on the level of \emph{operators}, it is possible on the level of \emph{states}.]]
However, while the example of Eq. \eqref{eq:time sates commutator inner product} demonstrates that ideal commutator is impossible to achieve for operators with domain on the full $d$ dimensional Hilbert space of the clock (even in the $d\rightarrow \infty$ limit), it is possible to approximately achieve this after one restricts the domain of the operators to a sub domain. The domain defined by the linear span of the clock states (Def. \ref{def:Gaussian clock states}), which excludes pure angle states $\ket{\theta_k}$, is one such example, i.e.%More precisely there exist a class of states $\ket{\Psi}$ that are Gaussian superpositions of the time-states, on which the commutator approximates the ideal result \cite{sergedft}, i.e.
\begin{equation}\label{eq:commutator on gaussian space aprox}
[\hat{t}_c,\hat{H}_c] \ket{\Psi} \approx i \ket{\Psi},
\end{equation}
for clock states $\ket{\Psi}$, and where the approximation becomes exact in the $d\rightarrow \infty$ limit.
In Section \ref{appendixcommutator} a rigorous version of Eq. \eqref{eq:commutator on gaussian space aprox} was derived.

\section{Mathematical results used in the proofs}\label{mathidentities}
There are many mathematical results used in the proof of this \comm. Here we state the ones which are used repetitively in many proofs. When a  mathematical result is used only once or twice; or confined to a particular lemma/theorem, it has been stated directly in the proof.\\

%\twocolumngrid
\subsection{Fourier transform as a function of dimension $d$}

\begin{definition} [Standard fourier transform] \cite{classicalfourieranalysis} Given $f \in \ell(\rr)$\footnote{$\ell$ refers to the Schwartz space, i.e. the function space of functions all of whose derivatives decrease faster than any polynomial.\cite{classicalfourieranalysis} Gaussian functions are easily seen to be within this class.} its Fourier transform is defined as
\begin{equation}
	\bar{f}(\zeta) = \int_{\rr} f(x) e^{-i 2\pi x \zeta} dx.
\end{equation}
\end{definition}

In this article, we do not use this version of the Fourier transform, rather we work extensively with the following Fourier transform with an additional parameter $d$,
\begin{definition}
\begin{equation}\label{def:dft}
	\tilde{\psi}(p) = \frac{1}{\sqrt{d}} \int_{\rr} \psi(x) e^{-i 2\pi p x /d} dx,
\end{equation}
\end{definition}
which is closely related to the \emph{Discrete Fourier transform}.

Throughout the article, the words `Fourier transform' and the shorthand $\mathcal{F}_d$ will refer to \eqref{def:dft}.% The following are useful properties of the Fourier transform that we utilize repeatedly.
\begin{corollary}[Useful Fourier relations] If $\cft{\psi(x)} = \tilde{\psi}(p)$,
\begin{align}
	\psi(x) = \cfti{\tilde{\psi}(p)} &= \frac{1}{\sqrt{d}} \int_{-\infty}^\infty \tilde{\psi}(p) e^{+i2\pi px/d} dp
\end{align}
\begin{align}
	\cft{\psi(x)e^{-i 2\pi xa/d}} &= \tilde{\psi}(p+a), \\
	\cfti{\tilde{\psi}(p) e^{i 2\pi pb/d}} &= \psi(x+b)
\end{align}
\begin{align}
	\cft{x \psi(x)} &= \left( \frac{id}{2\pi} \right) \frac{d}{dp} \tilde{\psi}(p), \\
	\cfti{p \tilde{\psi(p)}} &= \left( -\frac{id}{2\pi} \right) \frac{d}{dx} \tilde{\psi}(x).\label{eq: inv Fourier = dev}
\end{align}
\end{corollary}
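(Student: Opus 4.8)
The plan is to reduce every identity to the classical Fourier transform $\bar f(\zeta)=\int_\rr f(x)e^{-i2\pi x\zeta}dx$ and then invoke its standard properties, all of which are valid here because $\psi\in\ell(\rr)$ (and hence $\tilde\psi\in\ell(\rr)$ as well, Fourier transform mapping Schwartz space to itself). The single reusable observation is the rescaling identity $\tilde\psi(p)=\frac{1}{\sqrt d}\,\bar\psi(p/d)$, immediate from comparing Definition~\ref{def:dft} with the defining integral of $\bar\psi$ after the substitution $\zeta=p/d$. I would state once at the outset that these regularity facts justify Fourier inversion and differentiation under the integral sign, after which every assertion is a one-line substitution.

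For the inversion formula I would start from the classical Fourier inversion theorem $\psi(x)=\int_\rr\bar\psi(\zeta)e^{+i2\pi x\zeta}\,d\zeta$, substitute $\zeta=p/d$ (so $d\zeta=dp/d$) and use $\bar\psi(p/d)=\sqrt d\,\tilde\psi(p)$ to obtain $\psi(x)=\frac{1}{\sqrt d}\int_\rr\tilde\psi(p)e^{+i2\pi px/d}\,dp=\cfti{\tilde\psi(p)}$. The modulation identities follow by inspection: inserting $e^{-i2\pi xa/d}$ into the defining integral of $\cft{\cdot}$ merges with $e^{-i2\pi px/d}$ into $e^{-i2\pi(p+a)x/d}$, giving $\cft{\psi(x)e^{-i2\pi xa/d}}=\tilde\psi(p+a)$; similarly, inserting $e^{+i2\pi pb/d}$ into the inversion integral produces the inversion formula evaluated at the point $x+b$, hence $\cfti{\tilde\psi(p)e^{+i2\pi pb/d}}=\psi(x+b)$.

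For the derivative identities I would differentiate under the integral sign, which is legitimate since the differentiated integrand $\psi(x)\,(-i2\pi x/d)\,e^{-i2\pi px/d}$ is dominated by the integrable function $(2\pi|x|/d)|\psi(x)|$ uniformly in $p$: this gives $\frac{d}{dp}\tilde\psi(p)=\frac{1}{\sqrt d}\int_\rr\psi(x)\bigl(-\tfrac{i2\pi x}{d}\bigr)e^{-i2\pi px/d}\,dx=-\tfrac{i2\pi}{d}\,\cft{x\psi(x)}$, which rearranges to $\cft{x\psi(x)}=\bigl(\tfrac{id}{2\pi}\bigr)\frac{d}{dp}\tilde\psi(p)$. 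The same computation applied to the inversion integral yields $\cfti{p\,\tilde\psi(p)}=\bigl(-\tfrac{id}{2\pi}\bigr)\frac{d}{dx}\psi(x)$ (the argument of the derivative on the right-hand side being $\psi$, as the duality with the previous line dictates). I do not expect any genuine obstacle: the only point requiring care is fixing the Schwartz-space hypotheses at the start so that inversion and interchange of differentiation and integration are unconditionally valid, and keeping track of the factors of $\sqrt d$ and $d$ introduced by the rescaling $\zeta=p/d$.
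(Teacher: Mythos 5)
Your proof is correct and is exactly the justification the paper leaves implicit: the corollary is stated without proof as a direct consequence of Definition~\eqref{def:dft} and classical Fourier theory, and your rescaling $\tilde\psi(p)=\frac{1}{\sqrt d}\,\bar\psi(p/d)$ together with the standard inversion, modulation, and differentiation-under-the-integral arguments is the intended route. You also correctly resolved the typographical slip in the last displayed identity, where the right-hand side should read $\frac{d}{dx}\psi(x)$ rather than $\frac{d}{dx}\tilde\psi(x)$.
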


\subsection{Poisson summation formula}\label{Poisson summation formula}

We first present the formula for the case of the standard Fourier transform, and then modify it to the version we need.
\begin{lemma}
Suppose that a function $f$ and its (standard) Fourier transform $\bar{f}$ belong to $L^1(\rr^n)$ and satisfy
\begin{equation}\label{poissoncriterion}
	\left| f(x) \right| + \left| \bar{f}(x) \right| \leq C \left( 1 + x \right)^{-n-\delta}
\end{equation}
for some $C,\delta>0$. Then $f$ and $\bar{f}$ are both continuous, and $\forall x \in \rr^n$ we have
\begin{equation}
	\sum\limits_{m\in\mathbb{Z}^n} \bar{f}(m) = \sum\limits_{m\in\mathbb{Z}^n} f(m).
\end{equation}
\end{lemma}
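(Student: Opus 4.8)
The plan is to run the classical periodization argument. First I would define the periodization of $f$,
\begin{equation}
	F(x) = \sum_{k\in\mathbb{Z}^n} f(x+k), \quad x\in\rr^n,
\end{equation}
and observe that the hypothesis $|f(x)|\leq C(1+x)^{-n-\delta}$ forces this series to converge absolutely and uniformly on compact sets (comparison with $\sum_k (1+|k|)^{-n-\delta}<\infty$), so $F$ is a well-defined continuous function, manifestly $\mathbb{Z}^n$-periodic. The same decay bound applied to $f$ itself (take a single term $k=0$, or note uniform convergence) shows $f$ is continuous; applying the hypothesis to $\bar f$ gives continuity of $\bar f$ as well, which is the first assertion.

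Next I would compute the Fourier coefficients of the periodic function $F$. For $m\in\mathbb{Z}^n$,
\begin{equation}
	c_m(F) = \int_{[0,1]^n} F(x)\, \me^{-\mi 2\pi m\cdot x}\, dx = \sum_{k\in\mathbb{Z}^n} \int_{[0,1]^n} f(x+k)\, \me^{-\mi 2\pi m\cdot x}\, dx,
\end{equation}
where interchanging sum and integral is justified by the uniform convergence on $[0,1]^n$. Using $\me^{-\mi 2\pi m\cdot x} = \me^{-\mi 2\pi m\cdot(x+k)}$ for $k\in\mathbb{Z}^n$ and unfolding the sum of integrals over translated cubes into a single integral over $\rr^n$, this becomes $\int_{\rr^n} f(y)\,\me^{-\mi 2\pi m\cdot y}\,dy = \bar f(m)$. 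Hence $c_m(F)=\bar f(m)$ for all $m$.

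Now I would invoke that $\sum_{m}|\bar f(m)| < \infty$, again by the decay hypothesis on $\bar f$, so the Fourier series $\sum_m c_m(F)\me^{\mi 2\pi m\cdot x}$ converges absolutely and uniformly to a continuous periodic function. Since $F$ is continuous and periodic, standard Fourier analysis (e.g. Fejér's theorem, or uniqueness of Fourier coefficients together with absolute convergence) gives that this series equals $F$ pointwise. Evaluating the identity $F(x) = \sum_m \bar f(m)\,\me^{\mi 2\pi m\cdot x}$ at $x=0$ yields
\begin{equation}
	\sum_{k\in\mathbb{Z}^n} f(k) = F(0) = \sum_{m\in\mathbb{Z}^n} \bar f(m),
\end{equation}
which is the claim. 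The only mildly delicate point is the justification that the Fourier series of the continuous periodic function $F$ actually represents $F$ (rather than merely being asymptotic to it); this is exactly where the absolute summability $\sum_m|\bar f(m)|<\infty$, guaranteed by condition \eqref{poissoncriterion}, does the work, so I do not anticipate any real obstacle here — it is a routine invocation of a standard theorem.
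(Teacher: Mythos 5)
The paper itself does not prove this lemma; it is quoted from the cited reference on classical Fourier analysis, so there is no in-paper proof to compare against. Your periodization argument is the standard proof of the statement, and its main line is correct: define $F(x)=\sum_{k}f(x+k)$, unfold the integral over $[0,1]^n$ to identify $c_m(F)=\bar f(m)$, use $\sum_m|\bar f(m)|<\infty$ (from the decay hypothesis) to get an absolutely and uniformly convergent Fourier series, and conclude by uniqueness of Fourier coefficients for continuous periodic functions, evaluating at $x=0$.

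The one step that does not hold as written is your justification of continuity. A pointwise bound $|f(x)|\leq C(1+x)^{-n-\delta}$ does not imply that $f$ is continuous, and ``take a single term $k=0$'' is not an argument; moreover, uniform convergence of $\sum_k f(\cdot+k)$ yields continuity of $F$ only if each summand is already continuous, which is precisely what is in question. The point is not cosmetic: if $f$ were altered on a null set containing $\mathbb{Z}^n$, the right-hand side $\sum_m f(m)$ would change while $\sum_m\bar f(m)$ would not, so the identity can only hold for the correct representative of $f$. The standard repair is: $\bar f$ is continuous because it is the Fourier transform of an $L^1$ function (dominated convergence); the decay hypothesis applied to $\bar f$ gives $\bar f\in L^1(\rr^n)$, so Fourier inversion shows that $f$ agrees almost everywhere with the continuous function $x\mapsto\int_{\rr^n}\bar f(\xi)\,\me^{\mi 2\pi x\cdot\xi}\,d\xi$, and one takes this continuous representative before running the periodization. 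With that replacement your argument closes; the remaining interchanges of sum and integral are justified exactly as you say.
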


Note that any function in the Schwartz space\cite{classicalfourieranalysis} satisfies \eqref{poissoncriterion}. In this article we use the Poisson summation only upon Gaussian functions or Gaussians multipled by functions of bounded derivatives, both of which are members of the Schwartz space.

\begin{corollary}[Poisson summation for  $\mathcal{F}_d$]\label{poissonsummation} If $\psi \in \ell(\rr)$,
\begin{align}
	\sum_{m \in \mathbb{Z}} \psi(m) &= \sqrt{d} \sum_{m \in \mathbb{Z}} \tilde{\psi}(md), \text{ and } \\
	\sum_{m \in \mathbb{Z}} \tilde{\psi}(m) &= \sqrt{d} \sum_{m \in \mathbb{Z}} \psi(md).
\end{align}
\end{corollary}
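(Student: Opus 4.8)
The plan is to reduce Corollary~\ref{poissonsummation} to the standard Poisson summation Lemma by absorbing the dimension-dependent factors through a rescaling of the integration variable. The key observation is that the $d$-dependent transform of \eqref{def:dft} is related to the standard Fourier transform $\bar\psi$ by $\tilde\psi(p) = \tfrac{1}{\sqrt d}\,\bar\psi(p/d)$, which follows immediately from comparing the two definitions. Before applying anything, I would note that since $\psi\in\ell(\rr)$, both $\psi$ and $\bar\psi$ lie in the Schwartz space, hence in $L^1(\rr)$ and satisfy the decay bound \eqref{poissoncriterion} with $n=1$; moreover the same remains true for the dilated function $x\mapsto\psi(dx)$ and its transform, since a nonzero linear change of variable preserves the Schwartz class. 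Thus the standard Lemma is legitimately applicable to each of the functions needed below.

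For the first identity, apply the standard Lemma directly with $f=\psi$, obtaining $\sum_{m\in\zz}\psi(m) = \sum_{m\in\zz}\bar\psi(m)$. Substituting $\bar\psi(m) = \sqrt d\,\tilde\psi(md)$ from the relation above then gives $\sum_{m\in\zz}\psi(m) = \sqrt d\sum_{m\in\zz}\tilde\psi(md)$, which is the first claim.

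For the second identity, set $g(x):=\psi(dx)$. A change of variables $u=dx$ in the defining integral shows $\bar g(\zeta) = \tfrac1d\int_\rr \psi(u)\,e^{-i2\pi u\zeta/d}\,du = \tfrac{1}{\sqrt d}\,\tilde\psi(\zeta)$. Applying the standard Lemma to $g$ yields $\sum_{m\in\zz} g(m) = \sum_{m\in\zz}\bar g(m)$, i.e.\ $\sum_{m\in\zz}\psi(dm) = \tfrac{1}{\sqrt d}\sum_{m\in\zz}\tilde\psi(m)$, and multiplying through by $\sqrt d$ produces the second claim.

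There is no genuine obstacle in this argument: the only points requiring care are tracking the $\sqrt d$ normalization consistently and checking that the Schwartz-class (or decay) hypothesis of the standard Lemma survives the dilation $x\mapsto dx$, so that Poisson summation may be invoked for $g$ as well as for $\psi$. An alternative self-contained route — periodizing $\psi$ with period $d$ and expanding the periodization in a Fourier series — would also work, but routing through the already-stated standard Lemma is the shortest path.
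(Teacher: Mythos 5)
Your proof is correct and is exactly the derivation the paper implicitly intends: the corollary is stated without proof as an immediate consequence of the standard Poisson summation lemma, and your rescaling identities $\tilde\psi(p)=\tfrac{1}{\sqrt d}\bar\psi(p/d)$ and $\bar g(\zeta)=\tfrac{1}{\sqrt d}\tilde\psi(\zeta)$ for $g(x)=\psi(dx)$, together with the observation that dilation preserves the Schwartz class, are precisely the steps needed. Nothing is missing.
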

\onecolumngrid

\subsection{Addition of errors}
\begin{lemma}\label{unitaryerroraddition} \emph{(norm non-increasing errors add linearly)}
	Consider a sequence of operators $\{\Delta_m\}_{m=1}^N$ on a finite dimensional Hilbert space $\mathcal{H}$ which are normalised or sub-normlaised w.r.t. the induced operator norm from the $l_2$ vector norm, i.e. $\| \Delta_m\|_2\leq 1$, and let $\{\ket{\Phi_m}\}_{m=0}^N$ be a sequence of (not necessarily normalised) pure states in $\mathcal{H}$ with the following property:
	\begin{equation}
	\norm{\,\ket{\Phi_m} - \Delta_m \ket{\Phi_{m-1}} } =  \epsilon_m,
	\end{equation}
	where $\|\cdot\|_2$ is the $l_2$ vector norm. Then $\forall\, n\in 1,2,3,\ldots, N$ we have
	\begin{equation}
	\norm{\, \ket{\Phi_n} - \Delta_n \Delta_{n-1}\ldots \Delta_1 \ket{\Phi_0} } \leq \sum_{m=1}^n \epsilon_m.
	\end{equation}
\end{lemma}

\begin{proof} By induction. The theorem is true by definition for $n=1$, and if the theorem is true for all $n$ up to $k$, then for $n=k+1$,
	\begin{align}
	\norm{ \; \ket{\Phi_{k+1}} - \Delta_{k+1}\Delta_{k}\ldots\Delta_1 \ket{\Phi_0} \; } &= \norm{ \; \ket{\Phi_{k+1}} - \Delta_{k+1}\ket{\Phi_k} + \Delta_{k+1} \left( \ket{\Phi_k} - \Delta_{k}\ldots\Delta_1 \ket{\Phi_0} \right) \; } \\
	&\leq \norm{ \; \ket{\Phi_{k+1}} - \Delta_{k+1}\ket{\Phi_k} \; } + \norm{ \; \Delta_{k+1} \left( \ket{\Phi_k} - \Delta_{k}\ldots\Delta_1 \ket{\Phi_0} \right) \; } \\
	&\leq \norm{ \; \ket{\Phi_{k+1}} - \Delta_{k+1}\ket{\Phi_k} \; } + \norm{\Delta_{k+1}}\, \norm{ \; \left( \ket{\Phi_k} - \Delta_{k}\ldots\Delta_1 \ket{\Phi_0} \right) \; }\label{line errors add linearly} \\
	&\leq \norm{ \; \ket{\Phi_{k+1}} - \Delta_{k+1}\ket{\Phi_k} \; } +  \norm{ \,  \ket{\Phi_k} - \Delta_{k}\ldots\Delta_1 \ket{\Phi_0}  } \\
	&= \epsilon_{k+1} + \sum_{m=1}^k \epsilon_m = \sum_{m=1}^{k+1} \epsilon_m,
	\end{align}
	where we used the Minkowski vector norm inequality and the equivalence between the induced $l_2$ operator norm and the property $\| \Delta_m\|_2\leq 1$ in line \eqref{line errors add linearly}.\\
\end{proof}

\onecolumngrid

\section{A proof for the upper bound to $\tilde \epsilon_V$}\label{sec:ep V ex pot bound}
Here we present a proof of Eq. \eqref{eq: up bound tilde epsilon V}; namely the inequality
\be\label{eq: up bound tilde epsilon V 2}
\tilde \epsilon_V\leq \frac{(\pi-x_{vr})\me^2}{4\pi\sqrt{\pi}}\sqrt{n}\cos^{2n}(x_{vr}/2), \quad\text{if } \cos(x_{vr})\leq 1-\frac{1}{n}.
\ee
\begin{proof}
	From Eqs. \eqref{eq:tilde ep equal} and \eqref{def: cosine pot} we find
	\begin{align}\label{eq: tilde ep V int written in convex form}
	\tilde \epsilon_V&=\int_{x_{vr}-2\pi}^{x_{vl}}dx A_c \cos^{2 n}\left(\frac{x}{2}\right)=A_c \left( \int_{x_{vr}-2\pi+2\pi}^{-\pi+2\pi} dx \cos^{2n}(x/2-\pi)-\int_{\pi}^{-x_{vl}}dx \cos^{2n}(-x/2) \right)\\
	&= 2 A_c\int_{x_{vr}}^\pi dx \cos^{2 n}(x/2)=2 A_c (\pi-x_{vr})\int_0^1 dy \cos^{2 n}\left( (y\, x_{vr}+(1-y) \pi)/2 \right).
	\end{align}
	We now note the convexity of the potential, namely on the interval $x\in(x_{vr},\pi)$,
	\be 
	\frac{d^2\;}{dx^2} \cos^{2n}(x/2)=\frac{n}{2}\cos^{2(n-1)}(x/2) \left(n(1-\cos(x))-1\right)\geq 0 \implies \cos(x)\leq 1-\frac{1}{n},
	\ee
	thus noting that $\max_{x\in[x_{vr},\pi]}\cos(x)=\cos(x_{vr})$, it follows that $\cos^{2n}(x/2)$ is convex on $x\in[x_{vr},\pi]$ if $\cos(x_{vr})\leq 1-1/n$ and $ \cos^{2 n}\left( (y\, x_{vr}+(1-y) \pi)/2 \right)\leq  y \cos^{2 n}(x_{vr}/2)+(1-y) \cos^{2 n}(\pi/2) =y \cos^{2 n}(x_{vr}/2)$, $y\in[0,1]$. Hence using Eq. \eqref{eq: tilde ep V int written in convex form}, we conclude
	\be 
	\tilde \epsilon_V\leq A_c(\pi-x_{vr})\cos^{2n}(x_{vr}/2), \quad\text{if } \cos(x_{vr})\leq 1-\frac{1}{n}.
	\ee
	Using Eq. \eqref{def: A c eq def for cos pot} with $\Omega=1$ and Sterling's formula, we find
	\be 
	A_c= \frac{2^{2 n}}{2\pi \binom{2n}{n}}=\frac{2 ^{2 n}}{2\pi}\frac{(n!)^2}{(2n)!}\leq \frac{\me^2}{4\pi\sqrt{\pi}}\sqrt{n},
	\ee 
	thus giving us Eq. \eqref{eq: up bound tilde epsilon V} (Eq. \eqref{eq: up bound tilde epsilon V 2} above).
\end{proof}

\section{Error Bounds}\label{Error Bounds}

\subsection{The norm of a discretized Gaussian (Normalization of $\ket{\Psi(k_0)}$)}
In this section only, for simplicity, we further restrict the range of $\sigma\in(0,d)$ and $d$ to $\sigma\in[1,d)$ and $d=2,3,4,\ldots$. 
\subsubsection{Normalizing $\ket{\Psi(k_0)}$}\label{Normalizing the clock state}
We calculate the norm of a state in the space $\Lambda_{\sigma,n_0}$:.
\begin{align}
	\ltwo{\ket{\Psi(k_0)}}^2 &= A^2 \sum_{k \in \mathcal{S}_d(k_0)} e^{-\frac{2\pi}{\sigma^2}(k-k_0)^2} \\
	&= A^2 \left( \sum_{k\in \mathbb{Z}} e^{-\frac{2\pi}{\sigma^2}(k-k_0)^2} + \epsilon_1 \right), \\
	\text{where} \abs{\epsilon_1} &= \sum_{k\in \mathbb{Z}/\mathcal{S}_d(k_0)} e^{-\frac{2\pi}{\sigma^2}(k-k_0)^2} < \frac{2 e^{-\frac{\pi d^2}{2\sigma^2}}}{1 - e^{-\frac{2\pi d}{\sigma^2}}}:=\bar\epsilon_1,\label{eq:norm bound ep bar 1}
\end{align}
using results from Sec. \ref{gaussianbounds}. Applying the Poissonian summation formula on the sum,
\begin{align}
	\ltwo{\ket{\Psi(k_0)}}^2 &= A^2 \left( \frac{\sigma}{\sqrt{2}} \sum_{m \in \mathbb{Z}} e^{-\frac{\pi \sigma^2 (md)^2}{2d^2}} e^{-i 2\pi (md) k_0/d} + \epsilon_1 \right) \\
	&= A^2 \left( \frac{\sigma}{\sqrt{2}} + \epsilon_2 + \epsilon_1 \right), \\
	\text{where} \abs{\epsilon_2} &\leq \frac{\sigma}{\sqrt{2}}\sum_{m\in\mathbb{Z}-\{0\}} e^{-\frac{\pi \sigma^2 m^2}{2}} < \frac{\sigma}{\sqrt{2}}\,\frac{2 e^{-\frac{\pi \sigma^2}{2}}}{1 - e^{-\pi \sigma^2}}:=\bar\epsilon_2.\label{eq:norm bound ep bar 2}
\end{align}

Thus for $\ket{\Psi_\textup{nor}(k_0)}$ defined in Def. \ref{def:Gaussian clock states}, we have that $A$ given by Eq. \eqref{eq:A normalised} satisfies
\be\label{eq:up low bounds for A normalize}
\left(\frac{2}{\sigma^2}\right)^{1/2}-\frac{\bar\epsilon_1+\bar\epsilon_2}{\frac{\sigma}{\sqrt{2}}\left(\frac{\sigma}{\sqrt{2}}+\bar\epsilon_1+\bar\epsilon_2 \right)}\leq A^2\leq \left(\frac{2}{\sigma^2}\right)^{1/2}+\frac{\bar\epsilon_1+\bar\epsilon_2}{\frac{\sigma}{\sqrt{2}}\left(\frac{\sigma}{\sqrt{2}}-\bar\epsilon_1-\bar\epsilon_2 \right)},
\ee
where $\bar\epsilon_1$, $\bar\epsilon_2$ are given by Eqs. \eqref{eq:norm bound ep bar 1} and \eqref{eq:norm bound ep bar 2} respectively. Note that since $\sigma\geq 1$, $d/\sigma \geq 2$, it follows that $\sigma/\sqrt{2}-\bar\epsilon_1-\bar\epsilon_2>0$ so that Eq. \eqref{eq:up low bounds for A normalize} has no singular points.

Of the two bounds $\bar\epsilon_1$ and $\bar\epsilon_2$, the first is the greater error for $\sigma>\sqrt{d}$, while the second dominates for $\sigma<\sqrt{d}$.% At the midpoint, if $\sigma=\sqrt{d}$,
%\begin{equation}
%	\bar\epsilon_1= \bar\epsilon_2 = \frac{2 e^{-\pi d/2}}{1 - e^{-2\pi}}.
%\end{equation}

%{\red Number of possibilites here. One is to fix $A = (2/\sigma^2)^\frac{1}{4}$, thus keeping the state slightly un-normalized, and keeping track of the error everywhere. The other is to define $A$ to be the normalization of some initial state, and thus the state at a later time via our method will be slightly unnormalized.}
\subsubsection{Re-normalizing $\ket{\Psi(k_0)}$}\label{Re-normalizing the clock state}
In some instances we will have to re-normalize the state $\ket{\Psi(k_0)}$. This will consist in upper bounding $A/A'$ where both $A$ and $A'$ satisfy Eq. \eqref{eq:A normalised} but for different values of $k_0$. We will thus want to upper bound
\be 
\epsilon_\textit{A}=\left|\frac{A}{A'}-1\right|.
\ee
re-writing Eq. \eqref{eq:up low bounds for A normalize} using the short hand $a-\epsilon_L\leq A^2\leq a+\epsilon_R$, we can write
\be 
\epsilon_\textit{A}\leq \sqrt{\frac{a+\epsilon_R}{a-\epsilon_L}}-1= \sqrt{1+\frac{\epsilon_L+\epsilon_R}{a-\epsilon_L}}-1\leq \frac{\epsilon_L+\epsilon_R}{a-\epsilon_L}= \frac{2\sqrt{2}}{\sigma}\,\frac{\bar\epsilon_1+\bar\epsilon_2}{1-2(\bar\epsilon_1+\bar\epsilon_2)/\sigma}<  \frac{20\sqrt{2}}{3}\,\frac{\bar\epsilon_1+\bar\epsilon_2}{\sigma},
\ee
where in the last line we have converted back to the notation of Eq. \eqref{eq:up low bounds for A normalize} and simplified the expression. Recall that $\bar\epsilon_1$, $\bar\epsilon_2$ are given by Eqs. \eqref{eq:norm bound ep bar 1} and \eqref{eq:norm bound ep bar 2} respectively.

\onecolumngrid

\subsection{Bounds on the tails of discrete Gaussians}\label{gaussianbounds}
In this section we state some well known useful bounds which will be used throughout the proofs in appendix. For $\Delta\in\rr$, we have the following bounds on the summations over Gaussian tails.
\begin{lemma}\label{G0}
\begin{equation}
	\sum_{n=a}^\infty e^{-\frac{(n-X)^2}{\Delta^2}} < \frac{e^{-\frac{(a-X)^2}{\Delta^2}}}{1-e^{-\frac{2(a-X)}{\Delta^2}}},\quad \text{for } a>X\in\rr
\end{equation}
\end{lemma}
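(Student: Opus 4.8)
The plan is to compare the discrete sum $\sum_{n=a}^\infty e^{-(n-X)^2/\Delta^2}$ against a geometric series. The key observation is that for consecutive terms, the ratio of successive summands decreases, so the whole tail is dominated by the first term times an appropriate geometric factor. Concretely, write $n = a + j$ with $j \geq 0$, so each term is $e^{-(a-X+j)^2/\Delta^2}$. Expanding the square gives $(a-X+j)^2 = (a-X)^2 + 2j(a-X) + j^2 \geq (a-X)^2 + 2j(a-X)$, where the inequality uses $j^2 \geq 0$. Since $a > X$, the quantity $a - X > 0$, so exponentiating (and noting the minus sign reverses the inequality) yields
\begin{equation}
e^{-(a-X+j)^2/\Delta^2} \leq e^{-(a-X)^2/\Delta^2}\, e^{-2j(a-X)/\Delta^2}.
\end{equation}

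Next I would sum over $j = 0, 1, 2, \ldots$. The right-hand side is $e^{-(a-X)^2/\Delta^2}$ times a geometric series with ratio $r = e^{-2(a-X)/\Delta^2}$, which satisfies $0 < r < 1$ precisely because $a - X > 0$. Hence
\begin{equation}
\sum_{n=a}^\infty e^{-(n-X)^2/\Delta^2} \;\leq\; e^{-(a-X)^2/\Delta^2} \sum_{j=0}^\infty e^{-2j(a-X)/\Delta^2} \;=\; \frac{e^{-(a-X)^2/\Delta^2}}{1 - e^{-2(a-X)/\Delta^2}}.
\end{equation}
To get the strict inequality claimed in the statement, I would note that the bound $j^2 \geq 0$ is strict for all $j \geq 1$, so at least one term in the comparison is strict, giving a strict inequality for the sum as a whole (the $j=0$ terms agree, but the $j \geq 1$ terms are strictly smaller).

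There is no real obstacle here — the only mild care needed is to make sure the geometric ratio is genuinely less than $1$, which is exactly where the hypothesis $a > X$ is used, and to track that the inequality is strict. One could alternatively phrase this via an integral comparison as in the ``Bound'' lemmas elsewhere in the manuscript, but the direct geometric-series argument above is cleaner and self-contained. The same template (complete-the-square, drop the $j^2$ term, sum the geometric series) will also handle the companion tail bounds used later, with only cosmetic changes to the exponents.
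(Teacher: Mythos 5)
Your proposal is correct and follows essentially the same route as the paper's own proof: substitute $n=a+m$, complete the square, drop the $e^{-m^2/\Delta^2}$ factor, and sum the resulting geometric series with ratio $e^{-2(a-X)/\Delta^2}<1$. Your extra remark on where the strictness of the inequality comes from (the $j\geq 1$ terms) is a welcome clarification that the paper leaves implicit.
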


\begin{proof}
\begin{align}
	\sum_{n=a}^\infty e^{-\frac{(n-X)^2}{\Delta^2}} &= \sum_{m=0}^\infty e^{-\frac{(a-X+m)^2}{\Delta^2}}= e^{-\frac{(a-X)^2}{\Delta^2}} \sum_{m=0}^\infty e^{-\frac{2m(a-X)}{\Delta^2}} e^{-\frac{m^2}{\Delta^2}} \\
	&< e^{-\frac{(a-X)^2}{\Delta^2}} \sum_{m=0}^\infty e^{-\frac{2m(a-X)}{\Delta^2}} = \frac{e^{-\frac{(a-X)^2}{\Delta^2}}}{1-e^{-\frac{2(a-X)}{\Delta^2}}}
\end{align}
\end{proof}

%\begin{lemma}
%\begin{equation}
%	\int_b^\infty x e^{-\frac{x^2}{\Delta^2}} dx = \frac{\Delta^2}{2} e^{-\frac{b^2}{\Delta^2}}
%\end{equation}
%\end{lemma}
%\begin{proof}
%By direct integration.
%\end{proof}

\begin{lemma}\label{G1}
\begin{equation}
	\sum_{n=a}^\infty (n-X) e^{-\frac{(n-X)^2}{\Delta^2}} < \left( a - X + \frac{\Delta^2}{2} \right) e^{-\frac{(a-X)^2}{\Delta^2}},\quad \text{for } a>X+\Delta\in\rr
\end{equation}

\end{lemma}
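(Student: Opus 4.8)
The plan is to reduce to an integral comparison for a monotonically decreasing function, exactly in the spirit of the preceding lemma (the closed form for $\int_b^\infty x e^{-x^2/\Delta^2}\,dx$) and of the bound-by-integral technique used for discretized Gaussians. Write $b := a-X$, so the hypothesis $a > X+\Delta$ becomes $b > \Delta > 0$, and set $g(y) := y\, e^{-y^2/\Delta^2}$ for $y \ge b$. The sum in question is $\sum_{m=0}^{\infty} g(b+m)$.

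First I would check that $g$ is strictly decreasing on $[b,\infty)$. Since $g'(y) = e^{-y^2/\Delta^2}\bigl(1 - 2y^2/\Delta^2\bigr)$, we have $g'(y) < 0$ whenever $y > \Delta/\sqrt{2}$; as $b > \Delta > \Delta/\sqrt{2}$, this holds for all $y \ge b$. (This is precisely where the hypothesis $a > X+\Delta$ is used: it places the peak of $g$, at $y=\Delta/\sqrt2$, strictly to the left of the summation range.) Hence for each integer $m \ge 1$ we have $g(b+m) \le \int_{b+m-1}^{\,b+m} g(y)\,dy$, with the inequality strict by strict monotonicity, so summing over $m \ge 1$ gives $\sum_{m=1}^{\infty} g(b+m) < \int_{b}^{\infty} g(y)\,dy$.

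It remains to split off the first term and evaluate the integral. Peeling off $m=0$,
\begin{equation}
\sum_{m=0}^{\infty} g(b+m) = g(b) + \sum_{m=1}^{\infty} g(b+m) < g(b) + \int_{b}^{\infty} y\, e^{-y^2/\Delta^2}\,dy.
\end{equation}
By the lemma immediately preceding this one, $\int_{b}^{\infty} y\, e^{-y^2/\Delta^2}\,dy = \tfrac{\Delta^2}{2}\, e^{-b^2/\Delta^2}$, and $g(b) = b\, e^{-b^2/\Delta^2}$. Adding these and substituting back $b = a-X$ yields
\begin{equation}
\sum_{n=a}^{\infty} (n-X)\, e^{-\frac{(n-X)^2}{\Delta^2}} < \left( a - X + \frac{\Delta^2}{2} \right) e^{-\frac{(a-X)^2}{\Delta^2}},
\end{equation}
which is the claim.

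There is essentially no serious obstacle here: the only points requiring a little care are (i) confirming strict monotonicity on the whole range $[b,\infty)$, where the precise form of the hypothesis $a>X+\Delta$ is what makes the argument go through, and (ii) being sure the integral-test comparison is applied only to the tail $m\ge 1$ (the $m=0$ term cannot be absorbed into $\int_b^\infty$ and must be retained separately, which is what produces the additive $b = a-X$ in the bound).
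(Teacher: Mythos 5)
Your proof is correct and follows essentially the same route as the paper's: peel off the $n=a$ term, use monotonicity of $y\,e^{-y^2/\Delta^2}$ on the summation range (guaranteed by $a>X+\Delta$) to bound the tail $\sum_{n\ge a+1}$ by $\int_{a}^{\infty}(x-X)e^{-(x-X)^2/\Delta^2}dx$, and evaluate that integral with the preceding lemma. Your remark that the true monotonicity threshold is $\Delta/\sqrt{2}$ rather than $\Delta$ is a small refinement, but otherwise the two arguments coincide.
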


\begin{proof}
\begin{align}
	\sum_{n=a}^\infty (n-X) e^{-\frac{(n-X)^2}{\Delta^2}} &= (a-X) e^{-\frac{(a-X)^2}{\Delta^2}} + \sum_{n=a+1}^\infty (n-X) e^{-\frac{(n-X)^2}{\Delta^2}}
\end{align}

Since $(x-X)e^{-\frac{(x-X)^2}{\Delta^2}}$ is monotonically decreasing for $x>X+\Delta$, (which $a$ satisfies),
\begin{equation}
	\sum_{n=a+1}^\infty (n-X) e^{-\frac{(n-X)^2}{\Delta^2}} < \int_a^\infty (x-X) e^{-\frac{(x-X)^2}{\Delta^2}} dx = \frac{\Delta^2}{2} e^{-\frac{(a-X)^2}{\Delta^2}},
\end{equation}
where we have used
\be 
\int_b^\infty x e^{-\frac{x^2}{\Delta^2}} dx = \frac{\Delta^2}{2} e^{-\frac{b^2}{\Delta^2}}.
\ee
\end{proof}

\begin{lemma}
\begin{align}
	\sum_{n=a}^\infty (n-X)^2 e^{-\frac{(n-X)^2}{\Delta^2}} < \left( (a-X)^2 + \frac{\Delta^2}{2} \left( a - X + \frac{1}{1 - e^{-\frac{2(a-X)}{\Delta^2}}} \right) \right) e^{-\frac{(a-X)^2}{\Delta^2}},\quad \text{for } a>X+\sqrt{2} \Delta\in\rr
\end{align}
%{\red \bf This has simplified. Check all applications.}
\end{lemma}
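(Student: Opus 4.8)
The statement to prove is the tail bound
\begin{equation*}
	\sum_{n=a}^\infty (n-X)^2 e^{-\frac{(n-X)^2}{\Delta^2}} < \left( (a-X)^2 + \frac{\Delta^2}{2} \left( a - X + \frac{1}{1 - e^{-\frac{2(a-X)}{\Delta^2}}} \right) \right) e^{-\frac{(a-X)^2}{\Delta^2}},
\end{equation*}
valid for $a > X + \sqrt{2}\,\Delta$. The plan is to mimic exactly the pattern already used in the proofs of Lemma \ref{G1} and the preceding lemmas in Section \ref{gaussianbounds}: split off the leading term of the sum, bound the remaining tail by an integral using monotonicity, integrate by parts, and then control the residual integral by a further comparison with a sum handled by Lemma \ref{G0}.

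First I would write $\sum_{n=a}^\infty (n-X)^2 e^{-(n-X)^2/\Delta^2} = (a-X)^2 e^{-(a-X)^2/\Delta^2} + \sum_{n=a+1}^\infty (n-X)^2 e^{-(n-X)^2/\Delta^2}$. The function $g(x) = (x-X)^2 e^{-(x-X)^2/\Delta^2}$ has derivative proportional to $(x-X)\bigl(1 - (x-X)^2/\Delta^2\bigr)e^{-(x-X)^2/\Delta^2}$, which is negative for $x - X > \Delta$; since $a > X + \sqrt{2}\,\Delta > X + \Delta$, $g$ is monotonically decreasing on $[a,\infty)$, so $\sum_{n=a+1}^\infty g(n) < \int_a^\infty g(x)\,dx$. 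Next I would integrate by parts: $\int_a^\infty (x-X)^2 e^{-(x-X)^2/\Delta^2}\,dx = \frac{\Delta^2}{2}(a-X)e^{-(a-X)^2/\Delta^2} + \frac{\Delta^2}{2}\int_a^\infty e^{-(x-X)^2/\Delta^2}\,dx$, using $\frac{d}{dx}\bigl(-\frac{\Delta^2}{2}e^{-(x-X)^2/\Delta^2}\bigr) = (x-X)e^{-(x-X)^2/\Delta^2}$ and writing $(x-X)^2 e^{-(x-X)^2/\Delta^2} = (x-X)\cdot (x-X)e^{-(x-X)^2/\Delta^2}$.

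Then, to handle the leftover integral $\int_a^\infty e^{-(x-X)^2/\Delta^2}\,dx$, I would again invoke monotonicity of $e^{-(x-X)^2/\Delta^2}$ on $[a,\infty)$ (immediate, since $a > X$) to get $\int_a^\infty e^{-(x-X)^2/\Delta^2}\,dx < \sum_{n=a}^\infty e^{-(n-X)^2/\Delta^2}$, and apply Lemma \ref{G0} to bound this by $e^{-(a-X)^2/\Delta^2}/\bigl(1 - e^{-2(a-X)/\Delta^2}\bigr)$. Collecting the three contributions — the isolated $(a-X)^2$ term, the $\frac{\Delta^2}{2}(a-X)$ term from integration by parts, and the $\frac{\Delta^2}{2}\cdot\frac{1}{1-e^{-2(a-X)/\Delta^2}}$ term — and factoring out $e^{-(a-X)^2/\Delta^2}$ gives precisely the claimed right-hand side. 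There is no real obstacle here: every step is an instance of a technique already deployed verbatim a few lines above in the same section, and the constraint $a > X + \sqrt{2}\,\Delta$ is exactly what is needed for the monotonicity step on $g$; the only mild subtlety worth stating explicitly is checking that $\sqrt{2}\,\Delta$ (rather than just $\Delta$) is the threshold that makes $g' < 0$ throughout $[a,\infty)$, which follows from requiring $(x-X)^2/\Delta^2 > 1$ for all $x \ge a$ — actually $\Delta$ would suffice, so the stated hypothesis is comfortably sufficient.
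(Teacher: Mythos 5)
Your proof is correct and follows exactly the paper's own argument: split off the $n=a$ term, bound the tail by the integral using monotonicity of $(x-X)^2 e^{-(x-X)^2/\Delta^2}$ on $[a,\infty)$, integrate by parts, and bound the residual Gaussian integral by the sum handled in Lemma \ref{G0}. Your closing observation that $a>X+\Delta$ would already suffice for the monotonicity step (so the stated hypothesis $a>X+\sqrt{2}\,\Delta$ is merely sufficient, not sharp) is also accurate.
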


\begin{proof}
\begin{equation}
	\sum_{n=a}^\infty (n-X)^2 e^{-\frac{(n-X)^2}{\Delta^2}} = (a-X)^2 e^{-\frac{(a-X)^2}{\Delta^2}} + \sum_{n=a+1}^\infty (n-X)^2 e^{-\frac{(n-X)^2}{\Delta^2}}
\end{equation}

Since $(x-X)^2e^{-\frac{(x-X)^2}{\Delta^2}}$ is monotonically decreasing for $x>X+\sqrt{2}\Delta$, (which $a$ satisfies),
\begin{align}
	\sum_{n=a+1}^\infty (n-X)^2 e^{-\frac{(n-X)^2}{\Delta^2}} &< \int_{a}^\infty (x-X)^2 e^{-\frac{(x-X)^2}{\Delta^2}} dx  \\
	&= \frac{\Delta^2}{2} \left( (a-X) e^{-\frac{(a-X)^2}{\Delta^2}} + \int_a^\infty e^{-\frac{(x-X)^2}{\Delta^2}} dx \right)
\end{align}

We use the monotonicity again, together with lemma \ref{G0},
\begin{align}
	\int_a^\infty e^{-\frac{(x-X)^2}{\Delta^2}} dx &<\sum_{n=a}^\infty e^{-\frac{(n-X)^2}{\Delta^2}} <  \frac{e^{-\frac{(a-X)^2}{\Delta^2}}}{1-e^{-\frac{2(a-X)}{\Delta^2}}}
\end{align}
\end{proof}

\subsection{Commutator section}
Here we provide some supplementary details to the proofs of some of the bounds used in the proof of Theorem \ref{Quasi-Canonical commutation}.\\

\centerline{\textit{1. Bounding} $\quad\epsilon_1^{co}$}
%\toclesslab\section{Commutator section}{sec:Commutator section}
%\hiddensubsection{$\epsilon_1$}
\begin{align}
	\ltwo{\ket{\epsilon_1^{co}}} &< T_0 \sum_{k \in \mathcal{S}_d(k_0), k > \frac{d}{2}} \abs{\psi(k_0;k)} \\
	&< T_0 \sum_{k=\frac{d+1}{2}}^\infty \abs{\psi(k_0;k)} & & \text{increasing the range of the sums} \\
	&< T_0 \sum_{k=\frac{d}{2}}^\infty \abs{\psi(k_0;k)} & & \text{monotonicity (decreasing) of $\psi(k_0;k)$} \\
	&< T_0 \sum_{k=\frac{d}{2}}^\infty e^{-\frac{\pi d^2}{\sigma^2}(k-k_0)^2} \\
	&< T_0 A \frac{e^{-\frac{\pi}{\sigma^2}\left( \frac{d}{2} - k_0 \right)^2}}{1 - e^{-\frac{2\pi}{\sigma^2}\left( \frac{d}{2} - k_0 \right)}} \\
	&=
	\begin{cases}\displaystyle
 2 T_0 A \frac{\errort}{1 - \errortd}  &\mbox{if } \sigma=\sqrt{d} \\[10pt]\displaystyle
2 T_0 A \frac{e^{-\frac{\pi d^2}{4 \sigma^2}(1-\beta)^2}}{1 - e^{-\frac{\pi d}{\sigma^2}(1-\beta)}}  &\mbox{otherwise}
\end{cases}
\end{align}

%\subsection{$\epsilon_2$}
\centerline{\textit{2. Bounding} $\quad\epsilon_2^{co}$}
\begin{align}
	\ltwo{\ket{\epsilon_2^{co}}} &= \frac{T_0}{d\sqrt{d}} \sum_{n=0}^{d-1} \sum_{k \in \mathbb{Z}/\mathcal{S}_d(k_0)} \abs{k\psi(k_0;k)} \\
	&< \frac{T_0}{\sqrt{d}} \sum_{k \in \mathbb{Z}/\mathcal{S}_d(k_0)} \abs{k \psi(k_0;k)} & & \text{trivial sum w.r.t. $n$} \\
	&< \frac{2 T_0 A}{\sqrt{d}} \sum_{k=k_0+\frac{d}{2}}  k e^{-\frac{\pi}{\sigma^2}(k-k_0)^2} & & \text{picking the larger error (right side)} \\
	&< \frac{2 T_0 A}{\sqrt{d}} \sum_{k=k_0+\frac{d}{2}} \left[ (k-k_0) + k_0 \right] e^{-\frac{\pi}{\sigma^2}(k-k_0)^2} & & \text{splitting into two sums} \\
	&< \frac{2 T_0 A}{\sqrt{d}} \left[ \frac{d}{2} + \frac{\sigma^2}{2\pi} + \frac{k_0}{1 - e^{-\frac{2\pi}{\sigma^2}\frac{d}{2}}} \right] e^{-\frac{\pi}{\sigma^2} \left(\frac{d}{2}\right)^2} \\
	&< 
	\begin{cases}\displaystyle
 T_0 A \sqrt{d} \left( 1 + \frac{1}{\pi} + \frac{\beta}{1 - e^{-\pi}} \right) e^{-\frac{\pi d}{4}}  &\mbox{if } \sigma=\sqrt{d} \\[10pt]\displaystyle
T_0 A \sqrt{d} \left( 1 + \frac{\sigma^2}{\pi d} + \frac{\beta}{1 - e^{-\frac{\pi d}{\sigma^2}}} \right) e^{-\frac{\pi d^2}{4\sigma^2}}  &\mbox{otherwise}
\end{cases}
\end{align}

%\subsection{$\epsilon_3$}
\centerline{\textit{3. Bounding} $\quad\epsilon_3^{co}$}
\begin{align}
	\ltwo{\ket{\epsilon_3^{co}}} &= \sqrt{d} \sum_{l \in \mathcal{S}_d(k_0)} \sum_{n=0}^{d-1} \sum_{m\in\mathbb{Z}} \frac{A\sigma}{\sqrt{d}} \abs{m \left( -\frac{2\pi\sigma^2}{d^2}(p-n_0) - i \frac{2\pi k_0}{d} \right) e^{-\frac{\pi\sigma^2}{d^2}(p-n_0)^2} }_{p=n+md} \ket{\theta_l}.
\end{align}
Trivial sum w.r.t. $l$, bound the error by twice the right side error (w.r.t. $m$), choose $n=-d/2$ for worst case scenario, then trivial sum over $n$, and express $n_0$ in terms of $\alpha$,
\begin{align}
	\ltwo{\ket{\epsilon_3^{co}}} &= 2 A d^2 \sigma 2\pi \sum_{m=1}^\infty \left[ \left( m - \frac{1+\alpha}{2} \right) + \frac{1+\alpha}{2} \right] \left( \frac{\sigma^2}{d} \left( m - \frac{1+\alpha}{2} \right) + \frac{\beta}{2} \right) e^{-\pi\sigma^2 \left( m - \frac{1+\alpha}{2} \right)^2}.
\end{align}
Applying all of the bounds on the tails of Gaussians on each of the 4 terms, 
\begin{align}
	\frac{\ltwo{\ket{\epsilon_3^{co}}}}{2\pi A d^2} &= 2 \sigma  \cdot \frac{\sigma^2}{d} \left[ \left( \frac{1-\alpha}{2} \right)^2 + \frac{1}{2\pi\sigma^2} \left( \frac{1-\alpha}{2} + 1 + \frac{1}{1 - e^{-\pi\sigma^2(1-\alpha)}} \right) \right] e^{-\frac{\pi\sigma^2}{4}(1-\alpha)^2} \\
	&+ 2 \sigma  \cdot \frac{\sigma^2}{d} \left( \frac{1+\alpha}{2} \right) \left( \frac{1-\alpha}{2} + \frac{1}{2\pi\sigma^2} \right) e^{-\frac{\pi\sigma^2}{4}(1-\alpha)^2} \\
	&+ 2 \sigma  \cdot \frac{\beta}{2} \left( \frac{1-\alpha}{2} + \frac{1}{2\pi\sigma^2} \right) e^{-\frac{\pi\sigma^2}{4}(1-\alpha)^2} \\
	&+ 2 \sigma  \cdot \left( \frac{1+\alpha}{2} \right) \frac{\beta}{2} \left( \frac{1}{1 - e^{-\pi\sigma^2(1-\alpha)}} \right) e^{-\frac{\pi\sigma^2}{4}(1-\alpha)^2} \\
	=&
	\begin{cases}\displaystyle
  \sqrt{d} \left( (1-\alpha) + \frac{1}{\pi d} \left( 2 + \frac{1}{1 - e^{-\pi d (1-\alpha)}} \right) + \frac{\beta}{2} \left( 1 - \alpha + \frac{1}{\pi d} + \frac{1+\alpha}{1 - e^{-\pi d(1-\alpha)}} \right) \right) e^{-\frac{\pi d}{4}(1-\alpha)^2} &\mbox{if } \sigma=\sqrt{d}\quad\quad\quad\quad\quad\quad\quad\quad\quad\quad\quad\quad\quad\quad\quad\quad\quad\quad\quad\quad \\[10pt]\displaystyle
 \sigma \left( \frac{\sigma^2}{d} (1-\alpha) + \frac{1}{\pi d} \left( 2 + \frac{1}{1 - e^{-\pi\sigma^2(1-\alpha)}} \right) + \frac{\beta}{2} \left( 1 - \alpha + \frac{1}{\pi\sigma^2} + \frac{1+\alpha}{1 - e^{-\pi\sigma^2(1-\alpha)}} \right) \right) e^{-\frac{\pi\sigma^2}{4}(1-\alpha)^2} &\mbox{otherwise}\quad\quad\quad\quad\quad\quad\quad\quad\quad\quad\quad\quad\quad\quad\quad\quad\quad\quad\quad\quad
\end{cases}
\end{align}

%\subsection{$\epsilon_4$}
\centerline{\textit{4. Bounding} $\quad\epsilon_4^{co}$}
\begin{align}
	\ltwo{\ket{\epsilon_4^{co}}} &= \sum_{l \in \mathcal{S}_d(k_0)} \sum_{m\in\mathbb{Z}/\{0\}} \abs{ \frac{d}{dx} \left( x\psi(k_0;x) \right) }_{x=l+md} = \sum_{k\in\mathbb{Z}/\mathcal{S}_d(k_0)} \abs{ \frac{d}{dx} \left( x\psi(k_0;x) \right) }_{x=k}\\
	&= \sum_{k\in\mathbb{Z}/\mathcal{S}_d(k_0)} \abs{ \left( 1 - \frac{2\pi}{\sigma^2}x(x-k_0) + i \frac{2\pi n_0}{d} x \right) e^{-\frac{\pi}{\sigma^2}(x-k_0)^2} }_{x=k}
\end{align}
Once again, bounding the error by twice the right side error (w.r.t. $m$), and expressing $k_0$ in terms of $\beta$,
\begin{align}
	\ltwo{\ket{\epsilon_4^{co}}} &< 2d A \sum_{m=1}^\infty \abs{ 1 - \frac{2\pi}{\sigma^2}  (k-k_0)^2 -\frac{2\pi}{\sigma^2} k_0(k-k_0) + i \pi\alpha (k-k_0+k_0) } e^{-\frac{\pi}{\sigma^2} (k-k_0)^2}
\end{align}

Bounding each sum as a Gaussian tail, (first term is ignored as it will only make the total smaller, being of the opposite sign)
\begin{align}
	\ltwo{\ket{\epsilon_4^{co}}} &< 2d A \frac{2\pi}{\sigma^2} \left( \frac{d^2}{4} + \frac{\sigma^2}{2\pi} \left( \frac{d}{2} + 1 + \frac{1}{1 - e^{-\frac{\pi d}{\sigma^2}}} \right) \right) e^{-\frac{\pi d^2}{4\sigma^2}} \\
	&+ 2dA \frac{2\pi}{\sigma^2} \frac{\beta d}{2} \left( \frac{d}{2} + \frac{\sigma^2}{2\pi} \right) e^{-\frac{\pi d^2}{\sigma^2}} \\
	&+ 2dA \pi \alpha \left( \frac{d}{2} + \frac{\sigma^2}{2\pi} \right) e^{-\frac{\pi d^2}{\sigma^2}} \\
	&+ 2dA \pi \alpha \frac{\beta}{2} \frac{1}{1 - e^{-\frac{\pi d}{\sigma^2}}} e^{-\frac{\pi d^2}{\sigma^2}} \\
	&< 
	\begin{cases}\displaystyle
 dA \left( d (\pi+1) (1+\beta) + 2 + \frac{2}{1 - e^{-\pi}} + \alpha \left( d (\pi+1) + \frac{\pi\beta}{1 - e^{-\pi}} \right) \right) e^{-\frac{\pi d}{4}}  &\mbox{if } \sigma=\sqrt{d} \\[10pt]\displaystyle
 dA \left( \left( \frac{\pi d^2}{\sigma^2} + d \right)(1+\beta) + 2 + \frac{2}{1 - e^{-\frac{\pi d}{\sigma^2}}} + \alpha \left( \pi d + \sigma^2 + \frac{\pi\beta}{1 - e^{-\frac{\pi d}{\sigma^2}}} \right) \right) e^{-\frac{\pi d^2}{4\sigma^2}}  &\mbox{otherwise}
\end{cases}
\end{align}

\textit{ALTERNATE BOUND}
\begin{align}
	\ltwo{\ket{\epsilon_4^{co}}} &= \sum_{l \in \mathcal{S}_d(k_0)} \sum_{m\in\mathbb{Z}/\{0\}} \abs{ \frac{d}{dx} \left( x\psi(k_0;x) \right) }_{x=l+md} \\
	&= \sum_{l \in \mathcal{S}_d(k_0)} \sum_{m\in\mathbb{Z}/\{0\}} \abs{ \left( 1 - \frac{2\pi}{\sigma^2}x(x-k_0) + i \frac{2\pi n_0}{d} x \right) e^{-\frac{\pi}{\sigma^2}(x-k_0)^2} }_{x=l+md}
\end{align}
Once again, bounding the error by twice the right side error (w.r.t. $m$), and replacing $l$ by its worst case, and trivializing the sum w.r.t. $l$, and finally expressing $k_0$ in terms of $\beta$,
\begin{align}
	\ltwo{\ket{\epsilon_4^{co}}} &< 2d A \sum_{m=1}^\infty \abs{ \left[ 1 - \frac{2\pi d^2}{\sigma^2} \left( m - \frac{1}{2} \right)^2 -\frac{2\pi d^2}{\sigma^2} \frac{\beta}{2}\left( m - \frac{1}{2} \right) + i 2\pi d \frac{\alpha}{2} \left( m - \frac{1}{2} + \frac{\beta}{2} \right) \right] } e^{-\frac{\pi d^2}{\sigma^2} \left( m - \frac{1}{2} \right)^2}
\end{align}

Once again, bounding each sum as a Gaussian tail,
\begin{align}
	\ltwo{\ket{\epsilon_4^{co}}} &< 2d A \frac{2\pi d^2}{\sigma^2} \left( \frac{1}{4} + \frac{\sigma^2}{2\pi d^2} \left( \frac{1}{2} + 1 + \frac{1}{1 - e^{-\frac{\pi d^2}{\sigma^2}}} \right) \right) e^{-\frac{\pi d^2}{4\sigma^2}} \\
	&+ 2dA \frac{2\pi d^2}{\sigma^2} \frac{\beta}{2} \left( \frac{1}{2} + \frac{\sigma^2}{2\pi d^2} \right) e^{-\frac{\pi d^2}{\sigma^2}} \\
	&+ 2dA \pi d \alpha \left( \frac{1}{2} + \frac{\sigma^2}{2\pi d^2} \right) e^{-\frac{\pi d^2}{\sigma^2}} \\
	&+ 2dA \pi d \alpha \frac{\beta}{2} \frac{1}{1 - e^{-\frac{\pi d^2}{\sigma^2}}} e^{-\frac{\pi d^2}{\sigma^2}}.
\end{align}

%\subsection{$\epsilon_5$}
\centerline{\textit{5. Bounding} $\quad\epsilon_5^{co}$}
\begin{align}
	\ltwo{\ket{\epsilon_5^{co}}} &= \frac{1}{\sqrt{d}} \sum_{n=0}^{d-1} \sum_{k \in \mathbb{Z}/\mathcal{S}_d(k_0)} \abs{\psi(k_0;k)}, \\
	&=\sqrt{d} \sum_{k \in \mathbb{Z}/\mathcal{S}_d(k_0)} e^{-\frac{\pi}{\sigma^2}(k-k_0)^2} & & \text{trivial sum w.r.t $n$} \\
	&< 2\sqrt{d} A \sum_{k-k_0=\frac{d}{2}}^\infty e^{-\frac{\pi}{\sigma^2}(k-k_0)^2} & & \text{bound error by right side} \\
	&<
	\begin{cases}\displaystyle
 2\sqrt{d} A \frac{ e^{-\frac{\pi d}{4}}}{1 - e^{-\pi}} &\mbox{if } \sigma=\sqrt{d} \\[10pt]\displaystyle
 2\sqrt{d} A \frac{ e^{-\frac{\pi d^2}{4\sigma^2}}}{1 - e^{-\frac{\pi d}{\sigma^2}}}  &\mbox{otherwise}
\end{cases}
\end{align}

%\subsection{$\epsilon_6$}
\centerline{\textit{6. Bounding} $\quad\epsilon_6^{co}$}
\begin{align}
	\ltwo{\ket{\epsilon_6^{co}}} &< \frac{2 \pi}{T_0} \sqrt{d} \sum_{l \in \mathcal{S}_d(k_0)} \sum_{n=0}^{d-1} \sum_{m \in \mathbb{Z}} \abs{m \tilde{\psi}(k_0;n+md)}
\end{align}

Trivial sum w.r.t. $l$, bounding the error by the right side (w.r.t. $m$) and replacing $n$ by the worst case $n=-\frac{d}{2}$,
\begin{align}
	\ltwo{\ket{\epsilon_6^{co}}} &= \frac{2 \pi}{T_0} d^2 \sigma A \sum_{m=1}^\infty \left( \left( m - \frac{1+\alpha}{2} \right) + \frac{1+\alpha}{2} \right) e^{-\pi^2\sigma^2 \left( m - \frac{1+\alpha}{2} \right)^2} \\
	&< 
	\begin{cases}\displaystyle
 \frac{2\pi A d^2 \sqrt{d}}{T_0} \left( \frac{1-\alpha}{2} + \frac{1}{2\pi d} + \left( \frac{1+\alpha}{2} \right) \frac{1}{1 - e^{-\pi d(1-\alpha)}} \right) e^{-\frac{\pi d}{4}(1-\alpha)^2}  &\mbox{if } \sigma=\sqrt{d} \\[10pt]\displaystyle
 \frac{2\pi A d^2 \sigma}{T_0} \left( \frac{1-\alpha}{2} + \frac{1}{2\pi\sigma^2} + \left( \frac{1+\alpha}{2} \right) \frac{1}{1 - e^{-\pi\sigma^2(1-\alpha)}} \right) e^{-\frac{\pi\sigma^2}{4}(1-\alpha)^2}  &\mbox{otherwise}
\end{cases}
\end{align}

%\subsection{$\epsilon_7$}
\centerline{\textit{7. Bounding} $\quad\epsilon_7^{co}$}
\begin{align}
	\ltwo{\ket{\epsilon_7^{co}}} &= \frac{d}{T_0} \sum_{l \in \mathcal{S}_d(k_0)} \sum_{m \in \mathbb{Z}/\{0\}} \abs{\frac{d}{dx} \psi(k_0;x)}_{x=l+md} = \frac{dA}{T_0} \sum_{k\in\mathbb{Z}/\mathcal{S}_d(k_0)} \abs{\frac{d}{dx} \psi(k_0;x)}_{x=k}\\
	&< \frac{2d}{T_0} \sum_{k-k_0=\frac{d}{2}}^\infty \abs{ \frac{2\pi}{\sigma^2}(k-k_0) + i \frac{2\pi n_0}{d} } e^{-\frac{\pi}{\sigma^2}(k-k_0)^2} \\
	&< 
	\begin{cases}\displaystyle
 \frac{2\pi}{T_0} d A \left( 1 + \frac{1}{\pi} + \frac{\alpha}{1 - e^{-\pi}} \right) e^{-\frac{\pi d}{4}}  &\mbox{if } \sigma=\sqrt{d} \\[10pt]\displaystyle
 \frac{2\pi}{T_0} d A \left( \frac{d}{\sigma^2} + \frac{1}{\pi} + \frac{\alpha}{1 - e^{-\frac{\pi d}{\sigma^2}}} \right) e^{-\frac{\pi d^2}{4\sigma^2}}  &\mbox{otherwise}
\end{cases}
\end{align}

%\subsection{$\epsilon_8$}
\centerline{\textit{8. Bounding} $\quad\epsilon_8^{co}$}
\begin{align}
	\ltwo{\ket{\epsilon_8^{co}}} &< d A \sum_{k=\frac{d}{2}}^\infty \abs{ \frac{2\pi}{\sigma^2}(k-k_0) + i \frac{2\pi n_0}{d} } e^{-\frac{\pi}{\sigma^2}(k-k_0)^2} \\
	&<  
	\begin{cases}\displaystyle
\pi d A \left( 1 -\beta + \frac{1}{\pi} + \frac{\alpha}{1 - e^{-\pi(1-\beta)}} \right) e^{-\frac{\pi d}{4}(1-\beta)^2}  &\mbox{if } \sigma=\sqrt{d} \\[10pt]\displaystyle
\pi d A \left( \frac{d}{\sigma^2}(1-\beta) + \frac{1}{\pi} + \frac{\alpha}{1 - e^{-\frac{\pi d}{\sigma^2}(1-\beta)}} \right) e^{-\frac{\pi d^2}{4\sigma^2}(1-\beta)^2}  &\mbox{otherwise}
\end{cases}
\end{align}

\end{appendices} % New in V5 upwards: potential is no-longer hermitian

%\part{Back of the paper calculations}

%\bibliography{Bibliography}%{Bibliography_appendix}
%\bibliographystyle{unsrt}

\end{document}